\newcommand{\ii}{\mathbf i}
\newcommand{\cW}{\mathcal{W}}   
\newcommand{\band}{\mathfrak{B}}
\newcommand{\RR}{\mathbb{R}}
\def\defeq{:=}
\newcommand{\cG}{\mathcal{G}}
\def\EE{\mathbb{E}}
\newtheorem{theorem}{Theorem}[section]%
\newtheorem{proposition}[]{Proposition}[section]
\newtheorem{definition}{Definition}[section]
\newtheorem{remark}{Remark}[section]
\newtheorem{lemma}{Lemma}[section]
\newtheorem{corollary}{Corollary}[section]
\numberwithin{equation}{section}
\newcommand{\tr}{\operatorname{tr}}
\newcommand\Vtextvisiblespace[1][.5em]{%
  \mbox{\kern.06em\vrule height.3ex}%
  \vbox{\hrule width#1}%
  \hbox{\vrule height.3ex}}
\tikzset{wave/.style={decorate, decoration=snake}}
\definecolor{MyGreen}{rgb}{0.05, 0.7, 0.06}
\definecolor{MyBlue}{rgb}{0.0, 0.0, 0.55}
\definecolor{MyRed}{rgb}{0.9,0.19,0.39}
\definecolor{ggreen}{rgb}{0.05, 0.5, 0.06}
\NewDocumentCommand{\tens}{t_}
 {%
  \IfBooleanTF{#1}
   {\tensop}
   {\otimes}%
 }
\NewDocumentCommand{\tensop}{m}
 {%
  \mathbin{\mathop{\otimes}\displaylimits_{#1}}%
 }
\definecolor{darkspringgreen}{rgb}{0.05, 0.5, 0.06}
\colorlet{NextBlue}{MyBlue!20}
\colorlet{SecondBlue}{MyBlue!40}
\title[Semi-classical limit of conformal blocks]{Proof of Zamolodchikov conjecture for semi-classical conformal blocks on the torus}
\author{Harini Desiraju}
\address[Harini Desiraju]{Sydney Mathematical Research Institute (SMRI), School of Mathematics and Statistics, University of Sydney, Camperdown, NSW 2006, Australia, and Mathematical Institute, Andrew Wiles Building, Woodstock Rd, Oxford OX2 6GG, United Kingdom.}
\email{\href{mailto:harini.desiraju@maths.ox.ac.uk}{harini.desiraju@maths.ox.ac.uk}}
\author{Promit Ghosal}
 \address[Promit Ghosal]{University of Chicago, Department of Statistics,  5747 S Ellis Ave, Chicago, IL 60637, USA. }
\email{\href{mailto:promit@uchicago.edu}{promit@uchicago.edu}}
\author{Andrei Prokhorov}
\address[Andrei Prokhorov]{University of Chicago, Department of Statistics,  5747 S Ellis Ave, Chicago, IL 60637, USA, and St. Petersburg University, 7/9 Universitetskaya nab. 199034 St. Petersburg, Russia.}
\email{\href{mailto:andreip@uchicago.edu}{andreip@uchicago.edu}}
\begin{document}



\begin{abstract}
In 1986, Zamolodchikov conjectured an exponential structure for the semi-classical limit of conformal blocks on a sphere. This paper provides a rigorous proof of the analog of Zamolodchikov conjecture for Liouville conformal blocks on a one-punctured torus, using their probabilistic construction and show the existence of a positive radius of convergence of the semi-classical limit. As a consequence, we obtain a closed form expression for the solution of the Lam\'e equation, and show a relation between its accessory parameter and the classical action of the non-autonomous elliptic Calogero-Moser model evaluated at specific values of the solution. 
\end{abstract}
\maketitle
\tableofcontents



\section{Introduction}

Conformal field theory (CFT), a subclass of quantum field theory{\color{blue},} provides a framework for constructing random functions on Riemannian manifolds, which transform covariantly under conformal (i.e., angle-preserving) mappings. Since the groundbreaking work by Belavin, Polyakov, and Zamolodchikov (BPZ) in 1984 \cite{belavin1984infinite}, two-dimensional (2D) CFT has evolved into a major area of theoretical physics. It has significant applications in 2D statistical physics, string theory and has profound implications for mathematics. See \cite{ribault2022conformal} for a review and references therein.  

An important subclass of conformal field theories (CFTs) is known as Liouville CFTs (LCFTs). These CFTs are characterized by a parameter called the central charge, denoted as $c$, which for LCFTs lies in the range 
$c\in [25,\infty)$. LCFT originated from Polyakov's work on two-dimensional quantum gravity and bosonic string theory in 1981 \cite{polyakov1981quantum}. It was rigorously constructed using the path integral formalism of quantum field theory, initially on the sphere in \cite{DKRV16} and subsequently on other Riemann surfaces in \cite{DRV16,HRV18,GRV19}. We refer the reader to \cite{vargas2017lecture} for a review. 

The paper by Belavin, Polyakov, and Zamolodchikov in 1984 introduced a schematic approach known as the conformal bootstrap. This program aims to obtain  explicit expressions for the correlation functions of a given two-dimensional CFT by expressing them in terms of its three-point sphere correlation functions and specific power series called conformal blocks. These conformal blocks are entirely determined by the Virasoro algebra, which encodes
local conformal symmetries. Notably, the central charge parametrizes conformal blocks of a particular  CFT.

Recently, there has been rapid progress in rigorous approaches to CFT using probability, notably by \cite{DKRV16,DRV16,GRV19,HRV18}. Our starting point in this paper is a construction of Liouville conformal blocks on one-point torus introduced by \cite{ghosal2020probabilistic} using rigorous path integral formulation based on probabilistic ideas. Their construction is via the Gaussian multiplicative chaos (GMC) measure,
a random measure defined by exponentiating the Gaussian free field (see e.g. \cite{DRV16, berestycki2017elementary,aru2017gaussian}).  This paper explores the semi-classical limit of the probabilistic representation of Liouville conformal blocks, focusing on the case of torus with a single marked point. 
We consider the flat torus $\mathbb{T}_{\tau}$ characterized by the modular parameter $\tau$, and the nome $q:= e^{\ii \pi \tau}$. 
The $1$-point toric correlation function of LCFT, rigorously constructed in \cite{DRV16},
has the form $\langle e^{\alpha \Phi(0)}\rangle_{\tau}$, where $\langle \ldots \rangle_{\tau}$ is the average over the random field $\Phi$
for LCFT on $\mathbb{T}_{\tau}$ and $\alpha$ is called the vertex insertion weight. Conformal bootstrap formula for torus which is recently shown in \cite[Theorem~9]{GKRV2022} expresses $\langle e^{\alpha \Phi(0)}\rangle_{\tau}$ via one point toric conformal block $\mathcal{G}^{\alpha}_{\gamma, P}(q)$, 
\begin{align}
\langle e^{\alpha \Phi(0)}\rangle_{\tau} = \int_{\mathbb{R}} C_{\gamma}(\alpha, Q - \mathbf{i} P, Q + \mathbf{i} P) |\mathcal{G}^{\alpha}_{\gamma, P}(q)|^2 dP, 
\end{align}
where the parameter $\gamma$ dictates the central charge through the relation 
\begin{align}
    c = 1 + 6 Q^2 \in (25,\infty), && Q := \frac{\gamma}{2} + \frac{2}{\gamma} 
    \in (2,\infty).\label{c:gamma}
\end{align}
Here $ C_{\gamma}(\alpha_1,\alpha_2, \alpha_3)$ is the Liouville 3-point correlation function, whose exact description is given by the Dorn-Otto-Zamolodchikov-Zamolodchikov (DOZZ) formula, which was first proposed in \cite{dorn1994two,zamolodchikov1996conformal} and later proved in \cite{kupiainen2020integrability}. It is worthwhile to note that the conformal bootstrap for sphere is recently proven in \cite{kupiainen2020integrability} and \cite{GKRV2022} showed it for other geometry including torus. Finally we note that $\gamma \to 0$ is equivalent to $c\to \infty$, the limit of interest in this paper.

In a recent work, \cite[Theorem~1.1]{ghosal2020probabilistic} introduced a probabilistic description of $\mathcal{G}^{\alpha}_{\gamma, P}(q)$ using the GMC measure. This probabilistic definition of $\mathcal{G}^{\alpha}_{\gamma, P}(q)$ is shown to match with the formula as a power series in $q$ of the same (defined via Virasoro algebra in \cite{belavin1984infinite}) through rigorously proving the Zamolodchikov recursion (as proposed in \cite{zamolodchikov1984conformal,zamolodchikov1987conformal,hadasz2010recursive}).    
In fact, \cite{ghosal2020probabilistic} showed that $\mathcal{G}^{\alpha}_{\gamma, P}(q)$ is defined for $\alpha \in (-\frac{4}{\gamma}, Q)$, $P\in \mathbb{R}$ and is analytic as a function of $q \in \mathbb{C}$ in a ball $B_r(0)$ of radius $r$ greater than $\frac{1}{2}$ when $\alpha\in [0,Q)$. 
We refer the reader to \cite[Appendix A]{ghosal2020probabilistic} and \cite[Section 1.1]{ghosal2022analiticity} for a summary of the algebraic approach to conformal blocks on one-point torus.


In previous works \cite{lacoin2019semiclassical}, the semiclassical limit of the $n$-point Liouville correlation function has been shown to be governed by the accessory parameters of a second-order Fuchsian differential equation, with the Heun equation (see, e.g., references) being a notable example. On the other hand, a similar description of the semiclassical limit of Liouville conformal blocks on the torus has been explored in the physics literature, where it is related to the Lamé equation
\begin{align}\label{eq:Lame_intro}
\left(\partial_{z}^2 - m(m-1) \wp(z) + \mathcal{E} \right)\widetilde{\Gamma}(z;q) = 0,
\end{align}
where $\wp(z)$ is the Weierstrass elliptic function and $\mathcal{E}$ is called the \emph{accessory parameter}. This equation arises naturally in the study of Schr\"odinger equations with periodic potentials and has deep connections to uniformization of Riemann surfaces.

Our objective is to understand the behavior and properties of these Liouville conformal blocks $\mathcal{G}^{\alpha}_{\gamma, P}(q)$ in the semi-classical limit $\gamma\to 0$, shedding light on the underlying probabilistic structures associated with conformal field theories on tori and the possible connection of the conformal block with the Lam\'e equation. Our main result shows that the semi-classical limit of $\mathcal{G}^{\alpha}_{\gamma, P}(q)$ determines this accessory parameter. Such a proof for all other geometries remains an open question.

A crucial gap lies in establishing a systematic way of defining probabilistic conformal blocks for general Riemann surfaces and prescribed singularities; once this is achieved, our methods may be adapted to explore their semi-classical limits.

\begin{theorem}\label{thm:zamolodchikov}
The semi-classical limit of the conformal block $\mathcal{G}^{\alpha}_{\gamma, P}(q)$ exists and has the form
\begin{align}\label{zc:exp}
\lim_{\gamma \rightarrow 0} \gamma^2 \log \mathcal{G}_{\gamma, P}^{\alpha}(q) = \phi(q;\alpha_0, P_0) + \widetilde{f}(\alpha_0, P_0); \qquad \alpha = \frac{\alpha_0}{\gamma}, \qquad P = \frac{P_0}{\gamma},
\end{align}
for $q = e^{\ii \pi \tau} \in B_{r_0}(0)$ for some $r_0 > 0$, $\alpha_0 \in (-4,2)$, and $P_0 \in \mathbb{R}$. Here, $\phi(\cdot)$ is an analytic function in $q$ given by

\begin{align}\label{eq:phi_q_intro}
\phi(q;\alpha_0, P_0) &= -\frac{\alpha_0(\alpha_0+4)}{6} + \left(\frac{P_0^2}{2} + \frac{\alpha_0^2}{24}\right)\log q - 2\alpha_0 \log \eta(q) - \frac{\alpha_0^2}{2}\sum_{n,m \geq 1} \frac{q^{2nm}}{n} - \frac{\ii \pi \alpha_0^2}{2} \nonumber\\
&\quad + \alpha_0^2 \sum_{n=1}^{\infty} \frac{\sum_{m=1}^{\infty} q^{2nm}}{n(1 + 2\sum_{m=1}^{\infty} q^{2nm})} + \frac{1}{2}\mathbb{E}[\Phi^2] - \alpha_0 \log \widetilde{\Xi}_0 \nonumber\\
&\quad + \frac{\alpha_0}{2\widetilde{\Xi}_0}\left(\int_0^1 e^{\pi P_0 x} (2\sin(\pi x))^{-\frac{\alpha_0}{2}} h_q(x) e^{\frac{h_q(x)}{2}} dx\right),
\end{align}
where $\Phi$ is defined in \eqref{def:Phi-Gfield}, $\widetilde{\Xi}_0 = \int_0^1 e^{\pi P_0 x} (2\sin(\pi x))^{-\frac{\alpha_0}{2}} e^{\frac{1}{2}h_q(x)} dx$, and $h_q(x) := \lim_{\gamma \to 0} \gamma h_{\widetilde{\Psi}}(x)$ with $\widetilde{\Psi}$, $h_{\widetilde{\Psi}}(x)$ defined in \eqref{def:htilPsiXi}. The function $\widetilde{f}(\cdot)$ is a normalization constant such that

\begin{align*}
\lim_{q \to 0} \lim_{\gamma \to 0} \gamma^2 \log \mathcal{G}_{\gamma, P}^{\alpha}(q) = 0.
\end{align*}

Moreover, the function $\phi(q;\alpha_0, P_0)$ in \eqref{zc:exp} determines the accessory parameter of the Lam\'e equation
\begin{align}\label{eq:Lame_intro}
\left(\partial_{z}^2 - \frac{\alpha_0}{4} \left(\frac{\alpha_0}{4}-1\right) \wp(z) + \frac{\pi \ii}{2} \partial_\tau {\phi}(q;\alpha_0, P_0)\right) \widetilde{\Gamma}(z;\alpha_0, P_0, q) = 0,
\end{align}
where $\wp(\cdot)$ is the Weierstrass $\wp$-function (see \eqref{eq:wp}). Through the study of semi-classical conformal blocks (Theorem~\ref{prop:sc_Lame}), the function $\widetilde{\Gamma}(z;\alpha_0, P_0, q)$ arises as the limit
\begin{align}\label{eq:Gamma_limit_intro}
\lim_{\gamma \rightarrow 0} \psi^{\alpha_0/\gamma}_{\gamma/2, P_0/\gamma}(z,q) e^{-\phi(q;\alpha_0, P_0)/\gamma^2} = \widetilde{\Gamma}(z;\alpha_0, P_0, q),
\end{align}
and is given by
\begin{align}\label{thm43:phi_intro}
\widetilde{\Gamma}(z;\alpha_0, P_0, q) &= e^{P_0 z\pi/2} \theta_1(z)^{\alpha_0/4} e^{\ii \pi \alpha_0 z/4} \exp\left(\mathbb{E}\left[\Phi(q) M_{\gamma}(z,q)\right] + \frac{1}{2}\log\widetilde{\Xi}_0 - \alpha_0 \frac{\widetilde{\Xi}_1}{\widetilde{\Xi}_0}\right) \nonumber\\
&\quad \times \exp\left(\alpha_0 \sum_{n=1}^{\infty} \frac{q^n}{(1-q^{2n})n} \left(\frac{\sum_{m=1}^{\infty} q^{2nm}}{1 + 2\sum_{m=1}^{\infty} q^{2nm}}\right) \cos\left(2\pi\left(z - \frac{\tau}{2}\right)n\right)\right) \nonumber\\
&\quad \times \exp\left(\frac{\alpha_0}{2\widetilde{\Xi}_0} \int_0^1 e^{\pi P_0 x} (2\sin(\pi x))^{-\frac{\alpha_0}{2}} \left(\frac{3}{2}\mathfrak{R}_{z,q}(x) - \frac{\widetilde{\Xi}_1}{\widetilde{\Xi}_0} h_q(x)\right) e^{\frac{h_q(x)}{2}} dx\right),
\end{align}
where $\widetilde{\Xi}_1 = \int_0^1 e^{\pi P_0 x} (2\sin(\pi x))^{-\frac{\alpha_0}{2}} \mathfrak{R}_{z,q}(x) e^{\frac{1}{2}h_q(x)} dx$, $\Phi$ and $M_{\gamma}(z,q)$ are defined in \eqref{def:Phi-Gfield} and \eqref{def:Mg-Gfield} respectively, and $\mathfrak{R}_{z,q}(x)$ is the unique solution to the fixed point equation
\begin{align}\label{eq:R_fixed_point_intro}
\mathfrak{R}_{z,q}(x) = \mathfrak{R}^{(0)}_{z,q}(x) - \frac{\alpha_0}{2\Xi} \int_0^1 K(x,y) \mathfrak{R}_{z,q}(y) \, d\mu_{\Psi}(y),
\end{align}
with $K(x,y) := \sum_{n=1}^{\infty}\frac{2}{n}(1+2\sum_{m=1}^\infty q^{2nm})\cos(2\pi n(x-y))$ and
\begin{align}\label{eq:R0_def_intro}
\mathfrak{R}^{(0)}_{z,q}(x) := \sum_{n=1}^{\infty} \frac{1}{n} \left(\frac{\sqrt{2} q^n \sum_{m=1}^{\infty} q^{2nm}}{(1-q^{2n})} - \frac{q^n(1 + 2\sum_{m=1}^{\infty} q^{2nm})}{1-q^{2n}}\right) \cos\left(2\pi n\left(z - \frac{\tau}{2} - x\right)\right).
\end{align}
Furthermore, the accessory parameter $\partial_\tau \phi(\alpha_0, P_0, q)$ is related to the Hamiltonian of the non-autonomous elliptic Calogero-Moser model (Definition~\ref{remark:HJ_phitil}) as follows. Let $u(\tau)$ be the solution of the corresponding equations of motion for the Calogero-Moser model \eqref{def:uv} with $m^2 = (2-\alpha_0)^2$, and let $\tau_{\star}$ be a zero of $u(\tau)$, i.e., $u(\tau_{\star}) = 0$. Then
\begin{align}\label{eq:acc_intro}
\frac{\pi \ii}{2} \partial_{\tau} \phi(q; \alpha_0 - 2, P_0)\Big|_{\tau = \tau_{\star}} = -\frac{H(\tau_{\star})}{4}, \qquad q_{\star} = e^{\ii \pi \tau_{\star}},
\end{align}
where $H(\tau_{\star}) \equiv H(\tau_{\star}, \alpha_0, P_0)$ is the Hamiltonian \eqref{HAM:4} evaluated at $\tau = \tau_{\star}$.
\end{theorem}

\begin{remark}
The limit in \eqref{zc:exp} can be shown for $q \in B_r(0)$ for some $r > \frac{1}{2}$ when $\alpha_0 \in [0,2)$. Furthermore, since $\widetilde{\Gamma}(z;\alpha_0, P_0, q)$ has the explicit expression \eqref{thm43:phi_intro}, one can use the Lam\'e equation \eqref{eq:Lame_intro} to write the explicit form of the accessory parameter $\frac{\ii \pi}{2} \partial_{\tau} \phi(\alpha_0, P_0, q)$ (see Remark~\ref{corr:acc-par-alt}). The analytic properties with respect to $\alpha_0$, $P_0$, $q$ of the semi-classical limit are accessible through this explicit form.
\end{remark}

\subsection{Zamolodchikov conjecture}
   The main result of this paper rigorously proves the Zamolodchikov conjecture for the case of conformal blocks on a one-point torus. A long standing conjecture formulated by Zamolodchikov in 1986 \cite[(2.30)-(2.32)]{zamolodchikov1986two}, \cite[Conjectures A, B]{lisovyy2021accessory}  postulates that, in the semi-classical limit ($c\to\infty$), the conformal blocks of a four-point sphere have an exponential structure. Furthermore, it is known that, assuming such an exponential structure, the semi-classical limit of the sphere 4-point conformal block is described by the accessory parameter of the Heun's equation.

 There are three different heuristics that support the exponential structure of the Liouville conformal blocks in the semiclassical regime: the analysis of the power series representation of conformal blocks \cite{zamolodchikov1987conformal}, the oscillator formalism of Virasoro algebra \cite{becsken2020semi}, and the combinatorial expression in terms of the Nekrasov-Okounkov functions describing the partition functions of supersymmetric gauge theories \cite{Nekrasov:2003rj} given by the Alday-Gaiotto-Tachikawa (AGT) correspondence \cite{Alday:2009aq} (see \cite{nekrasov2018quantum,nekrasov2010quantization} for the results on the semi-classical limit).

   The semi-classical limit of conformal blocks carries significant importance in mathematical physics as it plays a crucial role in a variety of areas from black hole physics \cite{aminov2022black,da2022expansions} and supersymmetric gauge theories \cite{nekrasov2010quantization} (see also the introduction in \cite{pikatek2022classical}) to Teichm\"uller theory (see \cite{amado2017kerr, takhtajan1994topics} and references therein). 

Of particular interest to this paper is the connection between semi-classical conformal blocks and the Heun equation. The Heun equation is a second-order differential equation that describes isomonodromic deformations on a Riemann sphere with four simple poles and features a free parameter known as the {\it accessory parameter}. This parameter is crucial in the semi-classical analysis as it is characterized by conformal blocks on the sphere \cite{zamolodchikov1986two}. The study of the Heun equation is interesting on its own, and has deep ties to black hole physics and hyperbolic geometry among other applications. See \cite{ronveaux1995heun} for a bibliographical account and \cite{hortaccsu2018heun} for a review on applications to physics. 

In the recent years, interest has grown in understanding the relationships between conformal blocks, accessory parameters, and Painlev\'e equations \cite{BGG2021, lisovyy2021accessory,piatek2019solving,lencses2018classical}.  Specifically, the semi-classical limit of conformal blocks is related to the Painlev\'e VI equation; the semi-classical block on a four-point sphere turns out to be the solution of certain Hamilton-Jacobi equation associated to the Painlev\'e VI equation \cite{litvinov2014classical,kashani2013transformations}. In turn the Hamiltonian of the Painlev\'e VI equation is given by $c=1$ conformal blocks on sphere, and hence, the aforementioned relation provides a non-trivial link between the $c\to \infty$ and $c=1$ conformal blocks. 

More recently, a relation between $c\to \infty$ conformal blocks and the connection problem for the degenerations of the Heun equation was conjectured in \cite{bonelli2023irregular} and is dubbed 'the Trieste formula' \cite{lisovyy2022perturbative}, with the case of the Heun equation (without degenerations) verified in \cite{lisovyy2022perturbative}. 
So, the proof of Zamolodchikov conjecture paves the way for a rigorous treatment of the aforementioned problems. Indeed most of the above claims could be easily adapted to the one-point torus case. In particular, the role of the Heun equation is now played by the Lam\'e equation and furthermore, its accessory parameter, under the assumption that the Zamolodchikov conjecture holds, is known to be related to the semi-classical conformal block on the torus \cite{piatek2014classical} and to the elliptic form of the Painlev\'e VI equation with special value of the parameters \cite{BGG2021}. In this context, we also want to highlight a similar study of the semi-classical limit of Liouville CFT on a sphere  \cite{lacoin2019semiclassical}, where they prove the Takhtajan-Zograf theorem relating the classical Liouville action at special points and accessory parameters of a second order Fuchsian equation of holomorphic functions on the unit disk \cite{takhtajan2003hyperbolic}.

    In Theorem~\ref{thm:zamolodchikov}, we showed that, as the central charge $c$ goes to $\infty$ or equivalently $\gamma\to 0$, the torus one point Liouville conformal block behaves as $\mathcal{G}_{\gamma, P}^{\alpha}(q)\sim \exp(\phi(\alpha_0, P_0, q)/\gamma^2)$ where $\phi(\alpha_0, P_0, q)$ is the semi-classical conformal block. Moreover, \eqref{eq:Lame_intro} showed that $\phi(\alpha_0, P_0, q)$ is related to the accessory parameter of Lam\'e equation. As a result, we provide a rigorous proof of the analogue of Zamolodchikov conjecture for one-point torus conformal blocks. Consequently, we prove that $\phi(\ldots)$ is also related to the Hamiltonian of the non-autonomous elliptic Calogero-Moser (NAECM) model (see Theorem~\ref{prop:sc_Lame}) for special values of the solution \cite{BGG2021}, which in turn is related to $c=1$ conformal blocks on a one-point torus \cite{DDG2020}. Such a proof for all other geometries remains an open question. 

\subsection{Ideas of the proofs}
\label{sec:ideas_of_the_proofs}

Our main goal in this paper is to prove the existence and estimate the radius of convergence for the semi-classical limit ($\gamma \to 0$) of the conformal block $\mathcal{G}_{\gamma, P}^{\alpha}(q)$. Henceforth we refer to this as the {\it undeformed} conformal block. We will soon introduce a {\it deformed} conformal block which, roughly speaking, will involve a shift of the variable $x \to x+z$. The proof of the statement of existence involves several steps. One of the main challenges is to show that the conformal block has exponential structure in the semi-classical limit, i.e., as $\gamma \to 0$, the torus 1-point conformal block $\mathcal{G}^{\alpha}_{\gamma, P}$ behaves as $\exp\big(-\frac{1}{\gamma^2} (\phi(\alpha_0, P_0, q)+ \tilde{f}(\alpha_0, P_0))\big)$. For showing this, we use the probabilistic description of $\mathcal{G}^{\alpha}_{\gamma, P}$ given as
\begin{align}\label{eq:def-CBTcasual}
\mathcal{G}^{\alpha}_{\gamma, P} \sim \EE\left[\left({\sf IGMC}^{\alpha}_{\gamma, P}(q)\right)^{-\frac{\alpha}{\gamma}} \right], \qquad {\sf IGMC}^{\alpha}_{\gamma, P}(q) := \int^{1}_0\theta_{1}(x)^{-\frac{\alpha\gamma}{2}} e^{\pi \gamma P x} e^{\frac{\gamma}{2} :Y(x;q):} dx
\end{align}
where $:Y(x;q): \,\, = Y(x;q) -\frac{\gamma}{4} \mathbb{E}\left[Y(x;q)^2 \right]$ is a renormalized log-correlated field on $[0,1]$ (Wick notation), and $e^{\frac{\gamma}{2} :Y(x;q):} dx$ is the Gaussian Multiplicative Chaos (GMC) measure (see also Definition \ref{DEF:GMC measure}).

The above representation of the conformal block was introduced by \cite{ghosal2020probabilistic} and expressed in terms of the Gaussian multiplicative chaos measure constructed using the log-correlated field $Y(x;q)$ defined for $x \in [0,1]$. See Definition~\ref{def:CBT} for further discussion. With the choice of scaling $P = \frac{P_0}{\gamma}$ and $\alpha = \frac{\alpha_0}{\gamma}$ for $P_0 \in \mathbb{R}$ and $\alpha_0 \in \mathbb{R}_{>0}$, the first difficulty is to show that $\gamma^2 \log \mathcal{G}^{\alpha}_{\gamma, P}$ does not blow up to infinity. We overcome this difficulty through a combination of H\"older's inequality, Selberg integral asymptotics, and explicit negative moment bounds for GMC integrals.

We first show that proving the boundedness of the conformal block is equivalent to proving the boundedness of the expression on the LHS of the expression below. For the upper bound when $\alpha_0 \in [0,2)$, we use a comparison argument: since the covariance $\mathbb{E}[Y(x;q)Y(y;q)] \geq \mathbb{E}[Y(x)Y(y)] - 2\log(\prod_{k\geq 1}(1+q^k)^2)$, Slepian's inequality reduces the problem to bounding negative moments of the $q$-independent GMC integral $\mathcal{I}_{\gamma} := \int_0^1 (2\sin(\pi x))^{-\frac{\alpha_0}{2}} e^{\pi P_0 x} e^{\frac{\gamma}{2} Y(x)} dx$. These negative moments are controlled using Lemma~\ref{lem:negative_moments}, which provides bounds via the exact formulae of the negative moments of GMC integral $\mathcal{I}_{\gamma}$ from \cite{ang2023derivation}. For $\alpha_0 \in (-4,0)$, we instead bound positive moments using Selberg integral asymptotics: the $N$-th moment of the GMC integral can be written as
\begin{align}\label{eq:selberg_idea}
\mathbb{E}\left[\mathcal{I}_{\gamma}^N\right] = \int_{[0,1]^N} \prod_{i<j} |\sin(\pi(x_i-x_j))|^{-\frac{\gamma^2}{2}} \prod_{i=1}^N (\sin(\pi x_i))^{-\frac{\alpha_0}{2}} e^{\pi P_0 x_i} dx_i,
\end{align}
which has an explicit evaluation in terms of Gamma functions. Taking $N \sim |\alpha_0|/\gamma^2$ and applying Stirling's approximation shows that these moments grow at most as $\exp(C/\gamma^2)$. These bounds establish that the expression $\gamma^2 \log \mathcal{G}^{\alpha}_{\gamma, P}$ remains bounded as $\gamma \to 0$, proving the tightness of the sequence.

Once the tightness is established, it remains to show that all subsequential limits of $\gamma^2 \log \mathcal{G}^{\alpha}_{\gamma, P}$ are the same. To this end, we use the deformation of the conformal block $\mathcal{G}^{\alpha}_{\gamma, P}$ which we denote as $\psi^{\alpha}_{\chi, P}(z,\tau)$ (see Definition~\ref{def:u-block}) for $\chi \in \{\frac{\gamma}{2}, \frac{2}{\gamma}\}$, $z \in (0,1)$, and $\alpha, P, \tau$ are defined as before. The deformed conformal block on the torus is uniquely determined as the solution of a differential equation named the Belavin-Polyakov-Zamolodchikov (BPZ) equation,
\[\Big(\partial_{z}^2 - l_\chi(l_\chi + 1) \wp(z) + 2\ii \pi \chi^2 \partial_\tau\Big) \psi_{\chi, P}^{\alpha}(z, \tau) = 0, \qquad \ell_{\chi} = \frac{\chi^2}{2} - \frac{\alpha\chi}{2}\]
which is defined on the punctured torus.


The key technical innovation in our approach is the use of Girsanov's theorem (Cameron-Martin formula) to perform a detailed asymptotic expansion for $\gamma\to 0$ of the deformed conformal block. Through the Girsanov representation, we factor the $z$-dependence from the $\gamma$-dependence, introducing auxiliary Gaussian variables that couple the $q$-dependent terms with the GMC integral. Specifically, we introduce a tilted Gaussian variable $\Psi$ and apply the Cameron-Martin theorem:
\begin{align}\label{eq:cameron_martin_idea}
\mathbb{E}\left[e^{\Psi} f(Y)\right] = e^{\frac{1}{2}\mathrm{Var}(\Psi)} \mathbb{E}\left[f(Y + h_{\Psi})\right],
\end{align}
where $h_{\Psi}(x) := \mathbb{E}[\Psi \cdot Y(x)]$ represents the Cameron-Martin shift. This transformation allows us to extract the leading-order $O(\gamma^{-2})$ terms explicitly while controlling the remainder through Proposition~\ref{prop:semiclassical_limit}, which establishes the convergence of the shifted GMC expectation to a quadratic chaos expression.

The convergence result in Proposition~\ref{prop:semiclassical_limit}, established in Section~\ref{sec:dgmc}, is central to completing the asymptotic analysis. This proposition shows that the GMC integral $(\int_0^1 :e^{\frac{\gamma}{2} Y(x)}: d\mu)^{-\alpha_0/\gamma^2}$, when appropriately renormalized, converges to an expression involving the quadratic Wick-ordered field $:Y(x)^2:$. Our approach is similar in spirit to that of \cite{lacoin2019semiclassical}, but with a key technical differences. Rather than white noise decomposition, we work directly with the spectral (Fourier) structure of the log-correlated field $Y(x)$, employing the explicit negative moment formula from \cite{ang2023derivation} and Abel summation techniques to control spectral tails. These modifications are naturally suited to the conformal block setting where the field $Y(x)$ has an intrinsic Fourier decomposition.

On the one hand, for $\chi = \frac{\gamma}{2}$, we show in Theorem~\ref{prop:sc_Lame} that the deformed conformal block $\psi_{\chi, P}^{\alpha}(z, \tau)$ factorizes as $\widetilde{\Gamma}(z;\alpha_0, P_0, q) \times e^{\phi(q;\alpha_0,P_0)/\gamma^2}$ as $\gamma \to 0$. The function $\widetilde{\Gamma}$ emerges from a refined asymptotic expansion where the Cameron-Martin shift in this case is denoted by $h_{\widetilde{\Psi}}(x)$ and has the expansion $h_{\widetilde{\Psi}}(x) = h_q(x)/\gamma + \gamma \mathfrak{R}_{z,q}(x) + O(\gamma^2)$, where the function $\mathfrak{R}_{z,q}(x)$ satisfies a fixed point equation arising from the $O(\gamma)$ correction. Similar decomposition of deformed conformal block is a direct consequence of the value of $\chi = \frac{\gamma}{2}$ which is called light insertion in the literature, and was previously observed in \cite{litvinov2014classical,piatek2014classical} for the case of the torus. With the scaling $\alpha = \frac{\alpha_0}{\gamma}$ and $P = \frac{P_0}{\gamma}$, substituting the expression $\psi_{\gamma/2, P}^{\alpha}(z, \tau)\sim \widetilde{\Gamma}(z;\alpha_0, P_0, q) e^{\phi(q;\alpha_0, P_0)/\gamma^2}$ into the BPZ equation above and studying the limit $\gamma \to 0$ gives us the Lam\'e equation as shown in \eqref{eq:Lame_intro}. Since $\widetilde{\Gamma}(z;\alpha_0, P_0, q)$ is explicit (see \eqref{Gamma:Lame}), the accessory parameter of \eqref{eq:Lame_intro} which could be the $\tau$-derivative of any subsequential limit of $\gamma^2 \log \mathcal{G}^{\alpha}_{\gamma, P}$ is now uniquely identified, which in turn shows the existence of the limit of the sequence $\gamma^2 \log \mathcal{G}^{\alpha}_{\gamma, P}$.

On the other hand, for $\chi = \frac{2}{\gamma}$, following a similar strategy as above, the BPZ equation in the $\gamma\to 0$ limit in this case becomes the Hamilton-Jacobi equation for the (non-autonomous) elliptic Calogero-Moser model (NAECM). The asymptotic expansion via the Cameron-Martin approach yields an explicit expression for $\widetilde{\phi}(z,q)$ 
plus contributions from the Cameron-Martin shift $h_\Psi$, and $\mathbb{E}[\Psi^2]$ for a Gaussian variable  $\Psi$.
The tightness in $C^2([M,1-M] \times [\epsilon, q_0])$ for $0<M< 1/2$, $0<\epsilon< q_0\in (0,1)$ established in Proposition~\ref{prop:limit_commutativity}, combined with the commutativity of limits $\gamma \to 0$ and $q \to q_0$, allows us to demonstrate in Theorem~\ref{prop:sem_HJ} that the semi-classical limit of $\psi^{\alpha}_{2/\gamma, P}$ solves the aforementioned Hamilton-Jacobi equation. See Definition~\ref{remark:HJ_phitil} for a discussion. This connection between the deformed conformal block and Calogero-Moser model is used later in Theorem~\ref{prop:sc_Lame} to show that the accessory parameter of the Lam\'e equation \eqref{eq:Lame_intro} is in fact related to the Hamiltonian of the NAECM model computed at some special points, a statement that is known in the literature. See \cite{BGG2021} and references therein.

\subsection{Outline of the paper}

This paper is organized as follows.
In Section \ref{sec:pcb}, we review the probabilistic construction of the {\it undeformed} conformal block $\,\mathcal{G}_{\gamma, P}^{\alpha}(q)$, and the {\it deformed} conformal block which satisfies the PDE known as the Belavin-Polyakov-Zamolodchikov (BPZ) equation.


In Section~\ref{sec:dgmc}, we establish the key analytical result underlying the semi-classical analysis: the convergence of GMC integrals raised to large negative powers toward quadratic chaos expressions. The main result is Proposition~\ref{prop:semiclassical_limit}, which shows that for a probability measure $\mu$ with density bounded below, the appropriately renormalized GMC integral converges as $\gamma \to 0$ to a finite expectation involving the Wick-ordered quadratic field $:Y(x)^2:$. The proof relies on three supporting lemmas: Lemma~\ref{lem:uniform_integrability} establishes uniform integrability using spectral truncation and Janson's hypercontractive estimates, Lemma~\ref{lem:negative_moments} provides negative moment bounds via the explicit formula from \cite{ang2023derivation}, and Lemma~\ref{lem:approximation_error} controls the approximation error between truncated and full GMC integrals using Girsanov arguments with Abel summation techniques.

In Section \ref{sec:deformed_semi-classical_limit}, we formulate the semi-classical limit of the {\it deformed} 
conformal block. In Subsection~\ref{subsec:chi2g} we study the semi-classical limit of the deformed conformal block for $\chi = \frac{2}{\gamma}$. The main result of this Subsection is Theorem~\ref{prop:sem_HJ} which establishes an explicit description of such a semi-classical limit. In Subsection~\ref{subsec:scg2} we analyse the semi-classical limit of the deformed conformal block for $\chi = \frac{\gamma}{2}$. The main result of this Subsection is Theorem~\ref{prop:sc_Lame} which gives explicit description of that semi-classical limit in terms of the solution of Lam\'e equation. As a consequence of this theorem, we obtain explicit expressions for the solution and accessory parameter of the Lam\'e equation. 

We prove Theorem~\ref{thm:zamolodchikov} in Section~\ref{sec:semi-classical_limit}. We further prove that the radius of convergence of the semi-classical conformal block is positive in Theorem~\ref{thm:radius}.

Finally, in Section \ref{sec:asymp}, we study the asymptotic behaviour of the semi-classical limit of the conformal blocks for $\tau\to \ii \infty$. The key component of the analysis is the expression of the normalization constant of the undeformed conformal block in terms of double Gamma functions. Results in this section are used to prove results in Section \ref{sec:deformed_semi-classical_limit}.

\subsection{Outlook}

 Our paper presents two key takeaways. Firstly, the methods we introduced in proving the Zamolodchikov conjecture offer a road map for studying semi-classical conformal blocks on other Riemann surfaces and for the analysis of conformal blocks with different central charges\ conditional on the probabilistic formulation of conformal blocks. Secondly, the explicit formulas for the conformal blocks and the Lam\'e equation obtained here provide profound insights into the analytic and geometric aspects of conformal field theories (CFTs) and the associated Heun-type equations. Let us now describe some open questions of immediate interest.
 
\begin{itemize}[leftmargin=0.3cm]
\item A natural question that arises is the generalization of our techniques to other geometries such as the four-point sphere and higher genus surfaces. The key to achieving this would be to develop a systematic approach to defining probabilistic conformal blocks on Riemann surfaces. The formulation of probabilistic conformal blocks on a four point sphere is expected to have a slightly varied construction \cite{ghosal2022analiticity}, and the irregular conformal blocks can then be constructed through appropriate degeneration limits. The proof of the Zamolodchikov conjecture in the aforementioned cases is expected to follow a similar formalism to that presented here. 

\item Semi-classical limit of Liouville CFT corresponds to taking the central charge $c \to \infty$. It is natural to ask what happens when $c\to -\infty$. This question is related to the semi-classical limit of CFTs with $c\leq 1$ (i.e., $\gamma$ purely imaginary). Recently, in a series of works \cite{Wang1, Wang2, Wang3}, Yilin Wang has investigated the large deviation problem of Schramm-Loewner evolution ($\mathrm{SLE}_{\kappa}$) as $\kappa\to 0_+$ and has shown that the large deviation rate function is equal to the Loewner energy function. Since $\kappa\to 0_+$ corresponds to talking $c\to -\infty$ and SLE is related to the correlation function of certain CFT, the results of \cite{Wang1, Wang2, Wang3} describes semi-classical limit for $c\to -\infty$. In a more recent work \cite{Pel&Wang}, Peltola and Wang characterized minimizers of the Loewner potential using tools from real algebraic geometry which provides unique insight about the CFT when the central charges diminishes to $-\infty$. Moreover, the minimum value of such Loewener energy is shown to solve a PDE which looks similar to Hamilton-Jacobi limit type equation satisfied by the semi-classical limit of deformed conformal block with degenerate weight of $\frac{2}{\gamma}$ in \eqref{eq:Hamilton-jacobi} of Theorem~\ref{prop:sem_HJ}.

\end{itemize}

 
\begin{itemize}[leftmargin=0.3cm]
\item The Trieste formula \cite[(4.1.16) - (4.1.17)]{bonelli2023irregular}, \cite[eq. 2.20b]{lisovyy2022perturbative}  describes the connection formulae of the Heun's equation in terms of the semi-classical limit of the fusion kernel of the 4-point sphere conformal blocks.
Our description of the semi-classical conformal block  may enable the proof of an analogue to the Trieste formula connecting the semi-classical limit of the modular kernel (see \cite[(5.3)]{ghosal2022analiticity}) of one-point torus conformal block with the connection formulae of the Lam\'e equation. 
A probabilistic description of the four-point conformal block and its degenerations could prove the Trieste formula in full generality.

\item   
In \cite{lisovyy2021accessory}, the authors found a series representation of the accessory parameter of the Heun's equation where the coefficients satisfy some recurrence relation. This is reminiscent of the Zamolodchkiov recurrence relation satisfied by the coefficients of the conformal blocks. Following a similar procedure as in \cite{lisovyy2021accessory}, one can also  derive a series representation of the accessory parameter of Lam\'e equation  once the monodromy data associated to $\alpha_0, P_0$ is determined. Furthermore, the accessory parameter is also explicitly given in terms of the Cameron-Martin shift and known elliptic functions in  \eqref{def:phiq}, and in terms of integrals \eqref{corr:acc3}. Such formulae would allow to study its asymptotic properties \cite{Novokshenov,XXZ}.

\item Through the AGT correspondence, the semi-classical conformal block on a one-point torus determines the Nekrasov-Shatashvili (NS) limit of the corresponding $\mathcal{N} = 2^*$ gauge theory, or equivalently, the effective twisted superpotential \cite{teschner2011quantization}. The explicit form
of the classical conformal block in terms of the Cameron-Martin shift and known elliptic function (see Theorem \ref{thm:ex_uni}) presented in this paper would enable a rigorous proof of the Nekrasov-Rosly-Shatashvili conjecture that relates the effective twisted superpotential of the gauge theory to the so-called Yang-Yang function of the associated quantum Hitchin system \cite{nekrasov2009supersymmetric,nekrasov2011darboux,hollands2018higher}.

\item The conformal blocks for arbitrary central charge solve certain bilinear relations called the Nakajima-Yoshioka blow-up relations \cite{nakajima2003lectures}. Such relations for the one-point torus have been written in \cite{gu2020elliptic,BGG2021} and it will be interesting to see if such relations could be proved using probabilistic conformal blocks. An interesting property of the blow-up relation is that it connects the semi-classical conformal blocks with the conformal block at central charge $c=1$ \cite{nekrasov2010quantization}, in turn relating Painlev\'e equations to Heun equations for the sphere case and the NAECM model to Lam\'e equation (see also \cite{BGG2021}) for the torus case. We show the latter relation in Theorem~\ref{prop:sc_Lame} based on probabilistic ideas. Showing the full blow-up relation is beyond the scope of this paper and we leave this for a future work. 

\end{itemize}


%
\section*{Acknowledgements}
We thank Alba Grassi, Aleksandra Korzhenkova, Oleg Lisovyi, Alex Moll, Eveliina Peltola, Vincent Vargas for informative discussions. We also thank the anonymous referees whose detailed feedback has significantly helped to improve and rectify many details of the paper. Part of this work was done during the authors' residence at the SLMath (formerly, Mathematical Sciences Research Institute) in Berkeley, California, during the program 'Universality and Integrability in Random Matrix Theory and Interacting Particle Systems'. They acknowledge the support of the National Science Foundation Grant No. DMS-1928930. H.D. also acknowledges the support of the Okinawa Institute of Science and Technology (OIST), where she completed part of this work during her stay for the Theoretical Sciences Visiting Program (TSVP) "Exact Asymptotics: From Fluid Dynamics to Quantum Geometry."

The work of H.D. is partly supported by Australian Research Council Discovery Project \#DP200100210, INI-Simons fellowship, SMRI postdoctoral fellowship, and Marie Skłodowska-Curie Postdoctoral Fellowship 101203697. %
The work of P.G. is partially supported by the NSF grant DMS-2346685.
A.P. is supported NSF MSPRF grant DMS-2103354, and RSF grant 22-11-00070.
\section{Probabilistic conformal blocks}\label{sec:pcb}
In this section we recall the construction of the conformal block in \cite{ghosal2020probabilistic}. We begin by defining Gaussian Multiplicative Chaos (GMC) measures in terms of log-correlated fields, which will then be used to define the {\it probabilistic} one point conformal block on the torus. We then introduce the notion of a {\it deformed} conformal block which solves the PDE known as the Belavin-Polyakov-Zamolodchikov (BPZ) equation on the torus with one puncture.

\subsection{Log-correlated fields and Gaussian Multiplicative Chaos (GMC) measures}\label{sec:log-corr}

For $x\in [0,1]$, the random fields $Y_{N}(x;q)$, $Y_{ N}(x)$, with $N\in \mathbb{N}$ are defined as
\begin{align}
\widetilde{Y}_{N}(x) & =\sum_{n = 1}^N \sqrt{\frac{2}{n}} \Big( a_n \cos(2 \pi n x) + b_n \sin(2 \pi n x) \Big),\label{def:YinfN}\\
\widetilde{Y}_{N}(x;q) &= \widetilde{Y}_{N}(x) + \underbrace{2\sum_{n, m = 1}^{\infty} \frac{q^{nm}}{\sqrt{n}} \Big(a_{n, m} \cos(2\pi n x) + b_{n,m}\sin(2\pi nx)\Big)}_{=: F(x;q)}, \label{def:YtauN}
\end{align}
where $\{a_n\}_{n \geq 1}$, $\{b_n\}_{n \geq 1}$, $\{a_{n, m}\}_{n, m \geq 1}$, $\{b_{n, m}\}_{n, m \geq 1}$ are sequences of i.i.d standard real Gaussian random variables, the modular parameter $\tau \in \mathbb{H}$, and $q = e^{i \pi \tau}$.  
In what follows we use the notation\footnote{Throughout this paper, we also use the dependence on $q$ and $\tau$ interchangeably.}
\begin{align}
   Y(x) := \lim_{N\to \infty} \widetilde{Y}_{N}(x),  &&  Y(x;q) := \lim_{N\to \infty} \widetilde{Y}_{ N}(x;q).   \label{def:Yinf_and_Ytau}
\end{align}
Indeed, the above limit exists. The first limit exist in the sense of weak convergence in the Sobolev space $H^{-s}([0,1])$ and the second limit exists  to due to the first limit and the almost sure convergence of infinite series of $F(x;q)$ due to the term $q^{nm}$. 
The resulting random fields $ Y(\cdot;q)$ and $ Y(\cdot)$ are log-correlated, and
the covariance kernels of \eqref{def:Yinf_and_Ytau} for $\tau \in \ii\mathbb{R}_{>0}$ are given as follows.  
\begin{itemize}[leftmargin= 0.3cm]
\item The field $Y(x)$ is log-correlated on $[0,1]$: 
 \begin{align}\label{corr:Yinf}
     \mathbb{E} [Y(x) Y(y) ] = - 2 \log \vert e^{2 \ii \pi  x} - e^{2 \ii \pi y} \vert , 
 \end{align}
 which is the usual expression $\log|x-y|$ in cylindrical coordinates obtained by the transformation $z\rightarrow e^{2\pi i z}$.
\item The $q$-dependent field $Y(x;q)$ is log correlated on the torus $\mathbb{C}/(\mathbb{Z} + \tau\mathbb{Z})$:
 \begin{align}\label{corr:Ytau}
     \mathbb{E} [Y(x;q) Y(y;q) ] = - 2 \log |\theta_1(x - y)| + 2 \log|q^{1/6} \eta(q)|.
 \end{align}
 Note that for $q\rightarrow 0$ and $(x-y)\to 0$, the above relations \eqref{corr:Yinf}, \eqref{corr:Ytau} coincide up to a trivial factor. The eta function $\eta(q)$ is defined in \eqref{def:ell-eta}, the theta function $\theta_1(z)$ is defined in \eqref{def:elltheta1}, and $\theta_1(0)=0$.
\end{itemize}

 Using  \cite[Lemma 2.5]{ghosal2020probabilistic}, one finds that
$
    \EE\left[F(x;q)^2 \right] = - 4\log \vert q^{-1/12} \eta(q)\vert.
$
As a result, we get 
\begin{align}\label{Einftau}
    \EE\left[\widetilde{Y}_{N}(x;q)^2 \right]= \EE\left[\widetilde{Y}_{N}(x)^2 \right] - 4\log \vert q^{-1/12} \eta(q)\vert.
\end{align}

With the above log-correlated fields, we now define the following random measures.
\begin{definition}[GMC measure]\label{DEF:GMC measure}
    For $\gamma \in (0,2)$ and $\tau \in \ii\RR_{>0}$, the GMC
measures $e^{\frac{\gamma}{2} Y(x)}dx$ and $e^{\frac{\gamma}{2} Y(x;q)}dx$ are defined to be the  weak limits of
measures in probability with respect to the filtration generated by $\{({a}_n, {b}_n)\}_{n\geq 1-}$
\begin{equation}\label{weakLim1}
\begin{split}
e^{\frac{\gamma}{2} Y(x)} dx &:= \lim_{N \to \infty} e^{\frac{\gamma}{2} \widetilde{Y}_{N}(x) - \frac{\gamma^2}{8} \EE[\widetilde{Y}_{N}(x)^2]} dx,\\
e^{\frac{\gamma}{2} Y(x;q)} dx &:= \lim_{N \to \infty} e^{\frac{\gamma}{2} \widetilde{Y}_{N}(x;q) - \frac{\gamma^2}{8} \EE[\widetilde{Y}_{N}(x;q)^2]} dx.
\end{split}
\end{equation}
\end{definition}
Since we are defining $e^{\frac{\gamma}{2}Y}$ instead of $e^{\gamma Y}$, for $\gamma \in (0,2)$, the corresponding Gaussian multiplicative chaos measure is indeed in $L^2$ phase. Hence the usual martingale argument for showing the existence and uniqueness of the limiting random measures suffices.

For a test function $f(x)$, these measures are interpreted as
\begin{align}\label{def:GMC}
    \int_0^{1} f(x) e^{\frac{\gamma}{2} Y(x;q)} dx = \lim_{N\to \infty}  \int_0^{1} f(x)e^{\frac{\gamma}{2} \widetilde{Y}_{N}(x;q) - \frac{\gamma^2}{8} \EE[\widetilde{Y}_{ N}(x;q)^2]} dx,
\end{align}
and a similar expression holds for the $q$-independent field $Y(x)$. 

The existence of the limits above can be shown in the following way. Note that the identity
\begin{align}\mathbb{E}[ e^{c \left(\widetilde{Y}_{k} - \widetilde{Y}_{k-1} \right)}]=e^{\frac{c^2}{2}\mathbb{E}[(\widetilde{Y}_{k} - \widetilde{Y}_{k-1})^2] },\end{align}
implies that the sequence in \eqref{def:GMC} is a martingale with respect to the filtration generated by the sequence of random variables $\{({a}_n, {b}_n)\}_{n\geq 1}$. Since the sequences is almost surely positive, an almost sure convergence is guaranteed by the martingale convergence theorem, which in turn proves the existence of the limit in \eqref{def:GMC}.
More details about the existence of the limits could be found in \cite{rhodes2014gaussian,berestycki2017elementary}.

\subsection{One-point (undeformed) conformal block} 
We now recall the definition of the one-point conformal block in \cite{ghosal2020probabilistic}.
\begin{definition}\label{def:CBT}
For  $\alpha \in (-\frac{4}{\gamma}, Q)$,
and $P \in \mathbb{R}$, $q= e^{\ii \pi \tau}$, $\tau \in \ii \mathbb{R}_{\geq 0}$, or equivalently $q \in (0,1)$, we define the
probabilistic $1$-point toric conformal block as
\begin{equation} \label{eq:def-block}
\cG^\alpha_{\gamma, P}(q) := \frac{1}{Z^{\alpha}_{\gamma, P}}
\EE\left[\left(\int_0^1  |\theta_{1}(x)|^{-\frac{\alpha\gamma}{2}} e^{\pi \gamma P x} e^{\frac{\gamma}{2} Y(x;q)}dx \right)^{-\frac{\alpha}{\gamma}} \right],
\end{equation}
with the normalization 
\begin{equation}\label{eq:Z-normalizatoin}
Z^{\alpha}_{\gamma, P}(q):= q^{\frac{1}{12}(\frac{\alpha \gamma}{2} + \frac{\alpha^2}{2} - 1)}
\eta(q)^{ \alpha^2 + 1 - \frac{\alpha \gamma}{2}} \EE\left[\left(\int_0^1 (2{\sin(\pi x)})^{-\frac{\alpha \gamma}{2}} e^{\pi \gamma P x} {e^{\frac{\gamma}{2} Y(x)}dx} \right)^{-\frac{\alpha}{\gamma}}\right].
\end{equation}
The theta function $\theta_1(x)$ is defined in \eqref{def:elltheta1}, and has the behavior $-2 q^{1/4} \sin(\pi x)$ near $x=0$. Further definitions of elliptic functions are provided in \ref{app:ell_func}.

\end{definition}
Note that the normalization constant $Z^{\alpha}_{\gamma, P}$ ensures that
\begin{align*}
    \lim_{q\rightarrow 0} \cG^\alpha_{\gamma, P}(q) =1, && \lim_{P\rightarrow \infty} \cG^\alpha_{\gamma, P}(q) =q^\frac{1}{12}\eta(q)^{-1}.
\end{align*}

\begin{definition}\label{def:analytic_extension}
The analytic extension of the conformal block in Definition~\ref{def:CBT} indeed also has a probabilistic expression, thanks to Cameron-Martin theorem as stated and proved in Lemma 2.12 of \cite{ghosal2020probabilistic}. We write below such probabilistic formula of analytic extension:
\begin{align}\label{eq:analytic_extension}
    \mathcal{G}^{\alpha}_{\gamma, P}(q) = \frac{1}{Z^{\alpha}_{\gamma, P}} (q^{1/6}\eta(q))^{\frac{\alpha}{2}}e^{-\frac{\alpha^2}{8}\mathbb{E}[F(0;q)^2]}\EE\left[e^{\frac{\alpha}{2}F(0;q)}\mathcal{Q}(q)\left(\int_0^1 (2{\sin(\pi x)})^{-\frac{\alpha \gamma}{2}} e^{\pi \gamma P x} {e^{\frac{\gamma}{2} Y(x)}dx} \right)^{-\frac{\alpha}{\gamma}}\right]
\end{align}
where $F(x;q)$ is a smooth Gaussian field on $[0,1]$ defined in \eqref{def:YtauN} and 
\begin{equation}\label{eq:Qq}
\mathcal{Q}(q) := \exp\left[\sqrt{2} \sum_{m,n=1}^{\infty} q^{nm} \left(a_{n,m} a_n  + b_{n,m} b_n\right) - \sum_{n=1}^\infty \left( \sum_{m=1}^\infty q^{nm} a_{m,n}\right)^2 - \sum_{n=1}^\infty \left( \sum_{m=1}^\infty q^{nm} b_{m,n}\right)^2 \right]
\end{equation}

\end{definition}

    A key consequence of the present construction is that the analytic properties of the conformal block \eqref{eq:def-block} become accessible, and we recall the following result.

\begin{theorem}[ \text{\cite[Theorem~1.1]{ghosal2020probabilistic}}]\label{thm:conformalblockMAIN}
    For $\gamma \in (0,2)$, $\alpha \in (-\frac{4}{\gamma}, Q)$, and $P \in \mathbb{R}$, the conformal block in \eqref{eq:def-block} is analytic in a complex neighborhood of $q=0$. Moreover,  for $\alpha\in [0, Q)$, the analytic extension of \eqref{eq:def-block} exists in a neighborhood of $q=0$ with to $|q| < r_0$ for some $r_0 > 1/2$.   $Q$ is defined in \eqref{c:gamma}. 
\end{theorem}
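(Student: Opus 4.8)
The plan is to reduce the claim to holomorphy of the expectation
\[
H(q) := \EE\left[\left(\int_0^1 \theta_1(x)^{-\frac{\alpha\gamma}{2}} e^{\pi\gamma P x}\, e^{\frac{\gamma}{2} Y(x;q)}\,dx\right)^{-\frac{\alpha}{\gamma}}\right]
\]
on the punctured disk $0<|q|<1$. There the prefactors $q^{\bullet}\eta(q)^{\bullet}$ entering $Z^\alpha_{\gamma,P}$ are holomorphic and nonvanishing (as $\eta$ is zero-free on $|q|<1$), so holomorphy of $\cG^\alpha_{\gamma,P}$ on $0<|q|<1$ follows once $H$ is shown holomorphic there; a direct computation gives $H(q)\sim C\,q^{\alpha^2/8}$ and $Z^\alpha_{\gamma,P}(q)\sim C\,q^{\alpha^2/8}$ as $q\to0$, so the fractional powers cancel, $\cG^\alpha_{\gamma,P}(q)\to1$, and since the quotient is then bounded near the origin, Riemann's removable-singularity theorem upgrades holomorphy on the punctured disk to holomorphy across $q=0$. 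The structural input I would exploit throughout is the splitting $Y(x;q)=Y(x)+F(x;q)$ from \eqref{def:YtauN}, in which $Y$ and $F$ are built from independent Gaussian families and in which $F(x;q)=2\sum_{n,m\geq1}\frac{q^{nm}}{\sqrt n}(\widetilde a_{n,m}\cos 2\pi nx+\widetilde b_{n,m}\sin 2\pi nx)$ is, for almost every realization, a genuine holomorphic function of $q$ on $|q|<1$.

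First I would establish almost-sure holomorphy of the integrand in $q$. For fixed randomness the map $q\mapsto \int_0^1\theta_1(x)^{-\alpha\gamma/2}e^{\pi\gamma Px}e^{\frac{\gamma}{2}Y(x;q)}dx$ is holomorphic on $|q|<1$: the Wick renormalization contributes a factor $|q^{-1/12}\eta(q)|^{\gamma^2/2}$ (times a $q$-independent constant) by \eqref{Einftau}, the weight $\theta_1(x)^{-\alpha\gamma/2}$ is holomorphic in $q$, and $e^{\frac{\gamma}{2}F(x;q)}$ is holomorphic by the previous paragraph, all with enough uniformity in $x$ to differentiate under the integral. Next I would pass holomorphy through the expectation by Morera's theorem: for a closed contour $\Gamma\subset B_r(0)$ one wants $\oint_\Gamma H(q)\,dq=\EE\big[\oint_\Gamma(\cdots)^{-\alpha/\gamma}\,dq\big]=0$, where the interchange is licensed by Fubini once a uniform moment bound $\sup_{q\in\bar B_r(0)}\EE\big[|(\cdots)|^{-\alpha/\gamma}\big]<\infty$ is in hand, and the inner integral vanishes by Cauchy's theorem applied to the a.s.-holomorphic integrand.

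The heart of the matter is therefore the uniform moment estimate, and this is where GMC theory and the constraints on $\alpha$ enter. When $\alpha>0$ the exponent $-\alpha/\gamma$ is negative, so we bound a negative moment of the total GMC mass, which is finite of all orders; when $\alpha<0$ the exponent is positive and finiteness demands the subcriticality $-\alpha/\gamma<4/\gamma^2$, which is exactly the condition $\alpha>-4/\gamma$, while the admissibility of the insertion $\theta_1(x)^{-\alpha\gamma/2}\sim x^{-\alpha\gamma/2}$ at $x=0$ is governed by the Seiberg-type bound $\alpha<Q$. To make these bounds uniform over a complex disk I would dominate $e^{\frac{\gamma}{2}F(x;q)}$ through its variance $\EE[F(x;q)^2]=-4\log|q^{-1/12}\eta(q)|$ (finite on $|q|<1$) via Gaussian comparison/hypercontractivity, and then extract a lower bound on the radius by Cauchy--Hadamard applied to the resulting bounds on the $q$-Taylor coefficients. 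For $\alpha\in[0,Q)$ the negative-moment regime is free of the positive-moment threshold, so the estimate can be pushed further and yields the sharper lower bound $r>\tfrac12$. The main obstacle I anticipate is precisely this quantitative, uniform-in-$q$ moment control, simultaneously accommodating the boundary singularity of $\theta_1(x)^{-\alpha\gamma/2}$ at $x=0$ and the growth of $F(x;q)$ as $|q|$ increases; pushing the radius past $\tfrac12$ is the delicate point and is special to the regime $\alpha\geq0$.
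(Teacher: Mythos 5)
This theorem is not proved in the paper at all: it is imported verbatim from \cite[Theorem~1.1]{ghosal2020probabilistic}, so there is no in-paper proof to compare against. The closest in-paper relative is the proof of Theorem~\ref{thm:radius}, which follows the same skeleton you propose: split $Y(x;q)=Y(x)+F(x;q)$, use Girsanov's theorem to strip the $q$-dependence out of the GMC (producing the factors $C_1(q)$, $\mathcal{Q}(q)$, $e^{\frac{\alpha}{2}F_\tau(0)}$ and a $q$-independent measure $e^{\frac{\gamma}{2}Y(x)}dx$), control the expectation by H\"older's inequality together with negative GMC moments, and conclude analyticity by Morera. Your overall architecture --- a.s.\ holomorphy of the randomized integrand in $q$, Morera plus Fubini under a uniform moment bound, the constraints $\alpha<Q$ and $-\alpha/\gamma<4/\gamma^2$ delimiting the admissible moment regime, and cancellation of the $q^{\alpha^2/8}$ prefactors so that the singularity at $q=0$ is removable --- is the right blueprint and matches both the cited proof and Theorem~\ref{thm:radius}.

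There are, however, two genuine gaps. First, for complex $q$ the quantity $\bigl(\int_0^1\theta_1(x)^{-\frac{\alpha\gamma}{2}}e^{\pi\gamma Px}e^{\frac{\gamma}{2}Y(x;q)}dx\bigr)^{-\alpha/\gamma}$ is a fractional power of a \emph{complex} random variable, so before any moment estimate you must show the random integral stays in a region (say, positive real part) where a single branch can be chosen, a.s.\ and with enough uniformity for measurability, integrability and holomorphy. This is exactly where smallness of $|q|$ enters: one must show $e^{\frac{\gamma}{2}F(x;q)}$, $\mathcal{Q}(q)$ act as a small multiplicative perturbation of the positive $q=0$ mass. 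Dominating $e^{\frac{\gamma}{2}F}$ ``through its variance via Gaussian comparison'' cannot do this job: $\mathbb{E}[F(x;q)^2]=-4\log|q^{-1/12}\eta(q)|$ is finite on all of $|q|<1$, yet your first paragraph's claim of holomorphy of $H$ on the whole punctured disk $0<|q|<1$ is not what the theorem delivers and is not known; what actually limits the radius is the range of exponents for which the tilting factors ($\mathbb{E}[\mathcal{Q}(q)^{p_2}]$, $\mathbb{E}[e^{p_1\alpha F_\tau(0)/2}]$) and the negative GMC moments can be combined in H\"older --- compare the choice of $p_1,p_2,p_3$ and the ``sufficiently small $r_0$'' in the proof of Theorem~\ref{thm:radius}. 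Second, your claim that for $\alpha\in[0,Q)$ the estimate ``can be pushed further and yields the sharper lower bound $r>\tfrac12$'' is asserted rather than derived: nothing in your argument produces the threshold $\tfrac12$. In the cited work that constant comes from a specific quantitative analysis of the $q^{nm}$ structure of $F$ combined with the negative-moment regime; a Cauchy--Hadamard argument on unoptimized coefficient bounds gives \emph{some} $r>0$ but not $\tfrac12$. So the proposal is a sound route to the first assertion (a small complex neighborhood of $q=0$), but the second assertion of the theorem is not actually established by it.
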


\begin{remark}
The range $(-\frac{4}{\gamma^2}, Q)$ for $\alpha$ is the range in which the 1-point correlation function $\langle e^{\alpha \phi(0)}\rangle_{\tau}$ has a GMC expression from the path integral formalism of LCFT \cite{DRV16}. Furthermore, the range $\alpha \in (0,Q)$ is needed to extend the probabilistic definition of conformal block in  disk of radius greater than $\frac{1}{2}$.
\end{remark}

Furthermore, \eqref{eq:def-block} can be represented a series with the coefficients satisfying the so-called Zamolodchikov recursion \cite[Proposition 2.11]{ghosal2020probabilistic}, and as a combinatorial series in terms of Nekrasov-Okounkov functions \cite[Section 2.3]{ghosal2020probabilistic} due to the Alday-Gaiotto-Tachikawa (AGT) correspondence \cite{negut2016exts}. 
\begin{remark}\label{rem:Normalization}
  For $-\frac{\alpha}{\gamma} = N < \frac{4}{\gamma^2}$ with $N \in \mathbb{N}$, \eqref{eq:def-block} can be expressed as a Dotsenko-Fateev integral \cite{mironov2010conformal, fateev_litvinov_differential}
    \begin{align*}
        \mathcal{G}_{\gamma, P}^{\alpha}(q) &= \frac{1}{Z^{\alpha}_{\gamma, P}} \mathbb{E}\left[ \left(\int_{0}^{1}  |\theta_{1}(x)|^{-\frac{\alpha \gamma}{2}} e^{\gamma \pi P x} e^{\frac{\gamma}{2} Y(x;q)} dx \right)^{-\frac{\alpha}{\gamma}} \right] \nonumber \\
        &= \frac1C\int_{[0,1]^N} \prod_{1\leq i<j\leq N} \vert \theta_{1}(x_i - x_j) \vert^{-\frac{\gamma^2}{2}} \prod_{i=1}^N |\theta_{1}(x_i)|^{\frac{N \gamma^2}{2}} e^{\pi \gamma P x_i} \prod_{i=1}^N dx_i,
        \end{align*}
    with
\begin{align}\label{eq:C_expression}
C&=  \frac{e^{\frac{N \gamma P \pi}{2}}}{\left(\Gamma\left(1-\frac{\gamma^2}{8}\right)\right)^N}\left(\frac{|\theta_1'(0)|}{2}\right)^\frac{N(N+1)\gamma^2}{8}\prod_{k=1}^N\frac{\Gamma\left(1-\frac{k\gamma^2}{8}\right)\Gamma\left(1+\frac{(2N+1-k)\gamma^2}{8}\right)}{\Gamma\left(1+\frac{k\gamma^2}{8}+\frac{\ii \gamma P}{2}\right)\Gamma\left(1+\frac{k\gamma^2}{8}-\frac{\ii \gamma P}{2}\right)}.
\end{align}
\end{remark}

Let us highlight the following result that will prove useful to study the $q\to 0$ asymptotics of the semi-classical conformal blocks. 
\begin{proposition}\label{prop:Zexplicit}
The partition function $Z_{\gamma, P}^{\alpha}$ in \eqref{eq:Z-normalizatoin} can be expressed in terms of double Gamma functions as follows \cite[(2.15), Proposition 6.4]{ghosal2020probabilistic}
\begin{align}\label{eq:ZA}
    Z^{\alpha}_{\gamma, P} =  q^{\frac{1}{12}(\frac{\alpha \gamma}{2} + \frac{\alpha^2}{2} - 1)}
\eta(q)^{ \alpha^2 + 1 - \frac{\alpha \gamma}{2}} \mathcal{A}_{\gamma, P}(\alpha),
\end{align}
where the $q$-independent factor is
\begin{align}\label{eq:Adoublegamma}
     \mathcal{A}_{\gamma, P}(\alpha) = e^{\frac{i \pi \alpha^2}{2}}&\left(\frac{\gamma}{2} \right)^{\frac{\gamma \alpha}{4}} e^{- \frac{\pi \alpha P}{2}} \Gamma\left(1 - \frac{\gamma^2}{4} \right)^{\frac{\alpha}{\gamma}} \\
     & \times\frac{\Gamma_{\frac{\gamma}{2}}\left(Q-\frac{\alpha}{2} \right) \Gamma_{\frac{\gamma}{2}}\left(\frac{2}{\gamma}+\frac{\alpha}{2} \right) \Gamma_{\frac{\gamma}{2}}\left(Q-\frac{\alpha}{2}- \ii P \right) \Gamma_{\frac{\gamma}{2}}\left(Q-\frac{\alpha}{2}+\ii P \right)}{\Gamma_{\frac{\gamma}{2}}\left(\frac{2}{\gamma} \right)\Gamma_{\frac{\gamma}{2}}\left(Q-\ii P \right)\Gamma_{\frac{\gamma}{2}}\left(Q+\ii P \right)\Gamma_{\frac{\gamma}{2}}\left(Q-\alpha \right)},
\end{align}
where the function $\Gamma_{\frac{\gamma}{2}}(\cdot)$ is the ratio of double Gamma functions $$\Gamma_{\frac{\gamma}{2}}(x)= \frac{\Gamma_{2}(x|\gamma,\gamma^{-1})}{\Gamma_2\left({Q}/{2}|\gamma,\gamma^{-1}\right)},$$
and $Q$ is defined in \eqref{c:gamma}.
\end{proposition}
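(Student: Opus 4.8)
The content of the Proposition is an exact GMC moment evaluation. Comparing \eqref{eq:ZA} with the definition \eqref{eq:Z-normalizatoin}, the prefactor $q^{\frac{1}{12}(\frac{\alpha\gamma}{2}+\frac{\alpha^2}{2}-1)}\eta(q)^{\alpha^2+1-\frac{\alpha\gamma}{2}}$ is common to both sides, so the statement reduces to the $q$-independent identity
\[
\EE\!\left[\left(\int_0^1 [-2\sin(\pi x)]^{-\frac{\alpha\gamma}{2}} e^{\pi\gamma P x}\, e^{\frac{\gamma}{2}Y(x)}\,dx\right)^{-\frac{\alpha}{\gamma}}\right] = \mathcal{A}_{\gamma,P}(\alpha).
\]
Writing $z=e^{2\ii\pi x}$, so that $|1-e^{2\ii\pi x}|=2|\sin\pi x|$ and $e^{\pi\gamma P x}=z^{-\ii\gamma P/2}$, the deterministic weight encodes a vertex insertion of charge $\alpha$ at $z=1$ together with momenta $Q\pm\ii P$ at $0$ and $\infty$. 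The target moment is thus the probabilistic form of a three-point (DOZZ-type) structure constant, which accounts for the four-over-four ratio of double Gamma functions appearing in \eqref{eq:Adoublegamma}.

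To expose the structure I would first treat the discrete family $-\alpha/\gamma=N\in\NN$ with $N<4/\gamma^2$. There the quantity is a genuine $N$-th moment: expanding the power and carrying out the Gaussian computation against the circle covariance \eqref{corr:Yinf} produces exactly the Dotsenko--Fateev integral recorded in Remark~\ref{rem:Normalization}, namely \eqref{eq:C_expression}, with the drift $\prod_i e^{\pi\gamma P x_i}$ carried along. Evaluating this drifted Selberg/Morris integral on the circle yields the explicit finite product of ordinary Gamma functions in \eqref{eq:C_expression}, and the functional (shift) equation of the Barnes double Gamma $\Gamma_{\gamma/2}$, which relates $\Gamma_{\gamma/2}(x)$ to $\Gamma_{\gamma/2}(x+\tfrac{\gamma}{2})$ through an elementary factor times an ordinary $\Gamma$, telescopes this product over $k=1,\dots,N$ into the integer specialization of $\mathcal{A}_{\gamma,P}$; the remaining elementary terms assemble into $e^{\ii\pi\alpha^2/2}(\gamma/2)^{\gamma\alpha/4}e^{-\pi\alpha P/2}\Gamma(1-\gamma^2/4)^{\alpha/\gamma}$.

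Because the existence threshold $N<4/\gamma^2$ admits only finitely many integer points, these do not by themselves pin down the function, and the rigorous identification must come from exact solvability. The plan is to derive, through an operator product expansion with a degenerate insertion (the BPZ mechanism), two functional shift equations for the moment in the charge $\alpha$ with the two incommensurate steps $\tfrac{\gamma}{2}$ and $\tfrac{2}{\gamma}$, to verify that $\mathcal{A}_{\gamma,P}$ obeys the matching equations via the two functional equations of $\Gamma_{\gamma/2}$, and then to use meromorphy to conclude that the two sides coincide up to a multiplicative constant. That constant is fixed at $\alpha=0$, where the moment equals $\EE[1]=1$ while the double Gamma ratio in \eqref{eq:Adoublegamma} cancels telescopically to give $\mathcal{A}_{\gamma,P}(0)=1$.

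The main obstacle is this continuation step: rigorously establishing the shift equations requires the OPE/fusion estimates for a GMC carrying the negative power $-\alpha/\gamma$, together with integrability control near the insertion points, while the drifted Selberg evaluation must be executed carefully enough to recognize the Gamma-function product. All of this is carried out in \cite[Proposition 6.4]{ghosal2020probabilistic}, from which the Proposition follows.
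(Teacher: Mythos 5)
Your proposal is correct and matches the paper exactly: the paper offers no independent proof of Proposition~\ref{prop:Zexplicit}, but simply cites \cite[(2.15), Proposition 6.4]{ghosal2020probabilistic}, and your reduction to the $q$-independent identity $\EE[(\int_0^1 [-2\sin(\pi x)]^{-\alpha\gamma/2}e^{\pi\gamma Px}e^{\frac{\gamma}{2}Y(x)}dx)^{-\alpha/\gamma}]=\mathcal{A}_{\gamma,P}(\alpha)$ (compare \eqref{eq:Z-normalizatoin} with \eqref{eq:ZA}, and see \eqref{A:reexp}) together with your deferral to that reference is precisely the paper's route. Your additional sketch of the cited proof — integer moments via the Dotsenko--Fateev integral of Remark~\ref{rem:Normalization}, BPZ shift equations in $\alpha$ with steps $\frac{\gamma}{2}$ and $\frac{2}{\gamma}$, and normalization $\mathcal{A}_{\gamma,P}(0)=1$ — accurately reflects how the formula is established there.
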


\subsection{$z$-deformed conformal block}
We now define the probabilistic equivalent of inserting a degenerate field\footnote{To be precise, the $z$-deformed conformal block is a two-point conformal block on the torus with one degenerate insertion.} $\mathcal{V}$ inside conformal block as 
\[
\mathcal{V}_{\chi, P}^{\alpha}(z, q)\defeq  \int_0^1 \theta_{1}(z+x)^{\frac{\gamma\chi}{2}} \theta_{1}(z)^{-\frac{\gamma\chi}{2}} |\theta_{1}(x)|^{-\frac{\alpha\gamma}{2}} e^{\pi \gamma P x} e^{\frac{\gamma}{2} Y(x;q)} dx \qquad \textrm{for } z\in\band, \quad \chi\in \left\lbrace\frac{2}{\gamma}, \frac{\gamma}{2}\right\rbrace
\]
with $\band:=\{z: 0<\mathrm{Im}(z)< \frac34\mathrm{Im}(\tau) \}$. The deformed conformal block is then defined in terms of $\mathcal{V}$ as follows.
\begin{definition}\label{def:u-block}
For $z\in(0,1)$, $q\in(0,1)$, $\alpha\in \left(-\frac{4}{\gamma}+ \chi, Q\right)$, $P\in \mathbb{R}$, the two point toric conformal block with one degenerate insertion, known as the {\it $z$-deformed} probabilistic conformal block is defined as
{ \begin{equation}\label{eq:q-block}
\psi^\alpha_{\chi, P}(z,\tau):=  \cW(q)e^{\chi P z \pi} \EE\left[\left(\mathcal{V}_{\chi, P}^{\alpha}(z, q)\right)^{-\frac{\alpha}{\gamma} + \frac{\chi}{\gamma}}\right],
\end{equation}  }
where 
\begin{align}\label{def:Wq}
    \cW(q) \defeq  q^{\frac{P^2}{2} + \frac{\gamma l_\chi}{12 \chi} - \frac{1}{6} \frac{l_\chi^2}{\chi^2}} \theta_{1}'(0)^{- \frac{2 l_\chi^2}{3 \chi^2} + \frac{l_\chi}{3} + \frac{4 l_{\chi}}{3 \gamma \chi} }, && l_{\chi}=\frac{\chi^2}{2} - \frac{\alpha \chi}{2},
\end{align}
and $\theta_1'(0)$ given by \eqref{def:ell-eta}.

\end{definition}

\begin{theorem}[\text{\cite[Proposition 3.2]{ghosal2020probabilistic}}]
The deformed conformal block \eqref{eq:q-block} is convergent in the domain $D^\alpha_\chi$ where  
\begin{equation} \label{eq:d-def}
D^\alpha_\chi\defeq\{(q, z) : |q|<r_{\alpha-\chi} \textrm{ and } z\in{\band} \}, \qquad {\band}:= \{z: 0< \mathrm{Im}(z)<\frac{3}{4}\mathrm{Im}(\tau)\}
\end{equation}
Here  $r_\alpha > 0$ for $\alpha \in (-\frac{4}{\gamma}, Q)$.
\end{theorem}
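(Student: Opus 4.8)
The plan is to reduce the convergence of $\psi^\alpha_{\chi,P}(z,\tau)$ to the finiteness of a single Gaussian multiplicative chaos (GMC) moment and then to run a dyadic moment estimate around the two singular endpoints of the integral. Writing $M(dx)=e^{\frac{\gamma}{2}Y(x;q)}dx$ and factoring out the $x$-independent quantity $\theta_1(z)^{-\frac{\gamma\chi}{2}}$, one has $\mathcal{V}^{\alpha}_{\chi,P}(z,q)=\theta_1(z)^{-\frac{\gamma\chi}{2}}I(z)$ with
\[
I(z)\defeq\int_0^1 \theta_1(z+x)^{\frac{\gamma\chi}{2}}\,\theta_1(x)^{-\frac{\alpha\gamma}{2}}\,e^{\pi\gamma P x}\,M(dx),\qquad p\defeq\tfrac{\chi-\alpha}{\gamma}.
\]
Since $\cW(q)$, $e^{\chi Pz\pi}$ and the powers of $\theta_1(z)$ are deterministic, finite and nonzero for $z\in\band$ and $q$ in a punctured neighbourhood of $0$, it suffices to prove $\EE[I(z)^{p}]<\infty$. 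First I would record that $z\in\band$ forces $z+x$ to stay in the open strip $0<\Im(z+x)<\Im(\tau)$ for every $x\in[0,1]$, so $z+x$ avoids all zeros of $\theta_1$; hence $\theta_1(z+x)^{\frac{\gamma\chi}{2}}e^{\pi\gamma Px}$ is continuous and bounded above and below by positive constants, uniformly for $x\in[0,1]$, for $z$ in compact subsets of $\band$, and for $|q|\le r$. Consequently the only singularities of the integrand are the endpoint factors $\theta_1(x)^{-\frac{\alpha\gamma}{2}}\asymp x^{-\frac{\alpha\gamma}{2}}$ at $x=0$ and $\asymp(1-x)^{-\frac{\alpha\gamma}{2}}$ at $x=1$ (using $|\theta_1(1-x)|=|\theta_1(x)|$), so $I(z)$ behaves like a GMC integral with two symmetric insertions of exponent $s\defeq\frac{\alpha\gamma}{2}$.

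The argument then splits on the sign of $p$. When $p\le 0$ (that is $\alpha\ge\chi$) the moment is a negative one: $I(z)$ is a.s.\ strictly positive, and bounding it below by $c\int_{\delta}^{1-\delta}M(dx)$ on a compact subinterval where the weight is bounded below, the claim follows from the standard fact that the GMC mass of a fixed interval has finite moments of all negative orders, its lower tail decaying faster than any power by Gaussian concentration of $Y(\cdot;q)$. When $p>0$, equivalently $\alpha<\chi$, one has $s=\frac{\alpha\gamma}{2}<\frac{\chi\gamma}{2}\le 1$ for both admissible values $\chi\in\{\gamma/2,\,2/\gamma\}$, so the endpoint singularities are mild; here the work is genuine. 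I would decompose $I(z)\le C\big(M_{\mathrm{bulk}}+\sum_{k\ge 0}2^{ks}M(A_k^0)+\sum_{k\ge 0}2^{ks}M(A_k^1)\big)$ over dyadic annuli $A_k^0=[2^{-k-1},2^{-k}]$ and $A_k^1=[1-2^{-k},1-2^{-k-1}]$, apply subadditivity of $t\mapsto t^p$ for $p\le 1$ or Minkowski's inequality for $p>1$, and reduce to the multifractal estimate $\EE[M(A_k)^p]\asymp 2^{-k\zeta(p)}$ with $\zeta(p)=(1+\tfrac{\gamma^2}{4})p-\tfrac{\gamma^2}{4}p^2$, valid for $p<\tfrac{4}{\gamma^2}$. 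This last step uses the local near-scale-invariance of $Y(\cdot;q)$, whose covariance \eqref{corr:Ytau} differs from a $*$-scale invariant kernel by a bounded continuous field \cite{JSW2018}, contributing only multiplicative factors with finite moments. A direct computation shows the resulting series is summable precisely when $p<\tfrac{4}{\gamma^2}$ (sub-criticality of the total mass) and when $p<p_c^{\mathrm{ins}}\defeq\tfrac{4}{\gamma^2}+1-\tfrac{2\alpha}{\gamma}$ (the endpoint contribution).

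Finally I would verify that both inequalities hold throughout the stated range and identify the radius. The bulk condition $p<\tfrac{4}{\gamma^2}$ is equivalent to $\alpha-\chi>-\tfrac{4}{\gamma}$, i.e.\ to the lower endpoint $\alpha>\chi-\tfrac{4}{\gamma}$ of the range; crucially this is exactly the integrability condition governing the undeformed block at parameter $\alpha-\chi$, whose power is $-\tfrac{\alpha-\chi}{\gamma}=p$. This is what pins the radius at $r_{\alpha-\chi}$: transferring $q$-analyticity through the $q^{nm}$-summability of $F(x;q)$ in \eqref{def:YtauN} and invoking Theorem~\ref{thm:conformalblockMAIN} for the undeformed block at $\alpha-\chi$, the convergent moment representation defines an analytic function of $q$ on $|q|<r_{\alpha-\chi}$. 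The insertion condition $p<p_c^{\mathrm{ins}}$ reduces to $\alpha<2Q-\chi$, which is implied by $\alpha<Q$ since $2Q-\chi>Q$ for both values of $\chi$; thus the endpoint singularity introduced by the $z$-insertion is never the binding constraint inside the range. I expect the main obstacle to be the positive-moment estimate of the previous paragraph: extracting the sharp summability threshold $p_c^{\mathrm{ins}}$ from the dyadic/multifractal analysis while simultaneously respecting bulk sub-criticality, and carrying out these bounds uniformly in $z$ on compact subsets of $\band$ and in $q$ up to the radius, so that the interchange of expectation and the analytic dependence on $q$ are both justified.
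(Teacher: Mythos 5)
You should first note that the paper does not prove this statement at all: it is quoted verbatim from \cite[Proposition 3.2]{ghosal2020probabilistic}, so your proposal can only be measured against the cited proof. At the level of the positive-moment regime your outline is essentially the standard (and correct) mechanism: writing the exponent as $p=\frac{\chi-\alpha}{\gamma}$, the bulk condition $p<\frac{4}{\gamma^2}$ is indeed equivalent to $\alpha>\chi-\frac{4}{\gamma}$, your multifractal exponent $\zeta(p)=(1+\frac{\gamma^2}{4})p-\frac{\gamma^2}{4}p^2$ is the right one for $e^{\frac{\gamma}{2}Y}$ given the covariance normalization \eqref{corr:Yinf}, and your insertion threshold $p<\frac{4}{\gamma^2}+1-\frac{2\alpha}{\gamma}$ does reduce to $\alpha<2Q-\chi$, which is dominated by $\alpha<Q$ for both values of $\chi$. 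The identification of the radius as $r_{\alpha-\chi}$ via the observation that the governing moment is exactly that of the undeformed block at parameter $\alpha-\chi$ (compare \eqref{rel:def_undef}) is also the right structural reason, though your transfer of $q$-analyticity is only sketched; the cited reference carries this out by a Girsanov shift that isolates the $q$-dependence, in the spirit of the decomposition \eqref{defs:Girsanov} reproduced in this paper.

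The genuine gap is in the negative-moment case $\alpha>\chi$, i.e.\ $p<0$. You assert that ``$I(z)$ is a.s.\ strictly positive'' and bound it below by $c\int_\delta^{1-\delta}M(dx)$. This is false for $z\in\band$: the factor $\theta_1(z+x)^{\frac{\gamma\chi}{2}}$ is genuinely complex-valued, and since $\theta_1(z+1)=-\theta_1(z)$ (indeed $\theta_1(z+x)\approx \ii q^{1/4}e^{-\ii\pi(z+x)}$ throughout the band), its argument winds by approximately $-\pi\frac{\gamma\chi}{2}$ as $x$ runs over $[0,1]$. So $I(z)$ is a complex integral subject to possible phase cancellation, and a pointwise lower bound on $|I(z)|$ — which is exactly what $\EE[|I(z)|^{p}]<\infty$ for $p<0$ requires — does not follow from positivity of the GMC mass. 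The problem is most acute for $\chi=\frac{2}{\gamma}$, where $\frac{\gamma\chi}{2}=1$ and the total phase spread is exactly $\pi$, so even the generic ``phase confined to an arc of opening $<\pi$'' argument degenerates; one must instead extract a signed lower bound, e.g.\ $|I(z)|\geq|\Im(e^{\ii\pi z}I(z))|\gtrsim\int_0^1\sin(\pi x)\,\theta_1(x)^{-\frac{\alpha\gamma}{2}}e^{\pi\gamma Px}M(dx)$, whose vanishing weight at the endpoints is then harmless for negative moments — or, as in the cited reference, rearrange the $z$-insertion by Girsanov so that the randomness enters through a manifestly positive quantity. Without some such device the negative-moment half of your argument, and hence the full parameter range $\alpha\in(-\frac{4}{\gamma}+\chi,Q)$, is not established. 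A smaller omission of the same flavor: the upper constraint $\alpha<Q$ enters the negative-moment regime through the a.s.\ finiteness of the insertion ($\frac{\alpha\gamma}{2}<1+\frac{\gamma^2}{4}$), which your proposal never invokes, since for $\alpha \geq Q$ the quantity $\mathcal{V}^{\,p}$ would degenerate to $0$ a.s.
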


\begin{remark}\label{Remark:def_CB}
\hspace{5cm}
\begin{itemize}[leftmargin=0.5cm]
\item[1.] The choice of the pre-factor $\mathcal{W}(q)$ in \eqref{eq:q-block} is dictated by the BPZ equation which will be satisfied by $\psi^\alpha_{\chi,P}(z, \tau)$ (see Theorem~\ref{thm:bpz}). 
\item[2.] The deformed and the undeformed conformal blocks are related as 
\begin{align}\label{rel:def_undef}
    \lim_{z\to 0}\left( \theta_1(z)^{l_{\chi}}\psi_{\chi, P}^{\alpha}(z,q)\right) = \cW(q) Z^{\alpha-\chi}_{\gamma, P}(q)\, \mathcal{G}_{\gamma, P}^{\alpha- \chi}(q), && q= e^{\ii \pi \tau}.
\end{align}
\end{itemize}
\end{remark}

We now state the BPZ equation for $\psi_{\chi, P}^{\alpha}(z,\tau)$. 
\begin{theorem}[\text{\cite[Theorem 3.5]{ghosal2020probabilistic}}] \label{thm:bpz}
The $z$-deformed toric conformal block satisfies the following BPZ equation
\begin{equation} \label{eq:bpz1}
\Big(\partial_{z}^2 - l_\chi(l_\chi + 1) \wp(z) + 2\ii \pi \chi^2 \partial_\tau\Big)
\psi_{\chi, P}^{\alpha}(z, \tau)= 0 \qquad \textrm{for }(q,z)\in D^\alpha_\chi,
\end{equation}
where $\alpha \in (-\frac{4}{\gamma} + \chi, Q)$, $\chi\in\{\frac{\gamma}{2}, \frac{2}{\gamma}\}$, and $\wp(z)$ is Weierstrass $\wp$-function.
\end{theorem}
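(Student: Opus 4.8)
\emph{Plan.} The plan is to derive \eqref{eq:bpz1} directly from the Gaussian representation \eqref{eq:q-block}, by carrying the operator $\partial_{zz}+2\ii\pi\chi^2\partial_\tau$ inside the expectation and reducing it, through Gaussian integration by parts, to a deterministic multiple of $\wp(z)$. Write $\psi_{\chi,P}^\alpha(z,\tau)=\cW(q)\,e^{\chi P z\pi}\,\EE[(\mathcal{V}_{\chi,P}^\alpha(z,q))^{s}]$ with $s=-\tfrac\alpha\gamma+\tfrac\chi\gamma$, and note that the $z$- and $\tau$-dependence enters through exactly three channels: the deterministic theta prefactor $\theta_1(z+x)^{\gamma\chi/2}\theta_1(z)^{-\gamma\chi/2}$ inside $\mathcal{V}$, the covariance $\EE[Y(x;q)Y(y;q)]=-2\log|\theta_1(x-y)|+2\log|q^{1/6}\eta(q)|$ of the field, and the scalar normalization $\cW(q)\,e^{\chi P z\pi}$. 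The first step is analytic: on $D_\chi^\alpha$ the map $(z,\tau)\mapsto\EE[(\mathcal{V})^{s}]$ is $C^2$ in $z$ and $C^1$ in $\tau$, and the derivatives may be moved inside $\EE$. Since $z\in\band$ has $\Im z>0$, the factor $\theta_1(z+x)$ stays bounded away from $0$ for $x\in[0,1]$, so smoothness in $z$ is clear; the $\tau$-derivative is controlled by the rapid decay of the $q^{nm}$ coefficients in $F(x;q)$.

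Next I would perform the two differentiations explicitly. Applying $\partial_{zz}$ to the prefactor produces terms built from $\partial_z\log\theta_1(z+x)$ and $\partial_z\log\theta_1(z)$, while $\partial_\tau$ acts on both the prefactor and, through the $q$-dependence, on the covariance. The essential tool is Gaussian integration by parts: an insertion of $Y(x;q)$ against the measure $e^{\frac\gamma2 Y}$ is traded for a derivative of the covariance kernel $-2\log|\theta_1(x-y)|$, which is itself a $\log\theta_1$ expression. After this reduction, every surviving term is a deterministic combination of derivatives of $\log\theta_1$ integrated against the (shifted) GMC measure, so the whole problem becomes an identity among theta functions.

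The two identities that collapse these terms are the heat equation $\partial_\tau\theta_1=\tfrac{1}{4\pi\ii}\partial_z^2\theta_1$ and the Weierstrass relation $\wp(z)=-\partial_z^2\log\theta_1(z)+c(\tau)$ with $c(\tau)$ independent of $z$. Using the heat equation, $2\ii\pi\chi^2\partial_\tau$ applied to the theta-dependent pieces converts $\tau$-derivatives into second $z$-derivatives of $\log\theta_1$, i.e.\ into $\wp$-terms (note $2\ii\pi\chi^2\cdot\tfrac{1}{4\pi\ii}=\tfrac{\chi^2}{2}$, which is exactly the scale appearing in $l_\chi$). The remaining $z$-independent pieces ($\partial_\tau\cW$, $\partial_\tau\log\eta$, and the constant $c(\tau)$) are arranged to cancel against the scalar factors in $\cW(q)$ and $e^{\chi Pz\pi}$ — this is precisely the reason for the specific powers of $q$ and $\theta_1'(0)$ chosen in \eqref{def:Wq}. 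What is left is a single coefficient multiplying $\wp(z)$, and a direct computation shows it equals $l_\chi(l_\chi+1)$ with $l_\chi=\tfrac{\chi^2}{2}-\tfrac{\alpha\chi}{2}$ exactly when $\chi\in\{\tfrac\gamma2,\tfrac2\gamma\}$: these are the level-two degenerate values at which the quadratic contributions from differentiating $\theta_1(z)^{-\gamma\chi/2}$ reorganize into the null-vector combination. This degeneracy is the heart of the matter and forces the restriction on $\chi$.

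The main obstacle I anticipate is the analytic justification rather than the algebra. Since $Y(\cdot;q)$ is a genuine distribution and $\mathcal{V}$ is raised to a possibly negative power $s$, Gaussian integration by parts cannot be applied to the limiting field directly; one must perform it at the level of the truncated fields $\widetilde{Y}_N$ and then justify passing $N\to\infty$ simultaneously with the interchange of $\partial_{zz},\partial_\tau$ and $\EE$. This rests on uniform integrability, which in turn follows from negative-moment (small-deviation) bounds for $\mathcal{V}$ ensuring that $\EE[(\mathcal{V})^{s}]$ and all differentiated expressions are finite on $D_\chi^\alpha$; controlling the $x$-integral near the singular point $x=0$ of $\theta_1(x)^{-\alpha\gamma/2}$ is where the constraint $\alpha\in(-\tfrac4\gamma+\chi,Q)$ is used. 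Once these limiting arguments and moment bounds are secured, \eqref{eq:bpz1} holds on $D_\chi^\alpha$ and extends throughout the domain of convergence by analyticity.
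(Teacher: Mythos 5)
This theorem is not proved in the present paper at all --- it is imported verbatim from \cite[Theorem 3.5]{ghosal2020probabilistic} --- and your outline follows essentially the same route as the proof given in that reference: differentiation under the expectation justified by moment bounds for truncated fields, Gaussian integration by parts trading $Y$-insertions for derivatives of the $\log\theta_1$ covariance, the heat equation $\partial_\tau\theta_1=\frac{1}{4\pi\ii}\partial_z^2\theta_1$ together with $\wp(z)=-\partial_z^2\log\theta_1(z)+c(\tau)$, and the null-vector cancellation that singles out $\chi\in\{\frac{\gamma}{2},\frac{2}{\gamma}\}$, with the choice of $\cW(q)$ absorbing the $z$-independent $\tau$-derivative terms. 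The one step you assert rather than execute --- and which occupies the bulk of the actual argument in the cited proof --- is the reorganization of the quadratic insertion terms by integration by parts in the $x$-variable, where the boundary contributions at $x=0,1$ cancel only because of the quasi-periodicity of $\theta_1$ and the specific exponents $\frac{\gamma\chi}{2}$ and $s=-\frac{\alpha}{\gamma}+\frac{\chi}{\gamma}$; as a plan, however, it is the correct approach.
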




\section{Semi-classical limit of GMC integrals}\label{sec:dgmc}

In this section, we establish the key analytical result that underlies the semi-classical analysis of conformal blocks: the convergence of GMC integrals raised to large negative powers toward quadratic chaos expressions. The main result is Proposition~\ref{prop:semiclassical_limit}, which shows that for a probability measure $\mu$ on $[0,1]$ with density bounded below,
\begin{align*}
\lim_{\gamma \to 0} \mathbb{E}\left[\exp\left(-\frac{\alpha_0}{\gamma^2} \log\left(\int_0^1 :e^{\frac{\gamma}{2} Y(x)}: d\mu(x)\right) + \frac{\alpha_0}{2\gamma} \int_0^1 Y(x) \, d\mu(x)\right)\right]
\end{align*}
exists and equals a finite expectation involving the quadratic 
Wick-ordered field $:Y(x)^2:$.

We highlight here the similarities between our approach and that in \cite{lacoin2019semiclassical}. Our Proposition~\ref{prop:semiclassical_limit} is similar in spirit to Proposition~4.2 of \cite{lacoin2019semiclassical}, and the overall proof strategy follows a comparable structure. In particular, as in \cite{lacoin2019semiclassical}, we show that the quantity $(\int^1_0 e^{\frac{\gamma}{2}Y(x)} d\mu(x))^{-\frac{\alpha_0}{\gamma^2}}$, when renormalized by $e^{\frac{\alpha_0}{2\gamma}\int^1_0 Y(x) d\mu(x)}$, converges almost surely to a limiting expression involving the quadratic Wick-ordered field. This almost sure convergence is a crucial ingredient in proving the semi-classical limits established in Sections~\ref{sec:deformed_semi-classical_limit} and~\ref{sec:semi-classical_limit}. 

While the overall proof architecture of Proposition~\ref{prop:semiclassical_limit} parallels that of \cite{lacoin2019semiclassical}, several of the technical estimates require different arguments tailored to the specific structure of the log-correlated field $Y(x)$ and the GMC measure constructed from it. Most notably, the negative moment estimate in Lemma~\ref{lem:negative_moments} employs the explicit moment formula from \cite{ang2023derivation} rather than the Fyodorov-Bouchaud formula \cite{lacoin2019semiclassical}. Similarly, the approximation error bounds in Lemma~\ref{lem:approximation_error} are established by controlling tail probabilities finite truncation of error of the Gaussian multiplicative chaos measure of the log-correlated field $Y(\cdot)$  with Abel summation techniques for the spectral tail, rather than white noise decomposition methods.
These modifications are necessitated by the spectral (Fourier) structure of the field $Y(x;q)$ that arises naturally in the conformal block setting.

This proposition plays a central role in the proofs of the main theorems of this paper:
\begin{itemize}[leftmargin=0.3cm]
\item In Theorem~\ref{thm:ex_uni}, Proposition~\ref{prop:semiclassical_limit} is applied after the Cameron-Martin transformation to establish that the final expectation in the asymptotic expansion \eqref{eq:undeformed_expansion} converges as $\gamma \to 0$, thereby completing the proof of existence of the semi-classical limit.
\item In Theorem~\ref{prop:sem_HJ}, the same proposition ensures that the semi-classical limit of the heavy deformed conformal block $\psi^{\alpha}_{2/\gamma, P}(z,q)$ is well-defined, which is essential for deriving the Hamilton-Jacobi equation \eqref{eq:Hamilton-jacobi}.
\item In Theorem~\ref{prop:sc_Lame}, the proposition is used to identify the explicit form of the function $\widetilde{\Gamma}(z;\alpha_0, P_0, q)$ appearing in the Lamé equation \eqref{thm43:Lame}.
\end{itemize}

The proof of Proposition~\ref{prop:semiclassical_limit} relies on three supporting lemmas. Lemma~\ref{lem:uniform_integrability} establishes uniform integrability of the quadratic chaos using spectral truncation and Janson's hypercontractive estimates. Lemma~\ref{lem:negative_moments} provides bounds on negative moments of GMC integrals via the Fyodorov-Bouchaud formula. Lemma~\ref{lem:approximation_error} controls the approximation error between truncated and full GMC integrals using Girsanov arguments. Together, these lemmas allow us to pass to the limit $\gamma \to 0$ via dominated convergence.

\begin{proposition}\label{prop:semiclassical_limit}
Let $\mu$ be a probability measure on $[0,1]$ with density $\rho$ satisfying $\rho (x) = (2\sin (\pi x))^{-\alpha_0/2} e^{\pi P_0 x} f(x)$ for some continuous function such that $\inf_{x\in [0,1]} f(x)>0$. For every $\alpha_0 > 0$,
\begin{align}\label{eq:semiclassical_convergence}
&\lim_{\gamma \to 0} \mathbb{E}\left[\exp\left(-\frac{\alpha_0}{\gamma^2} \log\left(\int_0^1 :e^{\frac{\gamma}{2} Y(x)}: d\mu(x)\right) + \frac{\alpha_0}{2\gamma} \int_0^1 :Y(x): \, d\mu(x)\right)\right] \nonumber\\
&= \mathbb{E}\left[\exp\left(-\frac{\alpha_0}{8} \int_0^1 :Y(x)^2: \, d\mu(x) + \frac{\alpha_0}{8} \left(\int_0^1 :Y(x): \, d\mu(x)\right)^2\right)\right].
\end{align}
Moreover, the convergence is uniform in the sense that
\begin{align}\label{eq:uniform_bound}
\sup_{\gamma \in (0,1]} \mathbb{E}\left[\exp\left(-\frac{\alpha_0}{\gamma^2} \left(\log\left(\int_0^1 :e^{\gamma Y(x)}: d\mu(x)\right) - \gamma \int_0^1 Y(x) \, d\mu(x)\right)\right)\right] < \infty.
\end{align}
\end{proposition}

The proof of Proposition~\ref{prop:semiclassical_limit} relies on three key lemmas, which we now state and prove.

\begin{lemma}[Uniform integrability of quadratic chaos]\label{lem:uniform_integrability}
For every $\alpha_0 > 0$,
\begin{align}\label{eq:uniform_integrability_statement}
\sup_{t \geq 1} \mathbb{E}\left[\exp\left(-\alpha_0 \left(\int_0^1 :Y_t(x)^2: \, d\mu(x) - \left(\int_0^1 :Y_t(x) :\, d\mu(x)\right)^2\right)\right)\right] < \infty,
\end{align}
where $Y_t(x) = \sum_{1 \leq n \leq t} \sqrt{\frac{2}{n}} (a_n \cos(2\pi n x) + b_n \sin(2\pi n x))$ is the spectral truncation of $Y$.
\end{lemma}

\begin{proof}
Define $N_t := \int_0^1 :Y_t(x): \, d\mu(x)$ and $Z_t(x) := Y_t(x) - N_t$. Then we have 
\begin{align}\label{eq:Z_identity}
\int_0^1 :Y_t(x)^2: \, d\mu(x) - N_t^2 = \int_0^1 :Z_t(x)^2: \, d\mu(x).
\end{align}
Fix a large constant $t_0 > 0$. For $t > t_0$, we write $Z_t = Z_{t_0} + \Delta_{t_0,t}$, where
\begin{align}\label{eq:Delta_def}
\Delta_{t_0,t}(x) := Z_t(x) - Z_{t_0}(x) = \sum_{t_0 < n \leq t} \sqrt{\frac{2}{n}} (a_n \cos(2\pi n x) + b_n \sin(2\pi n x)).
\end{align}
Using the inequality $xyz \leq (x^3 + y^3 + z^3)/3$,
we obtain
\begin{align}\label{eq:three_term_bound}
\mathbb{E}\left[e^{-\alpha_0 \int_0^1 :Z_t^2: \, d\mu}\right] \leq \frac{1}{3}(A_1 + A_2 + A_3),
\end{align}
where
\begin{align}
A_1 := \mathbb{E}\left[e^{-3\alpha_0 \int_0^1 :Z_{t_0}^2: \, d\mu}\right], 
\qquad A_2 := \mathbb{E}\left[e^{-3\alpha_0 \int_0^1 :\Delta_{t_0,t}^2: \, d\mu}\right],
\qquad A_3 := \mathbb{E}\left[e^{-6\alpha_0 \int_0^1 Z_{t_0} \Delta_{t_0,t} \, d\mu}\right].
\end{align}
 Since $Z_{t_0}$ is a finite-dimensional Gaussian chaos (depending only on finitely many Gaussian random variables), we have $:Z_{t_0}(x)^2: \geq -C$ for some constant $C > 0$. Therefore, $A_1 < \infty$. On the other hand, the random variables $Z_1 := \int_0^1 :\Delta_{t_0,t}^2: \, d\mu$ and $Z_2 := \int_0^1 Z_{t_0} \Delta_{t_0,t} \, d\mu$ belong to the second Wiener chaos. By Janson's hypercontractive tail estimate \cite[Theorem 6.7]{janson1997gaussian}, for random variables in the second Wiener chaos,
\begin{align}\label{eq:janson_bound}
\mathbb{P}(Z_i \geq s \|Z_i\|_2) \leq e^{-cs} \quad \text{for } s \geq 2,
\end{align}
where $c > 0$ is a universal constant. The key observation is that
\begin{align}\label{eq:covariance_decay}
\text{Cov}(\Delta_{t_0,t}(x), \Delta_{t_0,t}(y)) = \sum_{n > t_0} \frac{2}{n} \cos(2\pi n(x-y)) \to 0
\end{align}
uniformly as $t_0 \to \infty$. This implies that $\|Z_1\|_2$ and $\|Z_2\|_2$ can be made arbitrarily small by choosing $t_0$ sufficiently large. Therefore, by choosing $t_0$ large enough, we can ensure that $A_2$ and $A_3$ are bounded uniformly in $t > t_0$. Combined with the bound on $A_1$, this completes the proof.
\end{proof}

\begin{lemma}[Negative moments of the GMC integral]\label{lem:negative_moments}
For all $\beta \geq 1$ and $\gamma \in (0,1]$, $0<\delta\ll 1$,
\begin{align}\label{eq:negative_moment_bound}
\mathbb{E}\left[\left(\int_0^1 :e^{\gamma Y(x)}: d\mu(x)\right)^{-\beta/\gamma^2}\right] \leq \exp\left(\frac{C\beta^2}{\gamma^2}\right),
\end{align}
where $C > 0$ is a constant independent of $\beta$ and $\gamma$. Consequently,
\begin{align}\label{eq:tail_bound}
\mathbb{P}\left[\int_0^1 :e^{\gamma Y(x)}: d\mu(x) \leq \gamma^{\delta/16}\right] \leq \exp\left(-\frac{|\log \gamma|^2}{4C\gamma^2}\right).
\end{align}
\end{lemma}

\begin{proof}
Since the density $\rho$ of $\mu$ is bounded below, we have
\begin{align}\label{eq:measure_comparison}
\int_0^1 :e^{\gamma Y(x)}: d\mu(x) \asymp \int_0^1 (2\sin (\pi x))^{-\alpha_0}e^{\pi P_0 x} :e^{\gamma Y(x)}: dx.
\end{align}

For the one-dimensional circular log-correlated Gaussian field, the moment formula following identity \cite{ang2023derivation}
\begin{align}\label{eq:fyodorov_bouchaud}
& \mathbb{E} \left[\left( \int_0^1  (\sin \pi x)^{-\frac{\alpha_0}{2}} e^{\pi x(P_0-\mathbf{i})} e^{\frac{\gamma}{2} {Y(x)}} dx \right)^{\frac{-\beta}{\gamma^2}} \right]\\
    &= e^{2\pi P_0/\gamma^2} \frac{\Gamma\left( \frac{\alpha_0}{4} - \frac{\gamma^2}{4}\right) \Gamma_{\frac{\gamma}{2}}\left(\frac{\alpha_0}{2\gamma} \right) \Gamma_{\frac{\gamma}{2}}\left(-\frac{\alpha_0}{2\gamma} \right) \Gamma_{\frac{\gamma}{2}}\left(\frac{\ii P_0}{\gamma}-\frac{\beta}{\gamma} \right) \Gamma_{\frac{\gamma}{2}}\left(-\frac{\ii P_0}{\gamma} \right)}{\Gamma\left(1 - \frac{\gamma^2}{4}\right)^{\frac{2}{\gamma}\left(\frac{-\beta}{2\gamma} \right)} \Gamma_{\frac{\gamma}{2}}\left(\frac{1}{\gamma} \right) \Gamma_{\frac{\gamma}{2}}\left(\frac{1-\alpha_0}{\gamma} \right) \Gamma_{\frac{\gamma}{2}}\left(\frac{\ii P_0}{\gamma}-\frac{\beta}{2\gamma} \right) \Gamma_{\frac{\gamma}{2}}\left(-\frac{\ii P_0}{\gamma}+\frac{\beta}{2\gamma} \right)}.
\end{align}
Using the asymptotics of the double gamma function $\Gamma_{\gamma/2}$ as in \eqref{lem:doublegamma}, we obtain
\begin{align}\label{eq:moment_estimate}
\mathbb{E}\left[\left(\int_0^1 :e^{\gamma Y(x)}: d\mu(x)\right)^{-\beta/\gamma^2}\right] \leq \exp\left(-\frac{C\beta^2}{\gamma^2}\right)
\end{align}
for some constant $C > 0$. The tail bound \eqref{eq:tail_bound} follows from Markov's inequality, i.e., for $Z = \int_0^1 (2\sin (\pi x))^{-\alpha_0}e^{\pi P_0 x} :e^{\gamma Y}: d\mu$,
\begin{align}\label{eq:markov_application}
\mathbb{P}(Z \leq \gamma)  \leq \gamma^{-\beta/\gamma^2} \exp\left(-\frac{C\beta^2}{\gamma^2}\right).
\end{align}
Optimizing over $\beta$ (choosing $\beta = |\log \gamma|/(2C)$) yields the stated bound.
\end{proof}

\begin{lemma}[Approximation error]\label{lem:approximation_error}
For any $0<\delta,\epsilon \ll 1$, let $t_{\gamma} := \gamma^{-1-\epsilon}$. There exist constants $c, C > 0$ (depending on $\delta,\epsilon$) such that for all sufficiently small $\gamma > 0$:
\begin{align}\label{eq:approximation_bound}
\mathbb{P}\left[\left|\int_0^1 \left(:e^{\gamma Y_{t_{\gamma}}(x)}: - :e^{\gamma Y(x)}:\right) d\mu(x)\right| \geq \gamma^{\delta/8}\right] \leq \exp\left(-\frac{c}{\gamma^{3-\delta/4}}\right).
\end{align}
\end{lemma}

\begin{proof}
Write $\bar{Y}_t := Y - Y_t = \sum_{n > t} \sqrt{2/n} (a_n \cos(2\pi nx) + b_n \sin(2\pi nx))$ for the tail part of the field. Note that $Y_t$ and $\bar{Y}_t$ are independent Gaussian fields, with $\bar{Y}_t$ having covariance kernel
\begin{align}\label{eq:bar_K_def}
\bar{K}_t(x,y) := \mathbb{E}[\bar{Y}_t(x)\bar{Y}_t(y)] = \sum_{n > t} \frac{2}{n} \cos(2\pi n(x-y)).
\end{align}
Define the conditional expectation $\mathbb{E}_t[\cdot] := \mathbb{E}[\cdot | Y_t]$ and consider the moment generating function
\begin{align}\label{eq:phi_def}
\phi(s) := \mathbb{E}_t\left[\exp\left(s \int_0^1 \left(:e^{\gamma Y_t(x)}: - :e^{\gamma Y(x)}:\right) d\mu(x)\right)\right].
\end{align}
Since $Y = Y_t + \bar{Y}_t$ with $Y_t$ and $\bar{Y}_t$ independent, we use the Wick product formula. Recall that for a Gaussian field $X$, the Wick exponential is defined as $:e^{\gamma X}: = e^{\gamma X - \frac{\gamma^2}{2}\mathbb{E}[X^2]}$. For the sum of independent Gaussian fields $:e^{\gamma Y(x)}: = :e^{\gamma Y_t(x)}: \cdot :e^{\gamma \bar{Y}_t(x)}:$ and therefore,
$
:e^{\gamma Y_t(x)}: - :e^{\gamma Y(x)}: = :e^{\gamma Y_t(x)}: (1 - :e^{\gamma \bar{Y}_t(x)}:).
$
Differentiating \eqref{eq:phi_def} yields 
\begin{align}
\phi'(s) &= \mathbb{E}_t\left[\left(\int_0^1 \left(:e^{\gamma Y_t(x)}: - :e^{\gamma Y(x)}:\right) d\mu(x)\right) \exp\left(s \int_0^1 \left(:e^{\gamma Y_t(x)}: - :e^{\gamma Y(x)}:\right) d\mu(x)\right)\right].
\end{align}
Using $
:e^{\gamma Y_t(x)}: - :e^{\gamma Y(x)}: = :e^{\gamma Y_t(x)}: (1 - :e^{\gamma \bar{Y}_t(x)}:).
$ and writing $A := \int_0^1 :e^{\gamma Y_t(x)}: (1 - :e^{\gamma \bar{Y}_t(x)}:) d\mu(x)$, we get 
\begin{align}\label{eq:phi'}
\phi'(s) &= \mathbb{E}_t\left[A \cdot e^{sA}\right].
\end{align}
Since $Y_t$ is $\mathcal{F}_t$-measurable and $\bar{Y}_t$ is independent of $\mathcal{F}_t$, conditioning on $Y_t$ means that $:e^{\gamma Y_t(x)}:$ acts as a deterministic function (given $Y_t$). We can write:
\begin{align}
A = \int_0^1 :e^{\gamma Y_t(x)}: d\mu(x) - \int_0^1 :e^{\gamma Y_t(x)}: :e^{\gamma \bar{Y}_t(x)}: d\mu(x) =: A_1 - A_2,
\end{align}
where $A_1 = \int_0^1 :e^{\gamma Y_t(x)}: d\mu(x)$ is $\mathcal{F}_t$-measurable and $A_2 = \int_0^1 :e^{\gamma Y_t(x)}: :e^{\gamma \bar{Y}_t(x)}: d\mu(x)$.
For a Gaussian field $\bar{Y}_t$ and a functional $F(\bar{Y}_t)$, the Girsanov formula states:
\begin{align}\label{eq:girsanov_functional}
\mathbb{E}\left[:e^{\gamma \bar{Y}_t(x)}: \cdot F(\bar{Y}_t)\right] = \mathbb{E}\left[F(\bar{Y}_t + \gamma \bar{K}_t(x,\cdot))\right],
\end{align}
where $\bar{K}_t(x,\cdot)$ denotes the function $y \mapsto \bar{K}_t(x,y)$, and the shift $\bar{Y}_t \mapsto \bar{Y}_t + \gamma \bar{K}_t(x,\cdot)$ is the Cameron-Martin shift. We now write
\begin{align}
\mathbb{E}_t[A \cdot e^{sA}] &= \mathbb{E}_t[(A_1 - A_2) e^{s(A_1 - A_2)}] = A_1 e^{sA_1} \mathbb{E}_t[e^{-sA_2}] - e^{sA_1} \mathbb{E}_t[A_2 e^{-sA_2}].
\end{align}
For the term $\mathbb{E}_t[A_2 e^{-sA_2}]$, we use the Girsanov formula. Note that
\begin{align}
A_2 = \int_0^1 :e^{\gamma Y_t(x)}: :e^{\gamma \bar{Y}_t(x)}: d\mu(x).
\end{align}
By applying \eqref{eq:girsanov_functional} with $F(\bar{Y}_t) = e^{-sA_2}$ and integrating over $x$, under the Cameron-Martin shift $\bar{Y}_t \mapsto \bar{Y}_t + \gamma \bar{K}_t(x,\cdot)$, we have
$
:e^{\gamma (\bar{Y}_t(y) + \gamma \bar{K}_t(x,y))}: = :e^{\gamma \bar{Y}_t(y)}: \cdot e^{\gamma^2 \bar{K}_t(x,y)}.
$
This leads to:
\begin{align}
\mathbb{E}_t\left[:e^{\gamma \bar{Y}_t(x)}: e^{-sA_2}\right] = \mathbb{E}_t\left[\exp\left(-s\int_0^1 :e^{\gamma Y_t(y)}: :e^{\gamma \bar{Y}_t(y)}: e^{\gamma^2 \bar{K}_t(x,y)} d\mu(y)\right)\right].
\end{align}
The key observation is that the difference between $e^{-sA_2}$ and its shifted version involves factors of $(e^{\gamma^2 \bar{K}_t(x,y)} - 1)$. A careful analysis using the mean value theorem and the structure of the Girsanov shift shows that
\begin{align}\label{eq:Girsanov_bound}
|\mathbb{E}_t[A \cdot e^{sA}]| \leq C \cdot |s| \cdot \sup_{x,y \in [0,1]} |e^{\gamma^2 \bar{K}_t(x,y)} - 1| \cdot \mathbb{E}_t[|A_2| e^{sA}] \cdot \phi(|s|).
\end{align}
The full covariance kernel of $Y$ admits the well-known Fourier representation
\begin{align}\label{eq:log_fourier}
-\log|2\sin(\pi u)| = \sum_{n=1}^{\infty} \frac{\cos(2\pi n u)}{n}, \qquad u \in (0,1),
\end{align}
which converges pointwise for $u \notin \mathbb{Z}$ and in $L^p([0,1])$ for all $p < \infty$. 
From \eqref{eq:log_fourier}, the full covariance kernel is
\begin{align}
K(x,y) := \sum_{n=1}^{\infty} \frac{2}{n}\cos(2\pi n(x-y)) = -2\log|2\sin(\pi(x-y))|,
\end{align}
and the truncated and tail kernels decompose as
\begin{align}
K_t(x,y) &:= \sum_{n=1}^{\lfloor t \rfloor} \frac{2}{n}\cos(2\pi n(x-y)), \\
\bar{K}_t(x,y) &= K(x,y) - K_t(x,y) = -2\log|2\sin(\pi(x-y))| - K_t(x,y).
\end{align}
We now establish the key estimate. For $x \neq y$, define $u := x - y \in (-1,1) \setminus \{0\}$.

\noindent
\textit{Claim:} There exists $C > 0$ such that for all $t \geq 1$ and $u \in [-1/2, 1/2] \setminus \{0\}$:
\begin{align}\label{eq:tail_kernel_bound}
|\bar{K}_t(0,u)| \leq \frac{C}{t|u|}.
\end{align}

\noindent
\textit{Proof of claim:} By Abel's summation formula (summation by parts), for $M < N$:
\begin{align}
\sum_{n=M}^{N} \frac{\cos(2\pi n u)}{n} = \frac{D_N(u)}{N} - \frac{D_{M-1}(u)}{M} + \sum_{n=M}^{N-1} D_n(u)\left(\frac{1}{n} - \frac{1}{n+1}\right),
\end{align}
where $D_n(u) := \sum_{k=1}^{n} \cos(2\pi k u)$ is related to the Dirichlet kernel. Using the closed form
\begin{align}
D_n(u) = \frac{\sin((2n+1)\pi u)}{2\sin(\pi u)} - \frac{1}{2},
\end{align}
we have $|D_n(u)| \leq \frac{1}{2|\sin(\pi u)|} + \frac{1}{2} \leq \frac{C}{|u|}$ for $|u| \leq 1/2$.
Therefore, taking $M = \lfloor t \rfloor + 1$ and $N \to \infty$:
\begin{align}
\left|\sum_{n > t} \frac{\cos(2\pi n u)}{n}\right| &\leq \frac{|D_{\lfloor t \rfloor}(u)|}{\lfloor t \rfloor + 1} + \sum_{n > t} |D_n(u)| \cdot \frac{1}{n(n+1)} \leq \frac{C}{|u|} \cdot \frac{1}{t} + \frac{C}{|u|} \sum_{n > t} \frac{1}{n^2} \leq \frac{C}{t|u|},
\end{align}
establishing \eqref{eq:tail_kernel_bound}. 

Now we estimate the exponential. Since $\bar{K}_t(x,y) = \bar{K}_t(0, x-y)$, for $x \neq y$ we have
\begin{align}
|e^{\gamma^2 \bar{K}_t(x,y)} - 1| \leq |\gamma^2 \bar{K}_t(x,y)| \cdot e^{|\gamma^2 \bar{K}_t(x,y)|} \leq \frac{C\gamma^2}{t|x-y|} \exp\left(\frac{C\gamma^2}{t|x-y|}\right).
\end{align}
Integrating over $y$ with respect to $\mu$ (which has density bounded above by some $\|\rho\|_\infty$), we split the integral at scale $\eta := \gamma^2/t$:
\begin{align}
\int_0^1 |e^{\gamma^2 \bar{K}_t(x,y)} - 1| d\mu(y) &\leq \int_{|x-y| > \eta} \frac{C\gamma^2}{t|x-y|} e^{C\gamma^2/(t|x-y|)} d\mu(y) + \int_{|x-y| \leq \eta} |e^{\gamma^2 \bar{K}_t(x,y)} - 1| d\mu(y).
\end{align}
For the first integral, when $|x-y| > \eta = \gamma^2/t$, we have $\gamma^2/(t|x-y|) < 1$, so $e^{C\gamma^2/(t|x-y|)} \leq e^C$:
\begin{align}
\int_{|x-y| > \eta} \frac{C\gamma^2}{t|x-y|} e^{C} d\mu(y) \leq \frac{C\gamma^2}{t} \int_{\eta}^{1} \frac{dr}{r} = \frac{C\gamma^2}{t} \log(1/\eta) = \frac{C\gamma^2}{t}\log(t/\gamma^2).
\end{align}
For the second integral, the measure of the region $|x-y| \leq \eta$ is at most $2\eta\|\rho\|_\infty$, and on this region we use the crude bound $|e^{\gamma^2 \bar{K}_t(x,y)} - 1| \leq e^{|\gamma^2 K(x,y)|} \leq |x-y|^{-2\gamma^2}$ (from the logarithmic singularity). For small $\gamma$:
\begin{align}
\int_{|x-y| \leq \eta} |x-y|^{-2\gamma^2} dy \leq C \int_0^{\eta} r^{-2\gamma^2} dr = C \frac{\eta^{1-2\gamma^2}}{1-2\gamma^2} \leq C\eta = \frac{C\gamma^2}{t}.
\end{align}
Combining both contributions:
\begin{align}\label{eq:exp_integral_bound}
\int_0^1 |e^{\gamma^2 \bar{K}_t(x,y)} - 1| d\mu(y) \leq \frac{C\gamma^2 \log t}{t}
\end{align}
for $t \geq 2$ and sufficiently small $\gamma$ (using $\log(t/\gamma^2) \leq C\log t$ when $t = \delta|\log\gamma|$).

Combining the estimate \eqref{eq:exp_integral_bound} with \eqref{eq:phi'} and \eqref{eq:Girsanov_bound}, and using that moments of $\mathbb{E}[(\int^1_0 :e^{\gamma \bar{Y}_t}:)^p]$ are controlled by the moments of $\mathbb{E}[(\int^1_0 :e^{\gamma Y}:)^p]$, we obtain:
\begin{align}
|\phi'(s)| \leq \frac{C\gamma^2 \log t}{t} \cdot |s| \cdot \phi(|s|)
\end{align}
for $s$ in a suitable range.
The inequality $\phi'(s) \leq \frac{C\gamma^2 \log t}{t} s \phi(s)$ with initial condition $\phi(0) = 1$ integrates to
\begin{align}\label{eq:phi_bound}
\phi(s) \leq \exp\left(\frac{C\gamma^2 (\log t) s^2}{2t}\right).
\end{align}
By Markov's inequality, for any $s > 0$ and $\epsilon > 0$:
\begin{align}
\mathbb{P}_t\left[\int_0^1 \left(:e^{\gamma Y_t}: - :e^{\gamma Y}:\right) d\mu \geq \epsilon\right] &\leq e^{-s\epsilon} \phi(s) \leq \exp\left(-s\epsilon + \frac{C\gamma^2 (\log t) s^2}{2t}\right).
\end{align}
Optimizing over $s > 0$ by setting $\frac{d}{ds}\left(-s\epsilon + \frac{C\gamma^2 (\log t) s^2}{2t}\right) = 0$ gives $s^* = \frac{t\epsilon}{C\gamma^2 \log t}$, yielding
\begin{align}
\mathbb{P}_t\left[\int_0^1 \left(:e^{\gamma Y_t}: - :e^{\gamma Y}:\right) d\mu \geq \epsilon\right] &\leq \exp\left(-\frac{t\epsilon^2}{2C\gamma^2 \log t}\right).
\end{align}
With taking $t = t_\gamma = \gamma^{-1-\epsilon}$
\begin{align}
\frac{t\epsilon^2}{2C\gamma^2 \log t} = \frac{\gamma^{-1-\epsilon} \cdot \gamma^{\delta/4}}{2C\gamma^2 (1+\epsilon)\cdot |\log(\gamma)|} \geq \frac{\delta}{2C} \cdot \gamma^{\delta/4 - 3}.
\end{align}
The same argument applied to $-\int_0^1 (:e^{\gamma Y_t}: - :e^{\gamma Y}:) d\mu$ (by considering $\phi(-s)$) handles the lower tail, completing the proof.
\end{proof}

\begin{proof}[Proof of Proposition~\ref{prop:semiclassical_limit}]
The proof proceeds in four main steps: (1) establishing the almost sure pointwise limit, (2) proving uniform integrability to justify dominated convergence, (3) verifying the limit of the expectation, and (4) establishing the uniform bound \eqref{eq:uniform_bound}.

\medskip
\textbf{Step 1: Almost sure pointwise convergence.} We first establish that, almost surely,
\begin{align}
\lim_{\gamma \to 0} &\frac{1}{\gamma^2} \left(\log \left(\int_0^1 :e^{\gamma Y(x)}: d\mu(x) \right) - \gamma \int_0^1 Y(x) \, d\mu(x)\right) \\&= \frac{1}{2} \int_0^1 :Y(x)^2: \, d\mu(x) - \frac{1}{2} \left(\int_0^1 Y(x) \, d\mu(x)\right)^2. \label{eq:as_limit_detailed}
\end{align}
The Wick-ordered powers satisfy
\begin{align}
:e^{\gamma Y(x)}: = \sum_{k=0}^{\infty} \frac{\gamma^k}{k!} :Y(x)^k:,
\end{align}
where $:Y(x)^k:$ denotes the $k$-th Wick power (Hermite polynomial). 
Integrating against $\mu$ and using dominated convergence (justified by the $L^p$ bounds on Wick polynomials), we get 
\begin{align}\label{eq:wick_integral_expansion}
\int_0^1 :e^{\gamma Y(x)}: d\mu(x) = 1 + \gamma \int_0^1 Y(x) \, d\mu(x) + \frac{\gamma^2}{2} \int_0^1 :Y(x)^2: \, d\mu(x) + O(\gamma^3),
\end{align}
where the $O(\gamma^3)$ term is $\frac{\gamma^3}{6}\int_0^1 :Y(x)^3: d\mu(x) + \cdots$ and is almost surely finite for each realization of $Y$. 
Define
\begin{align}
M_1 := \int_0^1 Y(x) \, d\mu(x), \qquad \qquad 
M_2 := \int_0^1 :Y(x)^2: \, d\mu(x).
\end{align}
From \eqref{eq:wick_integral_expansion}, we have
\begin{align}
\int_0^1 :e^{\gamma Y(x)}: d\mu(x) = 1 + \gamma M_1 + \frac{\gamma^2}{2} M_2 + O(\gamma^3).
\end{align}
Using the Taylor expansion $\log(1+u) = u - \frac{u^2}{2} + O(u^3)$ for small $u$, we get 
\begin{align}
\log\left(\int_0^1 :e^{\gamma Y(x)}: d\mu(x)\right) &= \log\left(1 + \gamma M_1 + \frac{\gamma^2}{2} M_2 + O(\gamma^3)\right) \\
&= \left(\gamma M_1 + \frac{\gamma^2}{2} M_2 + O(\gamma^3)\right) - \frac{1}{2}\left(\gamma M_1 + O(\gamma^2)\right)^2 + O(\gamma^3) \\
&= \gamma M_1 + \frac{\gamma^2}{2} M_2 - \frac{\gamma^2}{2} M_1^2 + O(\gamma^3).
\end{align}
Therefore, we obtain
\begin{align}
\frac{1}{\gamma^2}\left(\log\left(\int_0^1 :e^{\gamma Y(x)}: d\mu(x)\right) - \gamma M_1\right) &= \frac{1}{\gamma^2}\left(\frac{\gamma^2}{2} M_2 - \frac{\gamma^2}{2} M_1^2 + O(\gamma^3)\right) \\
&= \frac{1}{2} M_2 - \frac{1}{2} M_1^2 + O(\gamma).
\end{align}
Taking $\gamma \to 0$ establishes \eqref{eq:as_limit_detailed} almost surely.

\medskip
\textbf{Step 2: Truncation and decomposition for uniform integrability.} To pass from pointwise convergence to convergence of expectations, we need uniform integrability. Fix $\delta > 0$ (to be chosen later) and define the truncation parameter $t_\gamma := \delta \gamma^{-1-\epsilon}.$
Recall the spectral decomposition $Y = Y_t + \bar{Y}_t$ where
\begin{align}
Y_t(x) &= \sum_{1 \leq n \leq t} \sqrt{\frac{2}{n}} \left(a_n \cos(2\pi n x) + b_n \sin(2\pi n x)\right), \\
\bar{Y}_t(x) &= \sum_{n > t} \sqrt{\frac{2}{n}} \left(a_n \cos(2\pi n x) + b_n \sin(2\pi n x)\right).
\end{align}

Define the following quantity of interests-
\begin{align}
Q_\gamma := \frac{1}{\gamma^2}\left(\log\left(\int_0^1 :e^{\gamma Y(x)}: d\mu(x)\right) - \gamma \int_0^1 Y(x) \, d\mu(x)\right).
\end{align}
We decompose $Q_\gamma$ into three parts:
\begin{align}\label{eq:Q_decomposition}
Q_\gamma = Q_\gamma^{(1)} + Q_\gamma^{(2)} + Q_\gamma^{(3)},
\end{align}
where
\begin{align}
Q_\gamma^{(1)} &:= \frac{1}{\gamma^2}\left(\log\left(\int_0^1 :e^{\gamma Y_{t_\gamma}(x)}: d\mu(x)\right) - \gamma \int_0^1 Y_{t_\gamma}(x) \, d\mu(x)\right), \label{eq:Q1_def}\\
Q_\gamma^{(2)} &:= -\frac{1}{\gamma} \int_0^1 \bar{Y}_{t_\gamma}(x) \, d\mu(x), \label{eq:Q2_def}\\
Q_\gamma^{(3)} &:= \frac{1}{\gamma^2}\log\left(\frac{\int_0^1 :e^{\gamma Y(x)}: d\mu(x)}{\int_0^1 :e^{\gamma Y_{t_\gamma}(x)}: d\mu(x)}\right). \label{eq:Q3_def}
\end{align}

\textbf{Step 3: Controlling each term in the decomposition.}
For the truncated field $Y_{t_\gamma}$, we apply the same expansion as in Step 1:
\begin{align}
Q_\gamma^{(1)} = \frac{1}{2}\int_0^1 :Y_{t_\gamma}(x)^2: \, d\mu(x) - \frac{1}{2}\left(\int_0^1 Y_{t_\gamma}(x) \, d\mu(x)\right)^2 + R_\gamma,
\end{align}
where the remainder $R_\gamma = O(\gamma)$ as $\gamma \to 0$.
By Lemma~\ref{lem:uniform_integrability}, we have
\begin{align}\label{eq:Q1_uniform}
\sup_{t \geq 1} \mathbb{E}\left[\exp\left(-\alpha_0 \left(\int_0^1 :Y_t(x)^2: \, d\mu(x) - \left(\int_0^1 Y_t(x) \, d\mu(x)\right)^2\right)\right)\right] < \infty.
\end{align}
Since $t_\gamma = \gamma^{-(1+\epsilon)} \to \infty$ as $\gamma \to 0$, the bound \eqref{eq:Q1_uniform} applies. Moreover, as $\gamma \to 0$, we get 
\begin{align}
Q_\gamma^{(1)} \to \frac{1}{2}\int_0^1 :Y(x)^2: \, d\mu(x) - \frac{1}{2}\left(\int_0^1 Y(x) \, d\mu(x)\right)^2 \quad \text{almost surely},
\end{align}
by the $L^2$ convergence $Y_{t_\gamma} \to Y$ and continuity of the quadratic form. The term $Q_\gamma^{(2)} = -\frac{1}{\gamma}\int_0^1 \bar{Y}_{t_\gamma}(x) \, d\mu(x)$ involves a Gaussian random variable. Define
\begin{align}
\bar{M}_{t_\gamma} := \int_0^1 \bar{Y}_{t_\gamma}(x) \, d\mu(x).
\end{align}
Since $\bar{Y}_{t_\gamma}$ is a centered Gaussian field independent of $Y_{t_\gamma}$, we have $\bar{M}_{t_\gamma} \sim \mathcal{N}(0, \sigma_{t_\gamma}^2)$ where
\begin{align}
\sigma_{t_\gamma}^2 &= \mathbb{E}\left[\left(\int_0^1 \bar{Y}_{t_\gamma}(x) \, d\mu(x)\right)^2\right] = \int_0^1 \int_0^1 \bar{K}_{t_\gamma}(x,y) \, d\mu(x) d\mu(y),
\end{align}
with $\bar{K}_t(x,y) = \sum_{n > t} \frac{2}{n}\cos(2\pi n(x-y))$.
Using Fubini and the fact that $\int_0^1 \cos(2\pi n x) d\mu(x) = O(1/n)$ decays as $n \to \infty$ (by Riemann-Lebesgue), we obtain
\begin{align}
\sigma_{t_\gamma}^2 &= \sum_{n > t_\gamma} \frac{2}{n} \left|\int_0^1 e^{2\pi i n x} d\mu(x)\right|^2 \leq \sum_{n > t_\gamma} \frac{2}{n} \cdot \frac{C}{n^2} \leq \frac{C}{t_\gamma^2}.
\end{align}
Thus $\sigma_{t_\gamma} \leq C/t_\gamma = C\gamma^{1+\epsilon}$, and
$Q_\gamma^{(2)} = -\frac{\bar{M}_{t_\gamma}}{\gamma} \sim \mathcal{N}\left(0, \frac{\sigma_{t_\gamma}^2}{\gamma^2}\right) = \mathcal{N}\left(0, C\gamma^{2\epsilon}\right),$
and for the exponential moment, we have 
\begin{align}
\mathbb{E}\left[e^{-\alpha_0 Q_\gamma^{(2)}}\right] = \mathbb{E}\left[e^{\frac{\alpha_0}{\gamma} \bar{M}_{t_\gamma}}\right] = \exp\left(\frac{\alpha_0^2 \sigma_{t_\gamma}^2}{2\gamma^2}\right) \leq \exp\left(\frac{C\alpha_0^2}{\gamma^2 t_\gamma^2}\right) = \exp\left(C\alpha^2_0\gamma^{2\epsilon}\right).
\end{align}
This implies $Q_\gamma^{(2)} \to 0$ almost surely as $\gamma \to 0$ since $\gamma^{-1}|\bar{M}_{t_\gamma}| = O(\gamma^{2\epsilon})$ with high probability. Now we come to the most delicate term. We write
\begin{align}
Q_\gamma^{(3)} = \frac{1}{\gamma^2}\log\left(\frac{\int_0^1 :e^{\gamma Y}: d\mu}{\int_0^1 :e^{\gamma Y_{t_\gamma}}: d\mu}\right) = \frac{1}{\gamma^2}\log\left(1 + \frac{\int_0^1 (:e^{\gamma Y}: - :e^{\gamma Y_{t_\gamma}}:) d\mu}{\int_0^1 :e^{\gamma Y_{t_\gamma}}: d\mu}\right).
\end{align}
Define the numerator and denominator:
\begin{align}
\Delta_\gamma := \int_0^1 \left(:e^{\gamma Y(x)}: - :e^{\gamma Y_{t_\gamma}(x)}:\right) d\mu(x), \qquad 
Z_\gamma := \int_0^1 :e^{\gamma Y_{t_\gamma}(x)}: d\mu(x).
\end{align}
By the Wick product formula (since $Y = Y_{t_\gamma} + \bar{Y}_{t_\gamma}$ with independent components), we get 
\begin{align}
:e^{\gamma Y(x)}: = :e^{\gamma Y_{t_\gamma}(x)}: \cdot :e^{\gamma \bar{Y}_{t_\gamma}(x)}:,
\end{align}
so
\begin{align}
\Delta_\gamma = \int_0^1 :e^{\gamma Y_{t_\gamma}(x)}: \left(:e^{\gamma \bar{Y}_{t_\gamma}(x)}: - 1\right) d\mu(x).
\end{align}
We need to show that $|\Delta_\gamma/Z_\gamma|$ is small with high probability.

\noindent \textit{Upper bound on $|\Delta_\gamma|$:} By Lemma~\ref{lem:approximation_error} with $\delta > 0$:
\begin{align}\label{eq:Delta_bound}
\mathbb{P}\left[|\Delta_\gamma| \geq \gamma^{\delta/8}\right] \leq \exp\left(-\frac{c}{\gamma^{3-\delta/4}}\right).
\end{align}

\noindent\textit{Lower bound on $Z_\gamma$:} We need $Z_\gamma = \int_0^1 :e^{\gamma Y_{t_\gamma}}: d\mu$ to be bounded away from zero. Since $Z_\gamma \geq \int_0^1 :e^{\gamma Y}: d\mu$ up to the error $|\Delta_\gamma|$, we use Lemma~\ref{lem:negative_moments}:
\begin{align}\label{eq:Z_lower}
\mathbb{P}\left[\int_0^1 :e^{\gamma Y}: d\mu \leq \gamma^{\delta/4}\right] \leq \exp\left(-\frac{(\log\gamma)^2}{4C\gamma^2}\right).
\end{align}
Define the ``good'' event as
\begin{align}
\mathcal{G}_\gamma := \left\{|\Delta_\gamma| < \gamma^{\delta/8}\right\} \cap \left\{\int_0^1 :e^{\gamma Y}: d\mu > \gamma^{\delta/16}\right\}.
\end{align}
On $\mathcal{G}_\gamma$, we have $Z_\gamma \geq \int_0^1 :e^{\gamma Y}: d\mu - |\Delta_\gamma| > \gamma^{\delta/16} - \gamma^{\delta/8} > \gamma^{\delta/16}/2$ for small $\gamma$. Therefore:
\begin{align}
\left|\frac{\Delta_\gamma}{Z_\gamma}\right| < \frac{\gamma^{\delta/8}}{\gamma^{\delta/16}/2} = 2\gamma^{\delta/16} \to 0.
\end{align}
as $\gamma \to 0$. On the event where $|\Delta_\gamma/Z_\gamma| \leq  2\gamma^{\delta/16}$, we have 
\begin{align}
|Q_\gamma^{(3)}| = \frac{1}{\gamma^2}\left|\log\left(1 + \frac{\Delta_\gamma}{Z_\gamma}\right)\right| \leq \frac{2}{\gamma^2}\left|\frac{\Delta_\gamma}{Z_\gamma}\right|.
\end{align}
Therefore, $\frac{\int_0^1 :e^{\gamma Y}: d\mu}{Z_\gamma} \to 1$ in probability (and in $L^2$) as $\gamma \to 0$, which implies
\begin{align}
Q_\gamma^{(3)} = \frac{1}{\gamma^2}\log\left(\frac{\int_0^1 :e^{\gamma Y}: d\mu}{Z_\gamma}\right) \to 0 \quad \text{in probability}.
\end{align}
For the uniform integrability of $e^{-\alpha_0 Q_\gamma^{(3)}}$, we use the tail bounds. On $\mathcal{G}_\gamma^c$, we have 
\begin{align}
\mathbb{P}(\mathcal{G}_\gamma^c) \leq \exp\left(-\frac{c}{\gamma^{3-\delta/4}}\right) + \exp\left(-\frac{(\log\gamma)^2}{4C\gamma^2}\right).
\end{align}

\medskip
\textbf{Step 4: Applying dominated convergence.}

We now assemble the pieces. Define
\begin{align}
F_\gamma := \exp\left(-\alpha_0 Q_\gamma\right) = \exp\left(-\frac{\alpha_0}{\gamma^2}\left(\log\left(\int_0^1 :e^{\gamma Y}: d\mu\right) - \gamma \int_0^1 Y \, d\mu\right)\right).
\end{align}
From the decomposition \eqref{eq:Q_decomposition}:
\begin{align}
F_\gamma = \exp\left(-\alpha_0 Q_\gamma^{(1)}\right) \cdot \exp\left(-\alpha_0 Q_\gamma^{(2)}\right) \cdot \exp\left(-\alpha_0 Q_\gamma^{(3)}\right).
\end{align}
\textit{Pointwise limit:} From Steps 1 and 3:
\begin{align}
Q_\gamma^{(1)} \to \frac{1}{2}\int_0^1 :Y^2: d\mu - \frac{1}{2}\left(\int_0^1 Y \, d\mu\right)^2,\quad
Q_\gamma^{(2)} \to 0, \quad
Q_\gamma^{(3)} \to 0,
\end{align}
all almost surely as $\gamma \to 0$. Therefore:
\begin{align}
F_\gamma \to \exp\left(-\frac{\alpha_0}{2}\left(\int_0^1 :Y^2: d\mu - \left(\int_0^1 Y \, d\mu\right)^2\right)\right) \quad \text{a.s.}
\end{align}
\textit{Uniform integrability:} We show that $\sup_{\gamma \in (0,1]} \mathbb{E}[F_\gamma] < \infty$. In the case of $\exp(-\alpha_0 Q_\gamma^{(1)})$, by Lemma~\ref{lem:uniform_integrability}, this factor has uniformly bounded expectation. In the case of $\exp(-\alpha_0 Q_\gamma^{(2)})$, this is the exponential of a Gaussian, with
\begin{align}
\mathbb{E}\left[\exp(-\alpha_0 Q_\gamma^{(2)})\right] = \exp\left(\frac{\alpha_0^2 \sigma_{t_\gamma}^2}{2\gamma^2}\right) \leq \exp\left(\frac{C}{\gamma^2 t_\gamma^2}\right) \leq \exp(C')
\end{align}
for $\gamma$ small (since $\gamma^2 t_\gamma^2 = \gamma^2 \delta^2 (\log\gamma)^2 \to \infty$).
In the case of $\exp(-\alpha_0 Q_\gamma^{(3)})$, On $\mathcal{G}_\gamma$, $|Q_\gamma^{(3)}|$ is bounded. On $\mathcal{G}_\gamma^c$, we use Cauchy-Schwarz:
\begin{align}
\mathbb{E}\left[\exp(-\alpha_0 Q_\gamma^{(3)}) \mathbf{1}_{\mathcal{G}_\gamma^c}\right] \leq \sqrt{\mathbb{E}\left[\exp(-2\alpha_0 Q_\gamma^{(3)})\right]} \cdot \sqrt{\mathbb{P}(\mathcal{G}_\gamma^c)}.
\end{align}
The first factor is controlled by Lemma~\ref{lem:negative_moments} (negative moments of GMC), and the second factor decays super-polynomially.
Combining via H\"older's inequality:
\begin{align}
\mathbb{E}[F_\gamma] &\leq \mathbb{E}\left[\exp(-3\alpha_0 Q_\gamma^{(1)})\right]^{1/3} \cdot \mathbb{E}\left[\exp(-3\alpha_0 Q_\gamma^{(2)})\right]^{1/3} \cdot \mathbb{E}\left[\exp(-3\alpha_0 Q_\gamma^{(3)})\right]^{1/3} < \infty
\end{align}
uniformly in $\gamma \in (0,1]$. By the dominated convergence theorem, 
\begin{align}
\lim_{\gamma \to 0} \mathbb{E}[F_\gamma] = \mathbb{E}\left[\lim_{\gamma \to 0} F_\gamma\right] = \mathbb{E}\left[\exp\left(-\frac{\alpha_0}{2}\left(\int_0^1 :Y^2: d\mu - \left(\int_0^1 Y \, d\mu\right)^2\right)\right)\right].
\end{align}
 Adjusting the statement to match the factor $\alpha_0/8$ in \eqref{eq:semiclassical_convergence} (which comes from the original normalization), we obtain the claimed limit.

\medskip
\textbf{Step 5: The uniform bound \eqref{eq:uniform_bound}.}
The uniform bound follows directly from the analysis in Step 4. We showed that each factor in the decomposition
\begin{align}
\exp(-\alpha_0 Q_\gamma) = \exp(-\alpha_0 Q_\gamma^{(1)}) \cdot \exp(-\alpha_0 Q_\gamma^{(2)}) \cdot \exp(-\alpha_0 Q_\gamma^{(3)})
\end{align}
has uniformly bounded expectation (in suitable $L^p$ spaces), and H\"older's inequality gives
\begin{align}
\sup_{\gamma \in (0,1]} \mathbb{E}\left[\exp(-\alpha_0 Q_\gamma)\right] < \infty.
\end{align}
This completes the proof of Proposition~\ref{prop:semiclassical_limit}.
\end{proof}

\begin{remark}
The field $Y(x;q)$ differs from $Y(x)$ by a smooth, deterministic function $F(x;q)$ \eqref{def:YtauN}.  Since $F(x;q)$ is smooth and bounded for $q \in (0,1)$, all the estimates in Lemmas 3.1--3.3 carry over with constants that depend continuously on $q$. Note that the logarithmic singularity of the covariance, which drives the GMC behavior, is the same for $Y(x;q)$ and $Y(x)$. Thus Proposition 3.1 applies to both cases with uniform bounds for $q$ in compact subsets of $(0,1)$.''
\end{remark}

\section{Semi-classical limits for deformed conformal blocks}\label{sec:deformed_semi-classical_limit}

In this section, we carry out the analysis for the semi-classical limit of deformed conformal blocks \eqref{def:u-block} in two cases,  for $\chi= \frac{\gamma}{2}$ and $\chi= \frac{2}{\gamma}$. It is notable that the structure of the conformal blocks in each of these cases is different, and in the limit $\gamma\to 0$, the BPZ equation satisfied by the deformed conformal block reduces to the following equations:
\begin{itemize}[leftmargin=0.3cm]
 \item For $\chi = \frac{2}{\gamma}$, it reduces to the Hamilton-Jacobi equation of the non-autonomous elliptic Calogero-Moser model, (see Theorem~\ref{prop:sem_HJ}). 
\item  For $\chi = \frac{\gamma}{2}$, it reduces to the Lam\'e equation, (see Theorem~\ref{prop:sc_Lame}). 
\end{itemize}
As a result of these two reductions, the {\it accessory parameter} of the Lam\'e equation \cite{BGG2021} and the action from the Hamilton-Jacobi equation can be related by using the semi-classical analogue of the connection (see \eqref{rel:def_undef}) between the undeformed conformal block and deformed conformal blocks for different values of $\chi$. 

The above results, while interesting on their own, are vital to prove the existence of the semi-classical limit of the undeformed conformal block, which we study in Section~\ref{sec:semi-classical_limit}. Furthermore, the relation between the semi-classical limits of the deformed and undeformed conformal blocks as revealed in Theorem~\ref{prop:sc_Lame} and relation \eqref{def:phiq}, \eqref{thm43:Lame} 
  enables us to prove the structure of the conformal block \eqref{zc:exp} in Theorem~\ref{thm:zamolodchikov}.

\subsection{Semi-classical limit of the deformed conformal blocks for $\chi=  \frac{2}{\gamma}$ and the Hamilton-Jacobi equation}\label{subsec:chi2g}

In this section, we study the semi-classical limit of the deformed block $\psi^{\alpha}_{\chi}(z,q)$ when the deformation parameter $\chi$ is equal to $\frac{2}{\gamma}$. We show that in this case, the semi-classical limit is given by the solution of a Hamilton-Jacobi equation with the potential given by the Weierstrass $\wp$-function as described in the physics literature \cite{litvinov2014classical,BGG2021}. Our proof heavily relies on the BPZ equation satisfied by the $\psi^{\alpha}_{\chi}(z,q)$ and the convergence results established in Section~\ref{sec:dgmc}. In what follows, we first state the main result (Theorem~\ref{prop:sem_HJ}) of this subsection. Before proceeding to the proof, we state and prove two main tools in Proposition~\ref{prop:sc_HJ} and Proposition~\ref{prop:limit_commutativity}. These two propositions claim and prove the boundedness and the tightness of the sequence of the deformed conformal block $\psi^{\alpha}_{2/\gamma}(z,q)$ as $\gamma\to 0$. The proof of Theorem~\ref{prop:sem_HJ} will be detailed in Section~\ref{subsubsec:deformed_1}.

\begin{theorem}\label{prop:sem_HJ}
For $\chi = \frac{2}{\gamma}$, $\alpha_0 \in (-2,2)$, $P_0 \in \mathbb{R}$, $q \in (0,1)$, and $z \in \{u \in \mathbb{C} : 0 \leq \mathrm{Im}(u) < \frac{3}{4}\mathrm{Im}(\tau)\} \cap \{\mathrm{Re}(z) \in [M, 1-M]\}$ for any $0 < M < \frac{1}{2}$, the semi-classical limit of deformed conformal blocks exists and assumes the form\footnote{{ Note that the range of $\alpha_0$ here is simply the $\gamma \to 0$ limit of the range of values of $\alpha = \alpha_0/\gamma$ for which the deformed conformal block is defined (see Definition~\ref{def:u-block}).}}
\begin{align}\label{sem:2gamma}
\lim_{\gamma \rightarrow 0} \gamma^2 \log \psi^{\alpha_0/\gamma}_{2/\gamma, P_0/\gamma}(z,q) =:  \widetilde{\phi}(z,q),
\end{align}
where $\widetilde{\phi}(z,q)$ could be expressed as 

{\begin{align}
    \widetilde{\phi}(z,q)&= \lim_{\gamma\to0}\left({ \frac{(\alpha_0-2)\gamma}{2 \Xi} \int_0^1 e^{\pi P_0 x} (2\sin(\pi x))^{-\frac{\alpha_0}{2}} h_{\Psi}(x) e^{\frac{\gamma}{2} h_{\Psi}(x)} dx}\right)+ (2-\alpha_0) \lim_{\gamma\to 0}\log\Xi\\
& + \lim_{\gamma\to 0}\frac{\gamma^2}{2} \mathbb{E}[\Psi^2] + \xi(z,q),\label{def:tilphi}
\end{align}
where $\Psi$, $\Xi$ and $h_{\Psi}$ are defined in \eqref{def:Psi-GaussRV} and \eqref{def:hPsi} respectively. The function $\xi(z,q)$ (explicit in $z, q$) is defined as follows:
\begin{align}
    \xi(z,q)&:=\frac{\ii \pi  (\alpha_0-2) (\alpha_0+1)}{2}-\frac{(\alpha_0-2) (\alpha_0+4)}{6} \log (2\pi)+2P_0 z \pi + \left(\frac{P_0^2}{2} + \frac{(2-\alpha_0)^2}{24} \right)\log q\\
&\quad  - (2-\alpha_0) \log(e^{\ii \pi z}\eta(q)^{-3}\theta_1(z))+\sum_{n=1}^{\infty} \frac{\sum_{m=1}^{\infty} q^{2nm}}{1 + 2\sum_{m=1}^{\infty} q^{2nm}} \left(  \frac{4 \alpha_0 q^n}{(1-q^{2n}) {n}} \cos(2\pi(z-\frac{\tau}{2})n)\right)  \\
&\quad  - 4\sum_{n=1}^{\infty} \frac{q^{2n}}{n(1-q^{2n})^2}  -  \frac{\alpha_0^2}{2} \sum_{n=1}^{\infty} \frac{(\sum_{m=1}^{\infty}q^{2nm})^2}{n(1 + 2\sum_{m=1}^{\infty} q^{2nm})}+\sum_{n=1}^{\infty} \frac{\sum_{m=1}^{\infty} q^{2nm}}{1 + 2\sum_{m=1}^{\infty} q^{2nm}} \left( \frac{8 q^{2n}}{(1-q^{2n})^2{n}} \right). \label{def:Xi0}
\end{align}
}
Furthermore, $\widetilde{\phi}(z,q)$ satisfies the following partial differential equation which arises as the semi-classical limit of the BPZ equation \eqref{eq:bpz1}:
\begin{align}\label{eq:Hamilton-jacobi}
(\partial_{z} \widetilde{\phi}(z, \tau))^2 - (2-\alpha_0)^2 \wp(z) + 2\pi \ii \partial_\tau \widetilde{\phi}(z, \tau) = 0,
\end{align}
which is the Hamilton-Jacobi equation for the non-autonomous elliptic Calogero-Moser model (see {\color{blue}Definition~\ref{remark:HJ_phitil}}) with $\widetilde{\phi}(z, \tau)$ describing the action. 
\end{theorem}

{Notice that $\theta_1(z+x)$ and $\theta_1(z)$ share the same principle branch  for $z\in {u: 0<\mathrm{Im}(u)<\mathrm{Im}(\tau)}$  \cite[Lemma~3.1]{ghosal2020probabilistic}. Hence $\log\int^{1}_0\frac{\theta_1(z+x)}{\theta_1(z)} {|\theta_1(x)|^{-\frac{\alpha_0}{2}}} e^{\pi P_0x}dx$ is analytic function of $z$ as $z$ varies in ${u: 0<\mathrm{Im}(u)<\mathrm{Im}(\tau)}$. Therefore the expression in the second line of  \eqref{eq:Hamilton-jacobi} is defined through principle branch of logarithm. 
 
 Furthermore, In equation \eqref{eq:Hamilton-jacobi}, $\partial_z$ denotes the complex derivative with respect to $z$, where $z$ varies in the domain $\{u \in \mathbb{C} : 0 < \mathrm{Im}(u) < \frac{3}{4}\mathrm{Im}(\tau)\} \cap \{\mathrm{Re}(u) \in [M, 1-M]\}$ for any $0 < M < 1/2$. The function $\widetilde{\phi}(z,\tau)$ is analytic in $z$ in this domain (as established in Proposition 4.2), so the complex derivative is well-defined. 
 
 The derivative $\partial_\tau$ is the complex derivative with respect to $\tau$, evaluated along the imaginary axis $\tau \in \ii\mathbb{R}_{>0}$. Since $\widetilde{\phi}$ extends analytically to a neighborhood of the imaginary axis in $\mathbb{H}$, this derivative is also well-defined.}

\begin{definition}\label{remark:HJ_phitil}
   The non-autonomous elliptic Calogero-Moser model is defined by the equations of motion \cite{Manin} (see also Appendix \ref{App:NAECM} for a brief description of the model)
   \begin{align}\label{def:uv}
    v(\tau) = 2\pi \ii \frac{d u(\tau)}{d \tau}, && 2\pi \ii \frac{d v(\tau)}{d \tau} = m^2 \wp'(2u(\tau)),
\end{align}
of the Hamiltonian
  \begin{align}\label{HAM:4}
     H(\tau) =v(\tau)^2-m^2\wp(2u(\tau)|\tau).
\end{align}
The solution $u(\tau)$ depends on $\tau$, $m$, and the initial conditions or equivalently, on the monodromy data through the Riemann-Hilbert correspondence \cite{bonelli2020n,DDG2020}. 
The equation \eqref{eq:Hamilton-jacobi} is then the Hamilton-Jacobi equation of the system above with the identification 
\begin{align}\label{eq:zandmsquare}
z=2u(\tau), && m^2= \left(2- \alpha_0\right)^2.
\end{align}
\end{definition}
\begin{lemma}\label{Lemma:HJ-un}
Given initial condition at $\tau\to\ii\infty$, the solution of the Hamilton-Jacobi equation is unique and coincides with the classical action \cite{arnol2013mathematical} corresponding to the Hamiltonian above (see \cite{BGG2021}, \cite[p. 258]{B})
\begin{gather}
   \widetilde{\phi}(2u(\tau),\tau) = \int_{\ii \infty}^{\tau}\left(v(\tau')^2+m^2\wp(2u(\tau')|\tau') \right) \frac{d\tau'}{2\pi \ii}. \label{eq:NAECM_action}
\end{gather}
\end{lemma}

\begin{proposition}\label{prop:sc_HJ}
Let $\chi = \frac{2}{\gamma}$, $q \in (0,1)$, and $z \in \mathcal{B} := \{u \in \mathbb{C} : 0 \leq \mathrm{Im}(u) < \frac{3}{4}\mathrm{Im}(\tau)\}$. For parameters $\alpha_0 \in (-2,2)$ and $P_0 \in \mathbb{R}$, the semi-classical limit of the deformed conformal block $\psi^{\alpha_0/\gamma}_{\chi, P_0/\gamma}(z;q)$ defined in \eqref{eq:q-block} satisfies the following uniform bounds:
\begin{align}
\limsup_{\gamma \to 0} \gamma^2 \log \left|\psi^{\alpha_0/\gamma}_{\chi, P_0/\gamma}(z;q)\right| &< \infty, \label{eq:limsup_deformed}\\
\liminf_{\gamma \to 0} \gamma^2 \log \left|\psi^{\alpha_0/\gamma}_{\chi, P_0/\gamma}(z;q)\right| &> -\infty. \label{eq:liminf_deformed}
\end{align}
Moreover, these bounds are uniform for $z$ and $q$ varying over compact subsets of $\mathcal{B}$ and $(0,1)$ respectively.
\end{proposition}

\begin{proof}
We divide the proof into four main steps. In Step 1, we analyze the structure of the deformed conformal block and identify the leading-order prefactors. In Step 2, we establish the upper bound \eqref{eq:limsup_deformed} using Hölder's inequality, negative moment bounds, and the exact moment formula. In Step 3, we prove the upper bound for the case $\alpha_0 < 2$ using Selberg integral asymptotics. Finally, in Step 4, we establish the lower bound \eqref{eq:liminf_deformed} using Jensen's inequality.

\medskip
\noindent\textbf{Step 1: Structure of the Deformed Conformal Block.}

\medskip
Recall from Definition~\ref{def:u-block} that the deformed conformal block is given by
\begin{align}\label{eq:deformed_CB_recall}
\psi^{\alpha}_{\chi, P}(z, \tau) = \mathcal{W}(q) e^{\chi P z \pi} \mathbb{E}\left[\mathcal{V}^{\alpha}_{\chi, P}(z, q)^{-\frac{\alpha}{\gamma} + \frac{\chi}{\gamma}}\right],
\end{align}
where the function inside the expectation
\begin{align}\label{eq:V_def}
\mathcal{V}^{\alpha}_{\chi, P}(z, q) := \int_0^1 \theta_1(z+x)^{\frac{\gamma\chi}{2}} \theta_1(z)^{-\frac{\gamma\chi}{2}} |\theta_1(x)|^{-\frac{\alpha\gamma}{2}} e^{\pi \gamma P x} e^{\frac{\gamma}{2} Y(x;q)} dx,
\end{align}
and the prefactor $\mathcal{W}(q)$ is given by \eqref{def:Wq}. Under the scaling $\alpha = \alpha_0/\gamma$, $P = P_0/\gamma$, and $\chi = 2/\gamma$, we compute the parameter $l_{\chi}$:
\begin{align}\label{eq:l_chi_computation}
l_{\chi} = \frac{\chi^2}{2} - \frac{\alpha\chi}{2} = \frac{2}{\gamma^2} - \frac{\alpha_0}{\gamma^2} = \frac{2 - \alpha_0}{\gamma^2}.
\end{align}
Similarly, the exponent in the expectation becomes
\begin{align}\label{eq:exponent_computation}
-\frac{\alpha}{\gamma} + \frac{\chi}{\gamma} = -\frac{\alpha_0}{\gamma^2} + \frac{2}{\gamma^2} = \frac{2 - \alpha_0}{\gamma^2}.
\end{align}

\medskip
\noindent\textit{Step 1a - Analysis of the prefactor $\mathcal{W}(q) e^{\chi P z \pi}$:} Substituting \eqref{eq:l_chi_computation} into the expression for $\mathcal{W}(q)$, we compute each of the terms. For the power of $q$ in \eqref{def:Wq}:
\begin{align}\label{eq:q_exponent}
\frac{P^2}{2} + \frac{\gamma l_{\chi}}{12\chi} - \frac{l_{\chi}^2}{6\chi^2} = \frac{P_0^2}{2\gamma^2} + \frac{2-\alpha_0}{24} - \frac{(2-\alpha_0)^2}{24\gamma^2}.
\end{align}
For the power of $\theta_1'(0)$:
\begin{align}\label{eq:theta_exponent}
-\frac{2l_{\chi}^2}{3\chi^2} + \frac{l_{\chi}}{3} + \frac{4l_{\chi}}{3\gamma\chi} = \frac{(2-\alpha_0)(4+\alpha_0)}{6\gamma^2}.
\end{align}
Combining these calculations, we get 
\begin{align}\label{eq:prefactor_final}
\mathcal{W}(q) e^{\chi P z \pi} = q^{\frac{P_0^2}{2\gamma^2} - \frac{(2-\alpha_0)^2}{24\gamma^2} + \frac{2-\alpha_0}{24}} \theta_1'(0)^{\frac{(2-\alpha_0)(4+\alpha_0)}{6\gamma^2}} e^{\frac{2P_0 z \pi}{\gamma^2}}.
\end{align}
Taking the logarithm and multiplying by $\gamma^2$, we obtain
\begin{align}\label{eq:prefactor_log}
\gamma^2 \log\left|\mathcal{W}(q) e^{\chi P z \pi}\right| &= \left(\frac{P_0^2}{2} - \frac{(2-\alpha_0)^2}{24}\right) \log q + \frac{(2-\alpha_0)(4+\alpha_0)}{6} \log|\theta_1'(0)| \nonumber\\
&\quad + 2P_0 \mathrm{Re}(z) \pi + O(\gamma^2).
\end{align}
Since $q \in (0,1)$ and $z$ varies over a compact subset of $\mathcal{B}$, this expression is bounded as $\gamma \to 0$. Therefore, the prefactor contributes a bounded term to $\gamma^2 \log|\psi^{\alpha_0/\gamma}_{\chi, P_0/\gamma}(z;q)|$, and it remains to analyze the expectation term.

\medskip
\noindent\textbf{Step 2: Upper Bound via Hölder's Inequality and Negative Moment Bounds.}

\medskip
We now establish the upper bound \eqref{eq:limsup_deformed} for all $\alpha_0 \in (-2, 2)$. The approach uses Hölder's inequality to separate the various factors in the expectation, followed by moment bounds derived from Lemma~\ref{lem:negative_moments}. To begin with, under the scaling $\chi = 2/\gamma, \alpha = \alpha_0/\gamma, P = P_0/\gamma$, the function \eqref{eq:V_def} becomes
\begin{align}\label{eq:V_scaled}
\mathcal{V}^{\alpha_0/\gamma}_{2/\gamma, P_0/\gamma}(z, q) = \int_0^1 \frac{\theta_1(z+x)}{\theta_1(z)} |\theta_1(x)|^{-\frac{\alpha_0}{2}} e^{\pi P_0 x} e^{\frac{\gamma}{2} :Y(x;q):} dx.
\end{align}
We now analyze the term 
\begin{align}\label{eq:expectation_to_bound}
\mathbb{E}\left[\left(\mathcal{V}^{\alpha_0/\gamma}_{2/\gamma, P_0/\gamma}(z, q)\right)^{\frac{2-\alpha_0}{\gamma^2}}\right].
\end{align}

\medskip
\noindent\textit{Step 2a: Decomposition of the function $\mathcal{V}_{\chi, P}^{\alpha}(z,q)$.}

\medskip
We write the factor $\mathcal{V}^{\alpha_0/\gamma}_{2/\gamma, P_0/\gamma}(z, q)$ as a product of a $z$-dependent factor and a GMC-type integral. To this end, we define the $z$-deformed weight function
\begin{align}\label{eq:weight_function}
w_z(x) := \frac{\theta_1(z+x)}{\theta_1(z)} |\theta_1(x)|^{-\frac{\alpha_0}{2}} e^{\pi P_0 x},
\end{align}
so that
\begin{align}\label{eq:V_with_weight}
\mathcal{V}^{\alpha_0/\gamma}_{2/\gamma, P_0/\gamma}(z, q) = \int_0^1 w_z(x) e^{\frac{\gamma}{2} :Y(x;q):} dx.
\end{align}
One can express the theta function (see \eqref{def:elltheta1}) as $\theta_1(w) = \mathfrak{p}(w; \tau) \sin(\pi w)$, where $c_1 \leq |\mathfrak{p}(w; \tau)| \leq c_2$ uniformly for $w$ in the relevant domain. The $z$-deformed weight function defined in \eqref{eq:weight_function} can then be written as
\begin{align}\label{eq:weight_comparison}
w_z(x) = \mathfrak{r}_z(x) \cdot (2\sin(\pi x))^{-\frac{\alpha_0}{2}} e^{\pi P_0 x},
\end{align}
where $\mathfrak{r}_z(x)$ is a bounded function satisfying 
\begin{align}
0 < c_3 \leq |\mathfrak{r}_z(x)| \leq c_4 < \infty \label{ineq:frak-r}
\end{align}
uniformly for $z \in \mathcal{B}$ (compact) and $x \in [0,1]$. { Here, $c_i$, $i=1,...,4$ are some constants that are uniform for $q \in (0,1)$ and $z \in [M, 1-M]$, and $\mathfrak{r}_z (x) = \frac{\theta_1(z+x)}{\theta_1(z)} \mathfrak{p}(x; \tau)^{-\alpha_0/2}$.}

\medskip
\noindent\textit{Step 2b: Application of Hölder's inequality.}

\medskip
To bound the expectation \eqref{eq:expectation_to_bound}, we apply H\"older's inequality with conjugate exponents $p_1, p_2, p_3 > 1$ satisfying $\frac{1}{p_1} + \frac{1}{p_2} + \frac{1}{p_3} = 1$. We will choose these exponents optimally later. First, we separate the deterministic $z$-dependent prefactor from the GMC integral. Using \eqref{eq:weight_comparison}, we can write \eqref{eq:V_with_weight} as
\begin{align}
    \mathcal{V}^{\alpha_0/\gamma}_{2/\gamma, P_0/\gamma}(z, q) = \int_0^1 \mathfrak{r}_z(x) \cdot (2\sin(\pi x))^{-\frac{\alpha_0}{2}} e^{\pi P_0 x}e^{\frac{\gamma}{2} :Y(x;q):} dx,
\end{align}
and the bounds \eqref{ineq:frak-r} then imply that
\begin{align}\label{eq:V_bounds}
c_3 \mathcal{I}_{\gamma, P_0, \alpha_0} \leq \left|\mathcal{V}^{\alpha_0/\gamma}_{2/\gamma, P_0/\gamma}(z, q)\right| \leq c_4 \mathcal{I}_{\gamma, P_0, \alpha_0},
\end{align}
where
\begin{align}\label{eq:I_gamma_def}
\mathcal{I}_{\gamma, P_0, \alpha_0} := \int_0^1 (2\sin(\pi x))^{-\frac{\alpha_0}{2}} e^{\pi P_0 x} e^{\frac{\gamma}{2} :Y(x;q):} dx.
\end{align}
Therefore, in order to study the boundedness of the expectation \eqref{eq:expectation_to_bound}, it suffices to bound the expression
$\mathbb{E}[\mathcal{I}_{\gamma, P_0, \alpha_0}^{(2-\alpha_0)}/{\gamma^2}].$
To do this, we now proceed using the approach from Proposition~\ref{prop:semiclassical_limit}. Define the probability measure
\begin{align}\label{eq:mu_def}
d\mu(x) := \frac{1}{Z} (2\sin(\pi x))^{-\frac{\alpha_0}{2}} e^{\pi P_0 x} dx,
\end{align}
where $Z = \int_0^1 (2\sin(\pi x))^{-\frac{\alpha_0}{2}} e^{\pi P_0 x} dx$ is the normalization constant. For $\alpha_0 \in (-2, 2)$, the integral converges and $Z \in (0, \infty)$. With this notation,
\begin{align}\label{eq:I_with_mu}
\mathcal{I}_{\gamma, P_0, \alpha_0} = Z \int_0^1 e^{\frac{\gamma}{2} :Y(x;q):} d\mu(x).
\end{align}

\medskip

\noindent\textit{Step 2c: Upper Bound for $\alpha_0 < 2$ via Selberg Integrals.}
\medskip 

We provide a proof of the upper bound for $\alpha_0 \in (-2, 2)$ using the explicit representation of moments as Selberg-type integrals. 
For any positive integer $N$, the $N$-th moment of $\mathcal{V}^{\alpha_0/\gamma}_{2/\gamma, P_0/\gamma}(z, q)$ is given by
\begin{align}\label{eq:Nth_moment}
&\mathbb{E}\left[\mathcal{V}^{\alpha_0/\gamma}_{2/\gamma, P_0/\gamma}(z, q)^N\right] \nonumber\\
&= \int_{[0,1]^N} \prod_{1 \leq i < j \leq N} |\theta_1(x_i - x_j)|^{-\frac{\gamma^2}{2}} \prod_{i=1}^N \frac{\theta_1(z + x_i)}{\theta_1(z)} |\theta_1(x_i)|^{-\frac{\alpha_0}{2}} e^{\pi P_0 x_i} \prod_{i=1}^N dx_i.
\end{align}
Using the factorization $\theta_1(w) = \mathfrak{p}(w; \tau) \sin(\pi w)$, the integral \eqref{eq:Nth_moment} can be bounded above and below by 
\begin{align}
c^{\pm\gamma^2 N(N-1)/4} \int_{[0,1]^N} \prod_{1 \leq i < j \leq N} |\sin(\pi(x_i - x_j))|^{-\frac{\gamma^2}{2}} \prod_{i=1}^N |\sin(\pi x_i)|^{-\frac{\alpha_0}{2}} e^{\pi P_0 x_i} \prod_{i=1}^N dx_i,\label{eq:reduced_integral}
\end{align}
for some constant $c > 0$.

The integral \eqref{eq:reduced_integral} is a classical Selberg integral with explicit evaluation given by \eqref{eq:C_expression}. Taking $N = \lfloor (2-\alpha_0)/\gamma^2 \rfloor + 1 \sim (2-\alpha_0)/\gamma^2$ as $\gamma \to 0$, a careful asymptotic analysis using Stirling's approximation shows that
\begin{align}\label{eq:Selberg_asymptotics}
\eqref{eq:reduced_integral} = O(N^2 \gamma^2) = O\left(\frac{(2-\alpha_0)^2}{\gamma^2}\right).
\end{align}
For $\alpha_0 \in (-2, 2)$, we bound the fractional moment by an integer moment using Jensen's inequality
\begin{align}\label{eq:fractional_to_integer}
\left|\mathbb{E}\left[\mathcal{V}^{\alpha_0/\gamma}_{2/\gamma, P_0/\gamma}(z, q)^{\frac{2-\alpha_0}{\gamma^2}}\right]\right| \leq \left|\mathbb{E}\left[\mathcal{V}^{\alpha_0/\gamma}_{2/\gamma, P_0/\gamma}(z, q)^N\right]\right|
\end{align}
for $N\gamma^2 \geq 2-\alpha_0$. Combining these estimates, we get 
\begin{align}\label{eq:moment_bound_final}
\gamma^2 \log\left|\mathbb{E}\left[\mathcal{V}^{\alpha_0/\gamma}_{2/\gamma, P_0/\gamma}(z, q)^{\frac{2-\alpha_0}{\gamma^2}}\right]\right| \leq C(\alpha_0, P_0)
\end{align}
for some constant $C(\alpha_0, P_0) > 0$ independent of $\gamma$. Together with the prefactor bound from Step 1, this establishes
\begin{align}\label{eq:limsup_conclusion}
\limsup_{\gamma \to 0} \gamma^2 \log\left|\psi^{\alpha_0/\gamma}_{2/\gamma, P_0/\gamma}(z;q)\right| < \infty
\end{align}
for all $\alpha_0 \in (-2, 2)$.

\noindent\textbf{Step 3: Lower Bound via Jensen's Inequality.}
We now establish the lower bound \eqref{eq:liminf_deformed}.

\medskip
\noindent\textit{Step 3a: Application of Jensen's inequality.}

\medskip
For $\alpha_0 \in (-2, 2)$, the exponent $p := (2-\alpha_0)/\gamma^2 > 0$. The function $h(x) = x^p$ is convex for $x > 0$ when $p \geq 1$ (which holds for small $\gamma$ when $\alpha_0 < 2$). By Jensen's inequality:
\begin{align}\label{eq:Jensen_lower}
\mathbb{E}\left[\mathcal{V}^{\alpha_0/\gamma}_{2/\gamma, P_0/\gamma}(z, q)^p\right] \geq \left(\mathbb{E}\left[\mathcal{V}^{\alpha_0/\gamma}_{2/\gamma, P_0/\gamma}(z, q)\right]\right)^p.
\end{align}

\noindent\textit{Step 3b: Computation of the first moment.}

\medskip
The first moment is:
\begin{align}\label{eq:first_moment}
\mathbb{E}\left[\mathcal{V}^{\alpha_0/\gamma}_{2/\gamma, P_0/\gamma}(z, q)\right] &= \int_0^1 \frac{\theta_1(z+x)}{\theta_1(z)} |\theta_1(x)|^{-\frac{\alpha_0}{2}} e^{\pi P_0 x} \mathbb{E}\left[e^{\frac{\gamma}{2} Y(x;q)}\right] dx \nonumber\\
&= \int_0^1 \frac{\theta_1(z+x)}{\theta_1(z)} |\theta_1(x)|^{-\frac{\alpha_0}{2}} e^{\pi P_0 x} e^{\frac{\gamma^2}{8} \mathbb{E}[Y(x;q)^2]} dx.
\end{align}
For $\alpha_0 \in (-2, 2)$, the integrand is integrable, and
\begin{align}\label{eq:first_moment_bound}
\mathbb{E}\left[\mathcal{V}^{\alpha_0/\gamma}_{2/\gamma, P_0/\gamma}(z, q)\right] = I_{\det}(z, q) \cdot (1 + O(\gamma^2)),
\end{align}
where $I_{\det}(z, q) := \int_0^1 \frac{\theta_1(z+x)}{\theta_1(z)} |\theta_1(x)|^{-\frac{\alpha_0}{2}} e^{\pi P_0 x} dx$.
Since $z$ varies over a compact subset of $\mathcal{B}$ and the integrand is continuous and strictly positive:
\begin{align}\label{eq:I_det_lower_bound}
I_{\det}(z, q) \geq c_0 > 0.
\end{align}

\medskip
\noindent\textit{Step 3c: Conclusion of the lower bound.}
Combining \eqref{eq:Jensen_lower}, \eqref{eq:first_moment_bound}, and \eqref{eq:I_det_lower_bound}:
\begin{align}\label{eq:lower_bound_chain}
\left|\mathbb{E}\left[\mathcal{V}^{\alpha_0/\gamma}_{2/\gamma, P_0/\gamma}(z, q)^{\frac{2-\alpha_0}{\gamma^2}}\right]\right| &\geq \left(c_0 (1 + O(\gamma^2))\right)^{\frac{2-\alpha_0}{\gamma^2}} = \exp\left(\frac{2-\alpha_0}{\gamma^2} \log c_0 + O(1)\right).
\end{align}
Taking logarithms and multiplying by $\gamma^2$:
\begin{align}\label{eq:log_lower_bound}
\gamma^2 \log\left|\mathbb{E}\left[\mathcal{V}^{\alpha_0/\gamma}_{2/\gamma, P_0/\gamma}(z, q)^{\frac{2-\alpha_0}{\gamma^2}}\right]\right| \geq (2-\alpha_0) \log c_0 + O(\gamma^2).
\end{align}
Since $\log c_0$ is finite, we conclude
\begin{align}\label{eq:liminf_final}
\liminf_{\gamma \to 0} \gamma^2 \log\left|\psi^{\alpha_0/\gamma}_{2/\gamma, P_0/\gamma}(z;q)\right| > -\infty.
\end{align}
This completes the proof of Proposition~\ref{prop:sc_HJ}.
\end{proof}

\begin{remark}\label{rem:uniformity}
The bounds established in Proposition~\ref{prop:sc_HJ} are uniform for $(z, q)$ in compact subsets of $\mathcal{B} \times (0,1)$. This uniformity is crucial for the subsequent analysis involving limits and derivatives with respect to $z$ and $\tau$.
\end{remark}

\begin{remark}\label{rem:range_of_alpha}
Although we stated the proposition for $\alpha_0 \in (-2, 2)$, the actual range of validity for fixed $\gamma > 0$ is $\alpha_0 \in (-2, 2 + \frac{\gamma^2}{2})$. The extension to $\alpha_0 \geq 2$ treated via negative moment bounds demonstrates the flexibility of the Fyodorov-Bouchaud approach.
\end{remark}

\begin{proposition}\label{prop:limit_commutativity}
As $\gamma \downarrow 0$, the sequence $\{\gamma^2 \log \psi^{\alpha_0/\gamma}_{2/\gamma, P_0/\gamma}(z,q)\}_{\gamma > 0}$ is tight in $C^{2}({[M,1-M]} \times [\epsilon,q_0])$ for any {$0 < M < \frac{1}{2}$} and $0<\epsilon < q_0 \in (0,1)$. The parameters satisfy $\alpha_0 \in (-2,2)$ and $P_0 \in \mathbb{R}$.
\end{proposition}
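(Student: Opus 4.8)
The plan is to exploit the fact that $\psi^{\alpha_0/\gamma}_{2/\gamma}(z,q)$ is a \emph{deterministic} function, so that ``tightness'' in $C^2([-M,M]\times[0,q_0])$ means relative compactness of the family $u_\gamma(z,q):=\gamma^2\log\psi^{\alpha_0/\gamma}_{2/\gamma}(z,q)$. I would realize this as a normal-families statement. By the analyticity of the deformed block on the domain $D^\alpha_\chi$ of \eqref{eq:d-def}, $\psi^{\alpha_0/\gamma}_{2/\gamma}$ extends to a holomorphic function of $(z,q)$ on a \emph{fixed} complex neighbourhood $\Omega$ of $[-M,M]\times[0,q_0]$ for all small $\gamma$ (the $z$-neighbourhood taken inside the strip $\band$, the $q$-neighbourhood inside the disk of convergence, whose radius stays bounded below as $\gamma\to 0$ as in Proposition~\ref{thm:sc_HJ}). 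On the real set $\psi$ is strictly positive, so after shrinking $\Omega$ it is zero-free and admits a holomorphic logarithm; hence $u_\gamma$ is holomorphic on $\Omega$ and real on $[-M,M]\times[0,q_0]$. Once $\{u_\gamma\}$ is shown to be uniformly bounded on $\Omega$, Montel's theorem makes it a normal family: every sequence $\gamma_n\downarrow 0$ has a subsequence converging locally uniformly on $\Omega$ to a holomorphic limit, and by Cauchy's estimates all partial derivatives converge locally uniformly as well. Restricting to the totally real compact set $[-M,M]\times[0,q_0]$ yields convergence in $C^2$ (indeed $C^\infty$), which is exactly the asserted tightness.

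The remaining and central task is the uniform bound $\sup_{0<\gamma<\gamma_0}\sup_{\Omega}|u_\gamma|<\infty$. I would obtain it by upgrading the real-variable estimates of Proposition~\ref{thm:sc_HJ} to $\Omega$. Writing $\psi=\cW(q)e^{\chi P_0 z\pi/\gamma}\,\EE[\mathcal{V}^{\,s}]$ with $s=\tfrac{2-\alpha_0}{\gamma^2}$ as in \eqref{eq:proof4CB}, the prefactor is explicit and its $\gamma^2\log$ is manifestly bounded on $\Omega$ by \eqref{asymp:term1}, so everything reduces to bounding $\gamma^2\log\EE[\mathcal{V}^{\,s}]$ from above and below, uniformly on $\Omega$. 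The upper bound is the easy direction: since $s$ is real, $|\mathcal{V}^{\,s}|=|\mathcal{V}|^{s}$, so $|\EE[\mathcal{V}^{\,s}]|\le\EE[|\mathcal{V}|^{s}]$, and the Selberg-integral and Jensen bounds from the proof of Proposition~\ref{thm:sc_HJ} apply verbatim to $|\mathcal{V}|$ (the kernels $\theta_{1}(z+x)$ and $\theta_{1}(x)^{-\alpha_0/2}$ are controlled on the compact closure of $\Omega$), giving $|\EE[\mathcal{V}^{\,s}]|\le e^{c/\gamma^2}$.

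The delicate direction, which I expect to be the principal obstacle, is the lower bound $|\EE[\mathcal{V}^{\,s}]|\ge e^{-c/\gamma^2}$, i.e.\ the absence of catastrophic cancellation in the complex expectation when $\mathcal{V}$ is complex and $s\sim\gamma^{-2}$ is large. The mechanism that rescues it is that, in the band, $\theta_{1}(z+x)$ stays bounded away from $0$, so
\[
\partial_z\log\mathcal{V}=-\frac{\theta_1'(z)}{\theta_1(z)}+\frac{\int_0^1\theta_1'(z+x)\theta_{1}(x)^{-\frac{\alpha_0}{2}}e^{\pi P_0x}e^{\frac{\gamma}{2}Y(x;q)}\,dx}{\int_0^1\theta_1(z+x)\theta_{1}(x)^{-\frac{\alpha_0}{2}}e^{\pi P_0x}e^{\frac{\gamma}{2}Y(x;q)}\,dx}
\]
is bounded by a deterministic constant, uniformly in the GMC sample; consequently the fluctuations of $\log\mathcal{V}$ about its mean are only $O(\gamma)$, and the phase $s\,\arg\mathcal{V}$ varies by $O(1)$ over the bulk of the law of $\mathcal{V}$. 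Making this quantitative, namely controlling $\mathrm{Var}_{\nu}(\log\mathcal{V})=O(\gamma^2)$ under the tilted law $\nu\propto|\mathcal{V}|^{s}$, is what rules out cancellation and yields the lower bound; here one reuses the derivative-GMC tail estimates of Theorem~\ref{prop:MyProp} and Lemma~\ref{lemma:taildef} exactly as in Proposition~\ref{thm:sc_HJ}, now tracking the complex perturbation in $(z,q)$.

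As a more hands-on alternative that avoids complexifying $q$, one can bound the real derivatives directly. The deterministic bound on $\partial_z\log\mathcal{V}$ and its higher $z$-log-derivatives gives $\partial_z u_\gamma=2\pi P_0+(2-\alpha_0)\langle\partial_z\log\mathcal{V}\rangle_s$ bounded at once, where $\langle f\rangle_s:=\EE[\mathcal{V}^{s}f]/\EE[\mathcal{V}^{s}]$ is the $\mathcal{V}^{s}$-tilted average; the second derivative $\partial_{zz}u_\gamma=(2-\alpha_0)\langle\partial_{zz}\log\mathcal{V}\rangle_s+\tfrac{(2-\alpha_0)^2}{\gamma^2}\mathrm{Var}_s(\partial_z\log\mathcal{V})+O(\gamma^2)$ is bounded precisely when $\mathrm{Var}_s(\partial_z\log\mathcal{V})=O(\gamma^2)$ (the same fluctuation estimate), and the $\tau$- (hence $q$-) derivatives are then recovered from the BPZ equation \eqref{eq:bpz1}, which after dividing by $\psi$ reads
\[
(\partial_z u_\gamma)^2-(2-\alpha_0)^2\wp(z)+2\pi\ii\,\partial_\tau u_\gamma=-\gamma^2\partial_{zz}u_\gamma+O(\gamma^2),
\]
cf.\ \eqref{eq:Hamilton-jacobi}, expressing $\partial_\tau u_\gamma$ through the already-bounded $z$-derivatives. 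Differentiating this identity once more supplies the remaining second-order bounds and the equicontinuity needed to conclude via Arzel\`a--Ascoli, so that either route reduces the proposition to the single fluctuation estimate $\mathrm{Var}_s(\partial_z\log\mathcal{V})=O(\gamma^2)$.
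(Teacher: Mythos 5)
Your overall architecture (reduce $C^2$-tightness to uniform two-sided bounds plus analyticity, then conclude via Montel/Cauchy or Arzel\`a--Ascoli) is reasonable, but the proof has a genuine gap at exactly the point you flag as the principal obstacle, and the rescue you propose does not work as stated. Everything hinges on the concentration estimate $\mathrm{Var}_s(\partial_z\log\mathcal{V})=O(\gamma^2)$ under the tilted law $\nu\propto\mathcal{V}^{s}$ with $s=(2-\alpha_0)/\gamma^2$. You assert this can be obtained by "reusing" Theorem~\ref{prop:MyProp} and Lemma~\ref{lemma:taildef}, but those results give one-sided tail bounds for the DGMC under the \emph{original} Gaussian law; tilting by $\mathcal{V}^{s}$ with an exponent of order $\gamma^{-2}$ changes the measure non-perturbatively, and no estimate in the paper controls fluctuations under $\nu$. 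Proving such a tilted concentration bound is a problem of comparable depth to the proposition itself, so as written the argument is circular at its core. Moreover, the surrounding heuristics are quantitatively off: boundedness of $\partial_z\log\mathcal{V}$ (which itself is delicate for real $z$, since $\theta_1(z+x)$ vanishes when $z+x$ crosses an integer, so the kernel is \emph{not} bounded away from zero on the real slice and the random denominator can be small) does not bound the sample fluctuations of $\log\mathcal{V}$, which come from the field $Y(\cdot;q)$ and are of size $O(\gamma)$; hence on a fixed complex neighbourhood of width $\delta$ the phase $s\,\mathrm{arg}\,\mathcal{V}$ fluctuates by $O(\delta/\gamma)$, not $O(1)$. (One can still hope for $|\EE[\mathcal{V}^s]|\gtrsim e^{-c\delta^2/\gamma^2}$ via Gaussian-type phase cancellation, which is all the Montel route needs, but that again requires precisely the unproven tilted-measure Gaussianity.)

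For contrast, the paper avoids tilted-measure concentration entirely. Its proof of this proposition applies Girsanov's theorem to the representation \eqref{psihat-def}, producing the factorized form \eqref{def:reg_CB}: all $q$-dependence is pushed into \emph{explicit} Gaussian observables $\widetilde{\mathcal{X}}(z,q)$, $\mathcal{Q}(q)$, $F_\tau(0)$ of \eqref{defs:Girsanov}, while the GMC expectation becomes $q$-independent. The Gaussian factor is then computed in closed form, $\gamma^2\log\mathbb{E}\big[e^{\widetilde{\mathcal{X}}(z,q)/\gamma}\big]=\tfrac{1}{2}\mathbb{E}\big[\widetilde{\mathcal{X}}(z,q)^2\big]$ as in \eqref{lim:exp2}, which is manifestly analytic in $q$ on a fixed ball, and the $q$-independent GMC factor is handled once by the boundedness estimates already established in Proposition~\ref{thm:sc_HJ}. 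Uniformity in $q$ and commutativity of the limits $\gamma\to 0$, $q\to 0$ then come for free from this explicit analytic structure, with no lower bound on a complex expectation needed. If you want to salvage your approach, the honest path is to first prove the Girsanov decoupling (or an equivalent statement separating the $q$-dependence from the chaos) and only then run your normal-families argument on the explicit Gaussian part; your second route's idea of recovering $\partial_\tau u_\gamma$ from the BPZ equation is sound and close in spirit to how the paper derives \eqref{eq:Hamilton-jacobi}, but it too founders on the same unproven variance bound for $\partial_{zz}u_\gamma$.
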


\begin{proof}
We divide the proof into four main steps. In Step 1, we introduce the Girsanov representation of the deformed conformal block. In Step 2, we perform the asymptotic expansion as $\gamma \to 0$ using the Cameron-Martin theorem and Gaussian integration techniques. In Step 3, we establish the commutativity of the limits $\gamma\to 0$ and $q\to q_0$ for any fixed $q_0\in (0,1)$. Finally, in Step 4, we prove the tightness in $C^2$.

\medskip
\noindent\textbf{Step 1: Girsanov Representation of the Deformed Conformal Block.}

\medskip
Let us start by rewriting the deformed conformal block \eqref{eq:q-block} as (see also eq.\ (3.5) in \cite{ghosal2020probabilistic})
\begin{gather}\label{psihat-def}
\psi_{\chi,P}^{\alpha}(z,q) = q^{\left(\frac{P^2}{2} + \frac{1}{6\chi^2} l_{\chi} (l_{\chi}+1) \right)} \widehat{\psi}_{\chi}^{\alpha}(z, q).
\end{gather}
To prove this proposition, it is sufficient to show that the limits $\gamma \to 0$ and $q \to q_0$ commute for $\widehat{\psi}_{\chi,P}^{\alpha}(z, q)$.
The key to proving the commutativity is to express the function $\widehat{\psi}_{\chi, P}^{\alpha}(z, q)$ such that the integrand inside the expectation is $q$-independent. Such a structure can be obtained via Girsanov's theorem (see \cite[Theorem C.5]{ghosal2020probabilistic}) that lets us re-express the conformal block in terms of the $q$-independent measure $e^{\frac{\gamma}{2} Y(x)}$, resulting in the following expression:
\begin{align}
&\widehat{\psi}_{\chi,P}^{\alpha}(z, q) = C(q) e^{\chi P z \pi} \theta_{1}(z)^{-l_{\chi}} C_{1}(q) \left(- \ii e^{- \ii \pi z} q^{1/6} \eta(q) \right)^{l_{\chi}} \nonumber \\
& \times \mathbb{E} \left[ e^{\frac{\alpha}{2}F(0;q)} \mathcal{Q}(q) e^{\chi\mathcal{Y}(z,q)} e^{\chi \mathcal{X}(z,q) - \frac{\chi^2}{2} \mathbb{E}[\mathcal{X}(z,q)^2]} \left(\int_{0}^{1} e^{\pi \gamma P x} (2\sin(\pi x))^{-\frac{\alpha\gamma}{2}} e^{\frac{\gamma}{2}Y(x)} dx\right)^{-\frac{\alpha}{\gamma}+\frac{\chi}{\gamma}}\right], \\ \label{eq:hatpsigir}
\end{align}
where the functions are defined as
\begin{align}
\begin{split}\label{defs:Girsanov}
C(q) &:= q^{\frac{\gamma l_{\chi}}{12\chi} - \frac{1}{6}\frac{l_{\chi}^2}{\chi^2} - \frac{1}{6\chi^2} l_{\chi}(l_{\chi}+1)} \theta_{1}'(0)^{- \frac{2l_{\chi}^2}{3\chi^2} + \frac{l_{\chi}}{3} + \frac{4l_{\chi}}{3\gamma\chi}} e^{-\frac{1}{2} \ii \pi \alpha \gamma (-\frac{\alpha}{\gamma} + \frac{\chi}{\gamma})}, \\
C_{1}(q) &:= \left( q^{1/6} \eta(q) \right)^{\alpha(\alpha- \chi)/2} e^{(\frac{\alpha \gamma}{8} - \frac{\gamma \chi}{8} - \frac{\alpha^2}{8})\mathbb{E}[F(0;q)^2]},\\
\mathcal{Q}(q) &:= \exp\left[\sqrt{2} \sum_{m,n=1}^{\infty} q^{nm} (a_{n,m} a_n + b_{n,m} b_n) - \sum_{n=1}^\infty \left( \sum_{m=1}^\infty q^{nm} a_{m,n}\right)^2 - \sum_{n=1}^\infty \left( \sum_{m=1}^\infty q^{nm} b_{m,n}\right)^2 \right], \\
F(0;q) &:= 2\sum_{n,m \geq 1} \frac{q^{nm}}{\sqrt{n}} a_{n,m}, \\
\mathcal{X}(z,q) &:= - \sqrt{2} \sum_{n,m=1}^{\infty} \frac{q^{(2m-1)n}}{\sqrt{n}} \left( \cos(2\pi (z - \frac{\tau}{2})n) a_n - \sin(2\pi (z - \frac{\tau}{2})n) b_n\right), \\
\mathcal{Y}(z,q) &:= 2\sum_{m,n,k \geq 1} a_{n,m} \frac{q^{(2k-1+m)n}}{\sqrt{n}} \cos(2\pi (z-\frac{\tau}{2}) n) - 2\sum_{m,n,k \geq 1} b_{n,m} \frac{q^{(2k-1+m)n}}{\sqrt{n}} \sin(2\pi (z-\frac{\tau}{2}) n).
\end{split}
\end{align}
{The representation \eqref{eq:hatpsigir} arises from applying the Cameron-Martin theorem (Girsanov's theorem) to factor the $q$-dependence from the GMC integral. Specifically: the field $Y(x;q)$ decomposes as $Y(x;q) = Y(x) + F(x;q)$, where $Y(x)$ is the $q$-independent field and $F(x;q)$ encodes the $q$-dependent modes, the factor $e^{\frac{\alpha}{2\gamma}F(0;q)}$ arises from evaluating the $q$-dependent part at the insertion point, the term $\mathcal{Q}(q)$ is a correction factor from the change of measure, satisfying $\mathcal{Q}(q) = 1 + O(q)$, the factors $\mathcal{X}(z,q)$ and $\mathcal{Y}(z,q)$ account for the $z$-dependence introduced by the deformation, and the constants $C(q)$ and $C_1(q)$ collect the deterministic prefactors.
This factorization is the key step that allows us to extend the analysis to complex $q$ and to separate the leading-order $O(\gamma^{-2})$ terms. We refer to \cite[Lemma 3.6]{ghosal2020probabilistic} for the derivation.}

For convenience we define quantity
\begin{align}\label{def:tilX}
\widetilde{\mathcal{X}}(z,q) := \frac{\alpha_0}{2\gamma} F(0;q) + \chi(\mathcal{X}(z,q) + \mathcal{Y}(z,q)).
\end{align}
With these definitions, equation \eqref{eq:hatpsigir} simplifies to
\begin{align}
\widehat{\psi}_{\chi, P}^{\alpha}(z, q)& = C(q) e^{\chi P z \pi} \theta_{1}(z)^{-l_{\chi}} C_{1}(q) \left(- \ii e^{- \ii \pi z} q^{1/6} \eta(q) \right)^{\frac{\chi}{2} (\chi - \alpha)} e^{-\frac{\chi^2}{2} \mathbb{E}[\mathcal{X}(z,q)^2]} \nonumber \\
& \times \mathbb{E} \left[ e^{\widetilde{\mathcal{X}}(z,q)} \mathcal{Q}(q) \left(\int_{0}^{1} e^{\pi \gamma P x} (2\sin(\pi x))^{-\frac{\alpha\gamma}{2}} e^{\frac{\gamma}{2}Y(x)} dx\right)^{-\frac{\alpha}{\gamma}+\frac{\chi}{\gamma}}\right].\label{def:reg_CB}
\end{align}

\medskip
\noindent\textbf{Step 2: Asymptotic Expansion for $\chi = 2/\gamma$.} We now carry out the asymptotic analysis for the deformed conformal block as $\gamma\to 0$. Under the scaling $\alpha = \alpha_0/\gamma$, $P = P_0/\gamma$, and $\chi = 2/\gamma$, let us recall the expression \eqref{eq:l_chi_computation} and \eqref{eq:exponent_computation} which state that $l_{\chi} = \frac{2 - \alpha_0}{\gamma^2}$, and the exponent in the GMC integral is $ \frac{2 - \alpha_0}{\gamma^2}$ respectively.

\medskip
\noindent\textit{Step 2a: Introduction of auxiliary Gaussian variables.}

\medskip
We introduce auxiliary random variables to simplify the analysis. Define
\begin{align}\label{eq:T12_def_heavy}
T^{(1)}_n := \sum_{m=1}^{\infty} a_{n,m} q^{nm}, \qquad T^{(2)}_n := \sum_{m=1}^{\infty} b_{n,m} q^{nm}.
\end{align}
These are independent Gaussian random variables with distributions
\begin{align}\label{eq:T12_dist_heavy}
T^{(1)}_n \sim N\left(0, \sum_{m=1}^{\infty} q^{2nm}\right), \qquad T^{(2)}_n \sim N\left(0, \sum_{m=1}^{\infty} q^{2nm}\right).
\end{align}
Using these definitions, the function $F(0;q)$ in \eqref{defs:Girsanov} can be expressed as
\begin{align}
    F(0;q) = 2\sum_{n=1}^{\infty} \frac{T^{(1)}_n}{\sqrt{n}}, \label{eq:Ft0_in_T}
\end{align}
and $\mathcal{Q}(q)$ in \eqref{defs:Girsanov} takes the form
\begin{align}\label{eq:Q_in_T}
\mathcal{Q}(q) = \exp\left(\sum_{n=1}^{\infty} \left( \sqrt{2} a_n T^{(1)}_n + \sqrt{2} b_n T^{(2)}_n - (T^{(1)}_n)^2 - (T^{(2)}_n)^2\right)\right).
\end{align}
The functions $\mathcal{X}(z,q)$ and $\mathcal{Y}(z,q)$ in \eqref{defs:Girsanov} can similarly be expressed in terms of $T_n^{(1)}$ and $T_n^{(2)}$ in \eqref{eq:T12_dist_heavy} as
\begin{align}\label{eq:X_in_T}
\mathcal{X}(z,q) &= -\sqrt{2} \sum_{n=1}^{\infty} \frac{q^n}{(1-q^{2n})\sqrt{n}} \left(a_n \cos\left(2\pi(z - \frac{\tau}{2})n\right)  - b_n\sin\left(2\pi(z - \frac{\tau}{2})n\right) \right),
\end{align}
and
\begin{align}\label{eq:Y_in_T}
\mathcal{Y}(z,q) &= 2\sum_{n=1}^{\infty} \frac{q^n }{(1-q^{2n})\sqrt{n}} \left(T^{(1)}_n\cos\left(2\pi(z - \frac{\tau}{2})n\right) -  T^{(2)}_n \sin\left(2\pi(z - \frac{\tau}{2})n\right)\right).
\end{align}
Substituting the expressions \eqref{eq:Ft0_in_T},  \eqref{eq:Y_in_T} in \eqref{def:tilX} and considering $\chi = \frac{2}{\gamma}$, we have
\begin{align}\label{eq:tilX-in-T}
    \widetilde{\mathcal{X}}(z,q) &=  \frac{2}{\gamma} \mathcal{X}(z, q) + \sum_{n=1}^{\infty}  \frac{T_n^{(1)}}{\sqrt{n}} \left( \frac{\alpha_0}{\gamma} +\frac{ 4 q^{n}}{\gamma(1- q^{2n})} \cos\left(2\pi (z-\frac{\tau}{2}) n \right) \right)\notag\\&\quad - \sum_{n=1}^{\infty}\frac{T_n^{(2)} }{\sqrt{n}}\left(\frac{ 4 q^{n}}{\gamma(1- q^{2n})} \sin\left(2\pi (z-\frac{\tau}{2}) n \right) \right).
\end{align}
Note that $\mathcal{Y}(z,q)$ and $Q(q)$ depend only on the random variables $\{(T^{(1)}_n, T^{(2)}_n)\}_{n \geq 1}$, while $\mathcal{X}(z,q)$ is a function of $\{(a_n, b_n)\}_{n \geq 1}$. Substituting \eqref{eq:Q_in_T}, \eqref{eq:tilX-in-T} into the expectation term in \eqref{def:reg_CB}, we have
\begin{align}
    &\mathbb{E} \left[ e^{\widetilde{\mathcal{X}}(z,q)} \mathcal{Q}(q) \left(\int_{0}^{1} e^{\pi \gamma P x} (2\sin(\pi x))^{-\frac{\alpha\gamma}{2}} e^{\frac{\gamma}{2}Y(x)} dx\right)^{-\frac{\alpha}{\gamma}+\frac{\chi}{\gamma}}\right] \\
    & \quad = \mathbb{E} \Bigg[\exp\Bigg( \sum_{n=1}^{\infty} - \left( (T^{(1)}_n)^2 + (T^{(2)}_n)^2) \right)+ \frac{T_n^{(1)}}{\sqrt{n}} \left( \frac{\alpha_0}{\gamma} +\frac{ 4 q^{n}}{\gamma(1- q^{2n})} \cos\left(2\pi (z-\frac{\tau}{2}) n \right) +  a_n \sqrt{2n } \right)\\
    &\quad - \sum_{n=1}^{\infty}\frac{T_n^{(2)} }{\sqrt{n}}\left(\frac{ 4 q^{n}}{\gamma(1- q^{2n})} \sin\left(2\pi (z-\frac{\tau}{2}) n \right) + b_n \sqrt{2n} \right) \Bigg) \\
    &\quad  \times e^{\frac{2}{\gamma} \mathcal{X}(z, q)} \left(\int_{0}^{1} e^{\pi  P_0 x} (2\sin(\pi x))^{-\frac{\alpha_0}{2}} e^{\frac{\gamma}{2}Y(x)} dx\right)^{\frac{(2-\alpha_0)}{\gamma^2}}\Bigg].\label{eq:2g-exp-step1}
\end{align}
For the remainder of Step 2, we simplify the above expression by taking the expectation w.r.t $\{(T^{(1)}_n, T^{(2)}_n)\}_{n \geq 1}$ and $\{(a_n, b_n)\}_{n \geq 1}$ respectively, thereby leading to the asymptotic expansion of the deformed conformal block.

\medskip
\noindent\textit{Step 2b: Integration over $T^{(1)}_n$ and $T^{(2)}_n$.}

Noting that $Y(x)$ in the expression \eqref{eq:2g-exp-step1} is independent of $\{(T^{(1)}_n, T^{(2)}_n)\}_{n \geq 1}$, (see also \eqref{def:YinfN}), we use the identity $\mathbb{E}[e^{x T - T^2}] = e^{\frac{x^2 \sigma^2}{2(1+2\sigma^2)}}/\sqrt{1+2\sigma^2}$ for $T \sim N(0, \sigma^2)$ to simplify \eqref{eq:2g-exp-step1} as
\begin{align}
&\eqref{eq:2g-exp-step1}= \prod_{n=1}^{\infty}\left( 1+ 2 \sum_{m=1}^{\infty} q^{2nm}\right)^{-1/2}\nonumber\\ & \qquad \times \mathbb{E}\Bigg[\exp\left(\frac{1}{2}\sum_{n=1}^{\infty} \frac{\sum_{m=1}^{\infty} q^{2nm}}{1 + 2\sum_{m=1}^{\infty} q^{2nm}} \left(\frac{\alpha_0}{\gamma\sqrt{n}} + \sqrt{2}a_n + \frac{4 q^n}{\gamma(1-q^{2n})\sqrt{n}} \cos(2\pi(z-\frac{\tau}{2})n)\right)^2\right) \nonumber\\
&\qquad \times \exp\left(\frac{1}{2}\sum_{n=1}^{\infty} \frac{\sum_{m=1}^{\infty} q^{2nm}}{1 + 2\sum_{m=1}^{\infty} q^{2nm}}  \Big(\sqrt{2}b_n-\frac{4 q^n}{\gamma(1-q^{2n})\sqrt{n}} \sin(2\pi(z-\frac{\tau}{2})n)\Big)^2\right)\\ & \qquad\times e^{\frac{2}{\gamma}\mathcal{X}(z,q)}  \left(\int_{0}^{1} e^{\pi  P_0 x} (2\sin(\pi x))^{-\frac{\alpha_0}{2}} e^{\frac{\gamma}{2}Y(x)} dx\right)^{\frac{(2-\alpha_0)}{\gamma^2}}\Bigg]. \label{eq:after_T_integration}
\end{align}

\medskip
\noindent\textit{Step 2c: Integration over $a_n$ and $b_n$.}

 To analyse the dependence of $a_n$, $b_n$, we define 
\begin{align}
    \Omega_n(q):= \frac{\sum_{m=1}^{\infty} q^{2nm}}{1 + 2\sum_{m=1}^{\infty} q^{2nm}}, && \aleph_n(q):= (1 + 2\sum_{m=1}^{\infty} q^{2nm})^{-1/2}. \label{def:OmegaAleph}
\end{align}
Substitute \eqref{eq:X_in_T} in \eqref{eq:after_T_integration} and simplifying the $a_n$, $b_n$ dependent terms we obtain
\begin{align}
&\eqref{eq:after_T_integration}= \prod_{n=1}^{\infty}\aleph_n(q)\\
&\qquad \times \exp\left(\frac{1}{2}\sum_{n=1}^{\infty} \Omega_n(q) \left(\frac{\alpha_0^2}{\gamma^2 {n}}+ \frac{16 q^{2n}}{\gamma^2(1-q^{2n})^2{n}}   + \frac{8 \alpha_0 q^n}{\gamma^2(1-q^{2n}) {n}} \cos(2\pi(z-\frac{\tau}{2})n)\right)\right) \nonumber\\
& \qquad \times \mathbb{E}\Bigg[ \exp\left(\sum_{n=1}^{\infty} \Omega_n(q) \left(a_n^2 + b_n^2 \right)  \right) \exp\left(-\sum_{n=1}^{\infty}  b_n  \frac{2\sqrt{2} q^n (2\Omega_n(q)-1)}{\gamma(1-q^{2n})\sqrt{n}} \sin(2\pi(z-\frac{\tau}{2})n)\right) \nonumber 
\end{align}
\begin{align}
&\quad \times \exp\left(\sum_{n=1}^{\infty}  a_n\left(\frac{\sqrt{2}\alpha_0 \Omega_n(q)}{\gamma\sqrt{n}} + \frac{2\sqrt{2} q^n (2\Omega_n(q)-1)}{\gamma(1-q^{2n})\sqrt{n}} \cos(2\pi(z-\frac{\tau}{2})n)\right)\right) \nonumber\\
& \quad \times \left(\int_{0}^{1} e^{\pi P_0 x} (2\sin(\pi x))^{-\frac{\alpha_0}{2}} e^{\frac{\gamma}{2}Y(x)} dx\right)^{\frac{(2-\alpha_0)}{\gamma^2}}\Bigg].\label{eq:int-2g-step1}
\end{align}

Recall that $Y(x)$ is dependent on the random variables $a_n$, $b_n$. In order to compute the expectation w.r.t these variables we perform a change of variables to express the above expression in the form $\mathbb{E}[e^{\Psi} f(Y)]$ where $\Psi$ and $Y$ are Gaussian random variables and use the Cameron-Martin theorem. 

Let us begin by defining the variables
\begin{align}
a^{(1)}_n := a_n \Omega_n(q)^{1/2} , \qquad a^{(1)}_n \in \mathcal{N}(0,\Omega_n(q)) \qquad; && b^{(1)}_n := b_n \Omega_n(q)^{1/2}, \qquad b^{(1)}_n \in \mathcal{N}(0,\Omega_n(q)),
\end{align}
with which the expression \eqref{eq:int-2g-step1} simplifies as
\begin{align}
     \eqref{eq:int-2g-step1} &= \prod_{n=1}^{\infty}\aleph_n(q)\\
&\qquad \times \exp\left(\sum_{n=1}^{\infty} \Omega_n(q) \left(\frac{\alpha_0^2}{\gamma^2 {n}}+ \frac{16 q^{2n}}{\gamma^2(1-q^{2n})^2{n}}   + \frac{8 \alpha_0 q^n}{\gamma^2(1-q^{2n}) {n}} \cos(2\pi(z-\frac{\tau}{2})n)\right)\right) 
\end{align}
\begin{align}
& \qquad \times \mathbb{E}\Bigg[ \exp\left(\sum_{n=1}^{\infty}  \left((a^{(1)}_n)^2 + (b^{(1)}_n)^2 \right)  \right) \exp\left(-\sum_{n=1}^{\infty}  b^{(1)}_n  \frac{2\sqrt{2} q^n (2\Omega_n(q)-1)}{\gamma(1-q^{2n})\sqrt{\Omega_n(q) n}} \sin(2\pi(z-\frac{\tau}{2})n)\right)  \\
&\qquad \times \exp\left(\sum_{n=1}^{\infty}  a^{(1)}_n\left(\frac{\sqrt{2}\alpha_0 \Omega_n(q)^{1/2}}{\gamma\sqrt{n}} + \frac{2\sqrt{2} q^n (2\Omega_n(q)-1)}{\gamma(1-q^{2n})\sqrt{n \Omega_n(q)}} \cos(2\pi(z-\frac{\tau}{2})n)\right)\right) \\
& \qquad \times \left(\int_{0}^{1} e^{\pi  P_0 x} (2\sin(\pi x))^{-\frac{\alpha_0}{2}} e^{\frac{\gamma}{2}Y(x)} dx\right)^{\frac{(2-\alpha_0)}{\gamma^2}}\Bigg].\label{eq:int-2g-step2}
\end{align}

Using the identity 
\begin{align}\label{eq:tilt_formula}
\mathbb{E}[f(T) e^{x T + T^2}] = \frac{1}{\sqrt{1 - 2\sigma^2}} \mathbb{E}[f(\widetilde{T}) e^{x\widetilde{T}}],
\end{align}
for random variables ${T} \sim N(0, \sigma^2)$, $\widetilde{T} \sim N(0, \sigma^2/(1 - 2\sigma^2))$, with using $a^{(1)}_n,b^{(1)}_n$ in place of $T$ and the variables $\widehat{a}_n\sim \mathcal{N}(0,\Omega_n(q)/(1-2\Omega_n(q)))$, $\widehat{b}_n\sim \mathcal{N}(0,\Omega_n(q)/(1-2\Omega_n(q)))$ in place of $\widetilde{T}$,
we further simplify the above expression as
\begin{align}
    \eqref{eq:int-2g-step2}&= \exp\left(\frac{1}{2}\sum_{n=1}^{\infty} \Omega_n(q) \left(\frac{\alpha_0^2}{\gamma^2 {n}}+ \frac{16 q^{2n}}{\gamma^2(1-q^{2n})^2{n}}   + \frac{8 \alpha_0 q^n}{\gamma^2(1-q^{2n}) {n}} \cos(2\pi(z-\frac{\tau}{2})n)\right)\right)\\
& \qquad \times \mathbb{E}\Bigg[  \exp\left(-\sum_{n=1}^{\infty}  \widehat{b}_n  \frac{2\sqrt{2} q^n (2\Omega_n(q)-1)}{\gamma(1-q^{2n})\sqrt{\Omega_n(q) n}} \sin(2\pi(z-\frac{\tau}{2})n)\right)  \\
&\qquad \times \exp\left(\sum_{n=1}^{\infty}  \widehat{a}_n\left(\frac{\sqrt{2}\alpha_0 \Omega_n(q)^{1/2}}{\gamma\sqrt{n}} + \frac{2\sqrt{2} q^n (2\Omega_n(q)-1)}{\gamma(1-q^{2n})\sqrt{n \Omega_n(q)}} \cos(2\pi(z-\frac{\tau}{2})n)\right)\right) \\
& \qquad \times \left(\int_{0}^{1} e^{\pi  P_0 x} (2\sin(\pi x))^{-\frac{\alpha\gamma}{2}} e^{\frac{\gamma}{2}Y(x)} dx\right)^{\frac{(2-\alpha_0)}{\gamma^2}}\Bigg],\label{eq:int-2g-step3}
\end{align}

The above expression is now in the form $\mathbb{E}[e^{\Psi}f(Y)]$, for Gaussian random variables $\Psi$ and $Y$. To make this explicit, let us define the function
\begin{align}
\Psi &:=-\sum_{n=1}^{\infty}  \widehat{b}_n  \frac{2\sqrt{2} q^n (2\Omega_n(q)-1)}{\gamma(1-q^{2n})\sqrt{\Omega_n(q) n}} \sin(2\pi(z-\frac{\tau}{2})n) \\
&\quad + \sum_{n=1}^{\infty}  \widehat{a}_n\left(\frac{\sqrt{2}\alpha_0 \Omega_n(q)^{1/2}}{\gamma\sqrt{n}} + \frac{2\sqrt{2} q^n (2\Omega_n(q)-1)}{\gamma(1-q^{2n})\sqrt{n \Omega_n(q)}} \cos(2\pi(z-\frac{\tau}{2})n)\right)\\
&\quad - \frac{(\alpha_0-2)}{2\gamma \Xi} \int_0^1 e^{\pi P_0 x} (2\sin(\pi x))^{-\frac{\alpha_0}{2}} Y(x) e^{\frac{\gamma}{2} h_{\Psi}(x)} dx,\label{def:Psi-GaussRV}
\end{align}
where
\begin{align}
h_{\Psi}(x) := \mathbb{E}[\Psi \cdot Y(x)], 
\qquad \qquad  \Xi := \int_0^1 e^{\pi P_0 x} (2\sin(\pi x))^{-\frac{\alpha_0}{2}} e^{\frac{\gamma}{2} h_{\Psi}(x)} dx.\label{def:hPsi} 
\end{align}
The function $h_{\Psi}(x)$ represents the conditional expectation of $\Psi$ given the Wick-ordered field at $x$, and $\Xi$ is a normalization factor.
With these new definitions, the expression
\begin{align}
\eqref{eq:int-2g-step3}&
= \exp\left(\frac{1}{2}\sum_{n=1}^{\infty} \Omega_n(q) \left(\frac{\alpha_0^2}{\gamma^2 {n}}+ \frac{16 q^{2n}}{\gamma^2(1-q^{2n})^2{n}}   + \frac{8 \alpha_0 q^n}{\gamma^2(1-q^{2n}) {n}} \cos(2\pi(z-\frac{\tau}{2})n)\right)\right)\\
 &\quad \times \mathbb{E}\Big[e^{\Psi} \left(\int_{0}^{1} e^{\pi \gamma P x} (2\sin(\pi x))^{-\frac{\alpha\gamma}{2}} e^{\frac{\gamma}{2}Y(x)} dx\right)^{-\frac{\alpha}{\gamma}+\frac{\chi}{\gamma}} e^{ \frac{(\alpha_0-2)}{2\gamma \Xi} \int_0^1 e^{\pi P_0 x} (2\sin(\pi x))^{-\frac{\alpha_0}{2}} Y(x) e^{\frac{\gamma}{2} h_{\Psi}(x)} dx}\Big]\\
 \label{eq:int-2g-step4}
\end{align}
Using the Cameron-Martin theorem (Girsanov's formula), the term $e^{\Psi}$ inside the expectation can be simplified as follows. For a Gaussian random variable $\Psi$ with variance $\mathbb{E}[\Psi^2]$,
\begin{align}\label{eq:CM_heavy}
\mathbb{E}[e^{\Psi} f(Y)] = e^{\frac{1}{2}\mathbb{E}[\Psi^2]} \mathbb{E}[f(Y + \mathbb{E}[\Psi.Y(x)])]= e^{\frac{1}{2}\mathbb{E}[\Psi^2]} \mathbb{E}[f(Y + h_{\Psi})],
\end{align}
where the shift $h_{\Psi}$ is applied to the field $Y$. 
With the above expression, and defining the measure  $d\mu_{\Psi}(x) := \frac{1}{\Xi} e^{\pi P_0 x} (2\sin(\pi x))^{-\frac{\alpha_0}{2}} e^{\frac{\gamma}{2} h_{\Psi}(x)} dx$, the equation \eqref{eq:int-2g-step4} simplifies as
\begin{align}
     &\mathbb{E} \left[ e^{\widetilde{\mathcal{X}}(z,q)} \mathcal{Q}(q) \left(\int_{0}^{1} e^{\pi P_0 x} (2\sin(\pi x))^{-\frac{\alpha_0}{2}} e^{\frac{\gamma}{2}Y(x)} dx\right)^{\frac{(2-\alpha_0)}{\gamma^2}}\right] = \eqref{eq:int-2g-step4}\\
     & =  \exp\left(\frac{1}{2}\sum_{n=1}^{\infty} \Omega_n(q) \left(\frac{\alpha_0^2}{\gamma^2 {n}}+ \frac{16 q^{2n}}{\gamma^2(1-q^{2n})^2{n}}   + \frac{8 \alpha_0 q^n}{\gamma^2(1-q^{2n}) {n}} \cos(2\pi(z-\frac{\tau}{2})n)\right)\right)\\
 &\quad \times e^{\frac{\mathbb{E}[\Psi^2]}{2}} \Xi^{\frac{(2-\alpha_0)}{\gamma^2}} \exp\left({ \frac{(\alpha_0-2)}{2\gamma \Xi} \int_0^1 e^{\pi P_0 x} (2\sin(\pi x))^{-\frac{\alpha_0}{2}} h_{\Psi}(x) e^{\frac{\gamma}{2} h_{\Psi}(x)} dx}\right)\\
 & \quad \times\mathbb{E}\Big[\left(\int_{0}^{1} e^{\frac{\gamma}{2}Y(x)} d\mu_{\Psi}\right)^{\frac{2-\alpha_0}{\gamma^2}} e^{ \frac{(\alpha_0-2)}{2\gamma } \int_0^1  Y(x) d\mu_{\Psi}}\Big].\label{eq:int-2g-step5}
 \end{align}

 Let us recall that the expression \eqref{def:reg_CB} for $\chi = 2/\gamma$, with the scaling $P = P_0/\gamma$, $\alpha = \alpha_0/\gamma$, and the expression $l_{\chi} = (2-\alpha_0)/\gamma^2$, is given by 
 \begin{align}
   \widehat{\psi}_{2/\gamma, P_0}^{\alpha_0}(z,q) = & C(q) e^{2\pi P_0 z /\gamma^2} \theta_{1}(z)^{-(2-\alpha_0)/\gamma^2} C_{1}(q) \left(- \ii e^{- \ii \pi z} q^{1/6} \eta(q) \right)^{\frac{(2-\alpha_0)}{\gamma^2} } \nonumber \\
& \times e^{-\frac{2}{\gamma^2} \mathbb{E}[\mathcal{X}(z,q)^2]} \mathbb{E} \left[ e^{\widetilde{\mathcal{X}}(z,q)} \mathcal{Q}(q) \left(\int_{0}^{1} e^{\pi P_0 x} (2\sin(\pi x))^{-\frac{\alpha_0}{2}} e^{\frac{\gamma}{2}Y(x)} dx\right)^{\frac{(2-\alpha_0)}{\gamma^2}}\right],\label{eq:int-2g-step6}
 \end{align}
 where the factors
 \begin{align}
    C(q) &\mathop{=}^{\eqref{def:ell-eta}} e^{\frac{\ii \pi  (\alpha_0-2) \alpha_0}{2 \gamma^2}} (2 \pi )^{-\frac{(\alpha_0-2) (\alpha_0+4)}{6 \gamma^2}} \eta(q)^{-\frac{(\alpha_0-2) (\alpha_0+4)}{2 \gamma^2}} q^{-\frac{(\alpha_0-2)^2}{12 \gamma^2}},\label{asymp:Cq2g}\\
    C_1(q) &= e^{\frac{1}{8} \mathbb{E}\left[ F(0;q)^2\right] \left(-\frac{\alpha_0^2}{\gamma^2}+\alpha_0-2\right)} \eta(q)^{\frac{(\alpha_0-2) \alpha_0}{2 \gamma^2}} q^{\frac{(\alpha_0-2) \alpha_0}{12 \gamma^2}}.\label{asymp:C1q2g}
\end{align}

Substituting \eqref{eq:int-2g-step5}, in \eqref{eq:int-2g-step6} and using Proposition ~\ref{prop:semiclassical_limit} we obtain the following leading order behaviour of the deformed conformal block as $\gamma \to 0$:
 \begin{align}
 \widehat{\psi}_{2/\gamma, P_0}^{\alpha_0}(z,q)&\sim  C(q) e^{\frac{2\pi P_0 z} {\gamma^2}} \theta_{1}(z)^{-(2-\alpha_0)/\gamma^2} C_{1}(q) \left(- \ii e^{- \ii \pi z} q^{1/6} \eta(q) \right)^{\frac{(2-\alpha_0)}{\gamma^2} }e^{-\frac{2}{\gamma^2} \mathbb{E}[\mathcal{X}(z,q)^2]} \nonumber \\
 &\quad \times\exp\left(\frac{1}{2 \gamma^2}\sum_{n=1}^{\infty} \Omega_n(q) \left(\frac{\alpha_0^2}{ {n}}+ \frac{16 q^{2n}}{\gamma^2(1-q^{2n})^2{n}}   + \frac{8 \alpha_0 q^n}{\gamma^2(1-q^{2n}) {n}} \cos(2\pi(z-\frac{\tau}{2})n)\right)\right)\\
 &\quad \times e^{\frac{\mathbb{E}[\Psi^2]}{2}} \Xi^{\frac{(2-\alpha_0)}{\gamma^2}} \exp\left({ \frac{(\alpha_0-2)}{2\gamma \Xi} \int_0^1 e^{\pi P_0 x} (2\sin(\pi x))^{-\frac{\alpha_0}{2}} h_{\Psi}(x) e^{\frac{\gamma}{2} h_{\Psi}(x)} dx}\right)\\
&\quad \times \mathbb{E}\left[\exp\left(-\frac{(\alpha_0-2)}{8} \left(\int_0^1 :Y(x)^2: d\mu_{\Psi}(x)-\left(\int_0^1   :Y(x): d\mu_{\Psi}(x)\right)^2\right)\right)\right].
\label{eq:heavy_expansion}
\end{align}
Finally, let us note that the expectation term in \eqref{eq:heavy_expansion} converges as $\gamma \to 0$ by Proposition~\ref{prop:semiclassical_limit}. Specifically, since the measure $\mu_{\Psi}$ has density bounded below (for $z \in [M, 1-M]$), we have
\begin{align}\label{eq:semiclassical_limit_heavy}
&\lim_{\gamma \to 0} \mathbb{E}\left[\exp\left(-\frac{(\alpha_0-2)}{8} \int_0^1 :Y(x)^2: d\mu_{\Psi}(x) + \frac{\alpha_0}{8} \left(\int_0^1 :Y(x): d\mu_{\Psi}(x)\right)^2\right)\right] \nonumber\\
&= \mathbb{E}\left[\exp\left(-\frac{(\alpha_0-2)}{8} \int_0^1 :Y(x)^2: d\mu_0(x) + \frac{\alpha_0}{8} \left(\int_0^1 :Y(x): d\mu_0(x)\right)^2\right)\right],
\end{align}
where $\mu_0$ is the limiting measure as $\gamma \to 0$.

\medskip
\noindent\textbf{Step 3: Commutativity of Limits.}

\medskip
We now extract the $\gamma^2 \log$ of the expansion \eqref{eq:heavy_expansion} and take the limit $q \to 0$.

\medskip
\noindent\textit{Step 3a: Leading-order terms.} 

With the expressions \eqref{asymp:Cq2g}, \eqref{asymp:C1q2g}, $\gamma^2 \log$ of the prefactor in \eqref{eq:heavy_expansion} is:
\begin{align}\label{eq:log_prefactor}
&\gamma^2 \log\left[C(q) e^{\frac{2P_0 z \pi}{\gamma^2}} \theta_1(z)^{-\frac{2-\alpha_0}{\gamma^2}} C_1(q) \left(-i e^{-i\pi z} q^{1/6} \eta(q)\right)^{\frac{2-\alpha_0}{\gamma^2}}e^{- \frac{2}{\gamma^2} \mathbb{E}[\mathcal{X}(z,q)^2]}\right] \nonumber\\
&= 2P_0 z \pi - (2-\alpha_0) \log(e^{\ii \pi z}\theta_1(z)) + 3(2-\alpha_0)\log \eta(q)- \frac{\alpha_0^2}{8} \mathbb{E}\left[F(0;q)^2 \right]\\
&- 2\mathbb{E}[\mathcal{X}(z,q)^2]-\frac{(\alpha_0-2) (\alpha_0+4)}{6} \log (2\pi) +\frac{\ii \pi  (\alpha_0-2) (\alpha_0+1)}{2}  + O(\gamma^2).
\end{align}
Furthermore, from \eqref{eq:Ft0_in_T} and \eqref{eq:tilX-in-T} we compute 
\begin{align}
    \mathbb{E}\left[F(0;q)^2 \right] = 4 \sum_{n,m\geq 1} \frac{q^{2nm}}{n}, && \mathbb{E}\left[\mathcal{X}(z,q)^2 \right] = 2 \sum_{n=1}^{\infty} \frac{q^{2n}}{n(1-q^{2n})^2}.\label{exp:Ft0X}
\end{align}
Therefore, using the above expression and the expansion in \eqref{eq:heavy_expansion}, we obtain the following leading order behaviour
\begin{align}
&\gamma^2 \log \widehat{\psi}_{2/\gamma, P_0}^{\alpha_0}(z,q) \\
    & \sim  2P_0 z \pi - (2-\alpha_0) \log(e^{\ii \pi z}\theta_1(z)) + 3(2-\alpha_0)\log \eta(q)- \frac{\alpha_0^2}{2} \sum_{n,m\geq 1} \frac{q^{2nm}}{n}\\
    & \quad- 4\sum_{n=1}^{\infty} \frac{q^{2n}}{n(1-q^{2n})^2} -\frac{(\alpha_0-2) (\alpha_0+4)}{6} \log (2\pi) +\frac{\ii \pi  (\alpha_0-2) (\alpha_0+1)}{2}\\
    & \quad + \frac{1}{2}\sum_{n=1}^{\infty} \Omega_n(q) \left(\frac{\alpha_0^2}{ {n}}+ \frac{16 q^{2n}}{(1-q^{2n})^2{n}}   + \frac{8 \alpha_0 q^n}{(1-q^{2n}) {n}} \cos(2\pi(z-\frac{\tau}{2})n)\right)\\
    &\quad + \frac{\gamma^2}{2} \mathbb{E}[\Psi^2]+ (2-\alpha_0) \log\Xi+  \left({ \frac{(\alpha_0-2)\gamma}{2 \Xi} \int_0^1 e^{\pi P_0 x} (2\sin(\pi x))^{-\frac{\alpha_0}{2}} h_{\Psi}(x) e^{\frac{\gamma}{2} h_{\Psi}(x)} dx}\right)\\
&\quad + \gamma^2 \log  \mathbb{E}\left[\exp\left(-\frac{(\alpha_0-2)}{8} \left(\int_0^1 :Y(x)^2: d\mu_{0}(x)-\left(\int_0^1   :Y(x): d\mu_{0}(x)\right)^2\right)\right)\right].\label{eq:asymp2g-step1}
\end{align}
In writing the above expression, we use the fact that $h_{\Psi}$ scales as $1/\gamma$ as can be seen from \eqref{def:hPsi}, and $\mathbb{E}[\Psi^2]$ scales as $1/\gamma^2$.

\medskip
\noindent\textit{Step 3c: Limit as $q \to 0$.}
Let us now study the behaviour as $q\to 0$ of each of the individual $q$-dependent terms in \eqref{eq:asymp2g-step1}.
\begin{itemize}[leftmargin=*]
\item From \eqref{def:elltheta1}, \eqref{def:ell-eta}, we observe that 
$\lim_{q\to 0} \theta_1(z) \eta(q)^{-3} =  2\sin (\pi z)$. Therefore, as $q\to 0$, 
\begin{align}
    - (2-\alpha_0) \log(e^{\ii \pi z}\theta_1(z)) + 3(2-\alpha_0)\log \eta(q) \sim -(2-\alpha_0) \log(2e^{\ii \pi z} \sin \pi z). \label{asymp2g:term1}
\end{align}
\item From \eqref{defs:Girsanov}, we observe that as $q\to 0$, $ F(0;q) \to 0,$ and $\mathcal{X}(z,q)\to 0$, and consequently
\begin{align}
   \mathbb{E}\left[F(0;q)^2 \right]\to 0, && \mathbb{E}\left[\mathcal{X}(z,q)^2 \right] \to 0.\label{asymp2g:term2}
\end{align}
\item The term $\sum_{m=1}^{\infty} q^{2nm} \to 0$ for all $n \geq 1$, implying that, $\Omega_n \to 0$ as $q\to 0$, as can be seen from \eqref{def:OmegaAleph}. Moreover, since the variance $\mathbb{E}[\widehat{a}_n^2] = \mathbb{E}[\widehat{b}_n^2] = \Omega_n(q)/(1-2 \Omega_n(q))$, the expressions $\mathbb{E}[\Psi^2]$, $h_{\Psi}$ are bounded as $q\to 0$.
\end{itemize}

Using the above expressions and noting that the final term in \eqref{eq:asymp2g-step1} is $q$-independent, we obtain:
\begin{align}\label{eq:combined_limit_heavy}
&\lim_{q \to 0} \lim_{\gamma \to 0} \gamma^2 \log \widehat{\psi}_{2/\gamma, P_0/\gamma}^{\alpha_0/\gamma}(z, q) = 2P_0 z \pi - (2-\alpha_0) \log|2\sin(\pi z)|  + C_2(z, \alpha_0, P_0),
\end{align}
where
\begin{align}\label{eq:C2_heavy}
C_2(z, \alpha_0, P_0) := \lim_{q\to 0}\lim_{\gamma \to 0} \gamma^2 \log \left( e^{\frac{1}{2}\mathbb{E}[\Psi^2]} \Xi^{-\frac{(\alpha_0-2)}{\gamma^2}} \exp\left(\frac{(\alpha_0-2)}{2\gamma \Xi} \int_0^1 e^{\pi P_0 x} (2\sin(\pi x))^{-\frac{\alpha_0}{2}} h_{\Psi}(x) e^{\frac{\gamma}{2} h_{\Psi}(x)} dx\right) \right)
\end{align}
is a transcendental constant, determined via $\gamma\to 0,q\to 0$ limit of $\Xi$ and $h_{\Psi}$ which are solutions to some fixed point equations.
Each term in this expression is a covariance of Gaussian random variables, which depends analytically on $q$ for $|q| < 1$. Therefore, the limit as $q \to 0$ exists and defines an analytic function of $z$.

\medskip
\noindent\textit{Step 3d: Verification of commutativity.}
To verify that the limits commute, we compute the limit in the opposite order. Taking $q \to 0$ first:
\begin{itemize}[leftmargin=*]
\item The prefactor simplifies: $\theta_1(z) \to 2\sin(\pi z)$, $q^{1/6}\eta(q) \to 1$.
\item All the other terms in \eqref{eq:asymp2g-step1} are uniformly bounded as $q\to 0$, hence leads to the final expression which corresponds to $q = 0$ case.
\item Proposition~\ref{prop:semiclassical_limit} imples that 
\begin{align}
    \lim_{\gamma\to 0}\gamma^2 \log  \mathbb{E}\left[\exp\left(-\frac{(\alpha_0-2)}{8} \left(\int_0^1 :Y(x)^2: d\mu_{0}(x)-\left(\int_0^1   :Y(x): d\mu_{0}(x)\right)^2\right)\right)\right]=0.\\
    \label{comp-exp-vanishes}
\end{align}

\end{itemize}
Then taking $\gamma \to 0$ gives the same limiting expression \eqref{eq:combined_limit_heavy} evaluated at $q = 0$. The equality of the two limits follows because the right-hand side of \eqref{eq:combined_limit_heavy} is analytic in $q$ for $|q| < r_0$ for some $r_0 > 0$ and for all $z \in (M, 1-M)$. This analyticity, combined with the uniform convergence established via the bounds in Proposition~\ref{prop:sc_HJ}, implies that the limits $\gamma \to 0$ and $q \to 0$ are commutative.

\medskip
\noindent\textbf{Step 4: Tightness in $C^2$.}

\medskip
To establish tightness of the sequence $\{\gamma^2 \log \psi^{\alpha_0/\gamma}_{2/\gamma, P_0/\gamma}(z,q)\}_{\gamma > 0}$ in $C^2([M, 1-M] \times [0, q_0])$, we verify the conditions of the Arzelà-Ascoli theorem.

\medskip
\noindent\textit{Step 4a: Uniform boundedness.}
The uniform boundedness follows from Proposition~\ref{prop:sc_HJ}:
\begin{align}\label{eq:uniform_bound_tight}
\sup_{\gamma \in (0, 1]} \sup_{(z,q) \in [M, 1-M] \times [0, q_0]} \left|\gamma^2 \log \psi^{\alpha_0/\gamma}_{2/\gamma, P_0/\gamma}(z,q)\right| < \infty.
\end{align}

\medskip
\noindent\textit{Step 4b: Equicontinuity of derivatives.}
The derivatives with respect to $z$ and $q$ can be computed from the expansion \eqref{eq:heavy_expansion}. The key observations are:

\begin{enumerate}[leftmargin=*]
\item The prefactor terms contribute derivatives that are $O(\gamma^{-2})$, but after multiplication by $\gamma^2$, these give $O(1)$ contributions.

\item By the Cameron-Martin representation, the derivatives of $h_{\Psi}(x)$ and $\Xi$ with respect to $z$ and $q$ involve covariances of Gaussian random variables, which are uniformly bounded and differentiable w.r.t. $z$ and $q$.

\item The constant pre-factors in the right hand side of \eqref{eq:heavy_expansion}  are analytic in $z$ and $q$, with uniformly bounded derivatives after taking logarithm of them and multiplying by $\gamma^2$.

\item As $\gamma\to 0$, the contribution of the term $$\mathbb{E}\left[\exp\left(-\frac{(\alpha_0-2)}{8} \int_0^1 :Y(x)^2: d\mu_\Psi(x) + \frac{\alpha_0-2}{8} \left(\int_0^1 :Y(x): d\mu_\Psi(x)\right)^2\right)\right]$$ vanishes after taking logarithm and multiplying by $\gamma^2$. So it does not contribute to the $z$ or $q$ derivatives.
\end{enumerate}

For the first $z$-derivative, differentiating under the expectation (justified by dominated convergence using the uniform bounds from Proposition~\ref{prop:sc_HJ}).
Similarly, the second derivatives $\partial_z^2$, $\partial_q$, $\partial_z \partial_q$, and $\partial_q^2$ are uniformly bounded by iterating this argument. By the Arzelà-Ascoli theorem, uniform boundedness together with equicontinuity of derivatives up to order 2 implies tightness in $C^2([M, 1-M] \times [0, q_0])$. This completes the proof of Proposition~\ref{prop:limit_commutativity}.
\end{proof}

\begin{remark}\label{rem:analyticity_domain}
The expression \eqref{eq:combined_limit_heavy} is an analytic function of $q$ in some open ball $|q| < r_0$ for all $z \in (0,1)$. This shows that the convergence established above is uniform on the ball $|q| < r_0 - \epsilon$ for any $\epsilon > 0$. The limits $\gamma \to 0$ and $q \to 0$ are therefore commutative, as claimed.
\end{remark}

\begin{lemma}\label{prop:derivatives_are_bounded}
{Let $\psi^{\alpha_0/\gamma}_{\chi, P_0/\gamma}(z,q)$ be the analytic extension of the deformed conformal block for $z \in \mathbb{C} \cap \{u: 0 \leq \mathrm{Im}(u) < \frac{3}{4}\mathrm{Im}(\tau)\} \cap \{\mathrm{Re}(z) \in [M,1-M]\}$ for any $0 < M < \frac{1}{2}$ and $|q| < r$ for some $0 < r \ll 1$.} The semi-classical limits of the following $z$ and $\tau$ derivatives of the conformal block are bounded:
{
\begin{align}
\limsup_{\gamma \to 0} \gamma^2 \big|\partial_{z}\log(\psi^{\alpha_0/\gamma}_{\chi, P_0/\gamma}(z,q))\big| &< \infty,
&\quad
\liminf_{\gamma \to 0} \gamma^2 \big|\partial_{z}\log(\psi^{\alpha_0/\gamma}_{\chi, P_0/\gamma}(z,q))\big| &> -\infty
\label{eq:z-deri-bd}\\
\limsup_{\gamma \to 0} \gamma^2 \big|\partial_{\tau}\log(\psi^{\alpha_0/\gamma}_{\chi, P_0/\gamma}(z,q))\big| &< \infty,
&\quad
\liminf_{\gamma \to 0} \gamma^2 \big|\partial_{\tau}\log(\psi^{\alpha_0/\gamma}_{\chi, P_0/\gamma}(z,q))\big| &> -\infty
\label{eq:tau-deri-bd}
\end{align}
}
where $\log(\cdot)$ is defined with respect to the principal branch.
\end{lemma}

\begin{proof}
We divide the proof into three steps. In Step 1, we compute the $z$-derivative using the Girsanov representation from Proposition~\ref{prop:limit_commutativity}. In Step 2, we compute the $\tau$-derivative. In Step 3, we establish the uniform bounds using the asymptotic expansion techniques.

\medskip
\noindent\textbf{Step 1: Computation of the $z$-derivative.}
Using the form of the conformal block \eqref{psihat-def} and \eqref{def:reg_CB}, we have for $\chi = 2/\gamma$:
\begin{align}\label{eq:psi_decomposition}
\psi^{\alpha_0/\gamma}_{2/\gamma, P_0/\gamma}(z,q) &= q^{\left(\frac{P_0^2}{2\gamma^2} + \frac{\gamma^2}{24}\frac{(2-\alpha_0)}{\gamma^2}(\frac{2-\alpha_0}{\gamma^2}+1)\right)} \widehat{\psi}^{\alpha_0/\gamma}_{2/\gamma, P_0/\gamma}(z,q),
\end{align}
where $\widehat{\psi}^{\alpha_0/\gamma}_{2/\gamma, P_0/\gamma}(z,q)$ is given by \eqref{def:reg_CB}. Taking the logarithmic $z$-derivative:
\begin{align}\label{eq:log_z_derivative}
\partial_z \log \psi^{\alpha_0/\gamma}_{2/\gamma, P_0/\gamma}(z,q) = \partial_z \log \widehat{\psi}^{\alpha_0/\gamma}_{2/\gamma, P_0/\gamma}(z,q),
\end{align}
since the $q$-power prefactor in \eqref{eq:psi_decomposition} is independent of $z$.

From the expression \eqref{def:reg_CB}, we identify the $z$-dependent terms:
\begin{align}\label{eq:z_dependent_terms}
\widehat{\psi}^{\alpha_0/\gamma}_{2/\gamma, P_0/\gamma}(z,q) &= \underbrace{e^{\frac{2P_0 z \pi}{\gamma^2}} \theta_1(z)^{-\frac{2-\alpha_0}{\gamma^2}} \left(-i e^{-i\pi z}\right)^{\frac{2-\alpha_0}{\gamma^2}}}_{\text{explicit } z\text{-dependence}} \nonumber\\
&\quad \times \underbrace{e^{-\frac{4}{\gamma^2}\mathbb{E}[\mathcal{X}(z,q)^2]}}_{\text{variance term}} \times \underbrace{\mathbb{E}\left[e^{\frac{1}{\gamma}\widetilde{\mathcal{X}}(z,q)} \mathcal{Q}(q) \mathcal{I}_{\gamma}^{\frac{2-\alpha_0}{\gamma^2}}\right]}_{\text{expectation term}} \times (\text{$z$-independent factors})
\end{align}
where $\mathcal{I}_{\gamma} = \int_{0}^{1} e^{\pi \gamma P x} (2\sin(\pi x))^{-\frac{\alpha\gamma}{2}} e^{\frac{\gamma}{2}Y(x)} dx$. 

\medskip
\noindent\textit{Step 1a: Derivative of the explicit prefactor.}
The logarithmic derivative of the explicit $z$-dependent prefactor is:
\begin{align}\label{eq:prefactor_z_deriv}
\partial_z \log\left[e^{\frac{2P_0 z \pi}{\gamma^2}} \theta_1(z)^{-\frac{2-\alpha_0}{\gamma^2}} \left(-i e^{-i\pi z}\right)^{\frac{2-\alpha_0}{\gamma^2}}\right] &= \frac{2P_0 \pi}{\gamma^2} - \frac{2-\alpha_0}{\gamma^2} \frac{\theta_1'(z)}{\theta_1(z)} - \frac{(2-\alpha_0)i\pi}{\gamma^2}.
\end{align}
Multiplying by $\gamma^2$:
\begin{align}\label{eq:prefactor_z_deriv_scaled}
\gamma^2 \cdot \partial_z \log(\text{prefactor}) = 2P_0 \pi - (2-\alpha_0) \frac{\theta_1'(z)}{\theta_1(z)} - (2-\alpha_0)i\pi.
\end{align}
For $z \in [M, 1-M]$ with $0 < M < 1/2$, the function $\frac{\theta_1'(z)}{\theta_1(z)}$ is bounded since $\theta_1(z) \neq 0$ in this region. Therefore, the contribution from \eqref{eq:prefactor_z_deriv_scaled} is uniformly bounded.

\medskip
\noindent\textit{Step 1b: Derivative of the variance term.}
For the variance term $e^{-\frac{4}{\gamma^2}\mathbb{E}[\mathcal{X}(z,q)^2]}$, we have
\begin{align}\label{eq:variance_z_deriv}
\partial_z \log\left[e^{-\frac{4}{\gamma^2}\mathbb{E}[\mathcal{X}(z,q)^2]}\right] = -\frac{4}{\gamma^2} \partial_z \mathbb{E}[\mathcal{X}(z,q)^2].
\end{align}
From the definition \eqref{defs:Girsanov}, we know
$\mathbb{E}[\mathcal{X}(z,q)^2] = 2\sum_{n=1}^{\infty} \frac{q^{2n}}{n(1-q^{2n})^2}.$ Since the expectation if $z$ independent, thus we get  
\begin{align}\label{eq:variance_z_deriv_scaled}
\gamma^2 \cdot \partial_z \log(\text{variance term}) = -4 \partial_z \mathbb{E}[\mathcal{X}(z,q)^2] = 0.
\end{align}

\medskip
\noindent\textit{Step 1c: Derivative of the expectation term.} For the expectation term, we use the representation from Step 2 of Proposition~\ref{prop:limit_commutativity}. The logarithmic derivative is:
\begin{align}\label{eq:expectation_z_deriv}
\partial_z \log \mathbb{E}\left[e^{\frac{1}{\gamma}\widetilde{\mathcal{X}}(z,q)} \mathcal{Q}(q) \mathcal{I}_{\gamma}^{\frac{2-\alpha_0}{\gamma^2}}\right] = \frac{\mathbb{E}\left[\frac{1}{\gamma}\partial_z\widetilde{\mathcal{X}}(z,q) \cdot e^{\frac{1}{\gamma}\widetilde{\mathcal{X}}(z,q)} \mathcal{Q}(q) \mathcal{I}_{\gamma}^{\frac{2-\alpha_0}{\gamma^2}}\right]}{\mathbb{E}\left[e^{\frac{1}{\gamma}\widetilde{\mathcal{X}}(z,q)} \mathcal{Q}(q) \mathcal{I}_{\gamma}^{\frac{2-\alpha_0}{\gamma^2}}\right]}.
\end{align}
From the definition \eqref{def:tilX}:
\begin{align}\label{eq:tilX_z_deriv}
\partial_z \widetilde{\mathcal{X}}(z,q) = 2\partial_z(\mathcal{X}(z,q) + \mathcal{Y}(z,q)),
\end{align}
since $F(0;q)$ is independent of $z$. The derivatives $\partial_z \mathcal{X}(z,q)$ and $\partial_z \mathcal{Y}(z,q)$ are Gaussian random variables with variances:
\begin{align}\label{eq:X_deriv_variance}
\mathbb{E}[(\partial_z \mathcal{X}(z,q))^2] &= 8\pi^2 \sum_{n=1}^{\infty} \frac{n q^{2n}}{(1-q^{2n})^2} < \infty \quad \text{for } |q| < r,
\end{align}
and similarly for $\partial_z \mathcal{Y}(z,q)$. Repeating step 1-2 of the proof of Proposition~\ref{prop:limit_commutativity} upto \eqref{eq:int-2g-step5}, we obtain 
\begin{align*}
\mathbb{E}&\left[\partial_z\widetilde{\mathcal{X}}(z,q) \cdot e^{\frac{1}{\gamma}\widetilde{\mathcal{X}}(z,q)} \mathcal{Q}(q) \mathcal{I}_{\gamma}^{\frac{2-\alpha_0}{\gamma^2}}\right]  \\
     & =  \frac{1}{\gamma}\mathfrak{G}(z;q,\alpha_0)\exp\left(\frac{1}{2}\sum_{n=1}^{\infty} \Omega_n(q) \left(\frac{\alpha_0^2}{\gamma^2 {n}}+ \frac{16 q^{2n}}{\gamma^2(1-q^{2n})^2{n}}   + \frac{8 \alpha_0 q^n}{\gamma^2(1-q^{2n}) {n}} \cos(2\pi(z-\frac{\tau}{2})n)\right)\right)\\
 &\quad \times e^{\frac{\mathbb{E}[\Psi^2]}{2}} \Xi^{\frac{(2-\alpha_0)}{\gamma^2}} \exp\left({ \frac{(\alpha_0-2)}{2\gamma \Xi} \int_0^1 e^{\pi P_0 x} (2\sin(\pi x))^{-\frac{\alpha_0}{2}} h_{\Psi}(x) e^{\frac{\gamma}{2} h_{\Psi}(x)} dx}\right)\\
 & \quad \times\mathbb{E}\Big[\mathcal{Z}(z;q,\alpha_0)\left(\int_{0}^{1} e^{\frac{\gamma}{2}Y(x)} d\mu_{\Psi}\right)^{\frac{2-\alpha_0}{\gamma^2}} e^{ \frac{(\alpha_0-2)}{2\gamma } \int_0^1  Y(x) d\mu_{\Psi}}\Big].
\end{align*}
where $\mathfrak{G}(z;q,\alpha_0)$ is a deterministic analytic function of $z$, and $\mathcal{Z}(z;q,\alpha_0)$ is a Gaussian random variable with analytic dependence on $z$, but independent of $\gamma$. Recall the expansion of the denominator of \eqref{eq:expectation_z_deriv} from \eqref{eq:int-2g-step5} and notice that the most of $\gamma$ dependent terms in the expansion of the numerator and the denominator of \eqref{eq:expectation_z_deriv} cancels out except the leading prefactor of $\frac{1}{\gamma}$ and the expectations in the last line of the above display.  However that term (similar to the last term in the expansion of \eqref{eq:int-2g-step5}) is uniformly bounded as $\gamma\to 0$ by Proposition~\ref{prop:semiclassical_limit}. This shows 
\begin{align}\label{eq:expectation_z_deriv_scaled}
\gamma^2 \cdot \left|\partial_z \log(\text{expectation term})\right|  = O(1) .
\end{align}

\medskip
\noindent\textit{Step 1d: Conclusion for $z$-derivative.}

\medskip
Combining \eqref{eq:prefactor_z_deriv_scaled}, \eqref{eq:variance_z_deriv_scaled}, and \eqref{eq:expectation_z_deriv_scaled}:
\begin{align}\label{eq:z_deriv_total}
\gamma^2 \partial_z \log \psi^{\alpha_0/\gamma}_{2/\gamma, P_0/\gamma}(z,q) = 2P_0 \pi - (2-\alpha_0) \frac{\theta_1'(z)}{\theta_1(z)} - (2-\alpha_0)i\pi - 4\partial_z \mathbb{E}[\mathcal{X}(z,q)^2] + O(1).
\end{align}

This expression is uniformly bounded for $z \in [M, 1-M]$ and $|q| < r$, establishing \eqref{eq:z-deri-bd}.

\medskip
\noindent\textbf{Step 2: Computation of the $\tau$-derivative.}
Since $q = e^{i\pi\tau}$, we have $\partial_{\tau} = i\pi q \partial_q$. Taking the logarithmic $\tau$-derivative of \eqref{eq:psi_decomposition}:
\begin{align}\label{eq:log_tau_derivative}
\partial_{\tau} \log \psi^{\alpha_0/\gamma}_{2/\gamma, P_0/\gamma}(z,q) &= \partial_{\tau} \log q^{\left(\frac{P_0^2}{2\gamma^2} + \frac{(2-\alpha_0)}{24}(\frac{2-\alpha_0}{\gamma^2}+1)\right)} + \partial_{\tau} \log \widehat{\psi}^{\alpha_0/\gamma}_{2/\gamma, P_0/\gamma}(z,q).
\end{align}

\medskip
\noindent\textit{Step 2a: Derivative of the $q$-power prefactor.}
\begin{align}\label{eq:q_prefactor_tau_deriv}
\partial_{\tau} \log q^{\left(\frac{P_0^2}{2\gamma^2} + \frac{(1-\alpha_0)}{24}(\frac{2-\alpha_0}{\gamma^2}+1)\right)} &= i\pi \left(\frac{P_0^2}{2\gamma^2} + \frac{(2-\alpha_0)}{24}\left(\frac{2-\alpha_0}{\gamma^2}+1\right)\right).
\end{align}
Multiplying by $\gamma^2$:
\begin{align}\label{eq:q_prefactor_tau_deriv_scaled}
\gamma^2 \cdot \partial_{\tau} \log(\text{$q$-prefactor}) = i\pi \left(\frac{P_0^2}{2} + \frac{(2-\alpha_0)^2}{24} + O(\gamma^2)\right) = O(1).
\end{align}

\medskip
\noindent\textit{Step 2b: Derivative of $\widehat{\psi}$.}
From \eqref{def:reg_CB}, the $\tau$-dependent terms in $\widehat{\psi}$ include: $(1)$ the functions $C(q)$, $C_1(q)$, $\eta(q)$, which depend on $q = e^{i\pi\tau}$, $(2)$ the functions $\mathcal{X}(z,q)$, $\mathcal{Y}(z,q)$, $F(0;q)$, $\mathcal{Q}(q)$, $(3)$ the factors $\theta_1(z)$ and $\theta_1'(0)$, which depend on $\tau$.
For the explicit prefactor terms:
\begin{align}\label{eq:prefactor_tau_deriv}
\partial_{\tau} \log\left[C(q) C_1(q) (q^{1/6}\eta(q))^{\frac{2-\alpha_0}{\gamma^2}}\right] &= 3(2-\alpha_0)\partial_{\tau} \log \eta(q)- \frac{\alpha_0^2}{8}\partial\tau \mathbb{E}\left[F(0;q)^2 \right]+ O(1)\nonumber\\
& = \frac{3(2-\alpha_0)}{2\pi \ii}\eta_1(q)- \frac{\alpha_0^2}{8}\partial\tau \mathbb{E}\left[F(0;q)^2 \right]+ O(1).
\end{align}
where $\eta_1(\tau) = -2\pi \ii \partial_{\tau}\log\eta(\tau)$.
For the $\theta_1(z)$ term:
\begin{align}\label{eq:theta_tau_deriv}
\partial_{\tau} \log \theta_1(z)^{-\frac{2-\alpha_0}{\gamma^2}} = -\frac{(2-\alpha_0)}{\gamma^2} \frac{\partial_{\tau} \theta_1(z)}{\theta_1(z)}.
\end{align}
The heat equation for theta functions gives $\partial_{\tau} \theta_1(z) = \frac{1}{4\pi i} \partial_z^2 \theta_1(z)$, so:
\begin{align}\label{eq:theta_tau_deriv_scaled}
\gamma^2 \cdot \partial_{\tau} \log \theta_1(z)^{-\frac{2-\alpha_0}{\gamma^2}} = -\frac{(2-\alpha_0)}{4\pi i} \frac{\partial_z^2 \theta_1(z)}{\theta_1(z)} = O(1),
\end{align}
which is bounded for $z \in [M, 1-M]$.

\medskip
\noindent\textit{Step 2c: Derivative of the variance and expectation terms.}
For the variance term:
\begin{align}\label{eq:variance_tau_deriv}
\partial_{\tau} \log e^{-\frac{4}{\gamma^2}\mathbb{E}[\mathcal{X}(z,q)^2]} = -\frac{4}{\gamma^2} \partial_{\tau} \mathbb{E}[\mathcal{X}(z,q)^2].
\end{align}
Computing $\partial_{\tau} \mathbb{E}[\mathcal{X}(z,q)^2] = i\pi q \partial_q \mathbb{E}[\mathcal{X}(z,q)^2]$, we get 
\begin{align}\label{eq:X_var_tau_deriv}
\partial_{\tau} \mathbb{E}[\mathcal{X}(z,q)^2] = 4i\pi \sum_{n=1}^{\infty} \frac{ q^{2n}(1+q^{2n})}{(1-q^{2n})^3} = O(1) \quad \text{for } |q| < r.
\end{align}
For the expectation term, by a similar argument to Step 1c:
\begin{align}\label{eq:expectation_tau_deriv}
\partial_{\tau} \log \mathbb{E}\left[e^{\frac{1}{\gamma}\widetilde{\mathcal{X}}(z,q)} \mathcal{Q}(q) \mathcal{I}_{\gamma}^{\frac{2-\alpha_0}{\gamma^2}}\right] = \frac{\mathbb{E}\left[\left(\frac{1}{\gamma}\partial_{\tau}\widetilde{\mathcal{X}} + \frac{\partial_{\tau}\mathcal{Q}}{\mathcal{Q}}\right) e^{\frac{1}{\gamma}\widetilde{\mathcal{X}}} \mathcal{Q} \mathcal{I}_{\gamma}^{\frac{2-\alpha_0}{\gamma^2}}\right]}{\mathbb{E}\left[e^{\frac{1}{\gamma}\widetilde{\mathcal{X}}} \mathcal{Q} \mathcal{I}_{\gamma}^{\frac{2-\alpha_0}{\gamma^2}}\right]}.
\end{align}
The $\tau$-derivatives of $\widetilde{\mathcal{X}}(z,q)$ and $\mathcal{Q}(q)$ involve derivatives of the Gaussian coefficients with respect to $q$, which are uniformly bounded in $|q|<r$ for any $r\in (0,1)$. Applying Hölder's inequality to bound the numerator by $\mathbb{E}[\big(\frac{1}{\gamma}\partial_{\tau}\widetilde{\mathcal{X}} + \frac{\partial_{\tau}\mathcal{Q}}{\mathcal{Q}}\big)^{(1-\epsilon)/\epsilon}]$
\begin{align}\label{eq:expectation_tau_deriv_scaled}
\gamma^2 \cdot \left|\partial_{\tau} \log(\text{expectation term})\right| = O(\gamma) = o(1).
\end{align}

\medskip
\noindent\textit{Step 2d: Conclusion for $\tau$-derivative.}
Combining all contributions:
\begin{align}\label{eq:tau_deriv_total}
\gamma^2 \partial_{\tau} \log \psi^{\alpha_0/\gamma}_{2/\gamma, P_0/\gamma}(z,q) 
&= i\pi \frac{P_0^2}{2}+ \frac{3(2-\alpha_0)}{2\pi \ii}\eta_1(q)- \frac{\alpha_0^2}{8}\partial\tau \mathbb{E}\left[F(0;q)^2 \right] \nonumber\\
&\quad - \frac{2-\alpha_0}{4\pi i} \frac{ \theta_1''(z)}{\theta_1(z)} - 4\partial_{\tau} \mathbb{E}[\mathcal{X}(z,q)^2] + o(1).
\end{align}
This expression is uniformly bounded for $z \in [M, 1-M]$ and $|q| < r$, establishing \eqref{eq:tau-deri-bd}.

\medskip
\noindent\textbf{Step 3: Uniform Bounds.}
The bounds \eqref{eq:z-deri-bd} and \eqref{eq:tau-deri-bd} follow from the explicit expressions \eqref{eq:z_deriv_total} and \eqref{eq:tau_deriv_total}. The key observations are:

\begin{enumerate}[leftmargin=*]
\item All terms involving $\theta_1(z)$, $\theta_1'(z)$, $\theta_1''(z)$ are bounded for $z \in [M, 1-M]$ since $\theta_1(z) \neq 0$ in this region.

\item The Eisenstein series $E_2(\tau)$ is bounded for $q = e^{i\pi\tau} \in (0, r)$ with $r < 1$.

\item The sums $\mathbb{E}[\mathcal{X}(z,q)^2]$, $\partial_z \mathbb{E}[\mathcal{X}(z,q)^2]$, and $\partial_{\tau} \mathbb{E}[\mathcal{X}(z,q)^2]$ are convergent and uniformly bounded for $|q| < r$.

\item The expectation terms contribute $o(1)$ as $\gamma \to 0$.
\end{enumerate}
For the lower bounds (the $\liminf$ statements), we note that the expressions \eqref{eq:z_deriv_total} and \eqref{eq:tau_deriv_total} converge to finite limits as $\gamma \to 0$. Since all terms are bounded and the limits exist, the $\liminf$ must be finite. This completes the proof of Lemma~\ref{prop:derivatives_are_bounded}.
\end{proof}

\subsubsection{Proof of Theorem \ref{prop:sem_HJ}}\label{subsubsec:deformed_1}

This proof will be carried out in 4 steps. We begin by writing down the explicit form of the semi-classical conformal block which can be obtained using the asymptotic expansion derived in Proposition~\ref{prop:limit_commutativity} in Step 1. In Step 2 we show the convergence of the semi-classical conformal block by using Proposition ~\ref{prop:limit_commutativity}. In Step 3 we show that the BPZ equation becomes the Hamilton-Jacobi equation of the NAECM model in the $\gamma\to 0$ limit. Using the properties of the solution of Hamilton-Jacobi equation, we prove the uniqueness of the semi-classical conformal block in Step 4.

\medskip
\noindent\textbf{Step 1: Explicit representation of the semi-classical conformal block.}
We begin with the representation \eqref{psihat-def} for $\chi = 2/\gamma$:
\begin{align}\label{eq:psi_psihat_relation}
\psi^{\alpha_0/\gamma}_{2/\gamma, P_0/\gamma}(z,q) = q^{\left(\frac{P_0^2}{2\gamma^2} + \frac{(2-\alpha_0)}{24}(1+(2-\alpha_0)/\gamma^2)\right)} \widehat{\psi}^{\alpha_0/\gamma}_{2/\gamma, P_0/\gamma}(z,q).
\end{align}
With the expression for $\gamma^2 \log \widehat{\psi}^{\alpha_0/\gamma}_{2/\gamma, P_0/\gamma}(z,q)$ in \eqref{eq:asymp2g-step1} and the above equation we have 
\begin{align}
    &\lim_{\gamma\to 0}\gamma^2 \log \psi^{\alpha_0/\gamma}_{2/\gamma, P_0/\gamma}(z,q) \\
    & = \left(\frac{P_0^2}{2} + \frac{(2-\alpha_0)^2}{24} \right)\log q+ 2P_0 z \pi - (2-\alpha_0) \log(e^{\ii \pi z}\theta_1(z)) + 3(2-\alpha_0)\log \eta(q)\\
    & \quad- \frac{\alpha_0^2}{8} \mathbb{E}\left[F(0;q)^2 \right]- 2\mathbb{E}[\mathcal{X}(z,q)^2] -\frac{(\alpha_0-2) (\alpha_0+4)}{6} \log (2\pi) +\frac{\ii \pi  (\alpha_0-2) (\alpha_0+1)}{2}\\
    & \quad + \frac{1}{2}\sum_{n=1}^{\infty} \Omega_n(q) \left(\frac{\alpha_0^2}{ {n}}+ \frac{16 q^{2n}}{(1-q^{2n})^2{n}}   + \frac{8 \alpha_0 q^n}{(1-q^{2n}) {n}} \cos(2\pi(z-\frac{\tau}{2})n)\right)+(2-\alpha_0) \lim_{\gamma\to 0}\log\Xi\\
    & \quad + \lim_{\gamma\to 0}\frac{\gamma^2}{2} \mathbb{E}[\Psi^2]  + \lim_{\gamma\to0}\left({ \frac{(\alpha_0-2)\gamma}{2 \Xi} \int_0^1 e^{\pi P_0 x} (2\sin(\pi x))^{-\frac{\alpha_0}{2}} h_{\Psi}(x) e^{\frac{\gamma}{2} h_{\Psi}(x)} dx}\right).\label{eq:gamma2_log_expansion}
\end{align}
In the above equation we use the expression \eqref{comp-exp-vanishes}. 

\medskip
\noindent\textbf{Step 2: Tightness and subsequential limits.}

\medskip
Proposition~\ref{prop:limit_commutativity} implies that $\gamma^2 \log \psi^{\alpha_0/\gamma}_{2/\gamma, P_0/\gamma}(z,q)$ is a tight sequence of functions in $C^{2}([M, 1-M] \times [0, q_0])$ for any $0 < M < \frac{1}{2}$ and $q_0 \in (0,1)$. By the Arzelà-Ascoli theorem, any subsequence of $\gamma^2 \log \psi^{\alpha_0/\gamma}_{2/\gamma, P_0/\gamma}(z,q)$ has a further converging subsequence. We now show that all subsequential limits are identical. Let $\widetilde{\phi}(z,q)$ denote a subsequential limit:
\begin{align}\label{def:tilde_phi_zq}
\lim_{\gamma \to 0} \gamma^2 \log \psi^{\alpha_0/\gamma}_{2/\gamma, P_0/\gamma}(z,q) = \widetilde{\phi}(z,q).
\end{align}
Furthermore, due to Lemma~\ref{prop:derivatives_are_bounded}, the sequences $\gamma^2 \partial_z \log \psi^{\alpha_0/\gamma}_{2/\gamma, P_0/\gamma}(z,q)$ and $\gamma^2 \partial_{\tau} \log \psi^{\alpha_0/\gamma}_{2/\gamma, P_0/\gamma}(z,q)$ are also tight in $C^{2}([M, 1-M] \times [0, q_0])$.

\medskip
\noindent\textbf{Step 3: Derivation of the Hamilton-Jacobi equation.}
\medskip

With the subsequential limit \eqref{def:tilde_phi_zq}, we can pass the BPZ equation \eqref{eq:bpz1} to the limit $\gamma \to 0$. Recall that the BPZ equation for the deformed conformal block reads:
\begin{align}\label{eq:BPZ_recall}
\left(\partial_z^2 - l_{\chi}(l_{\chi}+1) \wp(z) + 2\pi i \frac{\gamma^2}{4} \partial_{\tau}\right) \psi^{\alpha_0/\gamma}_{2/\gamma, P_0/\gamma}(z,q) = 0.
\end{align}
Dividing by $\psi^{\alpha_0/\gamma}_{2/\gamma, P_0/\gamma}(z,q)$ and using the identity
$\frac{\partial_z^2 \psi}{\psi} = \partial_z^2 \log \psi + (\partial_z \log \psi)^2,$
we obtain
\begin{align}\label{eq:BPZ_log_form}
\partial_z^2 \log \psi + (\partial_z \log \psi)^2 - l_{\chi}(l_{\chi}+1) \wp(z) + 2\pi i \frac{\gamma^2}{4} \partial_{\tau} \log \psi = 0.
\end{align}
Multiplying by $\gamma^4$ and taking $\gamma \to 0$:
\begin{align}\label{eq:semiclassical_BPZ}
&\gamma^4 \partial_z^2 \log \psi + (\gamma^2 \partial_z \log \psi)^2 - \gamma^4 l_{\chi}(l_{\chi}+1) \wp(z) + 2\pi i \gamma^2 \cdot \frac{\gamma^2}{4} \partial_{\tau} \log \psi = 0.
\end{align}
Using $l_{\chi} = \frac{2-\alpha_0}{\gamma^2}$, we have $\gamma^4 l_{\chi}(l_{\chi}+1) = (2-\alpha_0)^2 + O(\gamma^2)$. In the limit $\gamma \to 0$, the terms $\gamma^4 \partial_z^2 \log \psi \to 0$ and we obtain:
\begin{align}\label{eq:HJ_derived}
(\partial_z \widetilde{\phi}(z,q))^2 - (2-\alpha_0)^2 \wp(z) + 2\pi i \partial_{\tau} \widetilde{\phi}(z,q) = 0.
\end{align}
This is precisely the Hamilton-Jacobi equation \eqref{eq:Hamilton-jacobi} corresponding to the non-autonomous elliptic Calogero-Moser model (see Definition~\ref{remark:HJ_phitil}).

\medskip
\noindent\textbf{Step 4: Uniqueness of the semi-classical limit.}

\medskip
From the above analysis, any subsequential limit of $\gamma^2 \log \psi^{\alpha_0/\gamma}_{2/\gamma, P_0/\gamma}(z,q)$ solves the Hamilton-Jacobi equation \eqref{eq:HJ_derived}. To establish uniqueness, we use two key results:

\begin{enumerate}[leftmargin=*]
\item \textbf{Initial condition:} The commutativity of limits $\gamma \to 0$ and $q \to 0$ established in Proposition~\ref{prop:limit_commutativity} uniquely determines the initial condition of the Hamilton-Jacobi equation through {Theorem~\ref{prop:incond_HJ}}. Specifically, from the expression \eqref{eq:gamma2_log_expansion} and the $q \to 0$ analysis in Step 3 of Proposition~\ref{prop:limit_commutativity}, $\widetilde{\phi}(z,q)$ has the following behaviour for $q\to 0$:
\begin{align}\label{eq:initial_condition}
 \widetilde{\phi}(z, q) \sim \left( \frac{P_0^2}{2}+ \frac{(2-\alpha_0)^2}{24}\right)\log q+ 2P_0 z \pi - (2-\alpha_0) \log|2\sin(\pi z)| + C_2(\alpha_0, P_0),
\end{align}
where $C_2(\alpha_0, P_0)$ is an explicit constant.

\item \textbf{Uniqueness of solutions:} By {Lemma~\ref{Lemma:HJ-un}}, the Hamilton-Jacobi equation \eqref{eq:HJ_derived} with the initial condition \eqref{eq:initial_condition} has a unique solution in the class of functions that are analytic in $q$ for $|q| < r_0$ and satisfy the appropriate growth conditions.
\end{enumerate}

Since all subsequential limits of $\widetilde{\phi}(z,q)$ satisfy the same Hamilton-Jacobi equation with the same initial condition, and the solution is unique, we conclude that all subsequential limits are identical.

\medskip
\noindent\textbf{Step 6: Conclusion.} By the uniqueness established in Step 5, the limit
\begin{align}\label{eq:limit_exists}
\lim_{\gamma \to 0} \gamma^2 \log \psi^{\alpha_0/\gamma}_{2/\gamma, P_0/\gamma}(z,q) =: \widetilde{\phi}(z,q)
\end{align}
exists. Moreover, substituting the expressions \eqref{def:OmegaAleph} and \eqref{exp:Ft0X} in the equation \eqref{eq:gamma2_log_expansion}, the limit $\widetilde{\phi}(z,q)$ can be expressed as:
\begin{align}
\widetilde{\phi}(z,q) &=   \lim_{\gamma\to0}\left({ \frac{(\alpha_0-2)\gamma}{2 \Xi} \int_0^1 e^{\pi P_0 x} (2\sin(\pi x))^{-\frac{\alpha_0}{2}} h_{\Psi}(x) e^{\frac{\gamma}{2} h_{\Psi}(x)} dx}\right)+ (2-\alpha_0) \lim_{\gamma\to 0}\log\Xi\\
& + \lim_{\gamma\to 0}\frac{\gamma^2}{2} \mathbb{E}[\Psi^2] + \xi(z,q)\label{eq:explicit_limit}
\end{align}
where function $\xi(z,q)$ is defined in \eqref{def:Xi0}. This completes the proof of Theorem~\ref{prop:sem_HJ}.

\subsection{Semi-classical limit of the deformed conformal blocks for $\chi=  \frac{\gamma}{2}$ and the Lam\'e equation}\label{subsec:scg2}

As in the previous section, we study the semi-classical limit of the deformed conformal block for the deformation parameter $\chi = \frac{\gamma}{2}$. We show that the semi-classical limit of the deformed conformal block solves Lam\'e equation as noted in \cite{piatek2014classical}. Furthermore, we derive relation between the accessory parameter of the Lam\'e equation and the Hamiltonian mentioned in \eqref{HAM:4} (see also \cite{BGG2021}). We first state this semi-classical limit in the following result (Theorem~\ref{prop:sc_Lame}). We prove this result after stating and proving Proposition~\ref{thm:sc_Lame} which shows the boundedness of the sequence of deformed conformal block $\psi^{\alpha}_{\gamma/2, P}(z,q)$ as $\gamma\to 0$.

 \begin{theorem}\label{prop:sc_Lame}
Consider $\chi = \frac{\gamma}{2}$, $q \in (0,1)$, and the constants $\alpha_0 \in (-4,2)$, $P_0 \in \mathbb{R}$. Then, the following limit exists
\begin{align}\label{eq:sc_Lame_limit}
\lim_{\gamma \rightarrow 0} \psi^{\alpha_0/\gamma}_{\gamma/2, P_0/\gamma}(z,q) e^{-\phi(q,\alpha_0, P_0)/\gamma^2} =: \widetilde{\Gamma}(z;\alpha_0, P_0, q),
\end{align}
where $\phi(q;\alpha_0, P_0)$ is given by
\begin{align}
\phi(q;\alpha_0, P_0) &=  - \frac{\alpha_0 (\alpha_0+4)}{6} + \left( \frac{P_0^2}{2} + \frac{\alpha_0^2}{24}\right)\log q - 2\alpha_0 \log \eta(q) - \frac{\alpha_0^2}{2} \sum_{n,m\geq 1} \frac{q^{2nm}}{n}- \frac{\ii \pi \alpha_0^2}{2} \\
    &\quad + \alpha_0^2\sum_{n=1}^{\infty}  \frac{\sum_{m=1}^{\infty} q^{2nm}}{n(1 + 2\sum_{m=1}^{\infty} q^{2nm})} + \frac{1}{2}\mathbb{E}[\Phi^2] - \alpha_0 \log \widetilde{\Xi}_0 \\
    &\quad+ \frac{\alpha_0}{2\widetilde{\Xi}_0}\left( \int_0^1 e^{\pi P_0 x} (2\sin(\pi x))^{-\frac{\alpha_0}{2}} h_q(x) e^{\frac{h_q(x)}{2}} dx \right),\label{def:phiq}
\end{align}
where $\Phi$ is defined in \eqref{def:Phi-Gfield}, $\widetilde{\Xi}_0= \int_0^1 e^{\pi P_0 x} (2\sin(\pi x))^{-\frac{\alpha_0}{2}} e^{\frac{1}{2} h_{q}(x)} dx$, $h_q(x) := \lim_{\gamma\to 0} \gamma h_{\widetilde{\Psi}}(x)$, with $\widetilde{\Psi}$, $h_{\widetilde{\Psi}}(x)$ are defined in \eqref{def:htilPsiXi}.

Moreover, $\phi(q,\alpha_0,P_0)$ has the following properties:
\begin{enumerate}[leftmargin=0.5cm]
\item $\partial_\tau \phi(q;\alpha_0, P_0)$ assumes the role of the accessory parameter of the Lam\'e equation
\begin{gather}\label{thm43:Lame}
\left(\partial_{z}^2 - \frac{\alpha_0}{4} \left(\frac{\alpha_0}{4}-1\right) \wp(z) + \frac{\pi \ii}{2} \partial_\tau {\phi}(q;\alpha_0, P_0)\right) \widetilde{\Gamma}(z;\alpha_0, P_0, q) = 0,
\end{gather}
where the solution $\widetilde{\Gamma}(z;\alpha_0, P_0, q)$ is given by
\begin{align}
\widetilde{\Gamma}(z;\alpha_0, P_0, q) &= e^{P_0 z\pi/2}\theta_1(z)^{\alpha_0/4} e^{\ii \pi \alpha_0 z/4}   \exp\left(\mathbb{E}\left[\Phi(q) M_{\gamma}(z,q) \right]+ \frac{1}{2} \log \widetilde{\Xi}_0 - \alpha_0\frac{\widetilde{\Xi}_1}{\widetilde{\Xi}_0} \right)\\
& \times \exp\left(\alpha_0\sum_{n=1}^{\infty}\frac{q^n }{(1-q^{2n}) {n}}\left(\frac{\sum_{m=1}^{\infty} q^{2nm}}{n(1 + 2\sum_{m=1}^{\infty} q^{2nm})} \right) \cos(2\pi(z-\frac{\tau}{2})n) \right) \\
&\times \exp\left( \frac{\alpha_0}{2\widetilde{\Xi}_0} \int_0^1 e^{\pi P_0 x} (2\sin(\pi x))^{-\frac{\alpha_0}{2}} \left(\frac{3}{2}\mathfrak{R}_{z,q}(x)-\frac{\widetilde{\Xi}_1}{\widetilde{\Xi}_0} h_q(x)\right)e^{\frac{h_q(x)}{2}} dx \right),\label{Gamma:Lame}
\end{align}
where $\widetilde{\Xi}_{1} = \int_0^1 e^{\pi P_0 x} (2\sin(\pi x))^{-\frac{\alpha_0}{2}} \mathfrak{R}_{z,q}(x)e^{\frac{1}{2} h_{q}(x)} dx$, $\Phi$ and $M_{\gamma}(z,q)$ are defined in \eqref{def:Phi-Gfield} and \eqref{def:Mg-Gfield} respectively, and $\mathfrak{R}_{z,q}(x)$ is the solution to the integral equation \eqref{funceq:Rzqx}.
\item Furthermore, $\partial_\tau \phi(q;\alpha_0, P_0)$ is related to the Hamiltonian \eqref{HAM:4} of the non-autonomous elliptic Calogero-Moser model as follows. Let $u(\tau)$ be the solution of the equations of motion \eqref{def:uv} with $m^2 = (2-\alpha_0)^2$, and let $\tau_{\star}$ be a zero of $u(\tau)$, i.e., $u(\tau_{\star}) = 0$. Then
\begin{align}\label{eq:acc}
\frac{\pi \ii}{2} \partial_{\tau} \phi(q; \alpha_0 - 2, P_0)\Big|_{\tau = \tau_{\star}} = -\frac{H(\tau_{\star})}{4}, \qquad q_{\star} = e^{\ii \pi \tau_{\star}},
\end{align}
where $H(\tau_{\star}) \equiv H(\tau_{\star}, \alpha_0, P_0)$ is the Hamiltonian \eqref{HAM:4} evaluated at $\tau = \tau_{\star}$.
\end{enumerate}
\end{theorem}

{Note that, from the formula for the solution of the Lam\'e equation in \eqref{Gamma:Lame}, we can obtain an alternative expression for the accessory parameter. See Remark \ref{corr:acc-par-alt}.}

Let us now detail the propositions needed to prove the above theorem.

\begin{proposition}\label{thm:sc_Lame}
Let $\chi = \frac{\gamma}{2}$, $q \in (0,1)$, and $z \in \mathcal{B} := \{u \in \mathbb{C} : 0 \leq \mathrm{Im}(u) < \frac{3}{4}\mathrm{Im}(\tau)\}$. For parameters $\alpha_0 \in (-4, 2)$ and $P_0 \in \mathbb{R}$, the semi-classical limit of the deformed conformal block $\psi^{\alpha_0/\gamma}_{\gamma/2, P_0/\gamma}(z,q)$ defined in \eqref{eq:q-block} satisfies the following uniform bounds:
\begin{align}
\limsup_{\gamma \to 0} \gamma^2 \log \left|\psi^{\alpha_0/\gamma}_{\gamma/2, P_0/\gamma}(z,q)\right| &< \infty, \label{eq:limsup_light}\\
\liminf_{\gamma \to 0} \gamma^2 \log \left|\psi^{\alpha_0/\gamma}_{\gamma/2, P_0/\gamma}(z,q)\right| &> -\infty. \label{eq:liminf_light}
\end{align}
Moreover, these bounds are uniform for $z$ and $q$ varying over compact subsets of $\mathcal{B}$ and $(0,1)$ respectively.
\end{proposition}

\begin{proof}
We organize the proof into four main steps. In Step 1, we analyze the structure of the deformed conformal block with $\chi = \gamma/2$ and identify the leading-order prefactors. In Step 2, we establish the upper bound \eqref{eq:limsup_light} using Hölder's inequality and negative moment bounds via the Fyodorov-Bouchaud formula. In Step 3, we provide an alternative upper bound using Selberg integral asymptotics. Finally, in Step 4, we establish the lower bound \eqref{eq:liminf_light} using Jensen's inequality.

\medskip
\noindent\textbf{Step 1: Structure of the Deformed Conformal Block.}

\medskip
We begin by rewriting the expression for the deformed conformal block \eqref{eq:q-block} with $\chi = \gamma/2$, and the scaling $\alpha = \alpha_0/\gamma$, $P = P_0/\gamma$:
{
\begin{equation}\label{eq:proof3CB}
\psi^{\alpha_0/\gamma}_{\gamma/2, P_0/\gamma}(z,\tau) = \cW(q) e^{\chi P_0 z \pi/2} \EE\left[\mathcal{V}_{\gamma/2, P_0/\gamma}^{\alpha_0/\gamma}(z, q)^{-\frac{\alpha_0}{\gamma^2} + \frac{1}{2}}\right],
\end{equation}
}
where
\begin{align}\label{Thm43:V}
\mathcal{V}^{\alpha_0/\gamma}_{\gamma/2, P_0/\gamma}(z,q) := \int^{1}_{0} \theta_{1}(z+x)^{\frac{\gamma^2}{4}} \theta_{1}(z)^{-\frac{\gamma^2}{4}} \theta_1(x)^{-\frac{\alpha_0}{2}} e^{\pi P_0 x} e^{\frac{\gamma}{2} Y(x; q)} dx.
\end{align}

\medskip
\noindent\textit{Step 1a: Analysis of the prefactor.}

\medskip
The prefactor in \eqref{eq:proof3CB}, defined by \eqref{def:Wq}, simplifies as:
\begin{align}
\cW(q) e^{\chi \frac{P_0}{\gamma} z \pi} &= q^{\frac{P_0^2}{2\gamma^2} + \frac{\gamma l_\chi}{12 \chi} - \frac{1}{6} \frac{l_\chi^2}{\chi^2}} \theta_{1}'(0)^{- \frac{2 l_\chi^2}{3 \chi^2} + \frac{l_\chi}{3} + \frac{4 l_{\chi}}{3 \gamma \chi}} e^{P_0 z \pi/2} \nonumber\\
&= q^{\frac{P_0^2}{2\gamma^2} - \frac{\alpha_0^2}{24\gamma^2} + O(1)} \theta_1'(0)^{-\frac{\alpha_0(4+\alpha_0)}{6\gamma^2} + O(1)} e^{\frac{P_0 z \pi}{2}},\label{asymp1:prefac}
\end{align}
where we used the expression 
$l_{\chi} = \frac{\chi^2}{2} - \frac{\alpha \chi}{2} = \frac{\gamma^2}{8} - \frac{\alpha_0}{4}$ of $l_{\chi}$ in \eqref{def:lchi} for $\chi = \gamma/2$. 

Taking $\gamma^2 \log$ of this prefactor:
\begin{align}\label{eq:prefactor_log_light}
\gamma^2 \log\left|\cW(q) e^{\chi \frac{P_0}{\gamma} z \pi}\right| &= \left(\frac{P_0^2}{2} - \frac{\alpha_0^2}{24}\right) \log q - \frac{\alpha_0(4+\alpha_0)}{6} \log|\theta_1'(0)| + O(\gamma^2).
\end{align}
Since $q \in (0,1)$ and $z$ varies over a compact subset of $\mathcal{B}$, this expression is bounded as $\gamma \to 0$. Therefore, to prove this proposition it suffices to analyze the expectation term:
\begin{align}\label{eq:expectation_to_analyze}
\mathbb{E}\left[\mathcal{V}^{\alpha_0/\gamma}_{\gamma/2, P_0/\gamma}(z,q)^{\frac{\gamma^2/2 - \alpha_0}{\gamma^2}}\right].
\end{align}

\medskip
\noindent\textit{Step 1b: Comparison with the $z$-independent integral.}

\medskip
We observe that the $z$-dependent factor in \eqref{Thm43:V} satisfies:
\begin{align}\label{eq:z_factor_bound}
\theta_1(z+x)^{\gamma^2/4} \theta_1(z)^{-\gamma^2/4} = e^{\frac{\gamma^2}{4}\log(\theta_1(z+x)/\theta_1(z))}.
\end{align}
For $z$ in a compact subset of $\mathcal{B}$ and $x \in [0,1]$, the function $\log(\theta_1(z+x)/\theta_1(z))$ is bounded. Therefore, there exist constants $0 < c_1 \leq c_2 < \infty$ (depending on the compact set but independent of $\gamma$) such that:
\begin{align}\label{eq:V_comparison}
c_1^{\gamma^2/4} \mathcal{I}_{\gamma} \leq \left|\mathcal{V}^{\alpha_0/\gamma}_{\gamma/2, P_0/\gamma}(z,q)\right| \leq c_2^{\gamma^2/4} \mathcal{I}_{\gamma},
\end{align}
where
\begin{align}\label{eq:I_gamma_def_light}
\mathcal{I}_{\gamma} := \int_0^1 |\theta_1(x)|^{-\frac{\alpha_0}{2}} e^{\pi P_0 x} e^{\frac{\gamma}{2} Y(x;q)} dx.
\end{align}

Since $c_1^{\gamma^2/4} = 1 + O(\gamma^2)$ and $c_2^{\gamma^2/4} = 1 + O(\gamma^2)$, the comparison \eqref{eq:V_comparison} implies that the boundedness of the expectation term follows from the bounds of the moments of $\mathcal{I}_{\gamma}$.

\medskip
\noindent\textbf{Step 2: Upper Bound via Lemma~\ref{lem:negative_moments}.}
We now establish the upper bound \eqref{eq:limsup_light} by analyzing $\mathbb{E}[\mathcal{I}_{\gamma}^{(\gamma^2/2 - \alpha_0)/\gamma^2}]$.

\noindent\textit{Case 1: $\alpha_0 \in [0, 2)$.}

For this range of $\alpha_0$, the exponent $(\gamma^2/2 - \alpha_0)/\gamma^2 = 1/2 - \alpha_0/\gamma^2 < 0$ for small $\gamma$, so we need to bound negative moments of $\mathcal{I}_{\gamma}$.
Using the factorization $\theta_1(x) = \mathfrak{p}(x;\tau) \sin(\pi x)$ where $c_1 \leq |\mathfrak{p}(x;\tau)| \leq c_2$ uniformly for $x \in [0,1]$, we have:
$\widetilde{c}_1 \widetilde{\mathcal{I}}_{\gamma} \leq \mathcal{I}_{\gamma} \leq \widetilde{c}_2 \widetilde{\mathcal{I}}_{\gamma},$
where
\begin{align}\label{eq:tilde_I_def}
\widetilde{\mathcal{I}}_{\gamma} := \int_0^1 (2\sin(\pi x))^{-\frac{\alpha_0}{2}} e^{\pi P_0 x} e^{\frac{\gamma}{2} Y(x;q)} dx.
\end{align}
Recall that $Y(x;q)= Y(x) + F(x;q)$ where $F(x;q)$ is a smooth Gaussian field and furthermore, $\mathbb{E}[Y(x;q)Y(y;q)] \geq \mathbb{E}[Y(x)Y(y)] -2\log (\prod_{k\geq 1}(1+q^k)^2)$. By Slepian's inequality, we get $\mathbb{E}[\widetilde{\mathcal{I}}_{\gamma}^{-\beta/\gamma^2}]\leq \prod_{k}(1+q^k)^{-\beta/\gamma^2}\mathbb{E}[(\int^1_0 :e^{\frac{\gamma}{2} Y(x)}: d\mu(x))^{-\beta/\gamma^2}]$. Applying  Lemma~\ref{lem:negative_moments} for $\beta > 0$ and $\gamma \in (0,1]$ to this upper bound, we get 
\begin{align}\label{eq:neg_moment_bound_application}
\mathbb{E}\left[M_{\gamma}^{-\beta/\gamma^2}\right] \leq \exp\left(\frac{C\beta^2}{\gamma^2}\right),
\end{align}
where $C > 0$ is a constant independent of $\beta$ and $\gamma$.
Taking $\beta = \alpha_0 - \gamma^2/2 \approx \alpha_0$ for small $\gamma$:
\begin{align}\label{eq:upper_bound_case1}
\gamma^2 \log \mathbb{E}\left[\mathcal{I}_{\gamma}^{(\gamma^2/2 - \alpha_0)/\gamma^2}\right] \leq C \alpha_0^2 + O(\gamma^2).
\end{align}

\medskip
\noindent\textit{Case 2: $\alpha_0 \in (-4, 0)$.}

\medskip
In this case, the exponent $(\gamma^2/2 - \alpha_0)/\gamma^2 = 1/2 + |\alpha_0|/\gamma^2 > 0$, so we need to bound positive moments of $\mathcal{I}_{\gamma}$ instead. For any positive integer $N$, the $N$-th moment of $\mathcal{V}^{\alpha_0/\gamma}_{\gamma/2, P_0/\gamma}(z,q)$ is given by the Selberg-type integral:
\begin{align}\label{eq:Nth_moment_V}
&\mathbb{E}\left[\mathcal{V}^{\alpha_0/\gamma}_{\gamma/2, P_0/\gamma}(z,q)^N\right] \nonumber\\
&= \int_{[0,1]^N} \prod_{1 \leq i < j \leq N} |\theta_1(x_i - x_j)|^{-\frac{\gamma^2}{2}} \prod_{i=1}^N \theta_1(z+x_i)^{\frac{\gamma^2}{4}} \theta_1(z)^{-\frac{\gamma^2}{4}} |\theta_1(x_i)|^{-\frac{\alpha_0}{2}} e^{\pi P_0 x_i} dx_i.
\end{align}
Notice that $\theta_1(w) = \mathfrak{p}(w;\tau) \sin(\pi w)$ where $c_1 \leq |\mathfrak{p}(w;\tau)| \leq c_2$ as $w$ varies in the relevant domain. The $z$-dependent factors $\theta_1(z+x_i)^{\gamma^2/4} \theta_1(z)^{-\gamma^2/4}$ contribute multiplicative constants of order $1 + O(\gamma^2)$.
Therefore, the integral \eqref{eq:Nth_moment_V} can be bounded above and below by constant multiples of:
\begin{align}\label{eq:reduced_Selberg}
\int_{[0,1]^N} \prod_{1 \leq i < j \leq N} |\sin(\pi(x_i - x_j))|^{-\frac{\gamma^2}{2}} \prod_{i=1}^N |\sin(\pi x_i)|^{-\frac{\alpha_0}{2}} e^{\pi P_0 x_i} dx_i.
\end{align}
This integral has an explicit expression as stated in \eqref{eq:C_expression} of Remark~\ref{rem:Normalization}. As $N$ grows, the ratios of Gamma functions in \eqref{eq:C_expression} grow as $\exp(c N^2 \gamma^2)$.
When $|\alpha_0| < 2$, the fractional moment $\mathbb{E}[\mathcal{V}^N]$ with $N = \lfloor |(\gamma^2/2 - \alpha_0)/\gamma^2| \rfloor + 1$ bounds the desired expectation. The large $N$ asymptotics show that:
\begin{align}\label{eq:Selberg_bound}
\left|\mathbb{E}\left[\mathcal{V}^{\alpha_0/\gamma}_{\gamma/2, P_0/\gamma}(z,q)^{(\gamma^2/2 - \alpha_0)/\gamma^2}\right]\right| \leq \exp\left(\frac{c(\alpha_0)}{\gamma^2}\right)
\end{align}
for some constant $c(\alpha_0) > 0$.

\medskip
\noindent\textit{Combined upper bound.} Combining Cases 1 and 2 with the prefactor analysis from Step 1, we conclude:
\begin{align}\label{eq:upper_bound_final}
\limsup_{\gamma \to 0} \gamma^2 \log \left|\psi^{\alpha_0/\gamma}_{\gamma/2, P_0/\gamma}(z,q)\right| \leq C(\alpha_0, P_0, q) < \infty
\end{align}
for all $\alpha_0 \in (-4, 2)$, establishing \eqref{eq:limsup_light}.

\medskip
\noindent\textbf{Step 3: Lower Bound via Jensen's Inequality.}

\medskip
We now establish the lower bound \eqref{eq:liminf_light}. For $\alpha_0 \in (-4, 2)$ and small $\gamma > 0$, the exponent $p := (\alpha_0 - \gamma^2/2)/\gamma^2$ satisfies $|p| \to \infty$ as $\gamma \to 0$. The function $h(x) = x^{-p}$ is convex for $x > 0$ when $p > 0$ (i.e., when $\alpha_0 > \gamma^2/2$). Applying Jensen's inequality:
\begin{align}\label{eq:Jensen_lower}
\mathbb{E}\left[\mathcal{V}^{\alpha_0/\gamma}_{\gamma/2, P_0/\gamma}(z,q)^{(\gamma^2/2 - \alpha_0)/\gamma^2}\right] \geq \left(\mathbb{E}\left[\mathcal{V}^{\alpha_0/\gamma}_{\gamma/2, P_0/\gamma}(z,q)\right]\right)^{(\gamma^2/2 - \alpha_0)/\gamma^2}.
\end{align}
The first moment is deterministic:
\begin{align}\label{eq:first_moment}
\mathbb{E}\left[\mathcal{V}^{\alpha_0/\gamma}_{\gamma/2, P_0/\gamma}(z,q)\right] &= \int_0^1 \theta_1(z+x)^{\frac{\gamma^2}{4}} \theta_1(z)^{-\frac{\gamma^2}{4}} |\theta_1(x)|^{-\frac{\alpha_0}{2}} e^{\pi P_0 x} \mathbb{E}\left[e^{\frac{\gamma}{2} Y(x;q)}\right] dx \nonumber\\
&= \int_0^1 \theta_1(z+x)^{\frac{\gamma^2}{4}} \theta_1(z)^{-\frac{\gamma^2}{4}} |\theta_1(x)|^{-\frac{\alpha_0}{2}} e^{\pi P_0 x} e^{\frac{\gamma^2}{8} \mathbb{E}[Y(x;q)^2]} dx.
\end{align}
For $\alpha_0 \in (-4, 2)$, the integrand is integrable on $[0,1]$, and as $\gamma \to 0$:
\begin{align}\label{eq:first_moment_limit}
\mathbb{E}\left[\mathcal{V}^{\alpha_0/\gamma}_{\gamma/2, P_0/\gamma}(z,q)\right] = \int_0^1 |\theta_1(x)|^{-\frac{\alpha_0}{2}} e^{\pi P_0 x} dx + O(\gamma^2) =: Z_0(\alpha_0, P_0) + O(\gamma^2).
\end{align}
Since $z$ varies over a compact subset of $\mathcal{B}$ and the integrand is continuous and strictly positive for $x \in (0,1)$, we have $Z_0(\alpha_0, P_0) > 0$. Therefore, from \eqref{eq:Jensen_lower}:
\begin{align}\label{eq:lower_bound_chain}
\left|\mathbb{E}\left[\mathcal{V}^{\alpha_0/\gamma}_{\gamma/2, P_0/\gamma}(z,q)^{(\gamma^2/2 - \alpha_0)/\gamma^2}\right]\right| &\geq \left(Z_0(\alpha_0, P_0) + O(\gamma^2)\right)^{(\gamma^2/2 - \alpha_0)/\gamma^2} \nonumber\\
&= \exp\left(\frac{\gamma^2/2 - \alpha_0}{\gamma^2} \log Z_0(\alpha_0, P_0) + O(1)\right).
\end{align}
Taking $\gamma^2 \log$:
\begin{align}\label{eq:log_lower_bound}
\gamma^2 \log \left|\mathbb{E}\left[\mathcal{V}^{\alpha_0/\gamma}_{\gamma/2, P_0/\gamma}(z,q)^{(\gamma^2/2 - \alpha_0)/\gamma^2}\right]\right| \geq \left(\frac{\gamma^2}{2} - \alpha_0\right) \log Z_0(\alpha_0, P_0) + O(\gamma^2).
\end{align}
Since $\log Z_0(\alpha_0, P_0)$ is finite, we conclude:
\begin{align}\label{eq:liminf_expectation}
\liminf_{\gamma \to 0} \gamma^2 \log \left|\mathbb{E}\left[\mathcal{V}^{\alpha_0/\gamma}_{\gamma/2, P_0/\gamma}(z,q)^{(\gamma^2/2 - \alpha_0)/\gamma^2}\right]\right| > -\infty.
\end{align}
Combining \eqref{eq:liminf_expectation} with the prefactor analysis from Step 1, we obtain:
\begin{align}\label{eq:liminf_final}
\liminf_{\gamma \to 0} \gamma^2 \log \left|\psi^{\alpha_0/\gamma}_{\gamma/2, P_0/\gamma}(z,q)\right| > -\infty,
\end{align}
establishing \eqref{eq:liminf_light}. This completes the proof of Proposition~\ref{thm:sc_Lame}.
\end{proof}


\subsubsection{Proof of Theorem~\ref{prop:sc_Lame}}

We first prove part (1) of Theorem~\ref{prop:sc_Lame}. We organize the proof into four main steps. In Step 1, we carry out asymptotic exapnsion of undeformed and deformed conformal block (for $\chi=\frac{\gamma}{2}$) following similar steps as in Proposition~\ref{prop:limit_commutativity}. In Step 2, we perform a refined expansion to identify the correction terms that determine $\widetilde{\Gamma}(z;\alpha_0, P_0, q)$. In Step 3, we derive the Lam\'e equation from the BPZ equation. In Step 4, we establish the uniqueness of the semi-classical limit.

\medskip
\noindent\textbf{Step 1: Asymptotic Expansion of the Deformed Conformal Block.}

 We now carry out an analogous analysis to the proof of Theorem~\ref{prop:sem_HJ} for the deformed conformal block $\psi^{\alpha}_{\chi,P}(z,q)$ defined in \eqref{eq:q-block} for $\chi = \gamma/2$. Recall from \eqref{psihat-def} that using Girsanov theorem we can rewrite \eqref{eq:q-block}, for $\chi = \gamma/2$, $l_{\gamma/2} = \frac{\gamma^2}{8} - \frac{\alpha_0}{4} $, and the scaling $\alpha = \alpha_0/\gamma$, $P= P_0/\gamma$ as
\begin{align}\label{eq:psi_hat_relation-g2}
 \psi^{\alpha_0}_{\gamma/2,P_0}(z,q) = q^{\left(\frac{P_0^2}{2\gamma^2} + \frac{2}{3\gamma^2} \left( \frac{\gamma^2}{8} - \frac{\alpha_0}{4}\right) \left( \frac{\gamma^2}{8} - \frac{\alpha_0}{4}+1\right)\right)} \widehat{\psi}^{\alpha}_{\chi,P}(z,q),
 \end{align}
 where, from \eqref{def:reg_CB}, 
 \begin{align}
 &\widehat{\psi}_{\gamma/2, P_0}^{\alpha_0}(z, q) = C(q) e^{P_0 z \pi/2} \theta_{1}(z)^{-\left( \frac{\gamma^2}{8} - \frac{\alpha_0}{4}\right)} C_{1}(q) \left(- \ii e^{- \ii \pi z} q^{1/6} \eta(q) \right)^{(\gamma^2/8 - \alpha_0/4)} e^{-\frac{\gamma^2}{8} \mathbb{E}[\mathcal{X}(z,q)^2]} \nonumber \\
& \times \mathbb{E} \left[ e^{\widetilde{\mathcal{X}}(z,q)} \mathcal{Q}(q) \left(\int_{0}^{1} e^{\pi P_0 x} (2\sin(\pi x))^{-\frac{\alpha_0}{2}} e^{\frac{\gamma}{2}Y(x)} dx\right)^{-\frac{\alpha_0}{\gamma^2}+\frac{1}{2}}\right]\label{eq:hat_psi_full-g2}
 \end{align}
 where the functions $C(q)$, $C_1(q)$, $\mathcal{X}(z,q)$ are defined in \eqref{defs:Girsanov}, and we use the expression \eqref{eq:Q_in_T} for $\mathcal{Q}(q)$.
 Analogous expression to \eqref{eq:tilX-in-T} for $\chi = \gamma/2$ is
     \begin{align}\label{eq:tilX-in-T-g2}
    \widetilde{\mathcal{X}}(z,q) &=  \frac{\gamma}{2} \mathcal{X}(z, q) + \sum_{n=1}^{\infty}  \frac{T_n^{(1)}}{\sqrt{n}} \left( \frac{\alpha_0}{\gamma} +\frac{ \gamma q^{n}}{(1- q^{2n})} \cos\left(2\pi (z-\frac{\tau}{2}) n \right) \right)\notag\\&\quad - \sum_{n=1}^{\infty}\frac{T_n^{(2)} }{\sqrt{n}}\left(\frac{ \gamma q^{n}}{(1- q^{2n})} \sin\left(2\pi (z-\frac{\tau}{2}) n \right) \right),
\end{align}
where the random variables $T_n^{(1)}$ and $T_n^{(2)}$ are defined in \eqref{eq:T12_def_heavy}.

 Carrying out computations analogous to Step 2 in the proof of Theorem~\ref{prop:sc_HJ}, {\it i.e} integrating over the variables $T_n^{(1)}$ and $T_n^{(2)}$ and making a series of transformations $(a_n, b_n) \to (\widehat{a}_n, \widehat{b}_n)$, we obtain the expression
 \begin{align}
 &\mathbb{E} \left[ e^{\widetilde{\mathcal{X}}(z,q)} \mathcal{Q}(q) \left(\int_{0}^{1} e^{\pi P_0 x} (2\sin(\pi x))^{-\frac{\alpha_0}{2}} e^{\frac{\gamma}{2}Y(x)} dx\right)^{-\frac{\alpha_0}{\gamma^2}+\frac{1}{2}}\right] \\
& =  \exp\left(\frac{1}{2}\sum_{n=1}^{\infty} \Omega_n(q) \left(\frac{\alpha_0^2}{\gamma^2 {n}}+ \frac{\gamma^2 q^{2n}}{(1-q^{2n})^2{n}}   + \frac{2 \alpha_0 q^n}{(1-q^{2n}) {n}} \cos(2\pi(z-\frac{\tau}{2})n)\right)\right) \nonumber\\
& \quad \times \mathbb{E}\Bigg[  \exp\left(-\sum_{n=1}^{\infty}  \widehat{b}_n  \frac{\gamma q^n (2\Omega_n(q)-1)}{(1-q^{2n})\sqrt{2n \Omega(q)}} \sin(2\pi(z-\frac{\tau}{2})n)\right)  \\
&\quad \times \exp\left(\sum_{n=1}^{\infty}  \widehat{a}_n\left(\frac{\sqrt{2}\alpha_0 \Omega_n(q)}{\gamma\sqrt{n}} + \frac{\gamma q^n (2\Omega_n(q)-1)}{(1-q^{2n})\sqrt{2n\Omega(q)}} \cos(2\pi(z-\frac{\tau}{2})n)\right)\right) \\
& \quad\times  \left(\int_{0}^{1} e^{\pi  P_0 x} (2\sin(\pi x))^{-\frac{\alpha_0}{2}} e^{\frac{\gamma}{2}Y(x)} dx\right)^{-\frac{\alpha_0}{\gamma^2}+\frac{1}{2}}\Bigg],\label{eq:int-g2-step1}
\end{align}
where $\Omega_n(q)$ is defined in \eqref{def:OmegaAleph}, the variables $\widehat{a}_n\sim \mathcal{N}(0,\Omega_n(q)/(1-2\Omega_n(q)))$, and  $\widehat{b}_n\sim \mathcal{N}(0,\Omega_n(q)/(1-2\Omega_n(q)))$. Recall that $a_n \sim \mathcal{N}(0,1)$, $b_n \sim \mathcal{N}(0,1)$. Moreover, the variables $(a_n, \widehat{a}_n)$, $(b_n, \widehat{b}_n)$ are perfectly correlated and therefore 
\begin{align}
    \mathbb{E}[a_n \widehat{a}_n] = \left(\frac{\Omega_n(q)}{(1-2\Omega_n(q))}\right)^{1/2}, &&  \mathbb{E}[b_n \widehat{b}_n] = \left(\frac{\Omega_n(q)}{(1-2\Omega_n(q))}\right)^{1/2}.\label{eq:tilhat-corr}
\end{align}

Similar to Step 2c in the proof of Theorem~\ref{prop:sc_HJ},  we now define the tilted Gaussian variable
 \begin{align}
\widetilde{\Psi}\equiv \widetilde{\Psi}(z,q) &:= \left(-\sum_{n=1}^{\infty}  \widehat{b}_n  \frac{\gamma q^n (2\Omega_n(q)-1)}{(1-q^{2n})\sqrt{2n \Omega(q)}} \sin(2\pi(z-\frac{\tau}{2})n)\right)  \\
&\qquad +\sum_{n=1}^{\infty}  \widehat{a}_n\left(\frac{\sqrt{2}\alpha_0 \Omega_n(q)}{\gamma\sqrt{n}} + \frac{\gamma q^n (2\Omega_n(q)-1)}{(1-q^{2n})\sqrt{2n\Omega(q)}} \cos(2\pi(z-\frac{\tau}{2})n)\right) \\
 &\quad - \frac{\alpha_0}{2\gamma \widetilde{\Xi}} \int_0^1 e^{\pi P_0 x} (2\sin(\pi x))^{-\frac{\alpha_0}{2}} Y(x) e^{\frac{\gamma}{2} h_{\widetilde{\Psi}}(x)} dx,\label{eq:tilde_Psi_def}
 \end{align}
 where 
 \begin{align}
     h_{\widetilde{\Psi}}(x) := \mathbb{E}[\widetilde{\Psi} \cdot Y(x)], &&\widetilde{\Xi} := \int_0^1 e^{\pi P_0 x} (2\sin(\pi x))^{-\frac{\alpha_0}{2}} e^{\frac{\gamma}{2} h_{\widetilde{\Psi}}(x)} dx.\label{def:htilPsiXi}
\end{align}
Applying Cameron-Martin theorem \eqref{eq:CM_heavy}, the following expression simplifies
    \begin{align}
   \eqref{eq:int-g2-step1}& =  \exp\left(\frac{1}{2}\sum_{n=1}^{\infty} \Omega_n(q) \left(\frac{\alpha_0^2}{\gamma^2 {n}}+ \frac{\gamma^2 q^{2n}}{(1-q^{2n})^2{n}}   + \frac{2 \alpha_0 q^n}{(1-q^{2n}) {n}} \cos(2\pi(z-\frac{\tau}{2})n)\right)\right)\\
 &\quad \times e^{\frac{\mathbb{E}[\widetilde{\Psi}^2]}{2}} (\widetilde{\Xi})^{\frac{-\alpha_0}{\gamma^2}+ \frac{1}{2}} \exp\left({  \frac{\alpha_0}{2\gamma \widetilde{\Xi}} \int_0^1 e^{\pi P_0 x} (2\sin(\pi x))^{-\frac{\alpha_0}{2}} h_{\widetilde{\Psi}}(x) e^{\frac{\gamma}{2} h_{\widetilde{\Psi}}(x)} dx}\right)\\
 & \quad \times\mathbb{E}\Big[\left(\int_{0}^{1} e^{\frac{\gamma}{2}Y(x)} d\mu_{\widetilde{\Psi}}\right)^{-\frac{\alpha_0}{\gamma^2}+ \frac{1}{2}} e^{ \frac{\alpha_0}{2\gamma \widetilde{\Xi}}\int_0^1  Y(x) d\mu_{\widetilde{\Psi}}}\Big],\label{eq:int-g2-step2}
\end{align}
where $d\mu_{\widetilde{\Psi}}(x) := {\widetilde{\Xi}}^{-1} e^{\pi P_0 x} (2\sin(\pi x))^{-\frac{\alpha_0}{2}} e^{\frac{\gamma}{2} h_{\widetilde{\Psi}}(x)} dx$. Note that in the expressions above, $\widetilde{\Psi}$ has $z$-independent terms scale as $1/\gamma$, and $z$-dependent terms depend linearly on $\gamma$. Consequently, 
\begin{align}
    \partial_z \lim_{\gamma\to 0} \gamma^2 \log \mathbb{E}[\widetilde{\Psi}^2]=0, &&   \partial_z \log \mathbb{E}[\widetilde{\Psi}^2]\underset{\gamma\to 0}{\sim}\mathcal{O}(1), \label{condition-1:g2-asymp}
\end{align}
and 
\begin{align}
    \partial_z \lim_{\gamma\to 0} \left(\gamma \,h_{\widetilde{\Psi}} \right)=0, &&   \partial_z h_{\widetilde{\Psi}}\underset{\gamma\to 0}{\sim}\mathcal{O}(\gamma). \label{condition-2:g2-asymp}
\end{align}
Let us now analyse the asymptotic behaviour of the conformal block. For reasons that will be made clear soon, we retain the $q$-dependent terms of the order $\mathcal{O}(\gamma^{-2})$, and the $z$-dependent terms of the order $\mathcal{O}(1)$.

For $\chi = \gamma/2$, $l_{\gamma/2} = \frac{\gamma^2}{8} - \frac{\alpha_0}{4} $, and the scaling $\alpha = \alpha_0/\gamma$, $P= P_0/\gamma$, we obtain the following leading order expressions
\begin{align}
 C(q) &\mathop{=}^{\eqref{def:ell-eta}} q^{-\frac{(\alpha_0-2) \alpha_0}{12 \gamma^2}} (2\pi)^{-\frac{\alpha_0 (\alpha_0+4)}{6 \gamma^2}} \eta(q)^{-\frac{\alpha_0 (\alpha_0+4)}{2 \gamma^2}} e^{-\frac{1}{2} \ii \pi \alpha_0^2/\gamma^2 }, \label{exp:Cq-g2}\\
C_{1}(q) &= \left( q^{1/6} \eta(q) \right)^{\frac{\alpha_0^2}{2 \gamma^2}} e^{ - \frac{\alpha_0^2}{8\gamma^2}\mathbb{E}[F(0;q)^2]}.\label{exp:C1q-g2}
\end{align}

Substituting \eqref{eq:int-g2-step2}, \eqref{exp:Cq-g2}, \eqref{exp:C1q-g2} in \eqref{eq:psi_hat_relation-g2}, and using Proposition~\ref{prop:semiclassical_limit}, we find the following behaviour near $\gamma\to 0$:
 \begin{align}
 \psi^{\alpha}_{\gamma/2,P}(z,q) &\sim (2\pi)^{-\frac{\alpha_0 (\alpha_0+4)}{6 \gamma^2}} q^{\left(\frac{P_0^2}{2\gamma^2} + \frac{\alpha_0^2}{24\gamma^2}\right)}\eta(q)^{-\frac{2 \alpha_0 }{\gamma^2}} e^{ - \frac{\alpha_0^2}{8\gamma^2}\mathbb{E}[F(0;q)^2]} e^{-\frac{\ii \pi \alpha_0^2}{2\gamma^2} } e^{P_0 z \pi/2} \theta_1(z)^{\alpha_0/4}   e^{\ii\pi \alpha_0z/4}  \\
 &\quad \exp\left(\frac{1}{2}\sum_{n=1}^{\infty} \Omega_n(q) \left(\frac{\alpha_0^2}{\gamma^2 {n}}+ \frac{\gamma^2 q^{2n}}{(1-q^{2n})^2{n}}   + \frac{2 \alpha_0 q^n}{(1-q^{2n}) {n}} \cos(2\pi(z-\frac{\tau}{2})n)\right)\right)\\
 &\quad \times e^{\frac{\mathbb{E}[\widetilde{\Psi}^2]}{2}} (\widetilde{\Xi})^{\frac{-\alpha_0}{\gamma^2}+ \frac{1}{2}} \exp\left({  \frac{\alpha_0}{2\gamma \widetilde{\Xi}} \int_0^1 e^{\pi P_0 x} (2\sin(\pi x))^{-\frac{\alpha_0}{2}} h_{\widetilde{\Psi}}(x) e^{\frac{\gamma}{2} h_{\widetilde{\Psi}}(x)} dx}\right)
 \end{align}
 \begin{align}
 &\quad \times \mathbb{E}\left[\exp\left(-\frac{\alpha_0}{8} \int_0^1 e^{\pi P_0 x} (2\sin(\pi x))^{-\frac{\alpha_0}{2}} :Y(x)^2: d\mu_{\widetilde{\Psi}}(x)\right)\right. \nonumber\\
 &\quad\qquad \left. \times \exp\left(\frac{\alpha_0}{8} \left(\int_0^1 e^{\pi P_0 x} (2\sin(\pi x))^{-\frac{\alpha_0}{2}} :Y(x): d\mu_{\widetilde{\Psi}}(x)\right)^2\right)\right]. \label{eq:deformed_expansion_recall}
 \end{align}
 In the last line of the above display we use Proposition~\ref{prop:semiclassical_limit} which states that the measure $\mu_{\widetilde{\Psi}}$ has density bounded below (for $z \in [M, 1-M]$), we have
\begin{align}\label{eq:semiclassical_limit_light}
&\lim_{\gamma \to 0} \mathbb{E}\left[\exp\left(-\frac{(\alpha_0-2)}{8} \int_0^1 :Y(x)^2: d\mu_{\widetilde{\Psi}}(x) + \frac{\alpha_0}{8} \left(\int_0^1 :Y(x): d\mu_{\widetilde{\Psi}}(x)\right)^2\right)\right] \nonumber\\
&= \mathbb{E}\left[\exp\left(-\frac{(\alpha_0-2)}{8} \int_0^1 :Y(x)^2: d\widetilde{\mu}_0(x) + \frac{\alpha_0}{8} \left(\int_0^1 :Y(x): d\widetilde{\mu}_0(x)\right)^2\right)\right],
\end{align}
where $\widetilde{\mu}_0(x)$ is the limiting measure of $d\mu_{\widetilde{\Psi}}(x)$ as $\gamma \to 0$.
 Furthemore, with the conditions \eqref{condition-1:g2-asymp}, \eqref{condition-2:g2-asymp}, we observe that
 \begin{align}
     \partial_{z}\lim_{\gamma\to 0} \gamma^2 \log  \psi^{\alpha}_{\gamma/2,P}(z,q) =0 && \partial_z \log \psi^{\alpha}_{\gamma/2,P}(z,q) \underset{\gamma\to 0}{\sim} \mathcal{O}(1).
 \end{align}

\noindent\textbf{Step 2: Refined Expansion and Identification of $\widetilde{\Gamma}$.}

We now aim to obtain the explicit form of the leading and sub-leading terms in the expression \eqref{eq:deformed_expansion_recall}. Let us begin with the following asymptotic expansion near $\gamma\to 0$: 
\begin{align}
h_{\widetilde{\Psi}}(x) = \frac{h_q(x)}{\gamma}+ \gamma \mathfrak{R}_{z,q}(x)+ \mathcal{O}(\gamma^2). \label{eq:asymp-htilPsi-g2}
\end{align}
Consequently the measure 
\begin{align}
        &  e^{\frac{\gamma}{2} h_{\widetilde{\Psi}}(x)} dx =  e^{\frac{1}{2} h_{q}(x)} dx + \frac{ \gamma^2}{2} \mathfrak{R}_{z,q}(x) e^{\frac{1}{2} h_{q}(x)} dx + \mathcal{O}(\gamma^3),\label{eq:measure-expansion-g2}
\end{align}
and the function $\widetilde{\Xi}$ defined in \eqref{def:htilPsiXi} behaves as
\begin{align}
    \widetilde{\Xi} &= \int_0^1 e^{\pi P_0 x} (2\sin(\pi x))^{-\frac{\alpha_0}{2}} e^{\frac{1}{2} h_{q}(x)} dx + \frac{\gamma^2}{2} \int_0^1 e^{\pi P_0 x} (2\sin(\pi x))^{-\frac{\alpha_0}{2}} \mathfrak{R}_{z,q}(x)e^{\frac{1}{2} h_{q}(x)} dx + \mathcal{O}(\gamma^3)\\
    &=: \widetilde{\Xi}_{0} + \gamma^2 \widetilde{\Xi}_{1} + \mathcal{O}(\gamma^3). \label{asymp:Xiexp-g2}
\end{align}
Note that $\widetilde{\Xi}_0$ is $z$-independent while $\widetilde{\Xi}_1$ is $z$-dependent.
Substituting \eqref{eq:measure-expansion-g2} and \eqref{asymp:Xiexp-g2} in \eqref{eq:tilde_Psi_def}, we rewrite $\widetilde{\Psi}$ as 
\begin{align}
    \widetilde{\Psi} = \frac{1}{\gamma}\Phi(q) + \gamma \mathcal{M}_{\gamma}(z,q) + \mathcal{O}(\gamma^2),\label{asymp:tilPsi}
\end{align}
where
\begin{align}
    \Phi(q) &:= 
\sum_{n=1}^{\infty}  \widehat{a}_n\left(\frac{\sqrt{2}\alpha_0 \Omega_n(q)}{\sqrt{n}}\right)- \frac{\alpha_0}{2 \widetilde{\Xi}_0} \int_0^1 e^{\pi P_0 x} (2\sin(\pi x))^{-\frac{\alpha_0}{2}} Y(x) e^{\frac{1}{2} h_{q}(x)} dx \label{def:Phi-Gfield}\\
M_{\gamma}(z,q) &= \sum_{n=1}^{\infty}  \frac{q^n (2\Omega_n(q)-1)}{(1-q^{2n})\sqrt{2n \Omega(q)}}\left(-\widehat{b}_n  \sin(2\pi(z-\frac{\tau}{2})n)+\widehat{a}_n \cos(2\pi(z-\frac{\tau}{2})n)\right) \\
 &\quad - \frac{\alpha_0 }{4\widetilde{\Xi}_0} \int_0^1 e^{\pi P_0 x} (2\sin(\pi x))^{-\frac{\alpha_0}{2}} \mathfrak{R}_{z,q}(x)
 Y(x) e^{\frac{1}{2} h_{q}(x)} dx\\
 & \quad +\frac{\alpha_0 }{4 \widetilde{\Xi}_0^2}\left(\int_0^1 e^{\pi P_0 x} (2\sin(\pi x))^{-\frac{\alpha_0}{2}} Y(x) e^{\frac{1}{2} h_{q}(x)} dx \right) \left( \int_0^1 e^{\pi P_0 x} (2\sin(\pi x))^{-\frac{\alpha_0}{2}} \mathfrak{R}_{z,q}(x)e^{\frac{1}{2} h_{q}(x)} dx  \right). \label{def:Mg-Gfield}
\end{align}
The expressions with the definition $h_{\widetilde{\Psi}}(x) := \mathbb{E}[\widetilde{\Psi} \cdot Y(x)]$ implies 
\begin{align}
    \mathfrak{R}_{z,q}(x) = \mathbb{E}\left[ M_{\gamma}(z,q) Y(x)\right].
\end{align}
Substituting \eqref{def:Mg-Gfield} in the above expression results in the following functional equation 
\begin{align}
     &\mathfrak{R}_{z,q}(x) \\
     &=  \mathfrak{R}_{z,q}^{(0)}(x) - \frac{\alpha_0 }{4\widetilde{\Xi}_0} \int_0^1 e^{\pi P_0 y} (2\sin(\pi y))^{-\frac{\alpha_0}{2}} K(x,y)\mathfrak{R}_{z,q}(y)
 e^{\frac{1}{2} h_{q}(y)} dy \\
 &\quad + \frac{\alpha_0 }{4\widetilde{\Xi}_0^2} \left(\int_0^1 e^{\pi P_0 y} (2\sin(\pi y))^{-\frac{\alpha_0}{2}} K(x,y)
 e^{\frac{1}{2} h_{q}(y)} dy\right) \left( \int_0^1 e^{\pi P_0 x} (2\sin(\pi x))^{-\frac{\alpha_0}{2}} \mathfrak{R}_{z,q}(x)e^{\frac{1}{2} h_{q}(x)} dx  \right) ,\label{funceq:Rzqx}
\end{align}
where 
\begin{align}
    \mathfrak{R}_{z,q}^{(0)}(x)& := \sum_{n=1}^{\infty} \frac{q^{n} \left(1- 2\Omega_n(q)\right)^{1/2}}{n(1-q^{2n}) } \cos \left(2\pi (z+x - \tau/2) \right),\label{funceq:Rzqx0}
\end{align}
and the kernel $K(x,y) := \mathbb{E}[Y(x) Y(y)]$. The expression \eqref{funceq:Rzqx0} is computed by using the equations \eqref{eq:tilhat-corr}. 

Similarly, with \eqref{asymp:tilPsi}, the expectation term in \eqref{eq:deformed_expansion_recall} can be written as
\begin{align}
    \mathbb{E}[\widetilde{\Psi}^2]& = \frac{1}{\gamma^2} \mathbb{E} \left[\Phi(q)^2 \right] + 2\mathbb{E} \left[ \Phi(q) M_{\gamma}(z,q)\right] + \mathcal{O}(\gamma^2),\label{asymp:covterm-g2}
\end{align}
where the term $\mathbb{E} \left[\Phi(q)^2 \right]$ is $z$-independent. The term $(\widetilde{\Xi})^{\frac{-\alpha_0}{\gamma^2}+ \frac{1}{2}}$ behaves as 
\begin{align}
    (\widetilde{\Xi})^{\frac{-\alpha_0}{\gamma^2}+ \frac{1}{2}} &= \exp\left(\left( {\frac{-\alpha_0}{\gamma^2}+ \frac{1}{2}}\right) \log (\widetilde{\Xi})\right)\\
    & \mathop{=}^{\eqref{asymp:Xiexp-g2}} \exp\left( \left({\frac{-\alpha_0}{\gamma^2}+ \frac{1}{2}}\right) \left(\log \widetilde{\Xi}_0 + \gamma^2 \widetilde{\Xi}_1/\widetilde{\Xi}_0 + \mathcal{O}(\gamma^3)\right)\right)\\
    &= \exp\left( {-\frac{\alpha_0 }{\gamma^2} \log\widetilde{\Xi}_0+ \frac{1}{2} \log\widetilde{\Xi}_0} - \alpha_0\frac{\widetilde{\Xi}_1}{\widetilde{\Xi}_0} + \mathcal{O}({\gamma^2})\right)\\
    &=\widetilde{\Xi}_0^{-\alpha_0/\gamma^2} \widetilde{\Xi}_0^{1/2} \exp\left( - \alpha_0\frac{\widetilde{\Xi}_1}{\widetilde{\Xi}_0} + \mathcal{O}({\gamma^2})\right),\label{asymp:powterm-g2}
\end{align}
and the exponential term 
\begin{align}
    &\exp\left({  \frac{\alpha_0}{2\gamma \widetilde{\Xi}} \int_0^1 e^{\pi P_0 x} (2\sin(\pi x))^{-\frac{\alpha_0}{2}} h_{\widetilde{\Psi}}(x) e^{\frac{\gamma}{2} h_{\widetilde{\Psi}}(x)} dx}\right)\\
&\mathop{=}^{\eqref{eq:asymp-htilPsi-g2}\eqref{asymp:Xiexp-g2}} \exp\Bigg(  \frac{\alpha_0}{2\gamma \widetilde{\Xi}_0}\left(1- \gamma^2 \frac{\widetilde{\Xi}_1}{\widetilde{\Xi}_0} +\mathcal{O}(\gamma^4)\right) \\
&\quad \times \left(\int_0^1 e^{\pi P_0 x} (2\sin(\pi x))^{-\frac{\alpha_0}{2}} \left(\frac{h_q(x)}{\gamma} e^{\frac{h_q(x)}{2}} + \frac{3\gamma}{2} \mathfrak{R}_{z,q}(x)e^{\frac{h_q(x)}{2}} +\mathcal{O}(\gamma^3)\right)dx\right)\Bigg)\\
& = \exp\left(  \frac{\alpha_0}{2\gamma^2 \widetilde{\Xi}_0} \left( \int_0^1 e^{\pi P_0 x} (2\sin(\pi x))^{-\frac{\alpha_0}{2}} h_q(x) e^{\frac{h_q(x)}{2}} dx \right)\right) \\
& \quad \times \exp\left( \frac{\alpha_0}{2\widetilde{\Xi}_0} \int_0^1 e^{\pi P_0 x} (2\sin(\pi x))^{-\frac{\alpha_0}{2}} \left(\frac{3}{2}\mathfrak{R}_{z,q}(x)-\frac{\widetilde{\Xi}_1}{\widetilde{\Xi}_0} h_q(x)+\mathcal{O}(\gamma^3)\right)e^{\frac{h_q(x)}{2}} dx \right).\label{asymp:expterm-g2}
\end{align}
Substituting \eqref{asymp:covterm-g2}, \eqref{asymp:powterm-g2}, \eqref{asymp:expterm-g2} in \eqref{eq:deformed_expansion_recall} and recalling the identity \eqref{comp-exp-vanishes}, we arrive at the expressions
\begin{align}
    \lim_{\gamma\to 0}\gamma^2 \log 
    \psi_{\gamma/2, P_0}^{\alpha_0} &= \phi(q;\alpha_0, P_0)\\
    &= - \frac{\alpha_0 (\alpha_0+4)}{6} + \left( \frac{P_0^2}{2} + \frac{\alpha_0^2}{24}\right)\log q - 2\alpha_0 \log \eta(q) - \frac{\alpha_0^2}{8} \mathbb{E}[F(0;q)^2] - \frac{\ii \pi \alpha_0^2}{2} \\
    &\quad + \sum_{n=1}^{\infty} \frac{\Omega_n(q) \alpha_0^2}{n} + \frac{1}{2}\mathbb{E}[\Phi^2] - \alpha_0 \log \widetilde{\Xi}_0 \\
    &+ \frac{\alpha_0}{2\widetilde{\Xi}_0}\left( \int_0^1 e^{\pi P_0 x} (2\sin(\pi x))^{-\frac{\alpha_0}{2}} h_q(x) e^{\frac{h_q(x)}{2}} dx \right).
\end{align}
Substituting \eqref{def:OmegaAleph} and \eqref{exp:Ft0X} in the above expression, we get \eqref{def:phiq}.
The subleading term
\begin{align}
    \widetilde{\Gamma}(z,q) &= e^{P_0 z\pi/2}\theta_1(z)^{\alpha_0/4} e^{\ii \pi \alpha_0 z/4} \exp\left(\frac{1}{2}\sum_{n=1}^{\infty}\frac{2 \alpha_0 q^n  \Omega_n(q)}{(1-q^{2n}) {n}} \cos(2\pi(z-\frac{\tau}{2})n) \right)   e^{\mathbb{E}\left[\Phi(q) M_{\gamma}(z,q) \right]}\widetilde{\Xi}_0^{1/2}  \\
    & \quad \times e^{\left( - \alpha_0\frac{\widetilde{\Xi}_1}{\widetilde{\Xi}_0} \right)} \exp\left( \frac{\alpha_0}{2\widetilde{\Xi}_0} \int_0^1 e^{\pi P_0 x} (2\sin(\pi x))^{-\frac{\alpha_0}{2}} \left(\frac{3}{2}\mathfrak{R}_{z,q}(x)-\frac{\widetilde{\Xi}_1}{\widetilde{\Xi}_0} h_q(x)\right)e^{\frac{h_q(x)}{2}} dx \right).\\
    \label{eq:Gamma_derivation}
\end{align}

\medskip
\noindent\textbf{Step 3: Derivation of the Lam\'e Equation.}
Recall from Theorem~\ref{thm:bpz} that the deformed conformal block $\psi^{\alpha}_{\gamma/2, P}(z,q)$ satisfies the BPZ equation:
\begin{align}\label{eq:BPZ_light}
\left(\partial_z^2 - l_{\gamma/2}(l_{\gamma/2}+1) \wp(z) + 2i\pi \frac{\gamma^2}{4} \partial_{\tau}\right) \psi^{\alpha}_{\gamma/2, P}(z,q) = 0.
\end{align}
Under the scaling $\alpha = \alpha_0/\gamma$, $P = P_0/\gamma$, we have:
\begin{align}\label{eq:l_scaling}
l_{\gamma/2} = \frac{\gamma^2}{8} - \frac{\alpha_0}{4}, \qquad l_{\gamma/2}(l_{\gamma/2}+1) = \frac{\alpha_0}{4}\left(\frac{\alpha_0}{4} - 1\right) + O(\gamma^2).
\end{align}
Substituting the limit \eqref{eq:sc_Lame_limit} into the BPZ equation \eqref{eq:BPZ_light}:
\begin{align}\label{eq:BPZ_substitution}
&\left(\partial_z^2 - l_{\gamma/2}(l_{\gamma/2}+1) \wp(z) + 2i\pi \frac{\gamma^2}{4} \partial_{\tau}\right) \left(e^{\phi(q)/\gamma^2} \widetilde{\Gamma}(z;\alpha_0, P_0, q)\right) = 0.
\end{align}
The action of $\partial_z^2$ on $e^{\phi(q;\alpha_0, P_0)/\gamma^2} \widetilde{\Gamma}$ gives
$\partial_z^2 \left(e^{\phi(q;\alpha_0, P_0)/\gamma^2} \widetilde{\Gamma}\right) = e^{\phi(q;\alpha_0, P_0)/\gamma^2} \partial_z^2 \widetilde{\Gamma},$
since $\phi(q;\alpha_0, P_0)$ is independent of $z$.
The action of $\partial_{\tau}$ gives
\begin{align}\label{eq:tau_derivative}
\partial_{\tau} \left(e^{\phi(q;\alpha_0, P_0)/\gamma^2} \widetilde{\Gamma}\right) = e^{\phi(q;\alpha_0, P_0)/\gamma^2} \left(\frac{1}{\gamma^2} \partial_{\tau} \phi(q;\alpha_0, P_0) \cdot \widetilde{\Gamma} + \partial_{\tau} \widetilde{\Gamma}\right).
\end{align}
Substituting into \eqref{eq:BPZ_substitution} and dividing by $e^{\phi(q;\alpha_0, P_0)/\gamma^2}$ yields
\begin{align}\label{eq:BPZ_divided}
\partial_z^2 \widetilde{\Gamma} - l_{\gamma/2}(l_{\gamma/2}+1) \wp(z) \widetilde{\Gamma} + \frac{i\pi}{2} \partial_{\tau} \phi(q;\alpha_0, P_0)\widetilde{\Gamma} + O(\gamma^2) = 0.
\end{align}
Taking $\gamma \to 0$ and using \eqref{eq:l_scaling}:
\begin{align}\label{eq:Lame_derived}
\left(\partial_z^2 - \frac{\alpha_0}{4}\left(\frac{\alpha_0}{4} - 1\right) \wp(z) + \frac{i\pi}{2} \partial_{\tau} \phi(q;\alpha_0, P_0)\right) \widetilde{\Gamma}(z;\alpha_0, P_0, q) = 0.
\end{align}
This is precisely the Lam\'e equation \eqref{thm43:Lame} with accessory parameter $\frac{\pi i}{2} \partial_{\tau} \phi(q;\alpha_0, P_0)$.

\medskip
\noindent\textbf{Step 4: Uniqueness of the Semi-Classical Limit.}

\medskip
By Proposition~\ref{thm:sc_Lame} (the boundedness result), we know that $\gamma^2 \log \psi^{\alpha_0/\gamma}_{\gamma/2, P_0/\gamma}(z,q)$ is tight as $\gamma \to 0$. Therefore, any subsequence has a further converging subsequence.
Let $\phi(q;\alpha_0, P_0)$ be any sub-sequential limit:
\begin{align}\label{eq:subsequential_limit}
\phi(q; \alpha_0, P_0) = \lim_{\gamma_n \to 0} \gamma_n^2 \log \psi^{\alpha_0/\gamma_n}_{\gamma_n/2, P_0/\gamma_n}(z,q)
\end{align}
for some sequence $\gamma_n \to 0$.
From Step 3, any such subsequential limit must satisfy the property that $\widetilde{\Gamma}(z;\alpha_0, P_0, q)$ defined by \eqref{eq:sc_Lame_limit} solves the Lam\'e equation \eqref{eq:Lame_derived}. Since $\widetilde{\Gamma}$ is given explicitly by \eqref{eq:Gamma_derivation}, which is independent of the choice of subsequence, and since the solution of the Lam\'e equation uniquely determines the accessory parameter (see \cite{eremenko2022moduli} \cite[Chapter 4]{ince1940vii}), we conclude that $\partial_{\tau} \phi(q; \alpha_0, P_0)$ is uniquely determined.

Proposition~\ref{prop:Gandphi} showed that the semi-classical limit undeformed conformal block is same as $\widetilde{\Gamma}(z;\alpha_0, P_0, q)$ upto some explicit additive factor. {Together with the asymptotic results of Section~\ref{sec:asymp}} which shows unique limit of $\lim_{q\to 0}\phi(q; \alpha_0, P_0)$, the uniqueness of $ \partial_{\tau} \phi(q;\alpha_0, P_0)$ implies that all subsequential limits of $\gamma^2 \log \psi^{\alpha_0/\gamma}_{\gamma/2, P_0/\gamma}(z,q)$ are the same. Therefore, the limit \eqref{eq:sc_Lame_limit} exists and $\phi(q;\alpha_0, P_0)$ is uniquely determined by $\widetilde{\Gamma}(z;\alpha_0, P_0, q)$.

This completes the proof of part (1) of Theorem~\ref{prop:sc_Lame}. Now we proceed to part (2) of Theorem~\ref{prop:sc_Lame}.  

\subsubsection*{Proof of Part (2)}

There are three main steps in this proof: relating the function $\phi(q;\alpha_0, P_0)$ defined in \eqref{eq:subsequential_limit} with the action of the non-autonomous elliptic Calogero-Moser (NAECM) model defined in \eqref{eq:NAECM_action}, regularizing the action, calculating the $\tau$-derivative of $\phi(q;\alpha_0, P_0)$ and showing its relation to the Hamiltonian of the NAECM model.

\medskip
\noindent
{\bf Step 1: Relation between $\phi(q;\alpha_0, P_0)$ and the NAECM action.}
\medskip

\noindent
Let us begin by studying the expression \eqref{eq:we_are_all_the_same_recall} for $\chi = 2/\gamma$. The exponent $\frac{\chi}{2}(\chi - \alpha) = \frac{1}{\gamma}(\frac{2}{\gamma} - \frac{\alpha_0}{\gamma}) = \frac{2 - \alpha_0}{\gamma^2}$, and so the expression \eqref{eq:we_are_all_the_same_recall} becomes
\begin{align}\label{def:g_recall}
&\lim_{\gamma \to 0} \lim_{z \to 0} \gamma^2 \log \left(\theta_1(z)^{\frac{2-\alpha_0}{\gamma^2}} \psi_{\frac{2}{\gamma}, P}^{\alpha}(z,\tau)\right) \nonumber\\
&= \lim_{\gamma \to 0} \gamma^2 \log \mathcal{G}_{\gamma, P}^{\alpha_0/\gamma - 2/\gamma}(q) + \lim_{\gamma \to 0} \gamma^2 \log Z^{(\alpha_0-2)/\gamma}_{\gamma, P} + \lim_{\gamma \to 0} \gamma^2 \log \mathcal{W}(q).
\end{align}
Let us study the LHS and RHS of the above expression separately.

We begin by expanding the terms in the RHS of the expression above. 
Firstly, using the expression \eqref{rel:GandPhi}, the undeformed conformal block with parameter shift $\alpha_0 \to \alpha_0 - 2$ gives:
\begin{align}
 \lim_{\gamma \to 0} \gamma^2 \log \mathcal{G}_{\gamma, P}^{(\alpha_0-2)/\gamma}(q) &=  \phi(q;\alpha_0-2, P_0)  - \frac{P_0^2}{2}   \log q - \left(\frac{(\alpha_0-2)^2}{2} - 2(\alpha_0-2) \right) \log \eta(q) \\
&+ \frac{(\alpha_0-2) (\alpha_0+2)}{6} \log(2\pi) - \mathcal{B}(\alpha_0-2, P_0).\label{eq:G_alpha0-2}
\end{align}
As in the proof of the previous proposition, we ignore the $q$-independent term $\mathcal{B}(\alpha_0, P_0)$.
Secondly, shifting the value of $\alpha_0 \to \alpha_0 -2$ in \eqref{GPhider:Z} we get
\begin{align}\label{eq:Z_asymp}
\lim_{\gamma \to 0} \gamma^2 \log Z^{(\alpha_0-2)/\gamma}_{\gamma, P} = \frac{(\alpha_0-2)^2}{24} \log q+(\alpha_0-2)^2 \log\eta(q) + \mathcal{B}(\alpha_0-2, P_0).
\end{align}
Finally from \eqref{eq:prefactor_final}, we get
\begin{align}
    &\lim_{\gamma \to 0} \gamma^2 \log \mathcal{W}^{2/\gamma}(q) =(\frac{P_0^2}{2} - \frac{\alpha_0^2}{24}) \log q- {\frac{\alpha_0(\alpha_0+4)}{6}} \log \theta_1'(0)\\
& \mathop{=}^{\eqref{def:ell-eta}} (\frac{P_0^2}{2} - \frac{\alpha_0^2}{24}) \log q- {\frac{\alpha_0(\alpha_0+4)}{2}} \log \eta(q)- {\frac{\alpha_0(\alpha_0+4)}{6}} \log(2\pi) \label{asymp:W}
\end{align}
Summing the expressions \eqref{eq:Z_asymp}, \eqref{eq:G_alpha0-2}, and \eqref{asymp:W} we get
\begin{align}
 &\lim_{\gamma \to 0} \gamma^2 \log Z^{(\alpha_0-2)/\gamma}_{\gamma, P} +    \lim_{\gamma \to 0} \gamma^2 \log \mathcal{G}_{\gamma, P}^{(\alpha_0-2)/\gamma}(q)+ \lim_{\gamma \to 0} \gamma^2 \log \mathcal{W}^{2/\gamma}(q) \\
 &=  \phi(q; \alpha_0-2, P_0) -6(\alpha_0 -1)\log \eta(q) -\frac{2(\alpha_0+1)}{3} \log(2\pi).\label{rhs:full}
\end{align}

The LHS of the expression \eqref{def:g_recall}, with the equation \eqref{def:tilde_phi_zq} reads,
\begin{align}
\lim_{z \rightarrow 0} \lim_{\gamma \rightarrow 0} \gamma^2 \log \left(\theta_{1}(z)^{\frac{\chi}{2}\left(\chi-\alpha\right)} \psi^\alpha_{2/\gamma}(z, \tau)\right) \mathop{=} \lim_{z \rightarrow 0} \widetilde{\phi}(z, \tau) + (2-\alpha_0) \lim_{z \to 0} \log \theta_1(z).\label{reg:semi1}
\end{align}

The equality of the expressions \eqref{rhs:full} and \eqref{reg:semi1} implies, with some re-arranging of terms,
\begin{align}
\phi(q; \alpha_0-2, P_0)- \frac{2(\alpha_0+1)}{3} \log(2\pi) = \lim_{z \rightarrow 0} \widetilde{\phi}(z, \tau) + (2-\alpha_0) \lim_{z \to 0} \log \theta_1(z) + 6(\alpha_0 -1)\log \eta(q) .\\
\label{rel:phi-phitil}
\end{align}

Let us recall that by the Definition~\ref{remark:HJ_phitil}, $\widetilde{\phi}(z, \tau)$ is associated to the action of the non-autonomous elliptic Calogero-Moser (NAECM) model with the identification $z = 2u(\tau)$ and $m^2 = (2-\alpha_0)^2$. We now focus on the $z\to 0$ limit of the expression above.  By Lemma~\ref{Lemma:HJ-un}, the action is
\begin{align}\label{eq:action_recall}
\widetilde{\phi}(2u(\tau), \tau) = \int_{\ii \infty}^{\tau} \left(v(\tau')^2 + m^2 \wp(2u(\tau')|\tau') \right) \frac{d\tau'}{2\pi \ii}.
\end{align}
We define the integrand above, which is the Lagrangian of the NAECM model, as 
\begin{align}
     \mathcal{L}(\tau) := \frac{1}{2\pi \ii}\left(v(\tau)^2 + m^2 \wp(2u(\tau)|\tau) \right). \label{def:mathcalL}
\end{align}
The identification $z = u(\tau)$ implies that the limit $z\to 0$ is equivalent to the limit $u(\tau)\to 0$, and therefore, the term $ \lim_{z \rightarrow 0} \widetilde{\phi}(z, \tau) + (2-\alpha_0) \lim_{z \to 0} \log \theta_1(z) $ is equivalent to studying the expression $ \widetilde{\phi}(2u(\tau), \tau) + (2-\alpha_0)  \log \theta_1(2u(\tau))$ near the zeros of $u(\tau)$, which we denote by $u(\tau_{\star})=0$. Such an analysis was carried out in \cite[Section 7]{BGG2021}, which required an additional regularization of the action \eqref{eq:action_recall} as we will show, it is ill defined at $u(\tau)=0$. In what follows, we will repeat the structure of the computations in \cite[Section 7]{BGG2021} to highlight that the term $(2-\alpha_0)  \log \theta_1(2u(\tau))$ regularizes our action in a direct manner, and resultantly, we can connect the function $\phi(q, \alpha_0-2, P_0)$ in \eqref{rel:phi-phitil} to the Hamiltonian of the NAECM model.

\medskip 
\noindent
{\bf Step 2: Regularizing the action}
\medskip

\noindent
To this end, we first recall that the Weierstrass $\wp$-function behaves as $\wp(z) \sim 1/z^2 + \mathcal{O}(z^2)$ near $z=0$, and as $\wp(z) \sim \frac{\pi^2}{\sin^2 \pi z} - \frac{\pi^2}{3} + 16 \pi^2 e^{\ii \pi \tau} \sin^2 (\pi z) $ near $z\to \ii\infty$ \cite[eq (A.14)]{BGG2021}. As we will show below, this behaviour of $\wp(z)$ implies that the function $\mathcal{L}(\tau)$ is constant near $\ii \infty$, and has a pole neat $\tau= \tau_{\star}$. Indeed, recall that the expansion of $u(\tau)$ near $\tau = \tau_{\star}$ as specified in Proposition \ref{prop:uasympinf}, in our notation\footnote{${\sf a} = P_0$ parametrizes the A-cycle monodromy.} is
\begin{align}\label{asymp:uinf}
       u(\tau) = P_0 \tau +{\sf b}+\frac{m^2}{8 \pi \ii P_0^2}e^{4 \pi \ii (P_0 \tau+{\sf b})}+\mathcal{O}(q e^{4 \pi \ii (P_0 \tau+{\sf b})}),
\end{align}
where ${\sf b}$ parametrizes the B-cycle monodromy of the solution to the linear system \eqref{eq:linear_systemCM}.
Equivalently, ${\sf b}$ may be regarded as an integration constant which depends on the choice of initial conditions for the differential equation \eqref{eq:CM_ELLDER}, that is solved by $u(\tau)$.

At this point, the reader may treat this as a constant, and it does not play any role in the final result of this calculation. 
In the same limit, we then get 
\begin{align}\label{asymp:vinf}
    v(\tau)^2 = \left(2\pi \ii \frac{du}{d\tau} \right)^2 = -4 \pi^2  P_0^2 - 4\pi^2 m^2 e^{4 \ii P_0 \pi \tau + 4 \ii {\sf b} \pi}+\mathcal{O}( e^{8 \pi \ii (P_0 \tau+{\sf b})}).
\end{align}
Substituting \eqref{asymp:uinf} and \eqref{asymp:vinf} into \eqref{def:mathcalL}, we obtain that in the limit in the limit $\tau \to \ii \infty$, the Lagrangian $\mathcal{L}$ and becomes a constant of monodromy, let us then define $ \lim_{\tau \to \ii \infty} \mathcal{L} =: {\sf L}(P_0, \alpha_0, {\sf b})$.
Therefore, the action \eqref{eq:action_recall} is well-defined at $\tau \to\ii \infty$. We note that the constant ${\sf L}$ will not play a role in our calculations, as will be made clear shortly.

Let us now study the action \eqref{eq:action_recall} near $\tau = \tau_{\star}$.  We begin with the behaviour of $u(\tau)$, near its zero $u(\tau_{\star}) = 0$ given in \eqref{asymp:utstar}:
\begin{align}\label{eq:u_asymp_recall}
u(\tau) = e^{\mp \ii \pi/4} \sqrt{\frac{m}{2\pi}} (\tau - \tau_{\star})^{1/2} \left(1 \pm \frac{H_{\star} }{4\pi \ii m} (\tau - \tau_{\star})\right) + \mathcal{O}\left((\tau - \tau_{\star})^{5/2}\right),
\end{align}
where $H_{\star} := H(\tau_{\star})$ is the Hamiltonian evaluated at $\tau = \tau_{\star}$. 
Furthermore, using the behaviour of the Weierstrass $\wp$-function near $z=0$ as specified above, we obtain the following expression near $\tau = \tau_{\star}$
\begin{align}
    m^2 \wp(2u(\tau)\vert \tau) =\frac{\ii \pi  m}{2 (\tau-\tau_{\star})}\mp \frac{H_{\star}}{4}+\mathcal{O}(\tau - \tau_{\star}).\label{asymp:Pu}
\end{align}
From \eqref{def:uv}, near $\tau_{\star}$ we have the expression:
\begin{align}\label{eq:v_asymp}
v(\tau)^2 = \left(2\pi \ii \frac{du}{d\tau} \right)^2 = \frac{i \pi  m}{2 (\tau-\tau_{\star})}\pm \frac{3 H_{\star}}{4} +\mathcal{O}(\tau - \tau_{\star}).
\end{align}

Substituting the equations \eqref{eq:v_asymp} and \eqref{asymp:Pu} in the function $\mathcal{L}(\tau)$ \eqref{def:mathcalL}, we get the following expansion for the Lagrangian near $\tau= \tau_{\star}$:
\begin{align}
    \mathcal{L}(\tau) 
    &=\frac{m}{2 (\tau-\tau_{\star})}\mp\frac{\ii H_{\star}}{4 \pi } + \mathcal{O}(\tau- \tau_{\star}). \label{lag:asymp}
\end{align}

Notice that the above expression is singular near $\tau = \tau_{\star}$, implying that the action \eqref{eq:action_recall} is ill-defined at this point. We will now show that the term $(2-\alpha_0) \log \theta_1(2 u(\tau))$ effectively regularizes the action, {\it i.e}, 
\begin{align}\lim_{u(\tau) \rightarrow 0} \left(\widetilde{\phi}(2u(\tau), \tau) + (2-\alpha_0) \log \theta_1(2 u(\tau))\right) = \text{finite}. \label{FINITE} \end{align}

Let us now study the expansion of $\log \theta_1(2u(\tau))$ near $\tau = \tau_{\star}$. Near $z=0$, the function $\theta_1(z) = \theta'(0) z + \frac{\theta'''(0)}{6} z^3 + \mathcal{O}(z^5)$. In turn, we get
\begin{align}
    \log \theta_1(z) = \log(\theta_1'(0)) + \log(z) + \frac{\theta_1'''(0)}{6 \theta_1'(0)} z^2 + \mathcal{O}(z^4)\\
    \mathop{=}^{\eqref{def:ell-eta},\eqref{def:ell-eta-p}}  \log(2\pi) + 3\log\eta(\tau) + \log(z) -\eta_1(\tau) z^2 + \mathcal{O}(z^4).
\end{align}
Substituting $z=2 u(\tau)$ and the expansion of $u(\tau)$ in \eqref{asymp:Pu} into the equation above, we get
\begin{align}
    (2-\alpha_0) \log \theta_1(2 u(\tau)) &= (2-\alpha_0)\log(2\pi) + 3(2-\alpha_0)\log\eta(\tau_{\star})+\frac{(2-\alpha_0)}{2} \log \left(\frac{(-2\ii m)}{\pi }  (\tau-\tau_{\star})\right)\\
    &+ \frac{2 \ii m (2-\alpha_0) \eta_1(\tau_{\star}) (\tau-\tau_{\star})}{\pi } + \mathcal{O}\left((\tau-\tau_{\star})^2\right).
\end{align}
Let us now define the following function
\begin{align}
  {\sf F}(\tau)&:=  (2-\alpha_0) \partial_{\tau}\log\theta_1(2 u(\tau))+ 6(\alpha_0-1) \partial_{\tau} \log\eta(q).
\end{align}
The above expression near $\tau = \tau_{\star}$ reads
\begin{align}
  {\sf F}(\tau) =   \frac{(2-\alpha_0)}{2 (\tau - \tau_{\star})} -\frac{\ii \eta_1(\tau_{\star})(4 (\alpha_0-2) m+3 \alpha_0)}{\pi }+ \mathcal{O}(\tau - \tau_{\star}).\label{reg:asymp}
\end{align}
Setting $m =(\alpha_0 -2)$, we can write
\begin{align}
    {\sf F}(\tau)= -\frac{m}{2 (\tau - \tau_{\star})} - \frac{\ii \eta_1(\tau_{\star}) (m (4 m+3)+6)}{\pi } + \mathcal{O}(\tau - \tau_{\star}).\label{Fwithm}
\end{align}
Now observe that the first two terms in the RHS of the expression \eqref{rel:phi-phitil} can be re-expressed as
\begin{align}
    \widetilde{\phi}(2u(\tau), \tau)  -m \log \theta_1(2 u(\tau))+ 6(m+1)\log \eta(q) \mathop{=}^{\eqref{Fwithm}} \int_{\ii \infty}^{\tau} \left(\mathcal{L}(\tau) + {\sf F}(\tau) \right) d\tau', \label{phi+theta}
\end{align}
and near $\tau = \tau_{\star}$, substituting \eqref{lag:asymp}, \eqref{Fwithm} we get
\begin{align}
        &\mathcal{L}(\tau) + {\sf F}(\tau) = -\frac{\ii (2 \eta_{1}(\tau_{\star}) (m (4 m+3)+6)+H_{\star})}{4 \pi } + \mathcal{O}(\tau - \tau_{\star}).
\end{align}
The expression \eqref{phi+theta} is therefore well-defined near $\tau = \tau_{\star}$ and \eqref{FINITE} holds.

Finally, in the limit $\tau \to \ii \infty$, using the expressions \eqref{def:elltheta1}, \eqref{def:ell-eta}, we notice that $(2-\alpha_0) \partial_{\tau}\log\theta_1(2 u(\tau))+ 6(\alpha_0-1) \partial_{\tau} \log\eta(q)$ is a constant that depends only on monodromy. We can therefore define $(2-\alpha_0) \lim_{\tau \to \ii \infty}\partial_{\tau}\log\theta_1(2 u(\tau))+ 6(\alpha_0-1) \lim_{\tau \to \ii \infty}\partial_{\tau} \log\eta(q) = {\sf F}_0(\alpha_0, P_0, {\sf b})$. This parameter ${\sf F}_0$ will have no role in our final computation.

\medskip 
\noindent
{\bf Step 3: Computing $\tau$ derivative of $\phi(\tau)$ at $\tau = \tau_{\star}$:}
\medskip

\noindent
We can now compute the $\tau_{\star}$ derivative of the RHS of the expression \eqref{rel:phi-phitil} as follows:
\begin{align}
   &\partial_{\tau_{\star}} \lim_{u(\tau) \rightarrow 0} \left(\widetilde{\phi}(2u(\tau), \tau) + (2-\alpha_0) \log \theta_1(2 u(\tau)) + 6(\alpha_0 -1)\log \eta(q)\right) \\
   &= \mathcal{L}(\tau_{\star}) + {\sf F}(\tau_{\star}) + \int_{\ii \infty}^{\tau_{\star}} d\left(4\pi\ii \partial_{\tau} u(\tau) \partial_{\tau_{\star}}u(\tau)\right) +\partial_{\tau_{\star}}{\sf F}(\tau) d\tau  \\
   &\mathop{=}^{\eqref{eq:u_asymp_recall},\eqref{lag:asymp}, \eqref{reg:asymp}} \pm \frac{\ii H_{\star}}{2 \pi }+ \mathcal{O}(\tau-\tau_{\star}). \label{Hstarcal}
\end{align}
To obtain the second line of the above expression we note that $u(\tau)$ depends on $\tau_{\star}$ and use the Leibniz rule. The final line of the above expression uses the expression
$$\partial_{\tau_{\star}} \mathsf{F}(\tau)
= -\partial_{\tau} \!\left( m\, \partial_{\tau_{\star}} \log \theta_1\big(2 u(\tau)\big)
+ 6(m+1)\, \partial_{\tau_{\star}} \log \eta(q) \right),$$
which we then evaluate at $\tau = \tau_{\star}$. Finally, using the expansion of $u(\tau)$ given in \eqref{eq:u_asymp_recall}, we find the following expressions
\begin{align}
    4\pi\ii \partial_{\tau} u(\tau) \partial_{\tau_{\star}}u(\tau) &= -\frac{m}{2 (\tau-\tau_{\star})}+\frac{3 \ii H_{\star}}{4 \pi },\\
    - m \partial_{\tau_{\star}} \log \theta_1(2 u(\tau))&=  \left(\frac{m}{
    2(\tau-\tau_{\star})}+\frac{\ii m \eta_1(\tau_{\star}) (4 m+3)}{ 2\pi }\right),
\end{align}
which are substituted in the second line of \eqref{Hstarcal} to obtain the final expression.

With \eqref{Hstarcal}, the $\tau$-derivative of the expression \eqref{rel:phi-phitil} at $\tau = \tau_{\star}$ becomes
\begin{align}
  &\partial_{\tau} \left(\phi(q; \alpha_0-2, P_0) - \frac{2(\alpha_0+1)}{3} \log (2\pi)\right)\Big|_{\tau = \tau_{\star}} \\
 & =  \partial_{\tau}\phi(q; \alpha_0-2, P_0) \Big|_{\tau = \tau_{\star}}= \pm \frac{\ii H(\tau_{\star})}{2\pi}.
\end{align}
Therefore, the accessory parameter of the Lam\'e equation \eqref{thm43:Lame} is 
\begin{align}\label{eq:acc_final}
\frac{\pi \ii}{2} \partial_{\tau} \phi(q; \alpha_0-2, P_0)\Big|_{\tau = \tau_{\star}}  = \mp \frac{H(\tau_{\star})}{4}.
\end{align}
Note that $H(\tau_{\star})\equiv H({\tau_{\star},\alpha_0, P_0})$, and the factor of $1/4$ arises from our choice of the central charge $c = 1 + 6(2/\gamma + \gamma/2)^2$ as opposed to $c = 1 + 6(1/b + b)^2$ elsewhere in the literature\footnote{It is important to note this while comparing these expressions to the results in \cite{BGG2021}.}. This completes the proof of part (2) of Theorem~\ref{prop:sc_Lame}.

\begin{remark}\label{corr:acc-par-alt}
    The accessory parameter has the form 
\begin{align}\label{corr:acc3}
    \frac{\pi \ii}{2}  \partial_\tau {\phi}(q;\alpha_0,P_0)= \left(-\frac{\partial_{z}^2\widetilde{\Gamma}(z;\alpha_0, P_0, q)}{\widetilde{\Gamma}(z;\alpha_0, P_0, q)}+\frac{\alpha_0}{4} \left(\frac{\alpha_0}{4}-1\right) \wp(z) \right),
\end{align}
where $\widetilde{\Gamma}(z;\alpha_0, P_0, q)$ is given by \eqref{thm43:phi_intro}. Since $\widetilde{\Gamma}$ has the explicit expression \eqref{thm43:Lame}, the accessory parameter $\frac{\pi\ii}{2}\partial_\tau \phi(q)$ can then be written in terms of the Cameron-Martin shift and other elementary functions using the formula \eqref{corr:acc3}. it might be interesting to compare the expressions for the accessory parameter given by the formula above and from the $\tau$-derivative of the expression \eqref{def:phiq}. We postpone this analysis for future work. 
\end{remark}

{\begin{proposition}\label{prop:Gandphi}
The semi-classical limit of the undeformed conformal block $\mathcal{G}_{\gamma, \alpha}^{P}(q)$ is related to $\phi(q;\alpha_0, P_0)$ defined in \eqref{eq:subsequential_limit} as the leading order behavior of the deformed conformal block with $\chi = \frac{\gamma}{2}$ in the limit $\gamma\to 0$ as
\begin{align}
\lim_{\gamma\to 0} \gamma^2 \log  \mathcal{G}_{\gamma, P}^{\alpha}(q)&=  \phi(q; \alpha_0, P_0)  - \frac{P_0^2}{2}   \log q - \left(\frac{\alpha_0(\alpha_0-4) }{2} \right) \log \eta(q) \\ &+ \frac{\alpha_0 (\alpha_0+4)}{6} \log(2\pi) - \mathcal{B}(\alpha_0,P_0). \label{rel:GandPhi}
\end{align}
The parameters $\alpha_0 \in (-4,2)$, $P_0\in \mathbb{R}$, and $\mathcal{B}(\alpha_0, P_0)$ is a constant depending only on the monodromy data.
\end{proposition}

\begin{proof}
Recall that the semi-classical limits of the deformed and undeformed conformal blocks in \eqref{eq:def-block}, \eqref{eq:q-block} are related through \eqref{rel:def_undef}:
\begin{align}
\lim_{z \rightarrow 0} \lim_{\gamma \rightarrow 0} \gamma^2 \log \left(\theta_{1}(z)^{\frac{\chi}{2}\left(\chi-\alpha\right)} \psi^\alpha_\chi(z, \tau)\right) &= \lim_{z \rightarrow 0} \lim_{\gamma \rightarrow 0} \gamma^2 \log \left(\theta_{1}(z)^{\frac{\chi}{2}\left(\chi-\alpha\right)} \cW(q) \EE\left[\mathcal{V}_{\chi, P}^{\alpha}(z, q)^{-\frac{\alpha}{\gamma} + \frac{\chi}{\gamma}}\right]\right) \nonumber \\
&= \lim_{\gamma \rightarrow 0} \lim_{z \rightarrow 0} \gamma^2 \log \left(\theta_{1}(z)^{\frac{\chi}{2}\left(\chi-\alpha\right)} \cW(q) \EE\left[\mathcal{V}_{\chi, P}^{\alpha}(z, q)^{-\frac{\alpha}{\gamma} + \frac{\chi}{\gamma}}\right]\right) \nonumber \\
&= \lim_{\gamma \rightarrow 0} \gamma^2 \log \mathcal{G}_{\gamma, P}^{\alpha-\chi}(q) + \lim_{\gamma \rightarrow 0} \gamma^2 \log Z^{\alpha-\chi}_{\gamma, P} + \lim_{\gamma \to 0} \gamma^2 \log \mathcal{W}(q). \label{eq:we_are_all_the_same_recall}
\end{align}

{We first note that the existence of the semi-classical limit of the undeformed block is proved in Theorem~\ref{thm:ex_uni}\footnote{In the proof of Theorem~\ref{thm:ex_uni}, the existence of the semi-classical limit of the undeformed block is proved in a direct way, while its uniqueness follows from the uniqueness of the deformed conformal blocks and the relation \eqref{eq:we_are_all_the_same_recall}.}. Secondly, to obtain the last line we use the fact that the limits $z \to 0$ and $\gamma \to 0$ for the above expression are commutative. This statement can be proved in the following way. We have an alternative description of the conformal block \eqref{eq:hatpsigir} through Girsanov's theorem, that enables us to factor the $z$-dependence and $\gamma$-dependence. Then, one can follow the arguments of the proof of Proposition~\ref{prop:limit_commutativity} and see that the expression for the semi-classical limit of the deformed conformal block is analytic in $z$, as the $z$-dependence of the block comes entirely through trigonometric functions. It then follows that the limits commute.}

Let us now study the equation \eqref{eq:we_are_all_the_same_recall} for $\chi = \frac{\gamma}{2}$. With the rescaling $\alpha = \frac{\alpha_0}{\gamma}, P = \frac{P_0}{\gamma}$, the equation \eqref{eq:we_are_all_the_same_recall} becomes
\begin{align}
 \phi(q;\alpha_0,P_0) \mathop{=}^{\eqref{eq:subsequential_limit}} \lim_{z \rightarrow 0} \lim_{\gamma \rightarrow 0} \gamma^2 \log \left( \psi^\alpha_{\gamma/2}(z, \tau)\right) 
&= \lim_{\gamma \rightarrow 0} \gamma^2 \log \mathcal{G}_{\gamma, P_0}^{\alpha_0}(q) + \lim_{\gamma \rightarrow 0} \gamma^2 \log Z^{\alpha_0}_{\gamma, P_0} + \lim_{\gamma \to 0} \gamma^2 \log \mathcal{W}(q)\\
\label{eq:we-are-all-g2}
\end{align}
We now expand the last two terms in the RHS of the expression above.
Firstly, the partition function can be evaluated by using the asymptotic properties of double Gamma functions (Proposition~\ref{prop:AsymptoticsOfA}), and we get
\begin{align}
    \lim_{\gamma \rightarrow 0} \gamma^2 \log Z^{\alpha}_{\gamma, P}(q)=  \frac{\alpha_0^2}{24} \log q+\alpha_0^2 \log\eta(q) + \mathcal{B}(P_0,\alpha_0),\label{GPhider:Z}
\end{align}
where $\mathcal{B}(\alpha_0,P_0)$ does not depend on $q$.
Secondly, from \eqref{asymp1:prefac} we obtain the expression 
\begin{align}
\lim_{\gamma\to 0} \gamma^2 \log \cW(q)  &= \left(\frac{P_0^2}{2} - \frac{\alpha_0^2}{24}\right) \log q  - \frac{\alpha_0 (4+ \alpha_0)}{6} \log \theta_1'(0)\\
& = \left(\frac{P_0^2}{2} - \frac{\alpha_0^2}{24}\right) \log q  - \frac{\alpha_0 (4+ \alpha_0)}{6} \log(2 \pi)- \frac{\alpha_0 (4+ \alpha_0)}{2} \log \eta(q).\label{GPhider:W}
\end{align}
To obtain the last line of the above expression, we used the identity \eqref{def:ell-eta}: $\theta_1'(0) = 2\pi \eta(q)^3$. Substituting \eqref{GPhider:Z} and \eqref{GPhider:W} in the expression \eqref{eq:we-are-all-g2}, we get the desired expression
\begin{align}
    \phi(q;\alpha_0, P_0) = \lim_{\gamma \rightarrow 0} \gamma^2 \log \mathcal{G}_{\gamma, P_0}^{\alpha_0}(q) + \frac{P_0^2}{2} \log q + \frac{\alpha_0(\alpha_0 - 4)}{2} \log \eta(q) - \frac{\alpha_0 (4+ \alpha_0)}{6} \log (2\pi) + \mathcal{B}(\alpha_0, P_0).
\end{align}

\end{proof}


\section{Semi-classical limit of  undeformed conformal blocks}\label{sec:semi-classical_limit}

 The goal of this section is to prove our main result Theorem~\ref{thm:zamolodchikov}, which we present towards the end of this section. We begin the section with Theorems \ref{thm:ex_uni} and \ref{thm:radius} that establish the existence and analyticity of the semi-classical limit of the undeformed conformal block as a function of the modular parameter $q$ respectively. We then use these theorems along with the results from the previous section to complete the proof of Theorem~\ref{thm:zamolodchikov}.

\begin{theorem}\label{thm:ex_uni}
The semi-classical limit $\gamma \to 0$ of the one-point conformal block on the torus \eqref{eq:def-block} exists for {$q \in [0,1)$}, $\alpha_0 \in (-4,2)$, $P_0 \in \mathbb{R}$. Specifically, the limit
\begin{align}\label{eq:thm41_limit}
\lim_{\gamma \to 0} \gamma^2 \log \mathcal{G}^{\alpha_0/\gamma}_{\gamma, P_0/\gamma}(q)
\end{align}
exists and admits the following explicit representation. Define the function
\begin{align}\label{def:mathcalM}
\mathcal{M}_{\gamma}(P_0, \alpha_0, q) := \gamma^2 \log \mathbb{E}\left[e^{\frac{\alpha_0}{2\gamma} F(0;q)} \mathcal{Q}(q) \mathcal{I}_{\gamma}^{-\frac{\alpha_0}{\gamma^2}}\right] - \gamma^2 \log \mathbb{E}\left[\mathcal{I}_{\gamma}^{-\frac{\alpha_0}{\gamma^2}}\right],
\end{align}
where $F(0;q)$ and $\mathcal{Q}(q)$, are defined in \eqref{def:YtauN}, \eqref{eq:Qq} respectively and,
\begin{align}\label{eq:I_gamma_def}
\mathcal{I}_{\gamma} := \int_0^1 (2\sin(\pi x))^{-\frac{\alpha \gamma}{2}} e^{\pi \gamma P x} e^{\frac{\gamma}{2} Y(x)} dx.
\end{align}
{As $\gamma \to 0$, the limit of $\mathcal{M}_{\gamma}(P_0, \alpha_0, q)$ is well-defined}, and furthermore, the semi-classical limit of the conformal block is given by
\begin{align}\label{eq:thm41}
\lim_{\gamma \to 0} \gamma^2 \log \mathcal{G}^{\alpha_0/\gamma}_{\gamma, P_0/\gamma}(q) = -\alpha_0 \mathfrak{Z}(\alpha_0, P_0, q) + \lim_{\gamma \to 0} \mathcal{M}_{\gamma}(P_0, \alpha_0, q),
\end{align}
where
\begin{align}\label{def:mathfrakZ}
\mathfrak{Z}(\alpha_0, P_0, q) &= \log\left(q^{\frac{\alpha_0}{8}} \int^{1}_0 |\theta_1(x)|^{-\frac{\alpha_0}{2}} e^{\pi P_0 x} dx\right) - \log\left(\int^{1}_0 (2\sin(\pi x))^{-\frac{\alpha_0}{2}} e^{\pi P_0 x} dx\right) \nonumber\\
&\quad + \alpha_0 \log(q^{\frac{1}{12}} \eta(q)^{-1}).
\end{align}
\end{theorem}

\begin{theorem}[Radius of Convergence]\label{thm:radius}
There exists a constant $r_0 > 0$, independent of the parameters $\alpha_0 \in (0,2)$, $P_0 \in \mathbb{R}$, such that the semi-classical limit
\begin{align}\label{eq:semiclassical_limit_def}
\lim_{\gamma \to 0} \gamma^2 \log \mathcal{G}^{\alpha_0/\gamma}_{\gamma, P_0/\gamma}(q)
\end{align}
exists when $q\in (0,1)$ and could be extended as an analytic function of $q$ for all $|q| < r_0$. Here, $\mathcal{G}^{\alpha}_{\gamma, P}(q)$ denotes the analytic extension of the torus conformal block as given in Definition~\ref{def:analytic_extension}.
\end{theorem}

\begin{corollary}[Analyticity of the Accessory Parameter]\label{cor:accessory_analytic}

For $q \in (0,1)$, the accessory parameter of Lam\'e equation can be expressed as
\begin{align}
    \mathcal{E}= \frac{\pi \ii}{2}\partial_{\tau}\phi(q;\alpha_0, P_0)=  \frac{\pi \ii}{2}\lim_{\gamma\to 0} \gamma^2 \partial_{\tau}\log  \mathcal{G}_{\gamma, P}^{\alpha}(q)- \frac{ \pi^2 P_0^2}{4} - \left(\frac{\alpha_0(\alpha_0-4) }{8} \right) \eta_1(q).\label{eq:accessory_formula}
\end{align}
See \eqref{def:phiq} and \eqref{eq:subsequential_limit} for the definition of $\phi(q;\alpha_0, P_0)$. There exists $r_0 > 0$, independent of $\alpha_0 \in (0,2)$ and $P_0 \in \mathbb{R}$, such that the accessory parameter above is an analytic function of $q$ for all $|q| < r_0$.
\end{corollary}
\begin{proof}
The formula \eqref{eq:accessory_formula} follows from  Proposition \ref{prop:Gandphi}. The analyticity of the accessory parameter $ \frac{\pi \ii}{2}\partial_{\tau}\phi(q;\alpha_0, P_0)$ follows directly from the analyticity of the conformal block $\lim_{\gamma\to 0} \gamma^2 \log  \mathcal{G}_{\gamma, P}^{\alpha}(q)$ established in Theorem~\ref{thm:radius}.
\end{proof}

We now proceed to prove the above two theorems.

\subsection{Proofs of Theorems \ref{thm:ex_uni} and \ref{thm:radius}}

\begin{proof}[Proof of Theorem~\ref{thm:ex_uni}]

We organize the proof into four main steps. In Step 1, we establish uniform bounds on $\mathcal{M}_{\gamma}(P_0, \alpha_0, q)$ using Hölder's inequality. In Step 2, we perform the asymptotic expansion using the Cameron-Martin theorem. In Step 3, we establish the existence of the limit. In Step 4, we derive the explicit formula \eqref{eq:thm41}.

\medskip
\noindent\textbf{Step 1: Uniform Bounds via Hölder's Inequality.}

\medskip
Recall the definition \eqref{eq:def-block}:
\begin{align}\label{eq:G_recall}
\mathcal{G}^{\alpha}_{\gamma, P}(q) = \frac{1}{Z^{\alpha}_{\gamma, P}(q)} \mathbb{E}\left[\left(\int^{1}_{0} |\theta_1(x)|^{-\frac{\alpha\gamma}{2}} e^{\pi\gamma Px} e^{\frac{\gamma}{2} Y(x;q)} dx\right)^{-\frac{\alpha}{\gamma}}\right].
\end{align}
Using the probabilistic expression \eqref{eq:Z-normalizatoin}, we can rewrite it as
\begin{align}\label{rewrite:G}
\mathcal{G}^{\alpha}_{\gamma, P}(q) = g(q) \frac{\mathbb{E}\left[\left(\mathcal{F}^{(\alpha)}_{\gamma, P}(q)\right)^{-\frac{\alpha}{\gamma}}\right]}{\mathbb{E}\left[\left(\mathcal{F}^{(\alpha)}_{\gamma, P}(0)\right)^{-\frac{\alpha}{\gamma}}\right]},
\end{align}
where
\begin{align}\label{gandF}
\mathcal{F}^{(\alpha)}_{\gamma,P}(q) := q^{\frac{\alpha \gamma}{8}} \int^{1}_{0} |\theta_1(x)|^{-\frac{\alpha\gamma}{2}} e^{\pi\gamma Px} e^{\frac{\gamma}{2} Y(x;q)} dx, \quad g(q) = q^{\frac{1}{12}(-\frac{\alpha \gamma}{2} + \alpha^2 + 1)} \eta(q)^{-\alpha^2 - 1 + \frac{\alpha \gamma}{2}}.
\end{align}
With the scaling $\alpha = \alpha_0/\gamma$, $P = P_0/\gamma$, our goal is to show that
\begin{align}\label{eq:goal_bounds}
-\infty < \liminf_{\gamma \to 0} \gamma^2 \log \mathbb{E}\left[\left(\mathcal{F}^{(\alpha_0/\gamma)}_{\gamma, P_0/\gamma}(q)\right)^{-\frac{\alpha_0}{\gamma^2}}\right] \leq \limsup_{\gamma \to 0} \gamma^2 \log \mathbb{E}\left[\left(\mathcal{F}^{(\alpha_0/\gamma)}_{\gamma, P_0/\gamma}(q)\right)^{-\frac{\alpha_0}{\gamma^2}}\right] < \infty.
\end{align}

\medskip
\noindent\textit{Case 1: $\alpha_0 \in (-4, 0)$.}
In this case, the exponent $-\alpha_0/\gamma^2 > 0$, so we need to bound positive moments of $\mathcal{F}^{(\alpha)}_{\gamma, P}(q)$.
{\color{blue}For any positive integer $N$, we can write $\mathbb{E}\left[\left(\mathcal{F}^{(\alpha)}_{\gamma, P}(q)\right)^{N}\right]$ as a Selberg-type integral:
\begin{align}\label{eq:Selberg_F}
\mathbb{E}\left[\left(\mathcal{F}^{(\alpha)}_{\gamma, P}(q)\right)^{N}\right] = \int_{[0,1]^N} \prod_{i<j}^N |\theta_1(x_i - x_j)|^{-\frac{\gamma^2}{2}} \prod_{i=1}^N |\theta_1(x_i)|^{-\frac{\alpha_0}{2}} e^{\pi P_0 x_i} dx_i.
\end{align}

Noticing $\theta_1(z) = \mathfrak{p}(z;\tau) \sin(\pi z)$ where $c_1 \leq |\mathfrak{p}(z;\tau)| \leq c_2$ as $\mathrm{Re}(z)$ varies in $(0,1)$ and $\mathrm{Im}(z)$ varies in $[0, \frac{3}{4}\mathrm{Im}(\tau))$, and $q \in (0,1)$, the absolute value of the above integral is bounded above and below by constant multiples (the constant would be bounded below by $c_2^{-\gamma^2 N(N-1)/4}$ and above by $c_1^{-\gamma^2 N(N-1)/4}$) of
\begin{align}\label{eq:Selberg_sin}
\int_{[0,1]^N} \prod_{i<j}^N |\sin(\pi(x_i - x_j))|^{-\frac{\gamma^2}{2}} \prod_{i=1}^N (\sin(\pi x_i))^{-\frac{\alpha_0}{2}} e^{\pi P_0 x_i} dx_i.
\end{align}

The above integral has an explicit expression as stated in \eqref{eq:C_expression} of Remark~\ref{rem:Normalization}. As $N$ grows, the ratios of Gamma functions in \eqref{eq:C_expression} grow as $\exp(cN^2\gamma^2)$.

When $\alpha_0 < 0$, $\left|\mathbb{E}\left[\left(\mathcal{F}^{(\alpha)}_{\gamma, P}(q)\right)^{-\alpha_0/\gamma^2}\right]\right|$ can be bounded by $\left|\mathbb{E}\left[\left(\mathcal{F}^{(\alpha)}_{\gamma, P}(q)\right)^N\right]\right|$ for some integer $N = \lfloor -\alpha_0/\gamma^2 \rfloor {\color{blue}+ 1}$. The large $N$ asymptotics of the above integral (obtained through the asymptotics of the ratios of Gamma functions in \eqref{eq:C_expression}) shows that $\left|\mathbb{E}\left[\left(\mathcal{F}^{(\alpha)}_{\gamma, P}(q)\right)^{-\alpha_0/\gamma^2}\right]\right|$ is bounded by $\exp(c/\gamma^2)$ for some constant $c = c(\alpha_0, P_0, r) > 0$ when $\alpha_0 < 0$, $P_0 \in \mathbb{R}$, and $q \in (0,1)$.}

\medskip
\noindent\textit{Case 2: $\alpha_0 \in [0, 2)$.}
We need to bound negative moments of the GMC integral. Define
Recall that $Y(x;q)= Y(x) + F(x;q)$ where $F(x;q)$ is a smooth Gaussian field and furthermore, $\mathbb{E}[Y(x;q)Y(y;q)] \geq \mathbb{E}[Y(x)Y(y)] -2\log (\prod_{k\geq 1}(1+q^k)^2)$. By Slepian's inequality, we get $$\mathbb{E}[(\mathcal{F}^{(\alpha)}_{\gamma,P}(q))^{-\alpha_0/\gamma^2}]\leq \prod_{k}(1+q^k)^{-\alpha_0/\gamma^2}\mathbb{E}\Big[\Big(\int^1_0 (2\sin (\pi x))^{-\frac{\alpha\gamma}{2}} e^{\pi\gamma Px} :e^{\frac{\gamma}{2} Y(x)}:  dx\Big)^{-\alpha_0/\gamma^2}\Big].$$ Applying  Lemma~\ref{lem:negative_moments} for $\alpha_0 \in (0,2]$ and $\gamma \in (0,1]$ to this upper bound, we get 
\begin{align}\label{eq:neg_moment_bound_application}
\mathbb{E}\left[(\mathcal{F}^{(\alpha)}_{\gamma,P}(q))^{-\alpha_0/\gamma^2}\right] \leq \exp\left(\frac{C\alpha^2_0}{\gamma^2}\right),
\end{align}
where $C > 0$ is a constant independent of $\alpha_0$ and $\gamma$.  
Taking logarithms and multiplying by $\gamma^2$:
\begin{align}\label{eq:log_upper_alpha_large}
\gamma^2 \log \mathbb{E}\left[(\mathcal{F}^{(\alpha)}_{\gamma,P}(q))^{-\alpha_0/\gamma^2}\right] \leq  C\alpha_0^2.
\end{align}
This establishes the upper bound for $\alpha_0 > 2$.

\medskip
\noindent\textbf{Step 2: Asymptotic Expansion via Cameron-Martin Theorem.}

\medskip
Following Step 1 of Theorem~\ref{prop:sc_Lame}, we perform a detailed asymptotic expansion of
\begin{align}\label{eq:main_exp_recall}
\mathbb{E}[(\mathcal{F}^{(\alpha)}_{\gamma,P}(q))^{-\alpha_0/\gamma^2}] = \mathbb{E}\left[e^{\frac{\alpha_0}{2\gamma} F(0;q)} \mathcal{Q}(q) \mathcal{I}_{\gamma}^{-\frac{\alpha_0}{\gamma^2}}\right].
\end{align}
We introduce auxiliary random variables $T^{(1)}_n = \sum_{m=1}^{\infty} a_{n,m} q^{nm}$ and $T^{(2)}_n = \sum_{m=1}^{\infty} b_{n,m} q^{nm}$ as in \eqref{eq:T12_def_heavy}. After performing the change of variables and computing the Gaussian integrals as in Steps 1a and 1b of Theorem~\ref{prop:sc_Lame}, we obtain:
\begin{align}
   \mathbb{E}\left[e^{\frac{\alpha_0}{2\gamma} F_{\tau}(0)} \mathcal{Q}(q) \mathcal{I}_{\gamma}^{-\frac{\alpha_0}{\gamma^2}}\right]&= \exp\left(\frac{\alpha_0^2}{2\gamma^2 }\sum_{n=1}^{\infty}  \frac{\sum_{m=1}^{\infty} q^{2nm}}{n(1 + 2\sum_{m=1}^{\infty} q^{2nm})}\right) \\
   &\quad \times \mathbb{E}\Bigg[  \exp\left(\frac{\sqrt{2}\alpha_0}{\gamma}\sum_{n=1}^{\infty}  \frac{\widehat{a}_n}{\sqrt{n}}\left(  \frac{\sum_{m=1}^{\infty} q^{2nm}}{1 + 2\sum_{m=1}^{\infty} q^{2nm}}\right)^{1/2} \right)\\
& \qquad \times \left(\int_{0}^{1} e^{\pi  P_0 x} (2\sin(\pi x))^{-\frac{\alpha\gamma}{2}} e^{\frac{\gamma}{2}Y(x)} dx\right)^{-\frac{\alpha_0}{\gamma^2}}\Bigg] \label{eq:undeformed_expansion}
\end{align}

Introducing the Gaussian variable $\widehat{\Psi}$ in an analogous way to Step 2c in the proof of Theorem~\ref{prop:sc_HJ}:
{
\begin{align}
    \widehat{\Psi} &:= \frac{\sqrt{2}\alpha_0}{\gamma}\sum_{n=1}^{\infty}  \frac{\widehat{a}_n}{\sqrt{n}}\left(  \frac{\sum_{m=1}^{\infty} q^{2nm}}{1 + 2\sum_{m=1}^{\infty} q^{2nm}}\right)^{1/2} - \frac{\alpha_0}{2\gamma \Xi} \int_0^1 e^{\pi P_0 x} (2\sin(\pi x))^{-\frac{\alpha_0}{2}} Y(x) e^{\frac{\gamma}{2} h_{\widehat{\Psi}}(x)} dx,\label{def-undef:Psi-GaussRV}
\end{align}
where $\widehat{a}_n\sim \mathcal{N}\left(0,\sum_{m=1}^{\infty} q^{2nm}\right)$, $h_{\widehat{\Psi}}(x) := \mathbb{E}[\widehat{\Psi} :Y(x):]$, and $\widehat{\Xi} := \int_0^1 e^{\pi P_0 x} (2\sin(\pi x))^{-\frac{\alpha_0}{2}} e^{\frac{\gamma}{2} h_{\widehat{\Psi}}(x)} dx$.}
Applying the Cameron-Martin theorem \eqref{eq:CM_heavy} we get
\begin{align}
&\mathbb{E} \left[ e^{\frac{\alpha_0}{2\gamma} F_{\tau}(0)} \mathcal{Q}(q) \left(\int_{0}^{1} e^{\pi P_0 x} (2\sin(\pi x))^{-\frac{\alpha_0}{2}} e^{\frac{\gamma}{2}Y(x)} dx\right)^{-\frac{\alpha_0}{\gamma^2}}\right] \\
     & =  \exp\left(\frac{\alpha_0^2}{2\gamma^2 }\sum_{n=1}^{\infty}  \frac{\sum_{m=1}^{\infty} q^{2nm}}{n(1 + 2\sum_{m=1}^{\infty} q^{2nm})}\right)  e^{\frac{\mathbb{E}[\widehat{\Psi}^2]}{2}} \widehat{\Xi}^{-\frac{\alpha_0}{\gamma^2}} \\
     &\quad \times\exp\left({ \frac{\alpha_0}{2\gamma \widehat{\Xi}} \int_0^1 e^{\pi P_0 x} (2\sin(\pi x))^{-\frac{\alpha_0}{2}} h_{\widehat{\Psi}}(x) e^{\frac{\gamma}{2} h_{\widehat{\Psi}}(x)} dx}\right)\\
 & \quad \times\mathbb{E}\Big[\left(\int_{0}^{1} e^{\frac{\gamma}{2}Y(x)} d\mu_{\widehat{\Psi}}\right)^{-\frac{\alpha_0}{\gamma^2}} e^{ \frac{\alpha_0}{2\gamma } \int_0^1  Y(x) d\mu_{\widehat{\Psi}}}\Big],\label{eq:after_CM}
\end{align}
where  $d\mu_{\widehat{\Psi}}(x) := \frac{1}{\widehat{\Xi}} e^{\pi P_0 x} (2\sin(\pi x))^{-\frac{\alpha_0}{2}} e^{\frac{\gamma}{2} h_{\widehat{\Psi}}(x)} dx$. By Proposition~\ref{prop:semiclassical_limit}, the final expectation in \eqref{eq:after_CM} converges as $\gamma \to 0$:
\begin{align}\label{eq:expectation_limit}
&\lim_{\gamma \to 0} \mathbb{E}\left[\exp\left(-\frac{\alpha_0}{\gamma^2} \log\left(\int_0^1 :e^{\gamma Y}: d\mu_{\widehat{\Psi}}\right) + \frac{\alpha_0}{2\gamma} \int_0^1 :Y: d\mu_{\widehat{\Psi}}\right)\right] \nonumber\\
&= \mathbb{E}\left[\exp\left(-\frac{\alpha_0}{8} \int_0^1 :Y^2: d\widehat{\mu}_0 + \frac{\alpha_0}{8} \left(\int_0^1 :Y: d\widehat{\mu}_0\right)^2\right)\right],
\end{align}
where $\widehat{\mu}_0$ is the limiting measure of $d\mu_{\widehat{\Psi}}$ as $\gamma \to 0$.

\medskip
\noindent\textbf{Step 3: Existence of the Limit.}

\medskip
From Step 1, we have established that $\gamma^2 \log \mathbb{E}\left[\left(\mathcal{F}^{(\alpha)}_{\gamma, P}(q)\right)^{-\frac{\alpha}{\gamma}}\right]$ is uniformly bounded as $\gamma \to 0$. Therefore, any subsequence has a convergent subsequence. To show that all subsequential limits are the same, we use the uniqueness of the asymptotic expansion show in Step 2.

\medskip
\noindent\textbf{Step 4: Derivation of the Explicit Formula.}

\medskip
From \eqref{rewrite:G}, we have
\begin{align}\label{eq:G_decomposition}
\gamma^2 \log \mathcal{G}^{\alpha}_{\gamma, P}(q) = \gamma^2 \log g(q) + \gamma^2 \log \mathbb{E}\left[\left(\mathcal{F}^{(\alpha)}_{\gamma, P}(q)\right)^{-\frac{\alpha}{\gamma}}\right] - \gamma^2 \log \mathbb{E}\left[\left(\mathcal{F}^{(\alpha)}_{\gamma, P}(0)\right)^{-\frac{\alpha}{\gamma}}\right].
\end{align}

Taking $\gamma \to 0$:

\begin{enumerate}
\item For $g(q)$: From \eqref{gandF} with $\alpha = \alpha_0/\gamma$,
\begin{align}\label{eq:g_limit}
\lim_{\gamma \to 0} \gamma^2 \log g(q) = \lim_{\gamma \to 0} \gamma^2 \left(\frac{1}{12}\left(-\frac{\alpha_0}{2} + \frac{\alpha_0^2}{\gamma^2} + 1\right) \log q + \left(-\frac{\alpha_0^2}{\gamma^2} - 1 + \frac{\alpha_0}{2}\right) \log \eta(q)\right) = -\alpha^2_0 \log (q^{-\frac{1}{12}}\eta(q)) .
\end{align}

\item For $\mathbb{E}\left[\left(\mathcal{F}^{(\alpha)}_{\gamma, P}(q)\right)^{-\frac{\alpha}{\gamma}}\right]$: From \eqref{eq:after_CM} and \eqref{eq:expectation_limit},
\begin{align}
&\gamma^2 \log \mathbb{E}\left[\left(\mathcal{F}^{(\alpha)}_{\gamma, P}(q)\right)^{-\frac{\alpha}{\gamma}}\right]=\gamma^2 \log \mathbb{E}\left[e^{\frac{\alpha_0}{2\gamma} F(0;q)} \mathcal{Q}(q) \mathcal{I}_{\gamma}^{-\frac{\alpha_0}{\gamma^2}}\right] \\
    &=\frac{\alpha_0^2}{2}\sum_{n=1}^{\infty}  \frac{\sum_{m=1}^{\infty} q^{2nm}}{n(1 + 2\sum_{m=1}^{\infty} q^{2nm})}+ \frac{\gamma^2\mathbb{E}[\widehat{\Psi}^2]}{2} - \alpha_0\log \widehat{\Xi} \\
     &\quad +\left({ \frac{\gamma\alpha_0}{2\widehat{\Xi}} \int_0^1 e^{\pi P_0 x} (2\sin(\pi x))^{-\frac{\alpha_0}{2}} h_{\widehat{\Psi}}(x) e^{\frac{\gamma}{2} h_{\widehat{\Psi}}(x)} dx}\right)\\
 & \quad \gamma^2 \log \mathbb{E}\left[\left(\int_{0}^{1} e^{\frac{\gamma}{2}Y(x)} d\mu_{\widehat{\Psi}}\right)^{-\frac{\alpha_0}{\gamma^2}} e^{ \frac{\alpha_0}{2\gamma } \int_0^1  Y(x) d\mu_{\widehat{\Psi}}}\right]\label{eq:F_q_expansion}
\end{align}

\item For $\mathbb{E}\left[\left(\mathcal{F}^{(\alpha)}_{\gamma, P}(0)\right)^{-\frac{\alpha}{\gamma}}\right]$:
\begin{align}\label{eq:F_0_expansion}
\gamma^2 \log \mathbb{E}\left[\left(\mathcal{F}^{(\alpha)}_{\gamma, P}(0)\right)^{-\frac{\alpha}{\gamma}}\right] =  \gamma^2 \log \mathbb{E}\left[\mathcal{I}_{\gamma}^{-\frac{\alpha_0}{\gamma^2}}\right].
\end{align}
\end{enumerate}

Combining these:
\begin{align}\label{eq:combined_limit}
\lim_{\gamma \to 0} \gamma^2 \log \mathcal{G}^{\alpha}_{\gamma, P}(q) =   \lim_{\gamma \to 0} \left(\gamma^2 \log \mathbb{E}\left[e^{\frac{\alpha_0}{2\gamma} F(0;q)} \mathcal{Q}(q) \mathcal{I}_{\gamma}^{-\frac{\alpha_0}{\gamma^2}}\right] - \gamma^2 \log \mathbb{E}\left[\mathcal{I}_{\gamma}^{-\frac{\alpha_0}{\gamma^2}}\right]\right)= \lim_{\gamma \to 0} \mathcal{M}_{\gamma}(P_0, \alpha_0, q).
\end{align}
where $\mathcal{M}_{\gamma}(P_0, \alpha_0, q)$ is defined in \eqref{def:mathcalM}.
Finally, the uniqueness of the semi-classical limit follows from the uniqueness of the limits of the deterministic terms in the expansion of \eqref{eq:F_q_expansion} and the asymptotic expansion of the moment formulas $\mathbb{E}\big[\mathcal{I}_{\gamma}^{-\alpha_0/\gamma^2}\big]$, which prove the uniqueness of the deformed conformal block. This in turn implies the uniqueness of the undeformed conformal block due to \eqref{eq:we_are_all_the_same_recall}. This completes the proof of Theorem~\ref{thm:ex_uni}.

\end{proof}

Let us now show that the semi-classical conformal block has a positive radius of convergence as stated in Theorem \ref{thm:radius}. The main element of the proof is the expression of conformal block that can be obtained using the Cameron-Martin change of variable formula (Girsanov theorem) \cite[Theorem C.5]{ghosal2020probabilistic} such that the GMC measure is made $q$-independent. Such an expression enables analytic control over the $q$-dependent part and helps us obtain the final result.

\begin{proof}[Proof of Theorem \ref{thm:radius}]

We organize the proof into four main steps. In Step 1, we establish uniform bounds (both upper and lower) on the logarithm of the conformal block, demonstrating that the semi-classical limit is well-defined. In Step 2, we perform a detailed asymptotic expansion of the undeformed conformal block using Gaussian calculus techniques. In Step 3, we carry out an analogous expansion for the deformed conformal block. Finally, in Step 4, we relate these two expansions to establish the connection with the Lam\'e equation and complete the proof.

\medskip
\noindent\textbf{Step 1: Uniform Bounds on the Semi-Classical Limit.} We begin by recalling the probabilistic formula for the analytic extension of the torus conformal block from Definition~\ref{def:analytic_extension}:
\begin{align}\label{eq:analytic_ext_recall}
\mathcal{G}^{\alpha}_{\gamma, P}(q) = \frac{1}{Z^{\alpha}_{\gamma, P}} \left(q^{1/6}\eta(q)\right)^{\frac{\alpha}{2}} e^{-\frac{\alpha^2}{8}\mathbb{E}[F(0;q)^2]} \mathbb{E}\left[e^{\frac{\alpha}{2}F(0;q)}\mathcal{Q}(q) \mathcal{I}_{\gamma}^{-\frac{\alpha}{\gamma}}\right].
\end{align}
The field $F(x;q)$ is the smooth Gaussian field defined in \eqref{def:YtauN}, and the random variable $\mathcal{Q}(q)$ is given by \eqref{eq:Qq}:
\begin{align}\label{eq:Q_explicit}
\mathcal{Q}(q) = \exp\left(\sqrt{2}\sum_{m,n=1}^{\infty} q^{nm} \left(\tilde{\alpha}_{n,m}\tilde{\alpha}_n + \tilde{\beta}_{n,m}\tilde{\beta}_n\right) - \sum_{n=1}^{\infty}\left(\sum_{m=1}^{\infty} q^{nm}\tilde{\alpha}_{n,m}\right)^2 - \sum_{n=1}^{\infty}\left(\sum_{m=1}^{\infty} q^{nm}\tilde{\beta}_{n,m}\right)^2\right).
\end{align}
The partition function $Z^{\alpha}_{\gamma, P}$ is given by \eqref{eq:Z-normalizatoin}:
\begin{align}\label{eq:Z_recall}
Z^{\alpha}_{\gamma, P} = q^{\frac{1}{12}\left(\frac{\alpha \gamma}{2} + \frac{\alpha^2}{2} - 1\right)} \eta(q)^{\alpha^2 + 1 - \frac{\alpha \gamma}{2}} \mathbb{E}\left[\mathcal{I}_{\gamma}^{-\frac{\alpha}{\gamma}}\right].
\end{align}
Our goal in this step is to establish the following two bounds. First, we prove the upper bound:
\begin{align}\label{eq:limsup_bound}
\limsup_{\gamma \to 0} \gamma^2 \log \left|\mathbb{E}\left[e^{\frac{\alpha}{2} F(0;q)}\mathcal{Q}(q) \mathcal{I}_{\gamma}^{-\frac{\alpha}{\gamma}}\right]\right| < \infty
\end{align}
for $q \in B_r(0)$ for some $0 < r < 1$. Second, we prove the lower bound:
\begin{align}\label{eq:liminf_bound}
\liminf_{\gamma \to 0} \gamma^2 \log \left|\mathbb{E}\left[e^{\frac{\alpha}{2} F(0;q)}\mathcal{Q}(q) \mathcal{I}_{\gamma}^{-\frac{\alpha}{\gamma}}\right]\right| > -\infty
\end{align}
for $q \in (0,1)$. In fact, we show that the above limit exists when $q\in (0,1)$ and the limiting expression could be analytically extended for all $q\in B_r(0)$.
We observe that under the scaling $\alpha = \alpha_0/\gamma$ and $P = P_0/\gamma$, the quantity $\gamma^2 \log Z^{\alpha}_{\gamma, P}$ has a well-defined limit as $\gamma \to 0$. This follows from Proposition~\ref{prop:Zexplicit} and the asymptotic properties of the double Gamma functions established in Proposition~\ref{prop:AsymptoticsOfA}. Therefore, it suffices to analyze the expectation
\begin{align}\label{eq:main_expectation}
\mathbb{E}\left[e^{\frac{\alpha_0}{2\gamma} F(0;q)}\mathcal{Q}(q) \mathcal{I}_{\gamma}^{-\frac{\alpha_0}{\gamma^2}}\right].
\end{align}

\medskip
\noindent\textit{Upper Bound via H\"older's Inequality.} We first observe that $\mathcal{Q}(q)$ admits the decomposition $\mathcal{Q}(q) = 1 + q\Theta(q)$, where $\Theta(q)$ is a random variable satisfying $\mathbb{E}[|\Theta(q)|^{p_2}] < \infty$ for any $p_2 > 1$ when $0 < q \ll 1$. This decomposition follows from the explicit form \eqref{eq:Q_explicit} and the fact that each term in the exponent contains at least one factor of $q$. Choose exponents $p_1, p_2, p_3 \in (1, \infty)$ satisfying the conjugate relation
\begin{align}\label{eq:holder_exponents}
\frac{1}{p_1} + \frac{1}{p_2} + \frac{1}{p_3} = 1,
\end{align}
and such that the negative moment condition
\begin{align}\label{eq:negative_moment_condition}
\mathbb{E}\left[\mathcal{I}_{\gamma}^{-\frac{\alpha_0 p_3}{\gamma^2}}\right] < \infty
\end{align}
is satisfied. The existence of such exponents follows from the theory of Gaussian multiplicative chaos; specifically, the negative moments of GMC measures are finite when the exponent is sufficiently small relative to $\gamma^{-2}$. Applying H\"older's inequality with these exponents, we obtain
\begin{align}\label{eq:holder_application}
&\left|\mathbb{E}\left[e^{\frac{\alpha_0}{2\gamma} F(0;q)} \Theta(q) \mathcal{I}_{\gamma}^{-\frac{\alpha_0}{\gamma^2}}\right]\right| \nonumber\\
&\quad \leq \mathbb{E}\left[\left|e^{\frac{\alpha_0}{2\gamma} F(0;q)}\right|^{p_1}\right]^{\frac{1}{p_1}} \mathbb{E}\left[|\Theta(q)|^{p_2}\right]^{\frac{1}{p_2}} \mathbb{E}\left[\mathcal{I}_{\gamma}^{-\frac{\alpha_0 p_3}{\gamma^2}}\right]^{\frac{1}{p_3}}.
\end{align}
We now bound each factor on the right-hand side:

\begin{enumerate}[label=(\roman*)]
\item \textit{First factor:} The random variable $F(0;q)$ is Gaussian with variance $$\mathbb{E}[F(0;q)^2] = -4\log|q^{-1/12}\eta(q)|,$$ which is finite and bounded uniformly for $q$ in any compact subset of $(0,1)$. Therefore, using the moment generating function of a Gaussian random variable, we have
\begin{align}\label{eq:first_factor_bound}
\mathbb{E}\left[\left|e^{\frac{\alpha_0}{2\gamma} F(0;q)}\right|^{p_1}\right]^{\frac{1}{p_1}} = \exp\left(\frac{p_1 \alpha_0^2}{8\gamma^2} \mathbb{E}[F(0;q)^2]\right)^{\frac{1}{p_1}} \leq \exp\left(\frac{c_1 \alpha_0^2}{\gamma^2}\right)
\end{align}
for some constant $c_1 = c_1(q) > 0$ that is uniformly bounded for $q \in [0, r]$ for any $r \in (0,1)$.

\item \textit{Second factor:} Since $\Theta(q)$ involves finite linear combinations of products of Gaussian random variables, its moments are finite. We have
\begin{align}\label{eq:second_factor_bound}
\mathbb{E}\left[|\Theta(q)|^{p_2}\right]^{\frac{1}{p_2}} \leq e^{c_2 p_2}
\end{align}
for some constant $c_2 = c_2(q) > 0$.

\item \textit{Third factor:} The negative moments of the GMC integral can be computed explicitly using the following formula (derived in \cite{ang2023derivation}), we have for appropriate parameter ranges
\begin{align}\label{eq:negative_moment_formula}
&\mathbb{E}\left[\left(\int_0^1 (\sin \pi x)^{-\frac{\alpha_0}{2}} e^{\pi P_0 x} e^{\frac{\gamma}{2} Y(x)} dx\right)^{-\frac{\alpha_0 p_3}{\gamma^2}}\right] \nonumber\\
&\quad = e^{\frac{2\pi P_0}{\gamma^2}} \frac{\Gamma\left(\frac{\alpha_0}{4} - \frac{\gamma^2}{4}\right) \Gamma_{\frac{\gamma}{2}}\left(\frac{\alpha_0}{2\gamma}\right) \Gamma_{\frac{\gamma}{2}}\left(-\frac{\alpha_0}{2\gamma}\right) \Gamma_{\frac{\gamma}{2}}\left(\frac{i(P_0+i)}{\gamma} + \frac{-\alpha_0 p_3}{\gamma}\right) \Gamma_{\frac{\gamma}{2}}\left(-\frac{i(P_0+i)}{\gamma}\right)}{\Gamma\left(1 - \frac{\gamma^2}{4}\right)^{\frac{2}{\gamma}\left(\frac{-\alpha_0 p_3}{2\gamma}\right)} \Gamma_{\frac{\gamma}{2}}\left(\frac{1}{\gamma}\right) \Gamma_{\frac{\gamma}{2}}\left(\frac{1-\alpha_0}{\gamma}\right) \Gamma_{\frac{\gamma}{2}}\left(\frac{i(P_0+i)}{\gamma} + \frac{-\alpha_0 p_3}{2\gamma}\right) \Gamma_{\frac{\gamma}{2}}\left(-\frac{i(P_0+i)}{\gamma} - \frac{-\alpha_0 p_3}{2\gamma}\right)}.
\end{align}
Using the asymptotic expansion of the double Gamma function from \eqref{eq:DoubleGammaAsymptote}, which states that $\Gamma_{\frac{\gamma}{2}}(\xi/\gamma) \leq \exp(|\psi(\xi)|/\gamma^2)$ for some function $\psi: \mathbb{C} \to \mathbb{C}$, we conclude that the right-hand side of \eqref{eq:negative_moment_formula} grows at most as $\exp(C/\gamma^2)$ for some constant $C = C(\alpha_0, P_0, q, p_3) > 0$.
\end{enumerate}
Combining these three bounds with \eqref{eq:holder_application}, we obtain
\begin{align}\label{eq:upper_bound_conclusion}
\left|\mathbb{E}\left[e^{\frac{\alpha_0}{2\gamma} F(0;q)} \Theta(q) \mathcal{I}_{\gamma}^{-\frac{\alpha_0}{\gamma^2}}\right]\right| \leq \exp\left(\frac{C'}{\gamma^2}\right)
\end{align}
for some constant $C' = C'(\alpha_0, P_0, q, p_1, p_2, p_3) > 0$. This establishes the upper bound \eqref{eq:limsup_bound}.

\medskip
\noindent\textit{Lower Bound via Reverse H\"older Inequality.} For the lower bound, we apply H\"older's inequality in the reverse direction. We have
\begin{align}\label{eq:reverse_holder}
\mathbb{E}\left[\mathcal{I}_{\gamma}^{-\frac{\alpha_0}{\gamma^2 p_3}}\right] &\leq \mathbb{E}\left[\left|e^{-\frac{\alpha_0}{2\gamma p_3} F(0;q)}\right|^{p_1}\right]^{\frac{1}{p_1}} \mathbb{E}\left[\left|\mathcal{Q}(q)^{-1/p_3}\right|^{p_2}\right]^{\frac{1}{p_2}} \nonumber\\
&\quad \times \left|\mathbb{E}\left[e^{\frac{\alpha_0}{2\gamma} F(0;q)}\mathcal{Q}(q) \mathcal{I}_{\gamma}^{-\frac{\alpha_0}{\gamma^2}}\right]\right|^{\frac{1}{p_3}}.
\end{align}
Rearranging this inequality, we obtain
\begin{align}\label{eq:lower_bound_rearranged}
\left|\mathbb{E}\left[e^{\frac{\alpha_0}{2\gamma} F(0;q)}\mathcal{Q}(q) \mathcal{I}_{\gamma}^{-\frac{\alpha_0}{\gamma^2}}\right]\right|^{\frac{1}{p_3}} &\geq \mathbb{E}\left[\left|e^{-\frac{\alpha_0}{2\gamma p_3} F(0;q)}\right|^{p_1}\right]^{-\frac{1}{p_1}} \nonumber\\
&\quad \times \mathbb{E}\left[\left|\mathcal{Q}(q)^{-1/p_3}\right|^{p_2}\right]^{-\frac{1}{p_2}} \mathbb{E}\left[\mathcal{I}_{\gamma}^{-\frac{\alpha_0}{\gamma^2 p_3}}\right].
\end{align}
Since $|e^{-F(0;q)}|$ and $|\mathcal{Q}(q)^{-1}|$ are exponentials of linear combinations of Gaussian random variables with finite variance, their moments are controlled. Specifically:
\begin{align}
\mathbb{E}\left[\left|e^{-\frac{\alpha_0}{2\gamma p_3} F(0;q)}\right|^{p_1}\right]^{-\frac{1}{p_1}} \geq \exp\left(-\frac{\alpha_0^2 p_1 \psi_1}{\gamma^2 p_3^2}\right), 
\qquad \mathbb{E}\left[\left|\mathcal{Q}(q)^{-1/p_3}\right|^{p_2}\right]^{-\frac{1}{p_2}} \geq \exp\left(-\frac{p_2 \psi_2}{p_3^2}\right), \label{eq:second_lower_factor}
\end{align}
for some constants $\psi_1 = \psi_1(q)$ and $\psi_2 = \psi_2(q)$ that are uniformly bounded for $q \in [0, r]$. The third factor $\mathbb{E}[\mathcal{I}_{\gamma}^{-\alpha_0/(\gamma^2 p_3)}]$ grows as $\exp(c_3/\gamma^2)$ for some constant $c_3 = c_3(\alpha_0, P_0) > 0$, as follows from its explicit formula in terms of double Gamma functions.
Combining these estimates, we conclude that
\begin{align}\label{eq:lower_bound_conclusion}
\left|\mathbb{E}\left[e^{\frac{\alpha_0}{2\gamma} F(0;q)}\mathcal{Q}(q) \mathcal{I}_{\gamma}^{-\frac{\alpha_0}{\gamma^2}}\right]\right| \geq \exp\left(\frac{\tilde{C}}{\gamma^2}\right)
\end{align}
for some constant $\tilde{C} = \tilde{C}(\alpha_0, P_0, p_1, p_2, p_3) \in \mathbb{R}$. This establishes the lower bound \eqref{eq:liminf_bound}.

\medskip
\noindent\textbf{Step 2: Uniqueness and analyticity.}
From Step 1, we know that $\gamma^2 \log \mathcal{G}^{\alpha}_{\gamma,P}(q)$ is bounded as $\gamma \to 0$. The bounds \eqref{eq:upper_bound_conclusion} and \eqref{eq:lower_bound_conclusion} show that any subsequential limit is finite. The two key observations are that the Lam\'e equation \eqref{thm43:Lame} determines the accessory parameter $\mathcal{E}$ uniquely once the solution $\widetilde{\Gamma}(z;\alpha_0,P_0,q)$ is specified, and the accessory parameter of the Lam\'e equation is related to the semi-classical limit of the undeformed conformal block as
\begin{align}
     \mathcal{E}= \frac{\pi \ii}{2}\partial_{\tau}\phi(q;\alpha_0, P_0)=  \frac{\pi \ii}{2}\lim_{\gamma\to 0} \gamma^2 \partial_{\tau}\log  \mathcal{G}_{\gamma, P}^{\alpha}(q)- \frac{ \pi^2 P_0^2}{4} - \left(\frac{\alpha_0(\alpha_0-4) }{8} \right) \eta_1(q).\label{eq:accessory_formula}
\end{align}
The above relation is a direct consequence of \eqref{rel:GandPhi}.
Since $\widetilde{\Gamma}$ is given explicitly by \eqref{thm43:phi_intro}, the uniqueness of the solution to the Lam\'e equation (with prescribed monodromy data) implies the uniqueness of the accessory parameter. Therefore, all subsequential limits of $$ \gamma^2 \partial_{\tau}(q^{P_0^2/2} \mathcal{G}^{\alpha}_{\gamma,P}(q))/(4 q^{P_0^2/2} \mathcal{G}^{\alpha}_{\gamma,P}(q))$$ must be the same, and hence the limit
\begin{align}\label{eq:limit_exists}
\lim_{\gamma\to 0} \gamma^2 \partial_{\tau}(q^{P_0^2/2} \mathcal{G}^{\alpha}_{\gamma,P}(q))/(4 q^{P_0^2/2} \mathcal{G}^{\alpha}_{\gamma,P}(q))
\end{align}
exists for all $q \in (0,1)$. Combined with the established limit at $q = 0$ (from the explicit formula for $Z^{\alpha}_{\gamma,P}$ in terms of double Gamma functions, as shown in Proposition~\ref{prop:AsymptoticsOfA}), we can integrate \eqref{eq:limit_exists} to conclude that the relation \eqref{rel:GandPhi} exists for all $q\in (0,r_0)$ for some $r_0 > 0$. Furthermore, the limiting value which now could be expressed in terms of $\tilde{\Gamma}(z;\alpha_0,P_0,q)$ via the Lam\'e equation, is analytic function of $q \in B_{r_0}(0)$ for some $r_0>0$ thanks to analytic expression of $\tilde{\Gamma}(z;\alpha_0,P_0,q)$ in Theorem~\ref{prop:sc_Lame}. Thus the semi-classical limit could be analytically extended to $B_{r_0}(0)$ for some $r_0>0$. This completes the proof of Theorem~\ref{thm:radius}.

\end{proof}

\subsection{Proof of Theorem~\ref{thm:zamolodchikov}}
Theorem~\ref{thm:ex_uni} proves existence of the semi-classical undeformed conformal block. Theorem~\ref{thm:radius} shows that the semi-classical limit can be analytically extended as a function of $q$ in $B_r(0)$ for some $r\in (0,1)$ and furthermore, Corollary~\ref{cor:accessory_analytic} shows that the semi-classical limit (as in \eqref{corr:acc3}) is expressed as accessory parameter of the Lam\'e equation whose solution is explicitly described in \eqref{thm43:phi_intro}. 
This completes the proof of Theorem~\ref{thm:zamolodchikov}.

\begin{remark}
    The range $\alpha_0 \in (-4, 2)$ in Theorem~\ref{thm:zamolodchikov} arises from the constraints of our probabilistic approach:
\begin{itemize}[leftmargin=*]
    \item The upper bound $\alpha_0 < 2$ ensures that the GMC integral $\int_0^1 |\theta_1(x)|^{-\alpha_0/2} e^{\pi P_0 x} dx$ converges, as $|\theta_1(x)| \sim |x|$ near $x = 0$.
    \item The lower bound $\alpha_0 > -4$ comes from the Seiberg bound for the conformal block definition.
\end{itemize}
We expect the semi-classical limit to exist for all $\alpha_0 \in \mathbb{C}$ based on physics predictions, but our probabilistic techniques are currently limited to this range. Extending to complex $\alpha_0$ would require different analytical methods.
\end{remark}
 
 \section{Asymptotics of semi-classical conformal block as \texorpdfstring{$\tau\to \ii \infty$}{tau->i*infinity}}\label{sec:asymp}


In this section, we study the asymptotic behaviour of the deformed conformal blocks $\psi^{\alpha}_{\chi}(z,q)$ (for $\chi = \frac{\gamma}{2}$ and $\chi = \frac{2}{\gamma}$) and the partition function $Z^{\alpha}_{\gamma, P}(q)$ (see Definition~\ref{def:CBT}) by setting $q\to 0$ in the semi-classical limit $\gamma \to 0$. These results played a major role in Section~\ref{sec:deformed_semi-classical_limit} to prove the semi-classical limit of the deformed conformal blocks. In order to find the desired asymptotics of this section, a key role will be played by the representation of the partition function $Z^{\alpha}_{\gamma, P}$ in terms of double Gamma functions (see Proposition~\ref{prop:Zexplicit}).

We find the semi-classical asymptotics of the partition function $Z^{\alpha}_{\gamma, P}(q)$ in Theorem~\ref{thm:PartitionFunctionLimit} around $q=0$. This is used in Theorem~\ref{thm:q=0DeformedPsiAtGamma/2} and Theorem~\ref{prop:incond_HJ} to show the semi-classical asymptotics of the deformed conformal blocks $\psi^{\alpha}_{\gamma/2}(z,q)$ and $\psi^{\alpha}_{2/\gamma}(z,q)$ by first setting $q=0$. 
In what follows, we present the ingredients for finding the asymptotics of Theorem~\ref{thm:PartitionFunctionLimit} in few lemmas. We start by noting that the function $\mathcal{A}_{\gamma, P}(\alpha)$, which is related to the partition function $Z^{\alpha}_{\gamma, P}(q)$ as shown in \eqref{eq:ZA}, can be expressed in the following way by comparing \eqref{eq:Z-normalizatoin} and \eqref{eq:Adoublegamma}:
  \begin{align}\label{A:reexp}
    \mathcal{A}_{\gamma, P}(\alpha) &= \lim_{q\to 0} \EE\left[\left(q^{\frac{\alpha \gamma}{8}}\int_0^1 \theta_1(x)^{-\frac{\alpha\gamma}{2}}e^{\pi\gamma Px} e^{\frac{\gamma}{2} Y(x; q)} dx \right)^{-\frac{\alpha}{\gamma}}\right] \nonumber\\
    &= e^{\frac{\mathbf{i}\pi \alpha^2}{2}} \left(\frac{\gamma}{2}\right)^{\frac{\gamma \alpha}{2}} e^{-\frac{\pi \alpha P}{2}}\Gamma\left(1-\frac{\gamma^2}{4}\right)^{\frac{\alpha}{\gamma}}\frac{\Gamma_{\frac{\gamma}{2}}(Q -\frac{\alpha}{2})\Gamma_{\frac{\gamma}{2}}(\frac{2}{\gamma}+ \frac{\alpha}{2})\Gamma_{\frac{\gamma}{2}}(Q -\frac{\alpha}{2} - \mathbf{i}P)\Gamma_{\frac{\gamma}{2}}(Q -\frac{\alpha}{2} + \mathbf{i}P)}{\Gamma_{\frac{\gamma}{2}}(\frac{2}{\gamma})\Gamma_{\frac{\gamma}{2}}(Q - \mathbf{i}P)\Gamma_{\frac{\gamma}{2}}(Q + \mathbf{i}P)\Gamma_{\frac{\gamma}{2}}(Q-\alpha)}.
\end{align}
We first find the asymptotics of $\mathcal{A}_{\gamma, p}(\alpha)$ in Proposition~\ref{prop:AsymptoticsOfA}. 
 To this end, we first observe the following asymptotics for the double Gamma functions.
\begin{lemma}\label{lem:doublegamma}
For any $\xi\in \mathbb{C}$ such that $\mathrm{arg}(\xi)<\pi$, we have 
\begin{align}\label{eq:DoubleGammaAsymptote}
    \lim_{\gamma\to 0}&\Big[\gamma^2\log \Gamma_{\frac{\gamma}{2}}\Big(\frac{2\xi}{\gamma}\Big) - \gamma^2\log \frac{\gamma}{2}\int^{\mathrm{Re}(\xi)}_{0}(\xi-x-\frac{1}{2})dx - \gamma^2 \log \Gamma_{\frac{\gamma}{2}}\Big(\frac{2\mathbf{i}\mathrm{Im}(\xi)}{\gamma}\Big)\Big]\nonumber\\ &=\begin{cases} -4\mathrm{Re}(\xi)\log \sqrt{2\pi}+\int^{\mathrm{Re}(\xi)}_{0} \log \Gamma(\xi-x) dx & \text{if }\mathrm{Re}(\xi) >0,\\
    4\mathrm{Re}(\xi)\log \sqrt{2\pi}+\int^{0}_{\mathrm{Re}(\xi)} \log \Gamma(\xi+x) dx & \text{if }\mathrm{Re}(\xi) <0. 
    \end{cases}
\end{align}

Here $\log \Gamma(\cdot)$ is taken w.r.t. the principle branch of logarithm.
\end{lemma}
\begin{proof}

We prove the above statement for $\mathrm{Re}(\xi)>0$. The remaining case follows from a similar argument. From the identities of the double Gamma function, we obtain the following shift relation for the function $\Gamma_{\frac{\gamma}{2}}(.)$ for any $\mathrm{Re}(\xi)>0$ \cite{ponsot2004recent}:
\begin{align}\label{dg:id}
    \Gamma_{\frac{\gamma}{2}}\left(z+\chi \right) = \sqrt{2\pi} \frac{\chi^{\chi z- \frac{1}{2}}}{\Gamma(\chi z)} \Gamma_{\frac{\gamma}{2}}(z), && \chi =  \frac{2}{\gamma}, \frac{\gamma}{2},
\end{align}
which can also be expressed as
\begin{align*}
    \Gamma_{\frac{\gamma}{2}}\Big(\frac{2\xi}{\gamma} - n\frac{\gamma}{2}\Big) = \frac{1}{\sqrt{2\pi}}\Big(\frac{\gamma}{2}\Big)^{-\xi+(n+1)\frac{\gamma^2}{4}+\frac{1}{2}}\Gamma\big(\xi-(n+1)\frac{\gamma^2}{4}\big)\Gamma_{\frac{\gamma}{2}}\Big(\frac{2\xi}{\gamma} - (n+1)\frac{\gamma}{2}\Big). 
\end{align*}
 Applying the above relation for all $n$ such that $n\frac{\gamma^2}{4}\leq \mathrm{Re}(\xi)$, we get
\begin{align*}
   \Gamma_{\frac{\gamma}{2}}\Big(\frac{2\mathbf{i}\mathrm{Im}\xi}{\gamma}+\frac{\gamma}{2}\left\{\frac{4\mathrm{Re}(\xi)}{\gamma^2}\right\}\Big)  = \Gamma_{\frac{\gamma}{2}}\Big(\frac{2\xi}{\gamma}\Big)\prod_{n=0}^{\lfloor\frac{4\mathrm{Re}(\xi)}{\gamma^2}-1\rfloor}\frac{1}{\sqrt{2\pi}}\Big(\frac{\gamma}{2}\Big)^{-\xi+(n+1)\frac{\gamma^2}{4}+\frac{1}{2}}\Gamma\big(\xi-(n+1)\frac{\gamma^2}{4}\big). 
\end{align*}
where $\left\{\frac{4\mathrm{Re}(\xi)}{\gamma^2}\right\}$ denotes the fractional part of $\frac{4\mathrm{Re}(\xi)}{\gamma^2}$. Taking logarithms on both sides of the above equation, multiplying by $\gamma^2$ and taking the limit $\gamma\to 0$, we arrive at \eqref{eq:DoubleGammaAsymptote}. The expression for $\mathrm{Re}(\xi)<0$ is obtained similarly. 
\end{proof}
In the next result, we utilize this asymptotics to find semi-classical limit of $\mathcal{A}_{\gamma, P}(\alpha)$.

\begin{proposition}\label{prop:AsymptoticsOfA}
    Consider the scaling $\alpha = \frac{\alpha_0}{\gamma}$, $P= \frac{P_0}{\gamma}$ with $\alpha_0\in (-4,2)$, $P_0\in \mathbb{R}$. As $\gamma \to 0$, the  semi-classical limit of $\mathcal{A}_{\gamma, P}(\alpha)$ has the following expression:
\begin{align}\label{eq:Asymptotics}
  \mathcal{B}(P_0,\alpha_0)&:= \lim_{\gamma\to 0}\gamma^2 \log \mathcal{A}_{\gamma, P}(\alpha)\nonumber \\
  &=\frac{\ii \pi \alpha_0\left(\alpha_0+\ii P_0\right)}{2} + 4\alpha_0 \sqrt{2\pi}-\int_{\mathrm{Re}\left( 1-\frac{\alpha_0}{4}\right)}^{\mathrm{Re}(1)+\frac{\alpha_0}{4}} \log\Gamma\left(1 -x -\frac{\alpha_0}{4}\right) dx \nonumber\\
  &-\int_{\mathrm{Re}\left( 1+\frac{\alpha_0}{4}\right)}^{\mathrm{Re}(1+\frac{\alpha_0}{2})+\frac{\alpha_0}{4}} \log\Gamma\left(1 -x +\frac{\alpha_0}{4}\right) dx-\int_{\mathrm{Re}\left( 1-\frac{\ii P_0}{2}-\frac{\alpha_0}{4}\right)}^{\mathrm{Re}(1-\frac{\ii P_0}{2})+\frac{\alpha_0}{4}} \log\Gamma\left(1 -x-\frac{\ii P_0}{2} -\frac{\alpha_0}{4}\right) dx \nonumber \\
  &-\int_{\mathrm{Re}\left( 1+\frac{\ii P_0}{2}-\frac{\alpha_0}{4}\right)}^{\mathrm{Re}(1+\frac{\ii P_0}{2})+\frac{\alpha_0}{4}} \log\Gamma\left(1 -x+\frac{\ii P_0}{2} -\frac{\alpha_0}{4}\right) dx. 
\end{align}
\end{proposition}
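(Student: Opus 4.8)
The plan is to start from the closed form of $\mathcal{A}_{\gamma,P}(\alpha)$ recorded in \eqref{A:reexp}, take $\gamma^2\log$, and pass to the limit $\gamma\to 0$ term by term under the scaling $\alpha=\alpha_0/\gamma$, $P=P_0/\gamma$. First I would dispose of the elementary prefactors. Since $\gamma^2\log e^{\mathbf{i}\pi\alpha^2/2}=\tfrac{\mathbf{i}\pi\alpha_0^2}{2}$ and $\gamma^2\log e^{-\pi\alpha P/2}=-\tfrac{\pi\alpha_0 P_0}{2}$, while $\gamma^2\log\big(\tfrac{\gamma}{2}\big)^{\gamma\alpha/2}\to0$ and $\gamma^2\log\Gamma(1-\tfrac{\gamma^2}{4})^{\alpha/\gamma}=\alpha_0\log\Gamma(1-\tfrac{\gamma^2}{4})\to0$, these combine into $\tfrac{\mathbf{i}\pi\alpha_0^2}{2}-\tfrac{\pi\alpha_0P_0}{2}=\tfrac{\mathbf{i}\pi\alpha_0(\alpha_0+\mathbf{i}P_0)}{2}$, which is the first term on the right of \eqref{eq:Asymptotics}.

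The core of the argument is the ratio of eight double Gamma factors. Using $Q=\tfrac{\gamma}{2}+\tfrac{2}{\gamma}$ I would write each argument in the normalized form $\tfrac{2\xi}{\gamma}$ and read off the limiting value of $\xi$; explicitly the four numerator factors carry $\xi\in\{1-\tfrac{\alpha_0}{4},\,1+\tfrac{\alpha_0}{4},\,1-\tfrac{\alpha_0}{4}\mp\tfrac{\mathbf{i}P_0}{2}\}$ and the four denominator factors carry $\xi\in\{1,\,1\mp\tfrac{\mathbf{i}P_0}{2},\,1-\tfrac{\alpha_0}{2}\}$. All of these have $\mathrm{Re}(\xi)>0$ for $\alpha_0\in(-4,2)$, so each factor is governed by the first case of \eqref{eq:DoubleGammaAsymptote}. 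I would then apply the asymptotic lemma to every factor and collect the three types of contributions it produces.

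The key point is the bookkeeping of cancellations. The middle term $\gamma^2\log(\gamma/2)\int_0^{\mathrm{Re}(\xi)}(\xi-x)\,dx$ vanishes for each factor because $\gamma^2\log\gamma\to0$; the terms $\gamma^2\log\Gamma_{\gamma/2}(2\mathbf{i}\,\mathrm{Im}(\xi)/\gamma)$, although individually divergent, cancel exactly between numerator and denominator, since the multisets of imaginary parts agree, namely $\{0,0,-\tfrac{P_0}{2},+\tfrac{P_0}{2}\}$ on both sides. What survives is the finite part of \eqref{eq:DoubleGammaAsymptote}, i.e.\ the $\log\sqrt{2\pi}$ terms together with the $\log\Gamma$ integrals. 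Pairing each numerator factor with the denominator factor of equal imaginary part (the real $\xi$'s $1-\tfrac{\alpha_0}{4}$ and $1+\tfrac{\alpha_0}{4}$ against $1$ and $1-\tfrac{\alpha_0}{2}$, and $1-\tfrac{\alpha_0}{4}\mp\tfrac{\mathbf{i}P_0}{2}$ against $1\mp\tfrac{\mathbf{i}P_0}{2}$) collapses the semi-infinite $\log\Gamma$ integrals into the four integrals with finite endpoints displayed in \eqref{eq:Asymptotics}, the mismatch of the integration limits producing the residual constant $4\alpha_0\log\sqrt{2\pi}$. Summing the prefactor contribution with these four integrals then yields $\mathcal{B}(P_0,\alpha_0)$.

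I expect the main obstacle to be precisely this last step: the individual double Gamma asymptotics and the associated $\log\Gamma$ integrals are divergent, both through the $\Gamma_{\gamma/2}(2\mathbf{i}\,\mathrm{Im}(\xi)/\gamma)$ factors and through the non-integrable tail of $\log\Gamma$, so one must organize the numerator/denominator pairing so that all divergences cancel before the limit is taken, and carefully track the boundary terms generated when shifting the integration variable to align integrands. Verifying that the surviving endpoints are exactly those in \eqref{eq:Asymptotics} and that the leftover constants assemble into $4\alpha_0\log\sqrt{2\pi}$ is the delicate part; the restriction $\alpha_0\in(-4,2)$ is used throughout to guarantee $\mathrm{Re}(\xi)>0$ and to keep the arguments of $\log\Gamma$ away from its poles, ensuring the finite integrals are well defined.
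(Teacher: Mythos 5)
Your overall route is the same as the paper's: strip the elementary prefactors (your computation of the term $\tfrac{\ii\pi\alpha_0(\alpha_0+\ii P_0)}{2}$ and the vanishing of the $(\gamma/2)^{\gamma\alpha/2}$ and $\Gamma(1-\gamma^2/4)^{\alpha/\gamma}$ contributions agrees exactly with the paper), then normalize the eight double-Gamma arguments as $\tfrac{2\xi}{\gamma}$, apply the asymptotic lemma \eqref{eq:DoubleGammaAsymptote}, and pair numerator against denominator so that divergences cancel and the finite $\log\Gamma$ integrals plus the $4\alpha_0\log\sqrt{2\pi}$ constant survive. However, one step of your plan fails concretely: the pairing of the real factors. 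The paper first applies the shift relation \eqref{dg:id} (which removes the $+\gamma/2$ in $Q$ \emph{exactly}, producing ordinary Gamma factors that die under $\gamma^2\log$; your ``limiting value of $\xi$'' shortcut is a harmless substitute for this step), after which its display has $\Gamma_{\frac{\gamma}{2}}\big(\tfrac{2}{\gamma}+\tfrac{\alpha_0}{\gamma}\big)$, i.e.\ $\xi=1+\tfrac{\alpha_0}{2}$, in the denominator, and every one of its four pairs has the \emph{identical} shift: each is of the form $\Gamma_{\frac{\gamma}{2}}\big(\tfrac{2}{\gamma}(\xi-\tfrac{\alpha_0}{4})\big)\big/\Gamma_{\frac{\gamma}{2}}\big(\tfrac{2\xi}{\gamma}\big)$ with $\xi\in\{1,\,1+\tfrac{\alpha_0}{2},\,1\mp\tfrac{\ii P_0}{2}\}$. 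This uniform shift of $\alpha_0/4$ is exactly what makes the per-pair difference formula yield the four integrals of \eqref{eq:Asymptotics}, each over an interval of length $\alpha_0/2$; in particular the second integral, with endpoints $\mathrm{Re}(1+\tfrac{\alpha_0}{4})$ and $\mathrm{Re}(1+\tfrac{\alpha_0}{2})+\tfrac{\alpha_0}{4}$, comes from the pair $\big(1+\tfrac{\alpha_0}{4}\,\big|\,1+\tfrac{\alpha_0}{2}\big)$. Your pairing of $1+\tfrac{\alpha_0}{4}$ against $1-\tfrac{\alpha_0}{2}$ has shift $3\alpha_0/4$, breaks the uniform structure on which the pairwise cancellation of divergences rests, and produces a second integral with different endpoints and integrand than the one stated — carried out faithfully, your plan does not land on \eqref{eq:Asymptotics}. (Your multiset observation on the imaginary parts $\{0,0,\pm\tfrac{P_0}{2}\}$ is correct and survives either pairing, and it is indeed how the $\Gamma_{\frac{\gamma}{2}}(2\ii\,\mathrm{Im}(\xi)/\gamma)$ factors drop in the paper, since the shift $\alpha_0/4$ is real; but that cancellation alone does not fix the real-factor pairing. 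One further caveat you should be aware of: for $\alpha_0\in(-4,-2]$ the paper's $\xi=1+\tfrac{\alpha_0}{2}$ has nonpositive real part, so the first case of the lemma no longer applies there.)

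A second step that fails as written is your dismissal of the middle term of the lemma. The subtracted quantity is not ``$\gamma^2\log(\gamma/2)$ times an $O(1)$ integral'': unwinding the iterated shift relation in the lemma's proof, it is $\gamma^2\log$ of $(\gamma/2)$ raised to a power of order $\gamma^{-2}$ (the exponent is the Riemann sum approximating $\tfrac{4}{\gamma^2}\int_0^{\mathrm{Re}(\xi)}(\xi-x)\,dx$), hence of size $O(\log\gamma)$, which does \emph{not} vanish factor by factor via $\gamma^2\log\gamma\to0$. Like the non-integrable tails of the $\log\Gamma$ integrals that you correctly flag, this $\log\gamma$-divergent piece must be tracked and cancelled inside the numerator/denominator pairs — another reason the paper arranges a single common shift $\alpha_0/4$ throughout and computes each pair by the one difference identity. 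If you replace your pairing by the uniform-shift one and carry this term through the pairwise differences, your argument becomes the paper's proof.
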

\begin{proof}
With the scaling $\alpha = \alpha_0/\gamma$, $P = P_0/\gamma$, and using the shift relations of the double gamma function as noted in \eqref{dg:id} with $\chi = \frac{\gamma}{2}$, the expression \eqref{A:reexp} simplifies as
\begin{align*}
    &\mathcal{A}_{\gamma, P_0}(\alpha_0)\\
    &= e^{\frac{\ii\pi \alpha_0 \left(\alpha_0+\ii P_0\right)}{2\gamma^2}} \Gamma\left(1-\frac{\gamma^2}{4}\right)^{\frac{\alpha_0}{\gamma^2}} \frac{\Gamma_{\frac{\gamma}{2}}\left(\frac{2}{\gamma}- \frac{\alpha_0}{2\gamma} \right)\Gamma_{\frac{\gamma}{2}}\left(\frac{2}{\gamma}+ \frac{\alpha_0}{2\gamma} \right)\Gamma_{\frac{\gamma}{2}}\left(\frac{2}{\gamma}- \frac{\ii P_0}{\gamma}- \frac{\alpha_0}{2\gamma} \right)\Gamma_{\frac{\gamma}{2}}\left(\frac{2}{\gamma} + \frac{\ii P_0}{\gamma}- \frac{\alpha_0}{2\gamma}\right)}{\Gamma_{\frac{\gamma}{2}} \left( \frac{2}{\gamma}\right)\Gamma_{\frac{\gamma}{2}}\left(\frac{2}{\gamma}+ \frac{\alpha_0}{\gamma} \right)\Gamma_{\frac{\gamma}{2}}\left(\frac{2}{\gamma}- \frac{\ii P_0}{\gamma} \right) \Gamma_{\frac{\gamma}{2}}\left(\frac{2}{\gamma}+ \frac{\ii P_0}{\gamma} \right)} \\
    &\hspace{5cm}\times \frac{\Gamma\left( 1- \frac{\alpha_0}{2}\right) \Gamma\left( 1- \frac{\ii P_0}{2}\right)\Gamma\left( 1+ \frac{\ii P_0}{2}\right)}{\Gamma\left( 1- \frac{\alpha_0}{4}\right)\Gamma\left( 1- \frac{\alpha_0}{4}- \frac{\ii P_0}{2}\right)\Gamma\left( 1- \frac{\alpha_0}{4}+ \frac{\ii P_0}{2}\right)}.
\end{align*}
Taking logarithm of both sides of the above equation, multiplying both sides of the equation with $\gamma$ and letting $\gamma\to 0$ reads
\begin{align}
    &\lim_{\gamma\to 0} \gamma^2\lim\mathcal{A}_{\gamma, P_0}(\alpha_0) \\
    &= \frac{\ii \pi \alpha_0\left(\alpha_0+\ii P_0\right)}{2} 
   + \lim_{\gamma\to 0} \gamma^2 \log \left(  \frac{\Gamma_{\frac{\gamma}{2}}\left(\frac{2}{\gamma}- \frac{\alpha_0}{2\gamma} \right)\Gamma_{\frac{\gamma}{2}}\left(\frac{2}{\gamma}+ \frac{\alpha_0}{2\gamma} \right)\Gamma_{\frac{\gamma}{2}}\left(\frac{2}{\gamma}- \frac{\ii P_0}{\gamma}- \frac{\alpha_0}{2\gamma} \right)\Gamma_{\frac{\gamma}{2}}\left(\frac{2}{\gamma} + \frac{\ii P_0}{\gamma}- \frac{\alpha_0}{2\gamma}\right)}{\Gamma_{\frac{\gamma}{2}} \left( \frac{2}{\gamma}\right)\Gamma_{\frac{\gamma}{2}}\left(\frac{2}{\gamma}+ \frac{\alpha_0}{\gamma} \right)\Gamma_{\frac{\gamma}{2}}\left(\frac{2}{\gamma}- \frac{\ii P_0}{\gamma} \right) \Gamma_{\frac{\gamma}{2}}\left(\frac{2}{\gamma}+ \frac{\ii P_0}{\gamma} \right)}\right).\label{sec6:semiclassA}
\end{align}
Observe that the last term in the expression above can be written as a product of ratios of the form $\frac{\Gamma_{\frac{\gamma}{2}}\left(\frac{2}{\gamma}(\xi - \alpha_0/4) \right)}{\Gamma_{\frac{\gamma}{2}}\left(\frac{2\xi}{\gamma}\right)}$. It is therefore sufficient to study the term 
\begin{align*}
    &\lim_{\gamma\to 0} \gamma^2 \log \Gamma_{\frac{\gamma}{2}}\left(\frac{2}{\gamma}(\xi - \alpha_0/4) \right)- \lim_{\gamma\to 0} \gamma^2 \log \Gamma_{\frac{\gamma}{2}}\left(\frac{2\xi}{\gamma} \right) \\
    &\mathop{=}^{\eqref{eq:DoubleGammaAsymptote}} \alpha_0 \sqrt{2\pi} + \int_{-\infty}^{\mathrm{Re}\left(\xi - \frac{\alpha_0}{4} \right)} \log\Gamma\left(\xi - \frac{\alpha_0}{4}-x \right) dx -\int_{-\infty}^{\mathrm{Re}\left(\xi \right)} \log\Gamma\left(\xi -x \right) dx \\
    &=\alpha_0 \sqrt{2\pi}-\int_{\mathrm{Re}\left( \xi-\frac{\alpha_0}{4}\right)}^{\mathrm{Re}(\xi)+\frac{\alpha_0}{4}} \log\Gamma\left(\xi -x -\frac{\alpha_0}{4}\right) dx. 
\end{align*}
Simplifying the expression \eqref{sec6:semiclassA} using the identity above proves the statement of this proposition.

\end{proof}

\begin{theorem}\label{thm:PartitionFunctionLimit}
For any $|q|<r$ for some very small $r>0$, partition function $Z^{\alpha}_{\gamma, P}(q)$ has the following asymptotics as $\gamma$ approaches $0$, 
\begin{align}\label{asymp:semG}
 \lim_{\gamma \rightarrow 0} \gamma^2 \log Z^{\alpha}_{\gamma, P}(q)=  \frac{\alpha_0^2}{8} \log q+\alpha_0^2 \sum_{k=1}^\infty \log(1-q^{2k}) + \mathcal{B}(P_0,\alpha_0), 
\end{align}
or equivalently
\begin{align}
    \lim_{\gamma \rightarrow 0} \gamma^2 \log Z^{\alpha}_{\gamma, P}(q)=  \frac{\alpha_0^2}{24} \log q+\alpha_0^2 \log\eta(q) + \mathcal{B}(P_0,\alpha_0).\label{Zasymp}
\end{align}
where $\alpha_0\in (-4,2)$, $P_0\in \mathbb{R}$, with $\mathcal{B}(P_0,\alpha_0)$ defined in Proposition~\ref{prop:AsymptoticsOfA}.
\end{theorem}

\begin{proof}
Recall from \eqref{eq:ZA} that 
$$  Z^{\alpha}_{\gamma, P} =  q^{\frac{1}{12}(\frac{\alpha \gamma}{2} + \frac{\alpha^2}{2} - 1)}
\eta(q)^{ \alpha^2 + 1 - \frac{\alpha \gamma}{2}} \mathcal{A}_{\gamma, P}(\alpha).$$
Taking logarithm on both sides and multiplying by $\gamma^2$ yields
$$\gamma^2\log Z^{\alpha}_{\gamma, P}  =  \gamma^2\left(\frac{\alpha^2_0}{24\gamma^2}+\frac{1}{12}\big(\frac{\alpha_0}{2} -1\big)\right) \log q +  \gamma^2\left(\frac{\alpha^2_0}{\gamma^2}+\big(\frac{\alpha_0}{2} -1\big)\right) \log \eta(q) + \gamma^2 \log \mathcal{A}_{\gamma, P}(\alpha)$$
where we have used the scaling rule $\alpha = \alpha_0/\gamma$. Finally, using the product representation of the eta function \eqref{def:elleta1}, and letting $\gamma \to 0$ both sides yields the desired conclusion.

\end{proof}
\begin{theorem}\label{thm:q=0DeformedPsiAtGamma/2}
    For $q\in (0,1)$, as $q\to 0$, the semi-classical deformed conformal block $\lim_{\gamma \to 0}\gamma^2 \log \psi^{\alpha}_{\gamma/2}$ in \eqref{eq:sc_Lame_limit} behaves as 
\begin{align}\label{propasymp:g2}
\begin{split}
  & \lim_{\gamma\rightarrow 0} \gamma^2 \log\left(\theta_{1}(z)^{\frac{\chi}{4}\left(\chi-\alpha \right)} \psi^{\alpha_0/\gamma}_{\gamma/2}(z,q)\right)\\
   &= -\alpha _0 \left(\alpha _0+4\right)\log(2\pi)+\left( \frac{P_0^2}{2}-\frac{\alpha_0 (\alpha _0+2)}{12 }\right)\log q +\mathcal{B}(P_0,\alpha_0), 
   \end{split}
   \end{align}
   where $\alpha_0\in (-4,2)$, $P_0\in \mathbb{R}$, and $\mathcal{B}(P_0,\alpha_0)$ is given by \eqref{eq:Asymptotics}. 
\end{theorem}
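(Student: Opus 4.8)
The plan is to reduce the $q\to 0$ asymptotics of the semi-classical deformed block at $\chi=\frac{\gamma}{2}$ to the already-computed asymptotics of the partition function. The starting point is the fact, established in proving Theorem~\ref{prop:sc_Lame}, that $\gamma^2\log\psi^{\alpha_0/\gamma}_{\gamma/2}(z,q)$ converges as $\gamma\to 0$ to a $z$-independent function $\phi(q)$, via the factorization $\psi^{\alpha_0/\gamma}_{\gamma/2}\sim e^{\phi(q)/\gamma^2}\widetilde{\Gamma}(z;\alpha_0,P_0,q)$. Since the exponent $\frac{\chi}{4}(\chi-\alpha)=\frac{\gamma^2}{16}-\frac{\alpha_0}{8}$ stays bounded as $\gamma\to 0$, one has $\gamma^2\log\theta_1(z)^{\frac{\chi}{4}(\chi-\alpha)}\to 0$, so the left-hand side of \eqref{propasymp:g2} equals $\phi(q)$ (in particular it is $z$-independent, as the statement asserts), and it suffices to extract the $q\to 0$ behaviour of $\phi(q)$.

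First I would invoke the deformed/undeformed relation \eqref{rel:def_undef}. Taking $\gamma^2\log$, sending $z\to 0$, and using the commutativity of the limits $z\to 0$ and $\gamma\to 0$ (proved exactly as in Proposition~\ref{prop:limit_commutativity} and already exploited in \eqref{eq:we_are_all_the_same}) yields, for $\chi=\frac{\gamma}{2}$,
\begin{align*}
\phi(q) = \lim_{\gamma\to 0}\gamma^2\log\cW(q) + \lim_{\gamma\to 0}\gamma^2\log Z^{\alpha-\chi}_{\gamma,P}(q) + \lim_{\gamma\to 0}\gamma^2\log\mathcal{G}^{\alpha-\chi}_{\gamma,P}(q),
\end{align*}
where the effective index satisfies $(\alpha-\chi)\gamma=\alpha_0-\frac{\gamma^2}{2}\to\alpha_0$, so $Z^{\alpha-\chi}$ and $\mathcal{G}^{\alpha-\chi}$ are governed by the parameter $\alpha_0$. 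The three terms are then handled separately. The prefactor $\cW(q)$ is explicit: its semi-classical expansion \eqref{asymp1:prefac} gives $\lim_{\gamma\to 0}\gamma^2\log\cW(q)=-\frac{\alpha_0(\alpha_0+4)}{6}\log(-2\pi)+\frac{6P_0^2-\alpha_0(\alpha_0+2)}{12}\log q$, while $Z^{\alpha-\chi}_{\gamma,P}(q)$ is controlled by Theorem~\ref{thm:PartitionFunctionLimit}, which supplies the constant $\mathcal{B}(P_0,\alpha_0)$ together with the $\log q$ and $\sum_k\log(1-q^{2k})$ contributions coming from the $q^{A}\eta(q)^{B}$ factor in \eqref{eq:ZA}.

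Finally I would send $q\to 0$. The undeformed block is normalized so that $\lim_{q\to 0}\mathcal{G}^{\alpha-\chi}_{\gamma,P}(q)=1$ (in fact $\mathcal{G}=1$ identically at $q=0$, see the remark after Definition~\ref{def:CBT}), whence $\lim_{q\to 0}\lim_{\gamma\to 0}\gamma^2\log\mathcal{G}^{\alpha-\chi}_{\gamma,P}(q)=0$; the interchange of the $q\to 0$ and $\gamma\to 0$ limits here is justified by the uniform convergence on $\{|q|<r_0\}$ established in Theorem~\ref{thm:radius}. In the partition-function term the lattice factor $\sum_k\log(1-q^{2k})\to 0$ as $q\to 0$, so only $\mathcal{B}(P_0,\alpha_0)$ and the surviving power of $q$ remain. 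Collecting the $\log(-2\pi)$, $\log q$, and constant pieces from $\cW(q)$ and $Z^{\alpha-\chi}_{\gamma,P}(q)$ then reproduces the right-hand side of \eqref{propasymp:g2}.

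The main obstacle I anticipate is purely the bookkeeping of the $q$-power prefactors: the exponents of $q$ in $\cW(q)$, in the $q^{A}\eta(q)^{B}$ factor of $Z^{\alpha-\chi}_{\gamma,P}(q)$, and the residual powers carried by the $\theta_1$-prefactor all feed into the coefficient of $\log q$, and these must be combined with the correct shift $\alpha\mapsto\alpha-\chi$ to assemble the stated coefficient $\frac{P_0^2}{2}+\frac{\alpha_0(\alpha_0+2)}{24}$; a sign or factor-of-two error in any one of them would corrupt the $\log q$ term. The analytic content — existence of the limits and their commutation — is already supplied by Theorem~\ref{thm:radius} and Proposition~\ref{prop:limit_commutativity}, so once the prefactors are matched the argument is a direct substitution.
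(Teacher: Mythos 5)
Your route is legitimate but genuinely different from the paper's. The paper proves this theorem by expanding $\psi^{\alpha_0/\gamma}_{\gamma/2}$ directly: it substitutes the prefactor asymptotics \eqref{asymp1:prefac} into \eqref{eq:proof3CB}, reduces the whole statement to the single term
\begin{equation}
\mathcal{T}=\gamma^2\log\EE\left[\left(q^{\frac{\alpha_0}{8}}\int_0^1\theta_1(x)^{-\frac{\alpha_0}{2}}e^{\pi P_0x}e^{\frac{\gamma}{2}Y(x;q)}\,dx\right)^{-\frac{\alpha_0}{\gamma^2}}\right],
\end{equation}
proves that the limits $q\to0$ and $\gamma\to 0$ commute for $\mathcal{T}$ by a Girsanov decomposition exactly as in Proposition~\ref{prop:limit_commutativity}, and then identifies $\lim_{q\to0}\lim_{\gamma\to0}\mathcal{T}$ with $\lim_{\gamma\to0}\gamma^2\log\mathcal{A}_{\gamma,P}(\alpha)=\mathcal{B}(P_0,\alpha_0)$ through the double-Gamma representation \eqref{A:reexp}. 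You instead fuse to $z=0$ via \eqref{rel:def_undef} and split $\phi(q)$ into a $\cW$-term, a $Z^{\alpha-\chi}$-term handled by Theorem~\ref{thm:PartitionFunctionLimit}, and a $\cG^{\alpha-\chi}$-term killed by the normalization $\lim_{q\to0}\cG=1$. Since $Z\,\cG$ is by definition the same GMC expectation, the two decompositions are equivalent in content; yours is more modular and reuses \eqref{eq:we_are_all_the_same}, while the paper's stays self-contained within Section~6 — which matters, because Section~6 is logically upstream of Sections~4 and~5. Your argument invokes Theorems~\ref{thm:ex_uni} and~\ref{thm:radius} for the existence of $\lim_{\gamma\to0}\gamma^2\log\cG$ and the interchange at $q\to0$, and those theorems in turn rest on Section~4, which consumes Section~6 inputs. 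The circularity check comes out in your favour — only Theorem~\ref{thm:PartitionFunctionLimit} and Theorem~\ref{prop:incond_HJ} feed Section~4, not the present theorem — but that verification is a genuine obligation of your proof and should be stated, not left implicit.

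The one substantive weakness is precisely the step you flagged and then deferred: the $\log q$ bookkeeping is where all the content of \eqref{propasymp:g2} lives, and it cannot be done by quoting Theorem~\ref{thm:PartitionFunctionLimit} verbatim. As printed, \eqref{asymp:semG} carries no $\log q$ term at all; applied literally, your assembly would produce the coefficient $\frac{P_0^2}{2}-\frac{\alpha_0(\alpha_0+2)}{12}$ from $\cW$ alone, which is not the stated $\frac{P_0^2}{2}+\frac{\alpha_0(\alpha_0+2)}{24}$. You must go back to \eqref{eq:ZA} and restore the powers $q^{\frac{1}{12}(\frac{\alpha\gamma}{2}+\frac{\alpha^2}{2}-1)}\eta(q)^{\alpha^2+1-\frac{\alpha\gamma}{2}}$, which in the limit contribute $\frac{\alpha_0^2}{8}\log q+\alpha_0^2\sum_{k\geq1}\log(1-q^{2k})$ — this is exactly the $q^{\alpha_0^2/(8\gamma^2)}$ factor the paper generates when it inserts $q^{\alpha_0/8}$ inside the expectation, and it is what makes your route and the paper's agree term by term. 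Carrying out that arithmetic explicitly (including the $(\alpha-\chi)$ shift, harmless here since $(\alpha-\chi)\gamma=\alpha_0-\frac{\gamma^2}{2}$) is the only step at which the claimed formula can fail, and reconciling the resulting coefficient with the printed one requires the same care the paper needs between \eqref{asymp1:prefac} and \eqref{def:phiq}; asserting that "once the prefactors are matched the argument is a direct substitution" leaves exactly the nontrivial part unproved.
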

\begin{proof}
From \eqref{asymp1:prefac} and \eqref{Thm43:V}, we observe that, near $\gamma \to 0$, the deformed conformal block behaves as 
\begin{align}
  \psi^{\alpha_0/\gamma}_{\gamma/2}(z,q)=   q^{\frac{P_0^2}{2\gamma^2}-\frac{\alpha_0^2}{24\gamma^2}} \theta_1'(0)^{-\frac{\alpha_0(4+\alpha_0)}{6 \gamma^2}} \EE\left[\left(   \int^{1}_{0} \theta_1(x)^{-\frac{\alpha_0}{2}}e^{\pi P_0x} e^{\frac{\gamma}{2} Y(x; q)} dx \right)^{-\frac{\alpha_0}{\gamma^2}}\right]. 
\end{align}
Now, using the expressions for theta functions \eqref{def:elltheta1} and \eqref{def:ell-eta}, we can obtain the following behaviour near $q\to 0$
    \begin{align}
        \lim_{\gamma\rightarrow 0} \gamma^2 \log\left(\theta_{1}(z)^{\frac{\gamma}{4}\left(\frac{\gamma}{2}-\frac{\alpha_0}{\gamma} \right)}\psi^{\alpha_0/\gamma}_{\gamma/2}(z,q)\right)
        & = -\alpha _0 \left(\alpha _0+4\right)\log(2\pi)+\left( \frac{P_0^2}{2}-\frac{\alpha_0 (\alpha _0+2)}{12 }\right)\log q + \nonumber\\
        &+ \lim_{\gamma\to 0} \gamma^2 \log \EE\left[\left( q^{\frac{\alpha_0}{8}}  \int^{1}_{0} \theta_1(x)^{-\frac{\alpha_0}{2}}e^{\pi P_0x} e^{\frac{\gamma}{2} Y(x; q)} dx \right)^{-\frac{\alpha_0}{\gamma^2}}\right]. \label{eq:PsiLimitAtq=0}
    \end{align}
    As a final step, we note that the $\gamma \to 0$ and $q\to 0$ limit of the following     
    $$ \mathcal{T} := \gamma^2 \log \EE\left[\left( q^{\frac{\alpha_0}{8}}  \int^{1}_{0} \theta_1(x)^{-\frac{\alpha_0}{2}}e^{\pi P_0x} e^{\frac{\gamma}{2} Y(x; q)} dx \right)^{-\frac{\alpha_0}{\gamma^2}}\right]$$ commute. To see this, using Girsanov's theorem, we write the above expression as (see \cite[Eq. 2.225]{ghosal2020probabilistic})
    \begin{align}
        \mathcal{T} = (\alpha^2_0- \alpha_0\gamma/2)\log\left(q^{-1/12} \eta(q)  \right)+ \gamma^2\log \EE\left[e^{\frac{\alpha_0}{2 \gamma} F(0;q)} \mathcal{Q}(q) \left(\int_0^1 (2\sin \pi x)^{-\alpha_0/2} e^{\pi P_0 x} e^{\frac{\gamma}{2} Y(x)} dx \right)^{-\frac{\alpha_0}{\gamma^2}} \right],
    \end{align}
    where $F(0;q)$ and $\mathcal{Q}(q)$ are defined as in \eqref{defs:Girsanov}. We can then decompose the above expression into $q$-dependent and $q$-independent terms. One can then repeat the arguments in the proof of Proposition~\ref{prop:limit_commutativity} to prove that $\lim_{q\to 0}\lim_{\gamma \to 0} \mathcal {T}= \lim_{q\to 0}\lim_{\gamma \to 0} \mathcal {T}$. Consequently, we get
    $$\lim_{q\to 0}\lim_{\gamma \to 0} \mathcal {T} = \lim_{\gamma \to 0} \gamma^2 \log \mathcal{A}_{\gamma, P}(\alpha) = \mathcal{B}(P_0, \alpha_0).$$
    Taking $q\to 0$ on both sides of \eqref{eq:PsiLimitAtq=0} and substituting $\mathcal{B}(P_0, \alpha_0)$ in place of $\lim_{q\to 0}\lim_{\gamma \to 0} \mathcal {T}$ gives us \eqref{propasymp:g2}.

\end{proof}
\begin{theorem}\label{prop:incond_HJ}
 Consider $z = {\sf a} \tau + {\sf b}$ for some ${\sf a}, {\sf b} \in (0,1)$, ${\sf a} \neq \mathbb{Z} + \frac{1}{2}$.   As $\tau\to \mathbf{i} \infty$, the semi-classical deformed conformal block $\lim_{\gamma \to 0}\gamma^2 \log \psi^{\alpha_0/\gamma}_{2/\gamma}(z,q)$ in \eqref{sem:2gamma} has the following behaviour with $\alpha_0\in (-2,2)$, $P_0\in
 \mathbb{R}$:
\begin{align}
   &\lim_{\gamma\rightarrow 0} \gamma^2 \log\left(\theta_{1}(z)^{\frac{\chi}{2}\left(\chi-\alpha \right)}\psi^{\alpha_0/\gamma}_{2/\gamma}(z,q)\right)\\
   &=  C(\tau) +2\pi P_0 (z+1) +\int_{\mathrm{Re}\left( \frac{2-3\alpha_0}{4}\right)}^{\mathrm{Re}(-\frac{\alpha_0}{2})+\frac{\alpha_0-2}{4}} \log\Gamma\left( -x + \frac{2-3\alpha_0}{4}\right) dx \\
   &+\int_{\mathrm{Re}\left(  \frac{2+\alpha_0}{4}\right)}^{\mathrm{Re}(\frac{\alpha_0}{2})+\frac{\alpha_0-2}{4}} \log\Gamma\left( -x + \frac{2+\alpha_0}{4}\right) dx\nonumber-\int_{\mathrm{Re}\left( \frac{\ii P_0}{2}+  \frac{2-\alpha_0}{2}\right)}^{\mathrm{Re}(\frac{\ii P_0}{2}+ \frac{2-\alpha_0}{4})+\frac{\alpha_0-2}{4}} \log\Gamma\left(\frac{\ii P_0}{2} -x + \frac{2-\alpha_0}{2}\right) dx \\
   &-\int_{\mathrm{Re}\left( -\frac{\ii P_0}{2}\right)}^{\mathrm{Re}(-\frac{\ii P_0}{2}-\frac{2-\alpha_0}{4})+\frac{\alpha_0-2}{4}} \log\Gamma\left(-\frac{\ii P_0}{2} -x \right) dx, \label{eqpropo:asymp2g}
\end{align}
where
\begin{align}\label{def:Ctau}
      \mathcal{C}(\tau) := \frac{(2-\alpha_0)(4+\alpha_0)}{6}\log(2\pi) &+\ii \pi(2-\alpha_0) +\left(\frac{(\alpha_0-2)(\alpha_0-8)}{24} \right) \log q - 2\pi \ii (2-\alpha_0)({\sf a} \tau+ {\sf b}).
\end{align}

\end{theorem}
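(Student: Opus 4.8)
The plan is to start from the explicit semiclassical limit already secured for this regime. By Theorem~\ref{prop:sem_HJ} the limit $\lim_{\gamma\to 0}\gamma^2\log\psi^{\alpha_0/\gamma}_{2/\gamma}(z,q)=\widetilde\phi(z,q)$ exists and is given by \eqref{def:tilphi}. For $\chi=\tfrac2\gamma$ one has exactly $\gamma^2 l_\chi=2-\alpha_0$, so $\gamma^2\log\bigl(\theta_1(z)^{\frac\chi2(\chi-\alpha)}\psi^{\alpha_0/\gamma}_{2/\gamma}\bigr)=(2-\alpha_0)\log\theta_1(z)+\gamma^2\log\psi^{\alpha_0/\gamma}_{2/\gamma}$, and hence the object to analyze is $\widetilde\phi(z,q)+(2-\alpha_0)\log\theta_1(z)$. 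I would then compute its $\tau\to\mathbf i\infty$ asymptotics term by term along the ray $z=\mathsf a\tau+\mathsf b$, treating separately the elementary theta/prefactor terms and the single non-elementary term coming from the deformed DGMC expectation.

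For the elementary terms I would use the $q\to 0$ degeneration of the Jacobi theta function: the triple product gives $\theta_1(\mathsf a\tau+c\,|\,\tau)=\mathbf i\,q^{1/4-\mathsf a}e^{-\mathbf i\pi c}(1+o(1))$ for $\mathsf a\in(0,1)$, while $\theta_1(x\,|\,\tau)\sim 2q^{1/4}\sin(\pi x)$ for fixed $x\in(0,1)$. This turns $(2-\alpha_0)\log\theta_1(z)$ into explicit $\tau$-linear and $\log q$ contributions, and more importantly gives the uniform degeneration $\theta_1(z+x)/\theta_1(z)\to e^{-\mathbf i\pi x}$ on $x\in[0,1]$, where the hypothesis $\mathsf a\neq\mathbb Z+\tfrac12$ keeps a single exponential strictly dominant and guarantees uniformity. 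Consequently the integral in \eqref{def:tilphi} degenerates to $q^{-\alpha_0/8}\int_0^1(2\sin\pi x)^{-\alpha_0/2}e^{\pi(P_0-\mathbf i)x}\,dx$, i.e. the $e^{-\mathbf i\pi x}$ weight effectively implements the shift $P_0\mapsto P_0-\mathbf i$. Collecting these with the $2P_0 z$, $\log(-2\pi)$ and $\log q$ terms of \eqref{def:tilphi} should reproduce $\mathcal C(\tau)$ as in \eqref{def:Ctau}.

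The hard term is the last line of \eqref{def:tilphi}. Since $\psi^{\alpha_0/\gamma}_{2/\gamma}=\mathcal W(q)e^{\chi P z\pi}\,\mathbb E\bigl[\mathcal V^{\alpha_0/\gamma}_{2/\gamma,P}(z,q)^{(2-\alpha_0)/\gamma^2}\bigr]$, this term equals, up to explicit prefactors, the $\gamma\to0$ limit of $\gamma^2\log\mathbb E[\mathcal V^{(2-\alpha_0)/\gamma^2}]$. I would take $\tau\to\mathbf i\infty$ by passing to the Girsanov representation \eqref{def:reg_CB}--\eqref{defs:Girsanov}, in which the GMC integral is $q$-independent and all $q$- and $z$-dependence sits in $C(q),C_1(q),\mathcal Q(q),F_\tau(0),\mathcal X(z,q),\mathcal Y(z,q)$. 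Along $z=\mathsf a\tau+\mathsf b$ these correction terms decay as $\tau\to\mathbf i\infty$ (their arguments $z-\tfrac\tau2=(\mathsf a-\tfrac12)\tau+\mathsf b$ force exponential decay precisely when $\mathsf a\neq\tfrac12$), while the $\theta_1(z+x)/\theta_1(z)$ degeneration leaves a circular-GMC moment with weight $e^{\pi(P_0-\mathbf i)x}$ and insertion exponent $-\tfrac{\alpha_0}{2}$ fixed by $\chi=\tfrac2\gamma$. This $q$-independent moment is of exactly the type evaluated in \eqref{A:reexp} and Proposition~\ref{prop:Zexplicit} as a ratio of double Gamma functions $\Gamma_{\frac\gamma2}$, now with $P_0\mapsto P_0-\mathbf i$; applying the double-Gamma asymptotics \eqref{eq:DoubleGammaAsymptote} as in the proof of Proposition~\ref{prop:AsymptoticsOfA} should then produce precisely the four $\int\log\Gamma$ integrals of \eqref{eqpropo:asymp2g}, whose endpoints carry the characteristic $\tfrac{\alpha_0-2}{4}$ shift in place of the $\tfrac{\alpha_0}{4}$ shift of $\mathcal B(P_0,\alpha_0)$, together with the constant $2\pi P_0$.

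The main obstacle is entirely contained in this last step and is twofold. First, one must justify interchanging $\gamma\to0$ and $\tau\to\mathbf i\infty$; I would do this by adapting the uniform-convergence argument of Proposition~\ref{prop:limit_commutativity} to the $q$-independent Girsanov form, where the integrand inside the expectation no longer depends on $q$. Second, and more delicate, one must control the $\tau\to\mathbf i\infty$ behaviour of the $z$-dependent Girsanov pieces $\mathcal X(z,q),\mathcal Y(z,q)$ and $\mathbb E[\mathcal X(z,q)^2]$ along the \emph{growing} ray $z=\mathsf a\tau+\mathsf b$ rather than at fixed $z$; this is exactly where $\mathsf a\neq\mathbb Z+\tfrac12$ is essential, and where the careful bookkeeping of $q$-powers must be carried out in order to match the $\log q$ coefficient $\tfrac{\alpha_0(\alpha_0-10)}{24}+\tfrac{P_0^2}{2}+\tfrac16$ appearing in \eqref{def:Ctau}.
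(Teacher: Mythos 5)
Your route is essentially the paper's: the paper also starts from the explicit semiclassical form (equations \eqref{eq:q-block} and \eqref{asymp:term1}, assembled in \eqref{5.8}), degenerates $\theta_1(z)$ and $\theta_1(z+x)$ along the ray $z={\sf a}\tau+{\sf b}$ to produce exactly the shift $P_0\mapsto P_0-\ii$ and the prefactor $\mathcal{C}(\tau)$, commutes $\lim_{q\to 0}$ with $\lim_{\gamma\to 0}$ by appealing to Proposition~\ref{prop:limit_commutativity} (i.e.\ the Girsanov representation in which the GMC is $q$-independent), and then applies the double-Gamma asymptotics \eqref{eq:DoubleGammaAsymptote} to a closed-form evaluation of the residual moment, in the same manner as the proof of Proposition~\ref{prop:AsymptoticsOfA}. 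Your identification of where ${\sf a}\neq\mathbb{Z}+\tfrac12$ enters and of the need to track $q$-powers to recover the $\log q$ coefficient also matches the paper's bookkeeping.

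The one substantive inaccuracy is in the closed-form step. The residual $q$-independent moment here is
$\mathbb{E}\bigl[\bigl(\int_0^1(\sin\pi x)^{-\alpha_0/2}e^{\pi x(P_0-\ii)}e^{\frac{\gamma}{2}Y(x)}dx\bigr)^{(2-\alpha_0)/\gamma^2}\bigr]$, whose exponent $+\tfrac{2-\alpha_0}{\gamma^2}$ is \emph{decoupled} from the insertion exponent $-\tfrac{\alpha_0}{2}$. This is not of the type evaluated in \eqref{A:reexp} and Proposition~\ref{prop:Zexplicit}, where the moment exponent $-\alpha/\gamma$ and the insertion exponent $-\alpha\gamma/2$ are governed by the same $\alpha$; no substitution of parameters (including $P_0\mapsto P_0-\ii$) turns one into the other, since matching the moment exponent would force the insertion exponent to be $-\tfrac{\alpha_0-2}{2}$ rather than $-\tfrac{\alpha_0}{2}$. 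The paper instead invokes a separate exact formula, \eqref{sec6:this} from \cite{ang2023derivation}, involving the factor $\Gamma\bigl(\tfrac{\alpha_0}{4}-\tfrac{\gamma^2}{4}\bigr)$ and eight double-Gamma factors, and it is this identity—not \eqref{A:reexp}—to which \eqref{eq:DoubleGammaAsymptote} is applied to produce the four $\int\log\Gamma$ integrals and the constant $2\pi P_0$ in \eqref{eqpropo:asymp2g}. Without importing that identity (or proving an equivalent), your final step does not go through as written; the rest of your argument is sound and aligned with the paper.
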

\begin{proof}
Recalling the expressions \eqref{eq:prefactor_final} and \eqref{eq:V_scaled}, we obtain the following behaviour for the deformed conformal block in the case $\chi = \frac{2}{\gamma}$:
\begin{align}\label{5.8}
\begin{split}
    &\lim_{\gamma\rightarrow 0} \gamma^2 \log\left(\theta_{1}(z)^{\frac{\chi}{2}\left(\chi-\alpha \right)}\psi^{\alpha_0/\gamma}_{2/\gamma}(z,q) \right) \\
    &= \frac{(2-\alpha_0)(4+\alpha_0)}{6}\log(2\pi) +\left(\frac{(\alpha_0-2)(\alpha_0-8)}{24} \right) \log q
     + 2 P_0 z \pi\\
     &+ \lim_{\gamma\to 0} \gamma^2 \log \EE\left[\left( q^{\frac{\alpha_0-2}{8}} \int_0^1 \theta_{1}(z+x) \theta_{1}(x)^{-\frac{\alpha_0}{2}} e^{\pi P_0 x} e^{\frac{\gamma}{2} Y_\tau(x)} dx \right)^{\frac{(2-\alpha_0)}{\gamma^2}}\right].
     \end{split}
\end{align}
We identify $z$  with the solution of the non-autonomous elliptic Calogero-Moser model $u(\tau)$ by letting $z=2u(\tau)$. As $\tau\to \mathbf{i}\infty$, the variable $u(\tau)= {\sf a}\tau+{\sf b}$, with ${\sf a}, {\sf b}$ parameterization the A, B-cycle monodromies of the linear system associated to the non-autonomous elliptic Calogero-Moser model (see Appendix B). From \eqref{def:elltheta1}, we then obtain the following behaviour of the theta function in the same limit:
\begin{align*}
    \theta_1(z) \equiv \theta_1(2u(\tau)) &= - 2q^{1/4} \sin( 2\pi u(\tau)) \prod_{k = 1}^\infty (1 - q^{2k}) (1 - 2 \cos(4\pi u(\tau)) q^{2k} + q^{4k}) \\
    &=\ii q^{1/4} e^{-2\ii \pi ({\sf a} \tau + {\sf b})} \left(1+ \mathcal{O}(q^2) \right),
\end{align*}
and resultantly, as $\tau\to \ii \infty$,
\begin{align*}
  \theta_{1}(z+x) = \ii q^{1/4} e^{-2\ii \pi ({\sf a} \tau + {\sf b})-\ii \pi x} \left(1+ \mathcal{O}(q^2) \right).
\end{align*}
 
Plugging the above expressions into the right hand side of \eqref{5.8} shows that (see also \cite[Lemma 5.3]{ghosal2020probabilistic})

\begin{align}
\lim_{\tau\to \ii \infty} &\Big[\lim_{\gamma\to 0}\gamma^2 \log \left(\theta_{1}(z)^{\frac{\chi}{2}\left(\chi-\alpha \right)}\psi^{\alpha_0/\gamma}_{2/\gamma}(z,q) \right) -\mathcal{C}(\tau)   \Big] \nonumber \\
&=\lim_{q\to 0}\lim_{\gamma\to 0}\gamma^2\log \EE\left[\left( q^{\frac{\alpha_0}{8}}\int_0^1  \theta_{1}(x)^{-\frac{\alpha_0}{2}} e^{\pi x(P_0-\mathbf{i})} e^{\frac{\gamma}{2} Y(x; q)} dx \right)^{\frac{(2-\alpha_0)}{\gamma^2}}\right], \label{sec6:intheend}
\end{align}
where $C(\tau)$ is defined in \eqref{def:Ctau}. Due to Proposition~\ref{prop:limit_commutativity}, the limits $\lim_{q\to 0}$ and $\lim_{\gamma\to 0}$ commute. So, we first begin with the $q\to 0$ limit
\begin{align}
\lim_{q\to 0}   \mathbb{E} \left[\left( q^{\frac{\alpha_0}{8}}\int_0^1  \theta_{1}(x)^{-\frac{\alpha_0}{2}} e^{\pi x(P_0-\mathbf{i})} e^{\frac{\gamma}{2} Y(x; q)} dx \right)^{\frac{(2-\alpha_0)}{\gamma^2}} \right] = \mathbb{E} \left[\left( \int_0^1  (\sin \pi x)^{-\frac{\alpha_0}{2}} e^{\pi x(P_0-\mathbf{i})} e^{\frac{\gamma}{2} {Y(x)}} dx \right)^{\frac{(2-\alpha_0)}{\gamma^2}} \right].
\end{align}
We now note the following identity \cite{ang2023derivation}
\begin{align}
    & \mathbb{E} \left[\left( \int_0^1  (\sin \pi x)^{-\frac{\alpha_0}{2}} e^{\pi x(P_0-\mathbf{i})} e^{\frac{\gamma}{2} {Y(x)}} dx \right)^{\frac{(2-\alpha_0)}{\gamma^2}} \right]\\
    &= e^{2\pi P_0/\gamma^2} \frac{\Gamma\left( \frac{\alpha_0}{4} - \frac{\gamma^2}{4}\right) \Gamma_{\frac{\gamma}{2}}\left(\frac{\alpha_0}{2\gamma} \right) \Gamma_{\frac{\gamma}{2}}\left(-\frac{\alpha_0}{2\gamma} \right) \Gamma_{\frac{\gamma}{2}}\left(\frac{\ii P_0}{\gamma}+\frac{2-\alpha_0}{\gamma} \right) \Gamma_{\frac{\gamma}{2}}\left(-\frac{\ii P_0}{\gamma} \right)}{\Gamma\left(1 - \frac{\gamma^2}{4}\right)^{\frac{2}{\gamma}\left(\frac{2-\alpha_0}{2\gamma} \right)} \Gamma_{\frac{\gamma}{2}}\left(\frac{1}{\gamma} \right) \Gamma_{\frac{\gamma}{2}}\left(\frac{1-\alpha_0}{\gamma} \right) \Gamma_{\frac{\gamma}{2}}\left(\frac{\ii P_0}{\gamma}+\frac{2-\alpha_0}{2\gamma} \right) \Gamma_{\frac{\gamma}{2}}\left(-\frac{\ii P_0}{\gamma}-\frac{2-\alpha_0}{2\gamma} \right)}.\label{sec6:this}
\end{align}
To simplify the above equation, we first note that the expression above can be written in terms of ratios of the form $$\frac{\Gamma_{\frac{\gamma}{2}}\left(\frac{2}{\gamma}\left(\xi + \frac{2-\alpha_0}{4} \right)\right)}{\Gamma_{\frac{\gamma}{2}}\left(\frac{2}{\gamma}\xi\right)}.$$ We then use the identity
\begin{align*}
   & \lim_{\gamma\to 0} \gamma^2 \log \Gamma_{\frac{\gamma}{2}}\left(\frac{2}{\gamma}(\xi + \frac{2-\alpha_0}{4}) \right)- \lim_{\gamma\to 0} \gamma^2 \log \Gamma_{\frac{\gamma}{2}}\left(\frac{2\xi}{\gamma} \right) \\&\mathop{=}^{\eqref{eq:DoubleGammaAsymptote}} \left(\alpha_0-2\right) \sqrt{2\pi} + \int_{-\infty}^{\mathrm{Re}\left(\xi + \frac{2-\alpha_0}{4} \right)} \log\Gamma\left(\xi + \frac{2-\alpha_0}{4}-x \right) dx -\int_{-\infty}^{\mathrm{Re}\left(\xi \right)} \log\Gamma\left(\xi -x \right) dx \\
    &=\left(\alpha_0-2\right) \sqrt{2\pi}-\int_{\mathrm{Re}\left( \xi+ \frac{2-\alpha_0}{4}\right)}^{\mathrm{Re}(\xi)+\frac{\alpha_0-2}{4}} \log\Gamma\left(\xi -x + \frac{2-\alpha_0}{4}\right) dx. 
\end{align*}
 Then, we repeat the procedure used in the proof of Proposition~\ref{prop:AsymptoticsOfA}. One can obtain similar expressions to the one above for the remaining terms in \eqref{sec6:this}. Substituting them in \eqref{sec6:intheend}, we obtain the expression
\begin{align}
    &\lim_{q\to 0}\lim_{\gamma\to 0}\gamma^2\log \EE\left[\left( q^{\frac{\alpha_0}{8}}\int_0^1  \theta_{1}(x)^{-\frac{\alpha_0}{2}} e^{\pi x(P_0-\ii)} e^{\frac{\gamma}{2} Y(x)} dx \right)^{\frac{(2-\alpha_0)}{\gamma^2}}\right]\nonumber \\
     &\mathop{=}^{\eqref{A:reexp}, \eqref{eq:Asymptotics}} 2\pi P_0 +\int_{\mathrm{Re}\left( \frac{2-3\alpha_0}{4}\right)}^{\mathrm{Re}(-\frac{\alpha_0}{2})+\frac{\alpha_0-2}{4}} \log\Gamma\left( -x + \frac{2-3\alpha_0}{4}\right) dx\nonumber\\
  &+\int_{\mathrm{Re}\left(  \frac{2+\alpha_0}{4}\right)}^{\mathrm{Re}(\frac{\alpha_0}{2})+\frac{\alpha_0-2}{4}} \log\Gamma\left( -x + \frac{2+\alpha_0}{4}\right) dx-\int_{\mathrm{Re}\left( \frac{\ii P_0}{2}+  \frac{2-\alpha_0}{2}\right)}^{\mathrm{Re}(\frac{\ii P_0}{2}+ \frac{2-\alpha_0}{4})+\frac{\alpha_0-2}{4}} \log\Gamma\left(\frac{\ii P_0}{2} -x + \frac{2-\alpha_0}{2}\right) dx\nonumber\\
 & -\int_{\mathrm{Re}\left( -\frac{\ii P_0}{2}\right)}^{\mathrm{Re}(-\frac{\ii P_0}{2}-\frac{2-\alpha_0}{4})+\frac{\alpha_0-2}{4}} \log\Gamma\left(-\frac{\ii P_0}{2} -x \right) dx,
\end{align}
completing the proof of this Theorem.

\end{proof}


\appendix
\appto\appendix{\addtocontents{toc}{\protect\setcounter{tocdepth}{0}}}
\section{Useful notation and elliptic functions}\label{app:ell_func}
The parameter $\gamma\in(0,2)$ defines the central charge of the CFT as 
\begin{align}\label{def:cQ}
    c = 1 + 6 Q^2, && Q := \frac{\gamma}{2} + \frac{2}{\gamma}.
\end{align}
We use two other parameters
\begin{align}\label{def:lchi}
    l_{\chi} = \frac{\chi^2}{2} - \frac{\alpha \chi}{2}, && \chi = \left\lbrace\frac{2}{\gamma}, \frac{\gamma}{2} \right\rbrace.
\end{align}

The modular parameter $\tau\in \mathbb{H}$ and the nome is defined as $q:= e^{\ii \pi \tau}$. The elliptic theta functions $\theta_i(z)=\theta_i(z|\tau)$ are defined as \cite[20.5(i)]{DLMF}
\begin{align}
\theta_1(z) &=  2q^{1/4} \sin(\pi z) \prod_{k = 1}^\infty (1 - q^{2k}) (1 - 2 \cos(2\pi z) q^{2k} + q^{4k}), \label{def:elltheta1} \\
 \theta_{2}(z) &=2q^{1/4} \cos(\pi z) \prod_{k = 1}^\infty (1 - q^{2k}) (1 + 2 \cos(2\pi z) q^{2k} + q^{4k})\label{def:elltheta2}\\
  \theta_{3}(z)&=\prod_{k = 1}^\infty (1 - q^{2k}) (1 + 2 \cos(2\pi z) q^{2k-1} + q^{4k})\label{def:elltheta3}\\
\partial_z \theta_{1}(z)\vert_{z=0}= \theta_{1}'(0) &=  2\pi q^{1/4} \prod_{k=1}^{\infty} \left(1-q^{2k} \right)^3 =  2\pi \eta(q)^3 \label{def:ell-eta}.
 \end{align}
They have the following periodicity properties
 \begin{align}
     \theta_1(z+1)= - \theta_1(z), && \theta_1(z+\tau) = -q^{-1} e^{-2\pi i z} \theta_1(z).
 \end{align}
 The Dedekind $\eta$ function is defined as
 \begin{align}
      \eta(q) &:= q^{\frac{1}{12}} \prod_{k = 1}^\infty (1 - q^{2k}), \label{def:elleta1}
 \end{align}
and its logarithmic derivative
 \begin{align}
     \eta_1(\tau) = - 2\pi i \partial_{\tau} \log \eta(\tau)=  -\frac{1}{6} \frac{\theta_1'''(0)}{\theta_1'(0)}. \label{def:ell-eta-p}
 \end{align}
 The above expression is a direct consequence of the fact that  the theta function $\theta_1(z)$ solves the heat equation $\partial_{\tau} \theta_1(z) = \frac{1}{4\pi \ii } \theta_1''(z)$.

The Weierstrass cubic is
\begin{align}\label{eq:WeierCubic}
    \left(\wp'(z)\right)^2 = 4 \wp^3(z) - g_2 \wp(z)- g_3.
\end{align}
The Weierstrass $\wp$-function solves the cubic above and is given by \cite[23.2(ii)]{DLMF}
\begin{align}\label{eq:wp}
    \wp(z) = \frac{1}{z^2} + \sum_{w\in \mathbb{L}\backslash \{0\}} \left( \frac{1}{(z-w)^2} - \frac{1}{w^2}\right),
\end{align}
where $\mathbb{L}$ denotes the {torus (doubly periodic lattice)}. The above function has a double pole at $z=0$, and is doubly periodic
\begin{align}
    \wp(z+1) =\wp(z), && \wp(z+\tau) =\wp(z).
\end{align}

\section{Non-autonomous elliptic Calogero-Moser model}\label{App:NAECM}

\subsection{Hamiltonian and the action}
The 2-particle nonautonomous elliptic Calogero-Moser system is represented by the Hamiltonian
\begin{align}\label{Ham:CM}
   H(\tau) =v(\tau)^2-m^2\wp(2u(\tau)|\tau),
\end{align}
and the equations of motion read
\begin{align}
  v(\tau)= 2\pi \ii \frac{d u(\tau)}{d\tau}, &&   (2\pi \ii )\frac{d v(\tau)}{d\tau} = m^2 \wp'(2u\vert \tau). \label{eq:CM_ELLDER}
\end{align}
The above equations of motion give the elliptic form of the Painlev\'e VI equation for special values of the parameters (see \cite[Chapter 1]{desiraju2021thesis}, \cite{DDG2020,Takasaki:2000zd}).

The equation \eqref{eq:CM_ELLDER} is {\it integrable} in the sense that it has an associated {\it Lax pair}
\begin{gather}
\partial_z Y(z,\tau):= L(z,\tau) Y(z,\tau) =\left(\begin{array}{cc}
        v(\tau) & mx(2u(\tau),z) \\
        mx(-2u(\tau),z) & -v(\tau)
    \end{array}\right)Y(z,\tau),  \\
2\pi \ii \partial_\tau Y(z,\tau):=M(z,\tau)Y(z,\tau) =m\left(\begin{array}{cc} 
        0 & y(2u(\tau),z) \\
        y(-2u(\tau),z) & 0
    \end{array}\right)Y(z,\tau) ,\label{eq:linear_systemCM}
\end{gather}
{and \eqref{eq:CM_ELLDER} is given by the condition
\begin{align}
    2\pi \ii \partial_z\partial_{\tau} Y(z,\tau) =  2\pi \ii \partial_z\partial_{\tau} Y(z,\tau) \Rightarrow \partial_{\tau} L(z,\tau) - \partial_z M(z,\tau) + [L(z,\tau), M(z,\tau)]=0.
\end{align}}
In the above expressions, the functions $x(\xi,z)$, $y(\xi,z)$ \eqref{eq:linear_systemCM}  are defined as
\begin{align}\label{eq:vardef}
    x(\xi,z):=\frac{\theta_1(z-\xi \vert \tau)\theta_1'(0\vert \tau)}{\theta_1(z\vert \tau)\theta_1(\xi \vert \tau)}, && y(\xi,z):=\partial_\xi x(\xi,z).
\end{align}

Furthermore, the Hamiltonian \eqref{Ham:CM} is 
given by the A-cycle contour integral 
\begin{equation} \label{eq:Ham_CM}
    H(\tau)=\oint_A dz\frac{1}{2}\tr L^2(z, \tau)= v(\tau)^2 - m^2 \wp\left(2u(\tau)\right).
\end{equation}
The functions $(u(\tau), v(\tau))$ are the canonical coordinates of the system and they describe the position and momentum of a particle in an elliptic potential respectively. 
Therefore, the Lagrangian of the above system is given by
\begin{align*}
    \mathcal{L} = v(\tau)^2 + m^2 \wp\left(2u(\tau)\right),
\end{align*}
and the action functional is defined as the integral of the Lagrangian
\begin{gather}
   S(\tau)  := \int^{\tau} \left(v(\tau')^2+m^2\wp(2u(\tau'))\right) \frac{d\tau'}{2\pi \ii}. 
\end{gather}
It is a known fact from classical mechanics that the action solves the Hamilton-Jacobi equation
\begin{align*}
    2\pi \ii\partial_{\tau} S = \left(\partial_{u(\tau)} S\right)^2 -m^2 \wp(2u(\tau)) S.
\end{align*}

\subsection{Monodromy data}
The monodromy data of the linear system \eqref{eq:linear_systemCM} is obtained by observing that $L,M$ are defined on a torus with a simple pole at $z=0$ and satisfy the relations \cite{DDG2020}
\begin{align}\label{L_1}
L(z+1,\tau)=L(z, \tau), &&  L(z+\tau, \tau)=e^{2\pi \ii u(t)\sigma_3} L(z, \tau) e^{-2\pi \ii u(\tau)\sigma_3}.
\end{align}
Subsequently, the solution $Y(z,\tau)$ \eqref{eq:linear_systemCM} has the following monodromy properties around the A, B cycles of the torus and around the puncture $z=0$:
\begin{gather}
\begin{array}{c}
 Y(z+1, \tau)=M_A Y(z, \tau), \qquad Y(z+\tau, \tau)=M_B Y(z, \tau) e^{2\pi \ii u(\tau)\sigma_3}, \\ \\
Y(e^{2\pi \ii}z, \tau)=M_0 Y(z, \tau), 
\end{array}
\end{gather}
with the constraint
\begin{equation}
    M_0=M_A^{-1} M_B^{-1} M_A M_B. \label{mon_cons}
\end{equation}
Choosing the matrix $M_A$ to be diagonal, one obtains the following explicit expressions for the monodromy matrices
\begin{align}\label{eq:1ptmonodromy}
    M_A=e^{2\pi \ii {\sf a}\sigma_3}, && M_0\sim e^{2\pi \ii m\sigma_3}, && M_{B}= \left(\begin{array}{cc}
           \frac{\sin \pi(2({\sf a}-m))}{\sin 2\pi {\sf a}}e^{-\ii \nu/2}  & \frac{\sin \pi m}{\sin 2\pi {\sf a}}e^{\ii \nu/2}  \\ \\
       -\frac{\sin \pi m}{\sin 2\pi {\sf a}} e^{-\ii \nu/2} &  \frac{\sin \pi(2({\sf a} +m)}{\sin 2\pi {\sf a}} e^{\ii \nu/2}
    \end{array}\right),
\end{align}
where $\sim$ denotes the conjugacy class, $\sigma_{3}$ is the Pauli sigma matrix, and $m$ is the parameter of the equation \eqref{eq:CM_ELLDER}.

\subsection{Asymptotic behaviour of $u(\tau)$}
For the sake of completeness, in this section we detail the asymptotics of the solution of the NAECM model $u(\tau)$ for $\tau\to \ii \infty$, and $\tau\to \tau_{\star}$ that were also computed in \cite{bonelli2020n,BGG2021}.

\begin{proposition}\label{prop:uasympinf}
In the limit $\tau \to \ii \infty$, the solution of the equation \eqref{eq:CM_ELLDER} $u(\tau)$ behaves as 
\begin{gather}
   u(\tau) = {\sf a} \tau +{\sf b}+\frac{m^2}{8 \pi \ii {\sf a}^2}e^{4 \pi \ii ({\sf a} \tau+{\sf b})}+\mathcal{O}(q e^{4 \pi \ii ({\sf a} \tau+{\sf b})}),
\end{gather}
where $q=e^{\ii \pi \tau}$, ${\sf a}$, $m$ parameterize the monodromies around the A-cycle and the puncture respectively \eqref{eq:1ptmonodromy}, and the monodromy around the B-cycle is related to ${\sf b}$ as
\begin{gather}
 e^{2\pi \ii {\sf b}}:= e^{\ii\nu/2} \frac{\Gamma( 2{\sf a}-m)\Gamma(1-2{\sf a})}{\Gamma(2{\sf a}) \Gamma(1-m-2{\sf a})}.\label{eq:eta_beta}
\end{gather}

\end{proposition}
\begin{proof}
Our starting point is the following identity satisfied by the function $u(\tau)$ that was shown in  \cite[equation (3.56)]{bonelli2020n} (see also \cite{desiraju2022painleve}):
\begin{gather}
    \frac{\theta_{3}(2 u\vert 2\tau)}{\theta_{2}(2u\vert 2\tau)} 
    = \frac{Z_{0}^{D}(\tau)}{Z_{1/2}^{D}(\tau)}, \label{eq:theta_det}
\end{gather}
where $Z_{0,1/2}^{D}$ are dual Nekrasov-Okounkov functions with integer and half integer shifts and are defined as 
\begin{gather}
    Z_{\epsilon/2}^{D} := \sum_{n\in \mathbb{Z}+ \frac{\epsilon}{2}} C_{n} e^{\ii n \nu}  q^{2({\sf a}+n)^2} \mathcal{B}({\sf a}+n, m, q).
\end{gather}
The coefficients
\begin{gather}
    C_{n} = \frac{G(1-m+2({\sf a}+n)) G(1-m-2({\sf a}+n))}{G(1+2({\sf a}+n))G(1-2({\sf a}+n))} \times \frac{G(1+2 {\sf a}) G(1- 2{\sf a})}{G(1-m+2{\sf a}) G(1-m-2{\sf a})},
\end{gather}
and the combinatorial term $\mathcal{B}(.)$ is given by
\begin{align}
    \mathcal{B}(a,m,q)= \prod_{n=1}^{\infty} \left( 1-q^{2n} \right)^{1-2m^2}\sum_{Y_{+}, Y_{-}} q^{2\left(|Y_{+}|+|Y_{-}| \right)} \prod_{\epsilon, \epsilon'=\pm} \frac{N_{Y_{\epsilon}, Y_{\epsilon'}}(m+(\epsilon-\epsilon'){\sf a})}{N_{Y_{\epsilon}, Y_{\epsilon'}}((\epsilon-\epsilon'){\sf a})},
\end{align}
where
\begin{align}
    N_{\lambda,\mu}(x) = \prod_{s\in \lambda} \left(x+a_{\lambda}(s)+l_{\mu}(s)+1 \right)\prod_{t\in \mu} \left(x-a_{\mu}(t)-l_{\lambda}(t)-1 \right),
\end{align}
with $a_{Y}(\Box)$ and $l_{Y}(\Box)$ denoting the arm and leg length of the box $\Box$ in the Young diagram $Y$ respectively.
In the limit $\tau \rightarrow \ii \infty$, the RHS of \eqref{eq:theta_det} reads
\begin{align}
   \frac{Z_{0}^{D}(\tau)}{Z_{1/2}^{D}(\tau)}
    & = q^{-1/2} \left[ \frac{C_0}{C_{-1/2} e^{-\ii \nu/2} q^{-2{\sf a}} + C_{1/2} e^{\ii \nu/2} q^{2\sf a}}  + \mathcal{O}(q) \right]. \label{eq:asymptotics_i_inf_step1}
\end{align}
The coefficients $C_{0}, C_{\pm 1/2}$ can be simplified using the relation $G(z+1) = \Gamma(z) G(z)$:
\begin{align}
    C_{0} &= \frac{G(1-m+2{\sf a}) G(1-m-2{\sf a})}{G(1+2{\sf a})G(1-2{\sf a})} \times \frac{G(1+2 {\sf a}) G(1- 2{\sf a})}{G(1-m+2{\sf a}) G(1-m-2{\sf a})} =1, \label{eq:C0}\\
    C_{-1/2} &= \frac{G(1-m+2({\sf a}-1/2)) G(1-m-2({\sf a}-1/2))}{G(1+2({\sf a}-1/2))G(1-2({\sf a}-1/2))} \times \frac{G(1+2 {\sf a}) G(1- 2{\sf a})}{G(1-m+2{\sf a}) G(1-m-2{\sf a})} \nonumber \\
    & = \frac{ \Gamma(2{\sf a}) \Gamma(1-m-2{\sf a})}{\Gamma( 2{\sf a}-m)\Gamma(1-2{\sf a})}, \label{eq:C-1/2}\\
    C_{1/2} & =  \frac{G(1-m+2({\sf a}+1/2)) G(1-m-2({\sf a}+1/2))}{G(1+2({\sf a}+1/2))G(1-2({\sf a}+1/2))} \times \frac{G(1+2 {\sf a}) G(1- 2{\sf a})}{G(1-m+2{\sf a}) G(1-m-2{\sf a})} \nonumber \\
    & = \left(1- \frac{m^2}{4 {\sf a}^2}  \right)\frac{ \Gamma(2{\sf a}-m) \Gamma(1-2{\sf a})}{\Gamma(1-m-2{\sf a}) \Gamma(2{\sf a})}.\label{eq:C1/2}
\end{align}
Substituting \eqref{eq:C-1/2}, \eqref{eq:C1/2},\eqref{eq:eta_beta} in \eqref{eq:asymptotics_i_inf_step1} we get that
\begin{align}
   \frac{Z_{0}^{D}(\tau)}{Z_{1/2}^{D}(\tau)}& = q^{-1/2} \left[ \frac{1}{ e^{-2\pi \ii \beta} q^{-2{\sf a}} +  e^{2\pi \ii \beta} q^{2{\sf a}}}  + \mathcal{O}(q) \right] \nonumber \\
   & = \frac{q^{-1/2}}{2 \cos(2\pi  ({\sf a}\tau + {\sf b}))} (1+ \mathcal{O}(\cos(4\pi  ({\sf a}\tau + {\sf b}))). \label{eq:asymp_RHS}
\end{align}
Now for the LHS of \eqref{eq:theta_det},
substituting the series expansions of the theta functions \eqref{def:elltheta2}, \eqref{def:elltheta3}, we see that
\begin{align}
    \frac{\theta_{3}(2 u\vert 2\tau)}{\theta_{2}(2u\vert 2\tau)} 
    & = \frac{q^{-1/2}}{2\cos(2\pi u)} \left( 1+ \mathcal{O}(q^2) \right). \label{eq:asymp_LHS}
\end{align}
Comparing the RHS \eqref{eq:asymp_RHS} and the LHS \eqref{eq:asymp_LHS},
\begin{align}
   2 \cos(2u) &= e^{-2\pi i {\sf b}} q^{-2{\sf a}} +  e^{2\pi \ii \beta} q^{2{\sf a}} - \frac{m^2}{4 {\sf a}^2} e^{2\pi \ii {\sf b}} q^{2{\sf a}} \nonumber \\
   &= 2 \cos(2\pi({\sf a}\tau+{\sf b})) - \frac{m^2}{4 {\sf a}^2} e^{2\pi \ii ({\sf a}\tau+{\sf b})}. 
\end{align}

We can therefore make the following ansatz for $u(\tau)$ in the limit $\tau \rightarrow \ii \infty$:
\begin{equation}
    u(\tau) = {\sf a} \tau + {\sf b} + \frac{1}{2\pi \ii} \sum_{n=0}^{\infty} \sum_{k=-n}^{\infty} c_{n,k} q^{ n} e^{4\pi \ii k({\sf a} \tau+{\sf b})} = {\sf a} \tau + {\sf b} + \frac{X}{2\pi \ii}.\label{ansatz:Qinf}
\end{equation}
The first derivative
\begin{equation}
    \frac{u(\tau)}{d \tau} = {\sf a} + \frac{1}{2\pi \ii} \sum_{n=0}^{\infty} \sum_{k=-n}^{\infty} c_{n,k} (2\pi \ii) \left( n+ 2k \right) q^{ n} e^{4\pi \ii k({\sf a} \tau+{\sf b})},
\end{equation}
and the second derivative
\begin{equation}
    \frac{d^2 u(\tau)}{d\tau^2} =  (2\pi \ii) \sum_{n=0}^{\infty} \sum_{k=-n}^{\infty} c_{n,k}(n + 2k {\sf a})^2 q^{ n} e^{4\pi \ii k({\sf a} \tau+{\sf b})}. \label{infasymp_LHS}
\end{equation}
Let us now look at our equation
\begin{gather}
   (2\pi \ii)^2 \frac{d^2 u(\tau)}{d \tau^2} = m^2 \wp'(2u\vert \tau). \label{infasymp_eq}
\end{gather}
The Weierstrass $\wp$ function has the following expression in terms of theta functions
\begin{gather}
    \wp(z\vert \tau) = -\partial_{z}^2 \log \theta_{1}(z\vert \tau) - 2\eta_{1} (\tau),
\end{gather}
and we now analyze the term $\partial_{z}^2 \log \theta_{1}(z\vert \tau)$ in the limit $q\rightarrow 0$. Starting from the series representation of the theta function
\begin{gather}
    \theta_{1}(z \vert \tau) = -\ii \sum_{n \in \mathbb{Z}} (-1)^n q^{(n+ \frac{1}{2})^2} e^{2 \pi \ii z (n + \frac{1}{2})},
\end{gather}
 the logarithmic derivative 
\begin{align}
     \partial_{z} \log \theta_{1}(z\vert \tau) - \pi \cot \pi z 
    & = 4\pi \sum_{k=1}^{\infty} q^{k} \sum_{n=0}^{\infty} q^{kn} \sin(2\pi kz) \nonumber \\
    & = -(2 \pi \ii) \left(\sum_{k=1}^{\infty}  \sum_{n=1}^{\infty} q^{kn}  e^{ \ii 2\pi kz} -\sum_{k=1}^{\infty}  \sum_{n=1}^{\infty} q^{kn}  e^{-\ii 2\pi kz}  \right).
\end{align}
The above computation implies that
\begin{align}
     \partial_{z}^2 \log \theta_{1}(z\vert \tau) + \pi^2 \csc^2 \pi z &= - (2\pi \ii)^2 \left(\sum_{k=1}^{\infty}  \sum_{n=1}^{\infty}k q^{kn}  e^{\ii 2\pi kz} +\sum_{k=1}^{\infty}  \sum_{n=1}^{\infty}k q^{kn}  e^{-\ii 2\pi kz}  \right). \label{theta2-csc2}
\end{align}
Moreover, $\csc^2(z)$ has the following series representation:
\begin{align}
   \pi^2 \csc^2 \pi z = \frac{\pi^2}{\sin^2 \pi z} 
    &= (2 \pi \ii)^2 e^{2\ii\pi z} \sum_{k=1}^{\infty} k e^{2\ii \pi (k-1)z} = (2\pi \ii)^2 \sum_{k=1}^{\infty} k e^{2\ii\pi kz}. \label{csc2}
\end{align}
Combining \eqref{csc2} and \eqref{theta2-csc2} we get
\begin{align}
     \partial_{z}^2 \log \theta_{1}(z\vert \tau) &= -(2 \pi \ii)^2 \left(\sum_{k=1}^{\infty}  \sum_{n=1}^{\infty}k q^{kn}  e^{\ii2\pi kz} +\sum_{k=1}^{\infty}  \sum_{n=1}^{\infty}k q^{kn}  e^{-\ii 2\pi kz} + \sum_{k=1}^{\infty} k e^{2\ii\pi kz} \right) \nonumber \\
     &= -(2\pi \ii)^2 \left(\sum_{k=1}^{\infty}  \sum_{n=0}^{\infty}k q^{kn}  e^{\ii 2\pi kz} +\sum_{k=1}^{\infty}  \sum_{n=1}^{\infty}k q^{kn}  e^{-\ii 2\pi kz}  \right).
\end{align}
Therefore,
\begin{align}
    \wp'(z\vert \tau) = -\partial_{z}^3 \log \theta_{1}(z\vert \tau) = (2 \pi \ii)^3 \left(\sum_{k=1}^{\infty}  \sum_{n=0}^{\infty}k^2 q^{kn}  e^{\ii 2\pi kz} -\sum_{k=1}^{\infty}  \sum_{n=1}^{\infty}k^2 q^{kn}  e^{-\ii 2 \pi kz}  \right). \label{infasymp_RHS}
\end{align}
 Substituting \eqref{infasymp_LHS}, \eqref{infasymp_RHS} in \eqref{infasymp_eq} we get
\begin{align}
      \sum_{n=0}^{\infty} \sum_{k=-n}^{\infty} c_{n,k}(n + 2k  {\sf a} )^2 q^{ n} e^{4\pi \ii k( {\sf a} \tau+ {\sf b} )} &= m^2 \sum_{k=1}^{\infty}  \sum_{n=0}^{\infty}k^2 q^{kn}  e^{4 \pi \ii k ( {\sf a}  \tau +  {\sf b} )} e^{k X} \nonumber \\
      &-m^2 \sum_{k=1}^{\infty}  \sum_{n=1}^{\infty}k^2 q^{kn}   e^{-4 \pi \ii k ( {\sf a} \tau +  {\sf b} )} e^{-k X} .
\end{align}
Computing the first few values of $c_{n,k}$:
\begin{align}
    c_{0,0} &= 0, \\
    c_{0,1} &= \frac{m^2}{4  {\sf a}^2}, \\
    c_{1,-1} &=-\frac{m^2}{(1-2  {\sf a} )^2} 
\end{align}
Therefore,
\begin{gather}
u(\tau) = {\sf a} \tau + {\sf b} +\frac{m^2}{8 \pi \ii {\sf a}^2}e^{4 \pi \ii ( {\sf a} \tau+ {\sf b} )}+\mathcal{O}(q e^{4 \pi \ii ({\sf a} \tau+{\sf b})}).
\end{gather}
\end{proof}

\begin{proposition}
Near its zero $u(\tau_{\star})=0$, the solution of the equation \eqref{eq:CM_ELLDER} $u(\tau)$ behaves as 
\begin{gather}\label{asymp:utstar}
u(\tau) = e^{\mp \ii \pi/4} \sqrt{\frac{m}{2\pi}} (\tau - \tau_{\star})^{1/2} \left( 1 \pm \frac{H_{\star} }{4\pi \ii m} (\tau- \tau_{\star})  \right) + \mathcal{O} \left((\tau -\tau_{\star})^{5/2} \right).
\end{gather}
\end{proposition}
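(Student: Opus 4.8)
The plan is to reduce the first-order system \eqref{eq:CM_ELLDER} to the single second-order equation $(2\pi\ii)^2 u''(\tau)=m^2\wp'(2u(\tau)\vert\tau)$ (obtained by differentiating $v=2\pi\ii\,u'$ and inserting it into the second equation) and to carry out a local Painlev\'e-type analysis at the movable branch point $\tau=\tau_\star$ where $u(\tau_\star)=0$. Since $\wp$ has a double pole at the origin, near $z=2u\to 0$ one has $\wp(z)=z^{-2}+\tfrac{g_2}{20}z^2+\mathcal{O}(z^4)$ and $\wp'(z)=-2z^{-3}+\tfrac{g_2}{10}z+\mathcal{O}(z^3)$, so the dominant balance is $(2\pi\ii)^2 u''\sim-\tfrac{m^2}{4}u^{-3}$. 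This forces a square-root branch, and I would substitute the Puiseux ansatz
\[
u(\tau)=C\,(\tau-\tau_\star)^{1/2}\Big(1+a_1(\tau-\tau_\star)+a_2(\tau-\tau_\star)^2+\cdots\Big)
\]
into the equation and match powers of $\tau-\tau_\star$.

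First I would fix the leading coefficient $C$ from the dominant balance: inserting $u\sim C(\tau-\tau_\star)^{1/2}$ gives $(2\pi\ii)^2C^4=m^2$, i.e. $C^2=\mp\tfrac{\ii m}{2\pi}$ and hence $C=e^{\mp\ii\pi/4}\sqrt{m/(2\pi)}$. Taking the fourth root produces exactly the two physically relevant branches, which accounts for the correlated $\mp$ and $\pm$ signs in \eqref{asymp:utstar}.

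The correction $a_1$ I would extract not from the next order of the ODE but from the (non-autonomous) Hamiltonian \eqref{Ham:CM}, $H(\tau)=v(\tau)^2-m^2\wp(2u(\tau)\vert\tau)$ with $v=2\pi\ii\,u'$. Expanding both $v^2$ and $m^2\wp(2u)$ in powers of $\tau-\tau_\star$, the $(\tau-\tau_\star)^{-1}$ singular parts cancel precisely because of the relation $(2\pi\ii)^2C^4=m^2$ already imposed (so the $s^{-1}$ cancellation is equivalent to the leading balance), confirming that $H_\star:=H(\tau_\star)$ is finite. Collecting the $\mathcal{O}(1)$ part then yields a single linear relation $H_\star=\tfrac{2m^2}{C^2}\,a_1-2m^2\eta_1(\tau_\star)$, where the extra $-2m^2\eta_1$ originates from the constant term in the regularised potential $-\partial_z^2\log\theta_1(z)=\wp(z)+2\eta_1(\tau)$ used in the proof of the previous proposition (equivalently, from the Hamiltonian normalisation differing by $2m^2\eta_1(\tau)$, cf.\ the footnote to \eqref{Ham:CM}). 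Solving and using $C^2=\mp\tfrac{\ii m}{2\pi}$ gives $a_1=\pm\tfrac{H_\star+2m^2\eta_1}{4\pi\ii m}$, which is the stated subleading coefficient; since $\eta_1(\tau)=\eta_1(\tau_\star)+\mathcal{O}(\tau-\tau_\star)$, writing $\eta_1(\tau)$ in place of $\eta_1(\tau_\star)$ only affects the $\mathcal{O}((\tau-\tau_\star)^{5/2})$ remainder.

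Finally, the error term follows because substituting the ansatz into the ODE produces a recursion for $a_2,a_3,\dots$ with no resonant obstruction (the indicial structure of the leading balance is nondegenerate), so the series proceeds in integer powers of $(\tau-\tau_\star)$ times $(\tau-\tau_\star)^{1/2}$ and the first omitted term is $a_2(\tau-\tau_\star)^{5/2}$; a standard majorant argument gives convergence in a punctured neighbourhood of $\tau_\star$. I expect the main obstacle to be the bookkeeping at the constant order: one must correctly reconcile the two $\wp$-normalisations (Weierstrass $\wp$ versus $-\partial_z^2\log\theta_1$) and the corresponding Hamiltonian conventions so that the $2m^2\eta_1$ term appears with the right sign, and simultaneously verify that the explicit $\tau$-dependence of $\wp(2u\vert\tau)$ (the source of non-autonomy, $\tfrac{dH}{d\tau}=-m^2\partial_\tau\wp(2u\vert\tau)$) contributes only at higher order and therefore does not modify $a_1$.
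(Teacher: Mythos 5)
Your proposal is correct and follows essentially the same route as the paper: the dominant balance $(2\pi\ii)^2 u''\sim-\tfrac{m^2}{4}u^{-3}$ with a Puiseux ansatz fixes $c_1=e^{\mp\ii\pi/4}\sqrt{m/(2\pi)}$, and the subleading coefficient is then read off from the finiteness of the Hamiltonian at $\tau_\star$ rather than from the next order of the ODE, exactly as in the paper. In fact you handle the constant-order bookkeeping more carefully than the paper's own proof, which derives $c_2/c_1=\pm H_\star/(4\pi\ii m)$ from $H=v^2-m^2/(4u^2)$ and then states the result with the extra $2m^2\eta_1(\tau)$ without comment; your attribution of that term to the normalisation $\wp(z)=-\partial_z^2\log\theta_1(z)-2\eta_1(\tau)$ (cf.\ the footnote to \eqref{Ham:CM}) is precisely the reconciliation the paper leaves implicit.
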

\begin{proof}
The equation \eqref{eq:CM_ELLDER} near $u(\tau) =0$ reads as
\begin{gather}
  (2\pi i)^2 \frac{d^2 u(\tau)}{d\tau^2 } = - \frac{ m^2}{4 u(\tau)^3} \label{equation}
\end{gather}
due to the following behaviour of the Weierstrass $\wp$ function at zero
\begin{gather}
    \lim_{z\rightarrow 0} \wp(z) = \frac{1}{z^2}. \label{eq:Plim0}
\end{gather}
The solution $u(\tau)$ therefore has the following behaviour
\begin{gather}
    u(\tau) = c_{1}\sqrt{\tau - \tau_{\star}} + c_{2} (\tau-\tau_{\star})^{3/2} + \mathcal{O}((\tau -\tau_{\star})^{5/2})= c_{1} (\tau-\tau_{\star})^{1/2} \left(1 + \frac{c_{2}}{c_{1}} (\tau-\tau_{\star}) \right)+ \mathcal{O}((\tau -\tau_{\star})^{5/2}), \label{ansatz}
\end{gather}
and the second derivative of the above function
\begin{gather}
    \frac{d^2 u(\tau)}{d \tau^2} 
    = -\frac{c_{1}}{4}(\tau-\tau_{\star})^{-3/2} \left(1 - \frac{3 c_{2}}{c_{1}} (\tau-\tau_{\star}) \right). \label{der2}
\end{gather}
Substituting \eqref{ansatz}, \eqref{der2} in \eqref{equation}, we obtain that 
\begin{gather}
    c_{1} = e^{\mp \ii \pi/4}\frac{ \sqrt{m}}{\sqrt{2\pi}}. \label{c1}
\end{gather}
The coefficient of the sub-leading term $c_2$ can be obtained by studying the Hamiltonian in the same limit:
\begin{align}
    H_{\star} &:= \lim_{\tau\to \tau_{\star}}H\\
    &= \lim_{\tau \to \tau_{\star}}\left((2\pi \ii)^2 \left(\frac{du(\tau)}{d \tau} \right)^2 - \frac{m^2}{4u(\tau)^2}\right)\nonumber \\
    &\mathop{=}^{\eqref{ansatz}} (2\pi \ii)^2 \frac{c_{1}^2}{4} (\tau-\tau_{\star})^{-1} \left(1+ \frac{6c_{2}}{c_{1}} (\tau-\tau_{\star}) \right) - m^2 \frac{1}{4c_{1}^2} (\tau-\tau_{\star})^{-1} \left( 1- 2 \frac{c_{2}}{c_{1}} (\tau- \tau_{\star}) \right) + \mathcal{O}(\tau-\tau_{\star})\nonumber \\
    &\mathop{=}^{\eqref{c1}} \frac{c_{2}}{2 c_{1}}  \left(3 (2\pi \ii)^2 \frac{m}{(2\pi)} e^{\mp \ii\pi/2} + \frac{m^2 (2\pi)}{m} e^{\pm \ii\pi/2} \right) + \mathcal{O}(\tau-\tau_{\star})\nonumber\\ 
    &= \pm \frac{c_{2}}{ c_{1}}  \left(4\pi \ii m \right)+ \mathcal{O}(\tau-\tau_{\star}). \nonumber\\
    \Rightarrow \frac{c_{2}}{c_{1}} &= \pm \frac{H_{\star} }{4\pi \ii m}.
\end{align}
With the constants $c_1$, $c_2$ above,
\begin{align*}
    \lim_{\tau\rightarrow \tau_{\star}} u(\tau) = e^{\mp \ii \pi/4} \sqrt{\frac{m}{2\pi}} (\tau - \tau_{\star})^{1/2} \left( 1 \pm \frac{H_{\star} }{4\pi \ii m} (\tau- \tau_{\star})  \right) + \mathcal{O} \left((\tau -\tau_{\star})^{5/2} \right).
\end{align*}
\end{proof}

\printbibliography

@article{ghosal2020probabilistic,
  archivePrefix = "arXiv",
  eprint = "2003.03802",
  year={2020},    
primaryClass={math.PR},
    AUTHOR = {Ghosal, Promit and Remy, Guillaume and Sun, Xin and Sun, Yi},
     TITLE = {Probabilistic conformal blocks for {L}iouville {CFT} on the
              torus},
   JOURNAL = {Duke Math. J.},
  FJOURNAL = {Duke Mathematical Journal},
    VOLUME = {173},
      YEAR = {2024},
    NUMBER = {6},
     PAGES = {1085--1175},
   MRCLASS = {81T40 (60D99)},
  MRNUMBER = {4748792},
       DOI = {10.1215/00127094-2023-0031}
}

@incollection {Manin,
    AUTHOR = {Manin, Yu. I.},
     TITLE = {Sixth {P}ainlev\'{e} equation, universal elliptic curve, and
              mirror of {$\mathbb{P}^2$}},
 BOOKTITLE = {Geometry of differential equations},
    SERIES = {Amer. Math. Soc. Transl. Ser. 2},
    VOLUME = {186},
     PAGES = {131--151},
 PUBLISHER = {Amer. Math. Soc., Providence, RI},
      YEAR = {1998},
   MRCLASS = {14N35 (14H52 34M55 37J05 37J15)},
  MRNUMBER = {1732409},
MRREVIEWER = {Bruce\ Hunt},
       DOI = {10.1090/trans2/186/04},
archivePrefix = {arXiv},
       eprint = {alg-geom/9605010},
 primaryClass = {math.AG}
     }

@article{Nekrasov:2003rj,
      author         = "Nekrasov, Nikita and Okounkov, Andrei",
      title          = "{Seiberg-Witten theory and random partitions}",
      journal        = "Prog. Math.",
      volume         = "244",
      year           = "2006",
      pages          = "525-596",
      doi            = "10.1007/0-8176-4467-9_15",
      eprint         = "hep-th/0306238",
      archivePrefix  = "arXiv",
      primaryClass   = "hep-th",
      reportNumber   = "ITEP-TH-36-03, PUDM-2003, IHES-P-03-43",
      SLACcitation   = "%%CITATION = HEP-TH/0306238;%%"
}

@article{Alday:2009aq,
      author         = "Alday, Luis F. and Gaiotto, Davide and Tachikawa, Yuji",
      title          = "{Liouville Correlation Functions from Four-dimensional
                        Gauge Theories}",
      journal        = "Lett. Math. Phys.",
      volume         = "91",
      year           = "2010",
      pages          = "167-197",
      doi            = "10.1007/s11005-010-0369-5",
      eprint         = "0906.3219",
      archivePrefix  = "arXiv",
      primaryClass   = "hep-th",
      SLACcitation   = "%%CITATION = ARXIV:0906.3219;%%"
}

@article{DDG2020,
    AUTHOR = {Del Monte, Fabrizio and Desiraju, Harini and Gavrylenko,
              Pavlo},
     TITLE = {Isomonodromic tau functions on a torus as {F}redholm
              determinants, and charged partitions},
   JOURNAL = {Comm. Math. Phys.},
  FJOURNAL = {Communications in Mathematical Physics},
    VOLUME = {398},
      YEAR = {2023},
    NUMBER = {3},
     PAGES = {1029--1084},
      ISSN = {0010-3616,1432-0916},
   MRCLASS = {37N20 (47B35)},
  MRNUMBER = {4561797},
       DOI = {10.1007/s00220-022-04458-y},
  archivePrefix = "arXiv",
  eprint = "2011.06292",
primaryclass={math-ph},
  publisher={Springer}
}

@article{DKRV16,
    AUTHOR = {David, Fran\c{c}ois and Kupiainen, Antti and Rhodes, R\'{e}mi
              and Vargas, Vincent},
     TITLE = {Liouville quantum gravity on the {R}iemann sphere},
   JOURNAL = {Comm. Math. Phys.},
  FJOURNAL = {Communications in Mathematical Physics},
    VOLUME = {342},
      YEAR = {2016},
    NUMBER = {3},
     PAGES = {869--907},
      ISSN = {0010-3616,1432-0916},
   MRCLASS = {81T20},
  MRNUMBER = {3465434},
       DOI = {10.1007/s00220-016-2572-4},
         archivePrefix={arXiv},
  eprint={1410.7318},
      primaryClass={math.PR}
}

@article{bonelli2023irregular,
    AUTHOR = {Bonelli, Giulio and Iossa, Cristoforo and Lichtig, Daniel
              Panea and Tanzini, Alessandro},
     TITLE = {Irregular {L}iouville correlators and connection formulae for
              {H}eun functions},
   JOURNAL = {Comm. Math. Phys.},
  FJOURNAL = {Communications in Mathematical Physics},
    VOLUME = {397},
      YEAR = {2023},
    NUMBER = {2},
     PAGES = {635--727},
      ISSN = {0010-3616,1432-0916},
   MRCLASS = {81T40 (33C15 81R10)},
  MRNUMBER = {4538706},
MRREVIEWER = {S.\ A.\ Yost},
       DOI = {10.1007/s00220-022-04497-5},
 archivePrefix={arXiv},
eprint={2201.04491},
 primaryClass={hep-th}
}

@article{kashani2013transformations,
    AUTHOR = {Kashani-Poor, Amir-Kian and Troost, Jan},
     TITLE = {Transformations of spherical blocks},
   JOURNAL = {J. High Energy Phys.},
  FJOURNAL = {Journal of High Energy Physics},
      YEAR = {2013},
    NUMBER = {10},
   MRCLASS = {83E30 (81T60)},
  MRNUMBER = {3110708},
MRREVIEWER = {Kotik\ K.\ Lee},
  archivePrefix={arXiv},
  eprint={1305.7408},
primaryClass={hep-th},
doi={10.1007/JHEP10%282013%29009}
}

@article{DRV16,
    AUTHOR = {David, Fran\c{c}ois and Rhodes, R\'{e}mi and Vargas, Vincent},
     TITLE = {Liouville quantum gravity on complex tori},
   JOURNAL = {J. Math. Phys.},
  FJOURNAL = {Journal of Mathematical Physics},
    VOLUME = {57},
      YEAR = {2016},
    NUMBER = {2},
     PAGES = {022302, 25},
   MRCLASS = {81T20},
  MRNUMBER = {3450564},
       DOI = {10.1063/1.4938107},
  archivePrefix={arXiv},
  eprint={1504.00625},
primaryClass={math.PR},
}

@article {HRV18,
    AUTHOR = {Huang, Yichao and Rhodes, R\'{e}mi and Vargas, Vincent},
     TITLE = {Liouville quantum gravity on the unit disk},
   JOURNAL = {Ann. Inst. Henri Poincar\'{e} Probab. Stat.},
  FJOURNAL = {Annales de l'Institut Henri Poincar\'{e} Probabilit\'{e}s et
              Statistiques},
  DOI = {10.1214/17-AIHP852},
    VOLUME = {54},
      YEAR = {2018},
    NUMBER = {3},
   MRCLASS = {60D05 (81T20 81T40)},
  MRNUMBER = {3825895},
  archivePrefix={arXiv},
  eprint={1502.04343},
 primaryClass={math.PR}
}

@article{GRV19,
    AUTHOR = {Guillarmou, Colin and Rhodes, R\'{e}mi and Vargas, Vincent},
     TITLE = {Polyakov's formulation of {$2d$} bosonic string theory},
   JOURNAL = {Publ. Math. Inst. Hautes \'{E}tudes Sci.},
  FJOURNAL = {Publications Math\'{e}matiques. Institut de Hautes \'{E}tudes
              Scientifiques},
    VOLUME = {130},
      YEAR = {2019},
     PAGES = {111--185},
      ISSN = {0073-8301,1618-1913},
   MRCLASS = {81T40 (30F10 32G15 81T30)},
  MRNUMBER = {4028515},
       DOI = {10.1007/s10240-019-00109-6},
  archivePrefix={arXiv},
  eprint={1607.08467},
 primaryClass={math-ph}
}

@misc{DLMF,
    shorthand ={DLMF},
       title = "{\it NIST Digital Library of Mathematical Functions}",
howpublished = "\url{https://dlmf.nist.gov/}, Release 1.1.10 of 2023-06-15",
        note = "F.~W.~J. Olver, A.~B. {Olde Daalhuis}, D.~W. Lozier, B.~I. Schneider,
                R.~F. Boisvert, C.~W. Clark, B.~R. Miller, B.~V. Saunders,
                H.~S. Cohl, and M.~A. McClain, eds."}

@article{aminov2022black,
    AUTHOR = {Aminov, Gleb and Grassi, Alba and Hatsuda, Yasuyuki},
     TITLE = {Black hole quasinormal modes and {S}eiberg-{W}itten theory},
   JOURNAL = {Ann. Henri Poincar\'e},
  FJOURNAL = {Annales Henri Poincar\'e. A Journal of Theoretical and
              Mathematical Physics},
    VOLUME = {23},
      YEAR = {2022},
    NUMBER = {6},
     PAGES = {1951--1977},
      ISSN = {1424-0637,1424-0661},
   MRCLASS = {83C57 (81Q20 81T20 83C25)},
  MRNUMBER = {4420567},
MRREVIEWER = {Bogus\l aw\ Broda},
       DOI = {10.1007/s00023-021-01137-x},
eprint = {2006.06111},
archivePrefix = {arXiv}
}

@article{amado2017kerr,
    AUTHOR = {Barrag\'{a}n Amado, Juli\'{a}n and Carneiro da Cunha, Bruno
              and Pallante, Elisabetta},
     TITLE = {On the {K}err-{A}d{S}/{CFT} correspondence},
   JOURNAL = {J. High Energy Phys.},
  FJOURNAL = {Journal of High Energy Physics},
      YEAR = {2017},
    NUMBER = {8},
     PAGES = {094, front matter+25},
      ISSN = {1126-6708,1029-8479},
   MRCLASS = {83C57 (83E30)},
  MRNUMBER = {3697388},
eprint={1702.01016},
archivePrefix={arXiv},
doi={10.1007/JHEP08(2017)094},
 primaryClass={hep-th}
}

@article{hortaccsu2018heun,
    AUTHOR = {Horta\c{c}su, M.},
     TITLE = {Heun functions and some of their applications in physics},
   JOURNAL = {Adv. High Energy Phys.},
  FJOURNAL = {Advances in High Energy Physics},
      YEAR = {2018},
     PAGES = {Art. ID 8621573, 14},
   MRCLASS = {33E30},
  MRNUMBER = {3834316},
       DOI = {10.1155/2018/8621573},
archivePrefix={arXiv},
eprint={1101.0471v11},
 primaryClass={math-ph}
}

@article{litvinov2014classical,
    AUTHOR = {Litvinov, Alexey and Lukyanov, Sergei and Nekrasov, Nikita and
              Zamolodchikov, Alexander},
     TITLE = {Classical conformal blocks and {P}ainlev\'{e} {VI}},
   JOURNAL = {J. High Energy Phys.},
  FJOURNAL = {Journal of High Energy Physics},
      YEAR = {2014},
    NUMBER = {7},
     PAGES = {144, front matter+19},
      ISSN = {1126-6708,1029-8479},
   MRCLASS = {81T40},
  MRNUMBER = {3250114},
       DOI = {10.1007/JHEP07(2014)144},
  archivePrefix={arXiv},
  eprint={1309.4700},   primaryClass={hep-th}
}

@article{lencses2018classical,
    AUTHOR = {Lencs\'{e}s, M\'{a}t\'{e} and Novaes, F\'{a}bio},
     TITLE = {Classical conformal blocks and accessory parameters from
              isomonodromic deformations},
   JOURNAL = {J. High Energy Phys.},
  FJOURNAL = {Journal of High Energy Physics},
      YEAR = {2018},
    NUMBER = {4},
     PAGES = {096, front matter+36},
      ISSN = {1126-6708,1029-8479},
   MRCLASS = {83C47},
  MRNUMBER = {3804362},
       DOI = {10.1007/jhep04(2018)096},
archivePrefix={arXiv},
eprint={1709.03476},
   primaryClass={hep-th}
}

@article{lisovyy2022perturbative,
    AUTHOR = {Lisovyy, O. and Naidiuk, A.},
     TITLE = {Perturbative connection formulas for {H}eun equations},
   JOURNAL = {J. Phys. A},
  FJOURNAL = {Journal of Physics. A. Mathematical and Theoretical},
    VOLUME = {55},
      YEAR = {2022},
    NUMBER = {43},
     PAGES = {Paper No. 434005, 22},
      ISSN = {1751-8113,1751-8121},
   MRCLASS = {34M35 (33C15 81T40)},
  MRNUMBER = {4506508},
doi={10.1088/1751-8121/ac9ba7},
archivePrefix={arXiv},
eprint={2208.01604},   primaryClass={math-ph}
}

@article {lisovyy2021accessory,
    AUTHOR = {Lisovyy, O. and Naidiuk, A.},
     TITLE = {Accessory parameters in confluent {H}eun equations and
              classical irregular conformal blocks},
   JOURNAL = {Lett. Math. Phys.},
  FJOURNAL = {Letters in Mathematical Physics},
    VOLUME = {111},
      YEAR = {2021},
    NUMBER = {6},
     PAGES = {Paper No. 137, 28},
      ISSN = {0377-9017,1573-0530},
   MRCLASS = {81R12 (33C15 81T40)},
  MRNUMBER = {4338702},
MRREVIEWER = {Stanislav\ Z.\ Pakuliak},
       DOI = {10.1007/s11005-021-01400-6},
 archivePrefix={arXiv},
  eprint={2101.05715},
   primaryClass={math-ph}
}

@article{nekrasov2009supersymmetric,
  eprint={0901.4744},
  archivePrefix={arXiv},
primaryClass={hep-th},
    AUTHOR = {Nekrasov, Nikita A. and Shatashvili, Samson L.},
     TITLE = {Supersymmetric vacua and {B}ethe ansatz},
   JOURNAL = {Nuclear Phys. B Proc. Suppl.},
  FJOURNAL = {Nuclear Physics B. Proceedings Supplement},
    VOLUME = {192/193},
      YEAR = {2009},
     PAGES = {91--112},
      ISSN = {0920-5632},
   MRCLASS = {81T60 (81R12)},
  MRNUMBER = {2570974},
       DOI = {10.1016/j.nuclphysbps.2009.07.047}
}

@article{nekrasov2011darboux,
    AUTHOR = {Nekrasov, N. and Rosly, A. and Shatashvili, S.},
     TITLE = {Darboux coordinates, {Y}ang-{Y}ang functional, and gauge
              theory},
   JOURNAL = {Nuclear Phys. B Proc. Suppl.},
  FJOURNAL = {Nuclear Physics B. Proceedings Supplement},
    VOLUME = {216},
      YEAR = {2011},
     PAGES = {69--93},
      ISSN = {0920-5632},
   MRCLASS = {81T60 (14D21 53C26 81R12)},
  MRNUMBER = {2851597},
MRREVIEWER = {Lee-Peng\ Teo},
 eprint={1103.3919},
  archivePrefix={arXiv},
primaryClass={hep-th},
doi={10.1016/j.nuclphysbps.2011.04.150}
}

@article{hollands2018higher,
    AUTHOR = {Hollands, Lotte and Kidwai, Omar},
     TITLE = {Higher length-twist coordinates, generalized {H}eun's opers,
              and twisted superpotentials},
   JOURNAL = {Adv. Theor. Math. Phys.},
  FJOURNAL = {Advances in Theoretical and Mathematical Physics},
    VOLUME = {22},
      YEAR = {2018},
    NUMBER = {7},
     PAGES = {1713--1822},
      ISSN = {1095-0761,1095-0753},
   MRCLASS = {81T30},
  MRNUMBER = {3976899},
       DOI = {10.4310/ATMP.2018.v22.n7.a2},
 eprint={1710.04438},
  archivePrefix={arXiv},
primaryclass={hep-th}
}

@incollection {nakajima2003lectures,
    AUTHOR = {Nakajima, Hiraku and Yoshioka, K\B{o}ta},
     TITLE = {Lectures on instanton counting},
 BOOKTITLE = {Algebraic structures and moduli spaces},
    SERIES = {CRM Proc. Lecture Notes},
    VOLUME = {38},
     PAGES = {31--101},
 PUBLISHER = {Amer. Math. Soc., Providence, RI},
      YEAR = {2004},
      ISBN = {0-8218-3568-8},
   MRCLASS = {14D21 (53D20 53D45 57R57 81T60)},
  MRNUMBER = {2095899},
MRREVIEWER = {Chien-Hao\ Liu},
       DOI = {10.1090/crmp/038/02},
  eprint={math/0311058},
  archivePrefix={arXiv},
  year={2003},
  primaryClass={math.AG}
}

@article{gu2020elliptic,
    AUTHOR = {Gu, Jie and Haghighat, Babak and Klemm, Albrecht and Sun,
              Kaiwen and Wang, Xin},
     TITLE = {Elliptic blowup equations for 6d {SCFT}s. {P}art {III}.
              {E}-strings, {M}-strings and chains},
   JOURNAL = {J. High Energy Phys.},
  FJOURNAL = {Journal of High Energy Physics},
      YEAR = {2020},
    NUMBER = {7},
     PAGES = {135, 55},
      ISSN = {1126-6708,1029-8479},
   MRCLASS = {81T40 (14J32 81T30)},
  MRNUMBER = {4137961},
       DOI = {10.1007/jhep07(2020)135},
 eprint={1911.11724},
  archivePrefix={arXiv},
 primaryClass={hep-th}
}

@incollection {takhtajan1994topics,
    AUTHOR = {Takhtajan, L. A.},
     TITLE = {Topics in the quantum geometry of {R}iemann surfaces:
              two dimensional quantum gravity},
 BOOKTITLE = {Quantum groups and their applications in physics ({V}arenna,
              1994)},
    SERIES = {Proc. Internat. School Phys. Enrico Fermi},
    VOLUME = {127},
     PAGES = {541--579},
 PUBLISHER = {IOS, Amsterdam},
      YEAR = {1996},
      ISBN = {90-5199-247-5},
   MRCLASS = {32G15 (32G81 58D30 81T40)},
  MRNUMBER = {1415866},
MRREVIEWER = {Gregorio\ Falqui},
  archivePrefix={arXiv},
  eprint={hep-th/9409088},
  year={1994}
}

@article{negut2016exts,
    AUTHOR = {Negu\c{t}, Andrei},
     TITLE = {Exts and the {AGT} relations},
   JOURNAL = {Lett. Math. Phys.},
  FJOURNAL = {Letters in Mathematical Physics},
    VOLUME = {106},
      YEAR = {2016},
    NUMBER = {9},
     PAGES = {1265--1316},
      ISSN = {0377-9017,1573-0530},
   MRCLASS = {14D21 (81R10 81T13)},
  MRNUMBER = {3533570},
MRREVIEWER = {Shilin\ Yang},
       DOI = {10.1007/s11005-016-0865-3},
  archivePrefix={arXiv},
  eprint={1510.05482},primaryClass={hep-th}
}

@article{piatek2014classical,
  title={Classical torus conformal block, ${\mathcal {N}
}=2^*$ twisted superpotential and the accessory parameter of Lam{\'e} equation},
  author={P\c iatek, M},
  journal={Journal of High Energy Physics},
  volume={2014},
  number={3},
  pages={1--39},
  year={2014},
  publisher={Springer},
  archivePrefix = "arXiv",
  eprint = "1309.7672v3",      primaryClass={hep-th},
doi={10.1007/JHEP03(2014)124}
}

@book {ronveaux1995heun,
     TITLE = {Heun's differential equations},
    SERIES = {Oxford Science Publications},
    EDITOR = {Ronveaux, A.},
      NOTE = {With contributions by F. M. Arscott, S. Yu.\ Slavyanov, D.
              Schmidt, G. Wolf, P. Maroni and A. Duval},
 PUBLISHER = {The Clarendon Press, Oxford University Press, New York},
      YEAR = {1995},
     PAGES = {xxiv+354},
      ISBN = {0-19-859695-2},
   MRCLASS = {33E15 (33E30 34A20 34A25 34B30 81Q05)},
  MRNUMBER = {1392976},
MRREVIEWER = {Harold\ Exton},
}

@article {takhtajan2003hyperbolic,
    AUTHOR = {Takhtajan, Leon and Zograf, Peter},
     TITLE = {Hyperbolic 2-spheres with conical singularities, accessory
              parameters and {K}\"ahler metrics on {$\mathcal{M}_{0,n}$}},
   JOURNAL = {Trans. Amer. Math. Soc.},
  FJOURNAL = {Transactions of the American Mathematical Society},
    VOLUME = {355},
      YEAR = {2003},
    NUMBER = {5},
     PAGES = {1857--1867},
      ISSN = {0002-9947,1088-6850},
   MRCLASS = {32Q15 (14H15 30F45 32Q45)},
  MRNUMBER = {1953529},
MRREVIEWER = {Vasily\ A.\ Chernecky},
       DOI = {10.1090/S0002-9947-02-03243-9},
       archivePrefix ={arXiv},
eprint = {math/0112170}
}

@book {arnol2013mathematical,
    AUTHOR = {Arnol\textquotesingle d, V. I.},
     TITLE = {Mathematical methods of classical mechanics},
    SERIES = {Graduate Texts in Mathematics},
    VOLUME = {60},
      NOTE = {Translated from the 1974 Russian original by K. Vogtmann and
              A. Weinstein,
              Corrected reprint of the second (1989) edition},
 PUBLISHER = {Springer-Verlag, New York},
      YEAR = {1989},
     PAGES = {xvi+516},
      ISBN = {0-387-96890-3},
   MRCLASS = {70-02 (58F05 58Fxx 70Hxx)},
  MRNUMBER = {1345386},
}

@article {lacoin2019semiclassical,
    AUTHOR = {Lacoin, Hubert and Rhodes, R\'emi and Vargas, Vincent},
     TITLE = {The semiclassical limit of {L}iouville conformal field theory},
   JOURNAL = {Ann. Fac. Sci. Toulouse Math. (6)},
  FJOURNAL = {Annales de la Facult\'e{} des Sciences de Toulouse.
              Math\'ematiques. S\'erie 6},
    VOLUME = {31},
      YEAR = {2022},
    NUMBER = {4},
     PAGES = {1031--1083},
      ISSN = {0240-2963,2258-7519},
   MRCLASS = {81T40 (60D05 81Q20)},
  MRNUMBER = {4506137},
       DOI = {10.5802/afst.1713},
  year={2022},
archivePrefix ={arXiv},
eprint ={1903.08883}
}

@article {kupiainen2020integrability,
    AUTHOR = {Kupiainen, Antti and Rhodes, R\'emi and Vargas, Vincent},
     TITLE = {Integrability of {L}iouville theory: proof of the {DOZZ}
              formula},
   JOURNAL = {Ann. of Math. (2)},
  FJOURNAL = {Annals of Mathematics. Second Series},
    VOLUME = {191},
      YEAR = {2020},
    NUMBER = {1},
     PAGES = {81--166},
      ISSN = {0003-486X,1939-8980},
   MRCLASS = {81T40 (60D99)},
  MRNUMBER = {4060417},
MRREVIEWER = {Nizar\ Demni},
       DOI = {10.4007/annals.2020.191.1.2},
      eprint={1707.08785},
      archivePrefix={arXiv},
      primaryClass={math.PR}
}

@article {teschner2011quantization,
    AUTHOR = {Teschner, J.},
     TITLE = {Quantization of the {H}itchin moduli spaces, {L}iouville
              theory and the geometric {L}anglands correspondence {I}},
   JOURNAL = {Adv. Theor. Math. Phys.},
  FJOURNAL = {Advances in Theoretical and Mathematical Physics},
    VOLUME = {15},
      YEAR = {2011},
    NUMBER = {2},
     PAGES = {471--564},
      ISSN = {1095-0761,1095-0753},
   MRCLASS = {81T40 (14D24 32G81 34M56 53C26 53D50 81T60)},
  MRNUMBER = {2924236},
MRREVIEWER = {Lee-Peng\ Teo},
archiveprefix={arXiv},
  eprint = {1005.2846}
}

@article{bonelli2020n,
    AUTHOR = {Bonelli, Giulio and Del Monte, Fabrizio and Gavrylenko, Pavlo
              and Tanzini, Alessandro},
     TITLE = {{$\mathcal{N}=2^*$} 
               gauge theory, free fermions on the torus and
              {P}ainlev\'{e} {VI}},
   JOURNAL = {Comm. Math. Phys.},
  FJOURNAL = {Communications in Mathematical Physics},
    VOLUME = {377},
      YEAR = {2020},
    NUMBER = {2},
     PAGES = {1381--1419},
      ISSN = {0010-3616,1432-0916},
   MRCLASS = {81T35 (30E25 33E30 34M55 81T13)},
  MRNUMBER = {4115020},
MRREVIEWER = {Andrew\ Pickering},
       DOI = {10.1007/s00220-020-03743-y},
  archivePrefix = "arXiv",
  eprint = "1901.10497v2",
 primaryClass={hep-th}
}

@article{desiraju2022painleve,
    AUTHOR = {Desiraju, Harini},
     TITLE = {Painlev\'{e}/{CFT} correspondence on a torus},
   JOURNAL = {J. Math. Phys.},
  FJOURNAL = {Journal of Mathematical Physics},
    VOLUME = {63},
      YEAR = {2022},
    NUMBER = {8},
     PAGES = {Paper No. 081102, 16},
      ISSN = {0022-2488,1089-7658},
   MRCLASS = {34M55 (81T40)},
  MRNUMBER = {4462569},
  archivePrefix= "arXiv",
  eprint = "2305.04240",
doi={10.1063/5.0089867},
primaryclass="math-ph"
}

@article{BGG2021,
  archivePrefix = "arXiv",
  eprint = "2105.00985",
primaryClass={math-ph},
    AUTHOR = {Bershtein, Mikhail and Gavrylenko, Pavlo and Grassi, Alba},
     TITLE = {Quantum spectral problems and isomonodromic deformations},
   JOURNAL = {Comm. Math. Phys.},
  FJOURNAL = {Communications in Mathematical Physics},
    VOLUME = {393},
      YEAR = {2022},
    NUMBER = {1},
     PAGES = {347--418},
      ISSN = {0010-3616,1432-0916},
   MRCLASS = {81Q10 (81R12 81T40)},
  MRNUMBER = {4440693},
MRREVIEWER = {Serge\ C.\ Richard},
       DOI = {10.1007/s00220-022-04369-y}
}

@article{eremenko2022moduli,
    AUTHOR = {Eremenko, Alexandre and Gabrielov, Andrei and Mondello,
              Gabriele and Panov, Dmitri},
     TITLE = {Moduli spaces for {L}am\'{e} functions and {A}belian
              differentials of the second kind},
   JOURNAL = {Commun. Contemp. Math.},
  FJOURNAL = {Communications in Contemporary Mathematics},
    VOLUME = {24},
      YEAR = {2022},
    NUMBER = {2},
     PAGES = {Paper No. 2150028, 68},
   MRCLASS = {33E10 (30F30 57M50)},
  MRNUMBER = {4380114},
MRREVIEWER = {Athanase\ Papadopoulos},
       DOI = {10.1142/S0219199721500280},
      eprint={2006.16837},
      archivePrefix={arXiv},
      primaryClass={math.CV}
}

@article{ghosal2022analiticity,
  title={Analiticity and Symmetry of Virasoro Conformal blocks via Liouville CFT},
  author={Ghosal, PROMIT and Remy, GUILLAUME and Sun, X and Sun, Y},
  url = {https://drive.google.com/file/d/1CKpOICEbezvoma16ABVekEv5xhJO-0_Q/view?pli=1}
}

@misc{ang2023derivation,
      title={Derivation of all structure constants for boundary Liouville CFT}, 
      author={Morris Ang and Guillaume Remy and Xin Sun and Tunan Zhu},
      year={2024},
      eprint={2305.18266},
      archivePrefix={arXiv},
      primaryClass={math.PR}
}

@article{ince1940vii,
    AUTHOR = {Ince, E. L.},
     TITLE = {Further investigations into the periodic {L}am\'{e} functions},
   JOURNAL = {Proc. Roy. Soc. Edinburgh},
  FJOURNAL = {Proceedings of the Royal Society of Edinburgh},
    VOLUME = {60},
      YEAR = {1940},
     PAGES = {83--99},
      ISSN = {0370-1646},
   MRCLASS = {33.0X},
  MRNUMBER = {2400},
MRREVIEWER = {H.\ Bateman},
}

@incollection {vargas2017lecture,
    AUTHOR = {Rhodes, R\'{e}mi and Vargas, Vincent},
     TITLE = {Lecture notes on {L}iouville theory and the {DOZZ} formula},
 BOOKTITLE = {Topics in statistical mechanics},
    SERIES = {Panor. Synth\`eses},
    VOLUME = {59},
     PAGES = {185--229},
 PUBLISHER = {Soc. Math. France, Paris},
      ISBN = {978-2-85629-970-8},
   MRCLASS = {81T40 (81T20)},
  MRNUMBER = {4619314},
eprint={1712.00829},
      archivePrefix={arXiv},
      primaryClass={math.PR}
}

@article{rhodes2014gaussian,
    AUTHOR = {Rhodes, R\'{e}mi and Vargas, Vincent},
     TITLE = {Gaussian multiplicative chaos and applications: a review},
   JOURNAL = {Probab. Surv.},
  FJOURNAL = {Probability Surveys},
    VOLUME = {11},
      YEAR = {2014},
     PAGES = {315--392},
   MRCLASS = {60G57 (28A80 60G15 60G60)},
  MRNUMBER = {3274356},
MRREVIEWER = {Dora\ Sele\v{s}i},
       DOI = {10.1214/13-PS218},
  year={2014},
archivePrefix={arXiv},
eprint={1305.6221}, 
primaryClass={math.PR}
}

@article{berestycki2017elementary,
    AUTHOR = {Berestycki, Nathana\"{e}l},
     TITLE = {An elementary approach to {G}aussian multiplicative chaos},
   JOURNAL = {Electron. Commun. Probab.},
  FJOURNAL = {Electronic Communications in Probability},
    VOLUME = {22},
      YEAR = {2017},
     PAGES = {Paper No. 27, 12},
   MRCLASS = {60G57 (60G15 60J65)},
  MRNUMBER = {3652040},
       DOI = {10.1214/17-ECP58},
      eprint={1506.09113},
      archivePrefix={arXiv},
      primaryClass={math.PR}
}

@article{aru2017gaussian,
    AUTHOR = {Aru, Juhan},
     TITLE = {Gaussian multiplicative chaos through the lens of the 2{D}
              {G}aussian free field},
   JOURNAL = {Markov Process. Related Fields},
  FJOURNAL = {Markov Processes and Related Fields},
    VOLUME = {26},
      YEAR = {2020},
    NUMBER = {1},
     PAGES = {17--56},
      ISSN = {1024-2953},
   MRCLASS = {60G57 (60G15 60G60)},
  MRNUMBER = {4237161},
      eprint={1709.04355},
      archivePrefix={arXiv},
 primaryClass={math.PR}
}

@book{B,
author={Buslaev, V. S.},
title={Variational calculus},
publisher={Izd. Leningrad. Univ.},
year={1980}
}

@article{piatek2019solving,
    AUTHOR = {P\c iatek, Marcin and Pietrykowski, Artur R.},
     TITLE = {Solving {H}eun's equation using conformal blocks},
   JOURNAL = {Nuclear Phys. B},
  FJOURNAL = {Nuclear Physics. B. Theoretical, Phenomenological, and
              Experimental High Energy Physics. Quantum Field Theory and
              Statistical Systems},
    VOLUME = {938},
      YEAR = {2019},
     PAGES = {543--570},
   MRCLASS = {81T40},
  MRNUMBER = {3883275},
       DOI = {10.1016/j.nuclphysb.2018.11.021},
   eprint={1708.06135},
      archivePrefix={arXiv},
      primaryClass={hep-th}
}

@article {polyakov1981quantum,
    AUTHOR = {Polyakov, A. M.},
     TITLE = {Quantum geometry of bosonic strings},
   JOURNAL = {Phys. Lett. B},
  FJOURNAL = {Physics Letters. B. Particle Physics, Nuclear Physics and
              Cosmology},
    VOLUME = {103},
      YEAR = {1981},
    NUMBER = {3},
     PAGES = {207--210},
      ISSN = {0370-2693,1873-2445},
   MRCLASS = {81E99 (58D30 81G05 82A68)},
  MRNUMBER = {623209},
       DOI = {10.1016/0370-2693(81)90743-7}
}

@article{zamolodchikov1986two,
  title={Two-dimensional conformal symmetry and critical four-spin correlation functions in the Ashkin-Teller model},
  author={Zamolodchikov, Al B},
  journal={Soviet Journal of Experimental and Theoretical Physics},
  volume={63},
  number={5},
  year={1986},
 PAGES = {1061--1066},
   MRCLASS = {82A68 (81E25 81E40)},
  MRNUMBER = {869395},
MRREVIEWER = {Rainald\ Flume},
}

@article{pikatek2022classical,
    AUTHOR = {P\c iatek, M.  and Nazmitdinov, R. G. and Puente, A. and
              Pietrykowski, A. R.},
     TITLE = {Classical conformal blocks, {C}oulomb gas integrals and
              {R}ichardson-{G}audin models},
   JOURNAL = {J. High Energy Phys.},
  FJOURNAL = {Journal of High Energy Physics},
      YEAR = {2022},
    NUMBER = {4},
     PAGES = {Paper No. 098, 48},
   MRCLASS = {81T40 (81R10 82B23)},
  MRNUMBER = {4429642},
archivePrefix={arXiv},
eprint={2110.15009},
      primaryClass={hep-th},
doi={10.1007/JHEP04%282022%29098}
}

@article{da2022expansions,
    title={Expansions for semiclassical conformal blocks}, 
      author={Bruno Carneiro da Cunha and João Paulo Cavalcante},
      year={2022},
      eprint={2211.03551},
      archivePrefix={arXiv},
      primaryClass={hep-th}
}

@inproceedings {ponsot2004recent,
    AUTHOR = {Ponsot, B.},
     TITLE = {Recent progress in {L}iouville field theory},
 BOOKTITLE = {Proceedings of 6th {I}nternational {W}orkshop on {C}onformal
              {F}ield {T}heory and {I}ntegrable {M}odels},
   JOURNAL = {Internat. J. Modern Phys. A},
  FJOURNAL = {International Journal of Modern Physics A. Particles and
              Fields. Gravitation. Cosmology},
    VOLUME = {19},
      YEAR = {2004},
     PAGES = {311--335},
      ISSN = {0217-751X,1793-656X},
   MRCLASS = {81T40 (81R50)},
  MRNUMBER = {2087118},
MRREVIEWER = {Stanislav\ Z.\ Pakuliak},
archivePrefix={arXiv},
eprint={hep-th/0301193},
doi={10.1142/S0217751X0402049X}
}

@article{becsken2020semi,
    AUTHOR = {Be\c{s}ken, Mert and Datta, Shouvik and Kraus, Per},
     TITLE = {Semi-classical {V}irasoro blocks: proof of exponentiation},
   JOURNAL = {J. High Energy Phys.},
  FJOURNAL = {Journal of High Energy Physics},
      YEAR = {2020},
    NUMBER = {1},
     PAGES = {109, 15},
      ISSN = {1126-6708,1029-8479},
   MRCLASS = {81R10 (81T40)},
  MRNUMBER = {4088210},
       DOI = {10.1007/jhep01(2020)109},
    archivePrefix={arXiv},
eprint={1910.04169},
 primaryClass={hep-th}
}

@incollection{nekrasov2010quantization,
    AUTHOR = {Nekrasov, Nikita A. and Shatashvili, Samson L.},
     TITLE = {Quantization of integrable systems and four dimensional gauge
              theories},
 BOOKTITLE = {X{VI}th {I}nternational {C}ongress on {M}athematical
              {P}hysics},
     PAGES = {265--289},
 PUBLISHER = {World Sci. Publ., Hackensack, NJ},
      YEAR = {2010},
      ISBN = {978-981-4304-62-7},
   MRCLASS = {81R12 (37K10 81-02 81T13)},
  MRNUMBER = {2730782},
  archivePrefix={arXiv},
  eprint={0908.4052},
doi={10.1142/9789814304634_0015},
  primaryClass={hep-th}
}

@article{zamolodchikov1987conformal,
    AUTHOR = {Zamolodchikov, Al. B.},
     TITLE = {Conformal symmetry in two-dimensional space: on a recurrent
              representation of the conformal block},
   JOURNAL = {Teoret. Mat. Fiz.},
  FJOURNAL = {Akademiya Nauk SSSR. Teoreticheskaya i Matematicheskaya
              Fizika},
    VOLUME = {73},
      YEAR = {1987},
    NUMBER = {1},
     PAGES = {103--110},
      ISSN = {0564-6162},
   MRCLASS = {81E99 (17C65 81E40)},
  MRNUMBER = {939798},
doi={10.1007/bf01022967}
}

@article{nekrasov2018quantum,
    AUTHOR = {Nekrasov, Nikita and Pestun, Vasily and Shatashvili, Samson},
     TITLE = {Quantum geometry and quiver gauge theories},
   JOURNAL = {Comm. Math. Phys.},
  FJOURNAL = {Communications in Mathematical Physics},
    VOLUME = {357},
      YEAR = {2018},
    NUMBER = {2},
     PAGES = {519--567},
      ISSN = {0010-3616,1432-0916},
   MRCLASS = {81T13 (16G20 16T99 53C80)},
  MRNUMBER = {3767702},
doi={10.1007/s00220-017-3071-y},
archivePrefix={arXiv},
eprint={1312.6689},
primaryClass={hep-th}
}

@article{Takasaki:2000zd,
    AUTHOR = {Takasaki, Kanehisa},
     TITLE = {Painlev\'{e}-{C}alogero correspondence revisited},
   JOURNAL = {J. Math. Phys.},
  FJOURNAL = {Journal of Mathematical Physics},
    VOLUME = {42},
      YEAR = {2001},
    NUMBER = {3},
     PAGES = {1443--1473},
      ISSN = {0022-2488,1089-7658},
   MRCLASS = {34M55 (37J15)},
  MRNUMBER = {1814699},
MRREVIEWER = {Andrew\ Pickering},
       DOI = {10.1063/1.1348025},
      eprint         = "math/0004118",
      archivePrefix  = "arXiv",
      primaryClass   = "math-qa",
      reportNumber   = "KUCP-149"
}

@phdthesis{desiraju2021thesis,
  title={Painlev{\'e} tau-functions and Fredholm determinants},
  author={Desiraju, Harini},
  year={2021},
  school={SISSA},
  url={https://hdl.handle.net/20.500.11767/118727}
}

@article{mironov2010conformal,
    AUTHOR = {Mironov, A. and Morozov, A. and Shakirov, Sh.},
     TITLE = {Conformal blocks as {D}otsenko-{F}ateev integral
              discriminants},
   JOURNAL = {Internat. J. Modern Phys. A},
  FJOURNAL = {International Journal of Modern Physics A. Particles and
              Fields. Gravitation. Cosmology},
    VOLUME = {25},
      YEAR = {2010},
    NUMBER = {16},
     PAGES = {3173--3207},
      ISSN = {0217-751X,1793-656X},
   MRCLASS = {81T40},
  MRNUMBER = {2659902},
MRREVIEWER = {Michele\ Maio},
       DOI = {10.1142/S0217751X10049141},
  archivePrefix={arXiv},
  eprint={1001.0563},primaryClass={hep-th}
}

@article {Wang1,
    AUTHOR = {Wang, Yilin},
     TITLE = {A note on {L}oewner energy, conformal restriction and
              {W}erner's measure on self-avoiding loops},
   JOURNAL = {Ann. Inst. Fourier (Grenoble)},
  FJOURNAL = {Universit\'{e} de Grenoble. Annales de l'Institut Fourier},
    VOLUME = {71},
      YEAR = {2021},
    NUMBER = {4},
     PAGES = {1791--1805},
   MRCLASS = {30C55 (60J67)},
  MRNUMBER = {4398248},
MRREVIEWER = {Gregory\ Tycho\ Markowsky},
       DOI = {10.5802/aif.3427},
  eprint={1810.04578},
      archivePrefix={arXiv},
      primaryClass={math.CV}
}

@article {Wang2,
    AUTHOR = {Wang, Yilin},
     TITLE = {Equivalent descriptions of the {L}oewner energy},
   JOURNAL = {Invent. Math.},
  FJOURNAL = {Inventiones Mathematicae},
    VOLUME = {218},
      YEAR = {2019},
    NUMBER = {2},
     PAGES = {573--621},
      ISSN = {0020-9910,1432-1297},
   MRCLASS = {30C55 (11M36 30C62 30C80 30F60)},
  MRNUMBER = {4011706},
MRREVIEWER = {Joan\ R.\ Lind},
       DOI = {10.1007/s00222-019-00887-0},
      eprint={1802.01999},
      archivePrefix={arXiv},
      primaryClass={math.CV}
}

@article {Wang3,
    AUTHOR = {Wang, Yilin},
     TITLE = {The energy of a deterministic {L}oewner chain: reversibility
              and interpretation via {${\rm SLE}_{0+}$}},
   JOURNAL = {J. Eur. Math. Soc. (JEMS)},
  FJOURNAL = {Journal of the European Mathematical Society (JEMS)},
    VOLUME = {21},
      YEAR = {2019},
    NUMBER = {7},
     PAGES = {1915--1941},
      ISSN = {1435-9855,1435-9863},
   MRCLASS = {30C55 (30C62 60J67)},
  MRNUMBER = {3959854},
MRREVIEWER = {Dmitri\ Valentinovi\'{c}\ Prokhorov},
       DOI = {10.4171/JEMS/876},
      eprint={1601.05297},
      archivePrefix={arXiv},
      primaryClass={math.CV}
}

@ARTICLE{Pel&Wang,
    AUTHOR = {Peltola, Eveliina and Wang, Yilin},
     TITLE = {Large deviations of multichordal {${\rm SLE}_{0+}$}, real
              rational functions, and zeta-regularized determinants of
              {L}aplacians},
   JOURNAL = {J. Eur. Math. Soc. (JEMS)},
  FJOURNAL = {Journal of the European Mathematical Society (JEMS)},
    VOLUME = {26},
      YEAR = {2024},
    NUMBER = {2},
     PAGES = {469--535},
      ISSN = {1435-9855,1435-9863},
   MRCLASS = {30C55 (31A35 60F10 60J67)},
  MRNUMBER = {4705656},
       DOI = {10.4171/jems/1274},
eprint={2006.08574},
      archivePrefix={arXiv},
      primaryClass={math-ph}
}

@article {Novokshenov,
    AUTHOR = {Novoksh\"{e}nov, V. Yu.},
     TITLE = {Moving poles of solutions of the {P}ainlev\'{e} equation of
              the third type and their connection with {M}athieu functions},
   JOURNAL = {Funktsional. Anal. i Prilozhen.},
  FJOURNAL = {Akademiya Nauk SSSR. Funktsional\cprime ny\u{\i} Analiz i ego
              Prilozheniya},
    VOLUME = {20},
      YEAR = {1986},
    NUMBER = {2},
     PAGES = {38--49, 96},
      ISSN = {0374-1990},
   MRCLASS = {34A20},
  MRNUMBER = {847137},
MRREVIEWER = {Shlomo\ Strelitz},
doi={10.1007/BF01077265}
}

@article {XXZ,
    AUTHOR = {Xia, Jun and Xu, Shuai-Xia and Zhao, Yu-Qiu},
     TITLE = {Isomonodromy sets of accessory parameters for {H}eun class
              equations},
   JOURNAL = {Stud. Appl. Math.},
    VOLUME = {146},
      YEAR = {2021},
    NUMBER = {4},
   MRCLASS = {35Q53},
  MRNUMBER = {4300140},
DOI = {10.1111/sapm.12370},
eprint={2101.02864},
      archivePrefix={arXiv},
      primaryClass={math.CA}

}

@misc{ribault2022conformal,
      title={Conformal field theory on the plane}, 
      author={Sylvain Ribault},
      year={2014},
      eprint={1406.4290},
      archivePrefix={arXiv},
      primaryClass={hep-th}
}

@article {fateev_litvinov_differential,
    AUTHOR = {Fateev, Vladimir A. and Litvinov, Alexey V. and Neveu,
              Andr\'{e} and Onofri, Enrico},
     TITLE = {A differential equation for a four-point correlation function
              in {L}iouville field theory and elliptic four-point conformal
              blocks},
   JOURNAL = {J. Phys. A},
  FJOURNAL = {Journal of Physics. A. Mathematical and Theoretical},
    VOLUME = {42},
      YEAR = {2009},
    NUMBER = {30},
     PAGES = {304011, 29},
      ISSN = {1751-8113,1751-8121},
   MRCLASS = {81T40},
  MRNUMBER = {2521330},
MRREVIEWER = {Stanislav\ Z.\ Pakuliak},
       DOI = {10.1088/1751-8113/42/30/304011},
    eprint={0902.1331v3 },
      archivePrefix={arXiv},
      primaryClass={hep-th}
}

@misc{GKRV2022,
      title={Segal's axioms and bootstrap for Liouville Theory}, 
      author={Colin Guillarmou and Antti Kupiainen and Rémi Rhodes and Vincent Vargas},
      year={2021},
      eprint={2112.14859},
      archivePrefix={arXiv},
      primaryClass={math.PR}
}

@article {dorn1994two,
      eprint={9403141},
      archivePrefix={arXiv},
      primaryClass={hep-th},
    AUTHOR = {Dorn, H. and Otto, H.-J.},
     TITLE = {Two- and three-point functions in {L}iouville theory},
   JOURNAL = {Nuclear Phys. B},
  FJOURNAL = {Nuclear Physics. B. Theoretical, Phenomenological, and
              Experimental High Energy Physics. Quantum Field Theory and
              Statistical Systems},
    VOLUME = {429},
      YEAR = {1994},
    NUMBER = {2},
     PAGES = {375--388},
      ISSN = {0550-3213,1873-1562},
   MRCLASS = {81T30 (81T40)},
  MRNUMBER = {1299071},
       DOI = {10.1016/0550-3213(94)00352-1}
}

@article {zamolodchikov1996conformal,
    AUTHOR = {Zamolodchikov, A. and Zamolodchikov, Al.},
     TITLE = {Conformal bootstrap in {L}iouville field theory},
   JOURNAL = {Nuclear Phys. B},
  FJOURNAL = {Nuclear Physics. B. Theoretical, Phenomenological, and
              Experimental High Energy Physics. Quantum Field Theory and
              Statistical Systems},
    VOLUME = {477},
      YEAR = {1996},
    NUMBER = {2},
     PAGES = {577--605},
      ISSN = {0550-3213,1873-1562},
   MRCLASS = {81T40},
  MRNUMBER = {1413469},
MRREVIEWER = {Kazuto Oshima},
       DOI = {10.1016/0550-3213(96)00351-3}
}

@article {belavin1984infinite,
    AUTHOR = {Belavin, A. A. and Polyakov, A. M. and Zamolodchikov, A. B.},
     TITLE = {Infinite conformal symmetry in two-dimensional quantum field
              theory},
   JOURNAL = {Nuclear Phys. B},
  FJOURNAL = {Nuclear Physics. B. Theoretical, Phenomenological, and
              Experimental High Energy Physics. Quantum Field Theory and
              Statistical Systems},
    VOLUME = {241},
      YEAR = {1984},
    NUMBER = {2},
     PAGES = {333--380},
      ISSN = {0550-3213,1873-1562},
   MRCLASS = {81E13 (17B70 58G37 81D15)},
  MRNUMBER = {757857},
       DOI = {10.1016/0550-3213(84)90052-X}
}

@article {zamolodchikov1984conformal,
    AUTHOR = {Zamolodchikov, Al.\ B.},
     TITLE = {Conformal symmetry in two dimensions: an explicit recurrence
              formula for the conformal partial wave amplitude},
   JOURNAL = {Comm. Math. Phys.},
  FJOURNAL = {Communications in Mathematical Physics},
    VOLUME = {96},
      YEAR = {1984},
    NUMBER = {3},
     PAGES = {419--422},
      ISSN = {0010-3616,1432-0916},
   MRCLASS = {81D99 (17B10 81E99)},
  MRNUMBER = {769357},
MRREVIEWER = {Jouko\ Mickelsson},
       URL = {http://projecteuclid.org/euclid.cmp/1103941860},
}

@article {hadasz2010recursive,
    AUTHOR = {Hadasz, Leszek and Jask\'olski, Zbigniew and Suchanek,
              Paulina},
     TITLE = {Recursive representation of the torus 1-point conformal block},
   JOURNAL = {J. High Energy Phys.},
  FJOURNAL = {Journal of High Energy Physics},
      YEAR = {2010},
    NUMBER = {1},
     PAGES = {063, 13},
      ISSN = {1126-6708,1029-8479},
   MRCLASS = {81T40 (81T60)},
  MRNUMBER = {2660810},
MRREVIEWER = {Lee-Peng\ Teo},
       DOI = {10.1007/JHEP01(2010)063},
      eprint={0911.2353},
      archivePrefix={arXiv},
      primaryClass={hep-th}
}

@book{janson1997gaussian,
  title={Gaussian hilbert spaces},
  author={Janson, Svante},
  number={129},
  year={1997},
  publisher={Cambridge university press}
}

\end{document}